\newtheorem{thm}{Theorem}[section]
\newtheorem{proposition}[thm]{Proposition}
\newtheorem{prop}[thm]{Proposition}
\newtheorem{defn}[thm]{Definition}
\newtheorem{lem}[thm]{Lemma}
\newtheorem{claim}[thm]{Claim}
\newtheorem{cor}[thm]{Corollary}
\newcommand{\abs}[1]{\mbox{$ | #1 | $}}
\newcommand{\PosSLP}{{\mathrm PosSLP}}
\def\nat{{\mathbb N}}
 \def\real{{\mathbb R}}
 \def\rat{{\mathbb Q}}
\newcommand{\norm}[1]{ \mbox{$\| #1 \|$}}
\newcommand{\norminf}[1]{{\mbox{$\norm{#1}_\infty$}}}
\DeclareMathOperator{\sep}{sep}
\DeclareMathOperator{\cond}{cond}
\begin{document}

\title{Polynomial Time Algorithms for\\
Multi-Type Branching Processes\\ 
and Stochastic Context-Free Grammars}
\author{Kousha Etessami\\U. of Edinburgh\\{\tt kousha@inf.ed.ac.uk}
\and 
Alistair Stewart\\U. of Edinburgh\\{\tt stewart.al@gmail.com} 
\and
Mihalis Yannakakis\\Columbia U.\\{\tt mihalis@cs.columbia.edu}}

\date{}

\setcounter{page}{0}

\maketitle

\thispagestyle{empty}

\begin{abstract}
  We show that one can approximate 
 the least fixed point solution
for a multivariate system of monotone probabilistic polynomial equations
in time 
polynomial in both the encoding size of the system of
equations and in 
$\log(1/\epsilon)$, where $\epsilon > 0$ is the desired additive error 
bound of the solution.  (The model of 
computation is the standard Turing machine model.)

We use this result to resolve several open problems regarding the computational
complexity of computing key quantities associated with some
classic and heavily studied stochastic processes, including multi-type branching
processes and stochastic context-free grammars.

\end{abstract}

\section{Introduction}

  Some of the central computational problems associated
with a number of classic stochastic processes can be rephrased
as a problem of computing 
the non-negative {\em least fixed point}  solution of an
associated multivariate system of monotone polynomial equations.

   In particular, this is the case for computing the 
{\em extinction probabilities}
(also called {\em final probabilities}) for {\em 
multi-type branching processes} (BPs), 
a problem which was first studied in the 1940s by Kolmogorov
and Sevastyanov \cite{KolSev47}.  Branching processes are a 
basic stochastic model in probability theory, with
applications in diverse areas ranging from
population biology to the physics of nuclear chain reactions 
(see \cite{Harris63} for the classic theoretical text on BPs, and
 \cite{KA02,HJV05,PazPal08} for some of the
more recent applied textbooks on  BPs). 
BPs describe the stochastic
evolution of a population of objects of distinct types. 
In each generation, every object $a$ of each type $T$ 
gives rise 
to a (possibly empty) multiset of objects of distinct types in
the next generation,
according to a given probability distribution on such multisets
associated with the type $T$.
The extinction probability, $q_T$, associated with type $T$ is
the probability that, starting with exactly one object of type $T$,
the population will eventually become extinct.
Computing these probabilities is fundamental to
many other kinds of analyses for BPs (see, e.g., \cite{Harris63}).
Such probabilities are in general irrational, even when
the finite data describing the BP (namely, the probability distributions associated 
with each of the finitely many types $T$) are
given by rational values (as is assumed usually for computational purposes).
Thus, we would like to compute the probabilities approximately to desired precision.

Another essentially equivalent
problem is that of computing the 
probability of the language generated by a {\em stochastic 
context-free grammar} (SCFG), and more generally its
{\em termination probabilities}
(also called the {\em partition function}).
SCFGs are a fundamental model
in statistical natural language processing
and in biological sequence analysis (see, e.g., \cite{ManSch99,DEKM99,NedSat08}).  
A SCFG provides a probabilistic model for the generation of strings
in a language, by associating probabilities to the rules of a CFG.
The termination probability of a nonterminal $A$ is the probability that a random derivation of the
SCFG starting from $A$ eventually terminates and generates a finite-string; 
the total probability of the language of a SCFG is simply
the termination probability for the start symbol of the SCFG.
Computing these termination probabilities 
is again a key computational problem for the analysis of SCFGs,
and is required for computing other quantities, for example
the probability of generating a given string.

Despite decades of applied work on BPs and SCFGs, as well as theoretical
work on their computational problems, no polynomial
time algorithm was known for computing extinction probabilities
for BPs, nor for termination probabilities for SCFGs,
nor even for approximating them within any nontrivial constant:
prior to this work it was not even known whether one can distinguish in P-time
the case where the probability is close to 0 from the case where it is 
close to 1.

We now describe the kinds of nonlinear equations that have to be solved
in order to compute the above mentioned probabilities.
Consider systems of multi-variate polynomial fixed point equations
in $n$ variables, with $n$ equations, of the form
$x_i = P_i(x), \ i= 1,\ldots,n$
where $x = (x_1, \ldots,x_n)$ denotes the vector of variables,
and $P_i(x)$ is a multivariate polynomial in the variables $x$. 
We denote the entire system of equations by $x = P(x)$.
The system is {\em monotone} if all the coefficients of the polynomials are
nonnegative. It is a {\em probabilistic polynomial system} (PPS) if in addition the
coefficients of each polynomial sum to at most 1.

It is easy to see that a system of 
probabilistic polynomials $P(x)$
always maps any vector in $[0,1]^n$ to 
another vector in $[0,1]^n$.
It thus follows, by  Brouwer's fixed point 
theorem, that a PPS
$x = P(x)$ always has a solution in $[0,1]^n$.
In fact, it always has a unique {\em least} solution, 
$q^* \in [0,1]^n$, which is 
coordinate-wise smaller than any other non-negative
solution, and which is the {\em least fixed point} (LFP) of the monotone
operator $P: [0,1]^n \rightarrow [0,1]^n$ on 
$[0,1]^n$. The existence of the LFP, $q^*$,
is guaranteed by Tarski's fixed point theorem.
From a BP or a SCFG we can construct easily a
probabilistic polynomial system $x=P(x)$ whose
LFP $q^*$ yields precisely the
vector of extinction probabilities for the BP, 
or termination probabilities for the SCFG.
Indeed, the converse also holds: computing the
extinction probabilities for a BP (termination probabilities of an SCFG) 
and computing the LFP of  a system of
probabilistic polynomial equations are equivalent problems.
As we discuss below, some other stochastic models also lead to equivalent
problems or to special cases.

\medskip

{\bf Previous Work.}
As already stated, the polynomial-time computability of
these basic probabilities
for multi-type branching processes and SCFGs have been longstanding open problems.
In \cite{rmc}, we studied these problems as special sub-cases of a 
more general class of stochastic processes called {\em recursive Markov
chains} (RMCs),  which form a natural model for analysis of 
probabilistic procedural
programs with recursion,
and we showed that these problems are equivalent 
to computing termination probabilities for the special subclass of {\em 1-exit} RMCs (1-RMC).
General RMCs are expressively equivalent to the model of
{\em probabilistic pushdown systems} studied in \cite{EKM,BEK}.
We showed that for BPs, SCFGs, and 1-RMCs, the {\em qualitative} problem of determining which probabilities
are exactly 1 (or 0) can be solved in P-time, by exploiting basic
results from the theory of branching processes.
We proved however that the {\em decision} problem of determining whether the
extinction probability of a BP (or termination probability of a SCFG or a 1-RMC)  is $\geq 1/2$
is at least as hard as some longstanding
open problems in the complexity of numerical computation, namely, 
the {\em square-root sum problem}, and a much more general decision
problem (called PosSLP)  which captures the power
of unit-cost exact rational arithmetic \cite{ABKM06}, and hence it  is
very unlikely that the decision problem can be solved in P-time.
For general RMCs we showed that in fact this hardness holds for computing
{\em any} nontrivial {\em approximation} of the termination probabilities.
No such lower bound was shown for the approximation problem
for the subclass of 1-RMCs (and BPs and SCFGs). In terms of upper bounds, the best
we knew so far, even for any nontrivial approximation, was 
that the problem is in FIXP (which is in PSPACE), 
i.e., it can be reduced to approximating a Nash equilibrium of a 3-player game \cite{fixp}.
We improve drastically on this in this paper, resolving the problem completely,
by showing we can compute these probabilities in
P-time  to any desired accuracy.

An equivalent way to formulate the problem of computing
the LFP, $q^*$, of a PPS, $x=P(x)$, is as a mathematical optimization problem:
{\em minimize:}  $\sum^n_{i=1} x_i$;  {\em subject to:} 
\{$P(x)-x \leq 0 ; ~x\geq 0 $\}.
This program has a unique optimal solution, which is the LFP $q^*$.
If the constraints were convex, the solution could be computed approximately
using convex optimization methods.
In general,  the PPS constraints are {\em not} convex (e.g., 
$x_2 x_3 - x_1 \leq 0$ is not a convex constraint), however for certain 
restricted subclasses of PPSs they are.
This is so for {\em backbutton processes} which
were introduced and studied by Fagin et. al. in \cite{FaginKKRRRST00}
and used there
to analyze random walks on the web.
Backbutton processes constitute a 
restricted subclass of SCFGs (see \cite{rmc}).  Fagin et. al. applied
semidefinite programming to approximate
the corresponding termination probabilities for backbutton processes, and
used this as a basis for approximating other important quantities associated with them.

There are a number of natural iterative methods that one
can try to use  (and which indeed are used in practice) 
in order to solve the equations arising from BPs and SCFGs.
The simplest such method is {\em value iteration}:  starting
with the vector $x^0 = 0$,
iteratively compute the sequence $x^{i+1} := P(x^{i})$, $i = 1,\ldots$.   
The sequence always converges monotonically to the LFP $q^*$.
Unfortunately, it can be very slow to converge:  even for
the simple univariate polynomial system $x = (1/2) x^2 + 1/2$,
for which $q^* =1$, one requires $2^{i-3}$ iterations to 
exceed  $1- 1/2^{i-1}$, i.e. to get $i$ bits of precision \cite{rmc}.

In \cite{rmc} we provided a much better 
method that always converges monotonically to $q^*$.
Namely, we showed that a decomposed variant of
Newton's method can be applied to  
such systems of equations (and in fact, much more generally,
to any monotone system of polynomial equations)  $x=P(x)$, 
and always converges monotonically to the LFP solution $q^*$
(if a solution exists).
Optimized variants of this decomposed Newton's method have
by now been implemented in several tools \cite{WojEte07,NedSat08},
and they perform quite well in practice on many instances.

The theoretical speed of convergence of Newton's method on such monotone 
(probabilistic) polynomial systems was subsequently studied in much greater detail by Esparza, Kiefer,
and Luttenberger in \cite{lfppoly}.  They showed that, even for
Newton's method on PPSs,
there are instances where exponentially
many iterations of Newton's method (even with {\em exact arithmetic} in each
iteration) are required, as a function of the
encoding size of the system, in order to converge
to within just one bit of precision of the solution $q^*$.
On the upper bound side, they showed that after 
{\em some} number
of iterations in an initial phase, 
thereafter Newton obtains an additional bit of precision per iteration
(this is called {\em linear convergence}
in the numerical analysis literature).
In the special case where 
the input system of equations is {\em strongly connected},
meaning roughly that all variables depend (directly or indirectly)
on each other in the system of equations $x=P(x)$,
they proved an exponential upper bound on the number
of iterations required in the initial phase as a function
of input size. 
For the general case
where the input system of equations is not strongly connected,
they did not provide any upper bound as a function of the input size.
In more recent work, Esparza et al \cite{EGK10}
further studied probabilistic polynomial systems.
They did not provide any new worst-case upper bounds on the behavior
of Newton's method in this case, but they
studied a modified method which 
is in practice more robust numerically,
and they also showed that the qualitative problem of determining
 whether the LFP $q^* = \mathbf{1}$ is decidable
in {\em strongly} polynomial time.

\medskip
{\bf Our Results.}
In this paper we provide the first polynomial time 
algorithm for computing, to any desired accuracy, the least fixed point 
solution, $q^*$, of probabilistic
polynomial  systems, and thus also provide the first 
P-time approximation
algorithm for extinction probabilities
of BPs, and termination probabilities of SCFGs and 1-exit RMCs.
The algorithm proceeds roughly as follows:

1. We begin with a preprocessing step, in which we determine all
variables $x_i$ which have value 0 or 1 in the LFP $q^*$ and remove
them from the system.

2. On the remaining system of equations, $x=P(x)$, with
an LFP $q^*$ such that $\mathbf{0} < q^* < \mathbf{1}$,
we apply 
Newton's method, starting at initial vector $x^{(0)} := {\mathbf 0}$.
Our key result is to  show that, once variables $x_i$ with $q^*_i \in \{0,1\}$ 
have been removed,
Newton's method only requires polynomially many iterations 
(in fact, only {\em linearly} many iterations) 
as a function of both the encoding 
size of the equation system and of $\log(1/\epsilon)$ to converge
to within additive error $\epsilon > 0$ of the vector $q^*$.
To do this, we build on the previous works \cite{rmc,lfppoly,fixp}, and
extend them with new techniques.

3. The result in the previous step applies to the
unit-cost arithmetic RAM model of computation, where we assume that each
iteration of Newton's method is carried out in {\em exact} arithmetic.
The problem with this, of course, is that in general 
after only a linear
number of iterations, the number of bits required to represent the
rational numbers in Newton's method can be exponential in the 
input's encoding size.
We resolve this by showing, via a careful round-off analysis,
that if after each iteration
of Newton's method the positive rational numbers in question are all {\em rounded down}
to 
a suitably long but polynomial encoding length
(as a function of both the input size and of the 
desired error $\epsilon > 0$), then the resulting ``approximate''
Newton iterations will still be well-defined and will still
converge to $q^*$, within
the desired error $\epsilon > 0$, in polynomially (in fact {\em linearly})
many iterations. The correctness of the rounding 
relies crucially on the properties of PPSs shown in step 2, and it does not
work in general for other types of equation systems.\footnote{In particular, 
there are examples of PPSs which {\em do} have $q^*_i=1$ 
for some $i$,
such that this rounding method fails completely because 
of very severe {\em ill-conditioning} (see \cite{lfppoly}). Also, for
{\em quasi-birth-death} (QBDs) processes, a stochastic model studied heavily
in qeueing systems, different
monotone polynomial equations can be associated with key
probabilities,
and Newton's method converges in polynomially many iterations
\cite{EWY10},
but this rounding fails, and in fact it is  an open problem whether 
the key probabilities for QBDs can 
be computed in P-time in the {\em Turing} model.}

Extinction probabilities of BPs and termination probabilities of SCFGs 
are basic quantities 
that are important in many other analyses of these processes. 
We illustrate  an application
of these results by solving in polynomial time 
some other important problems for SCFGs
that are at least as hard as the termination problem. We show two results in this regard:

(1) Given a SCFG and a string, we can compute the probability of the string 
to any desired accuracy in polynomial time. This algorithm uses the following construction:

(2) Given an SCFG, we can compute in P-time another SCFG in Chomsky normal form (CNF)
that is approximately equivalent in a well-defined sense.

These are the first P-time algorithms for these problems that work
for {\em arbitrary} SCFGs, including grammars that contain $\epsilon$-rules.
Many tasks for SCFGs, including computation of string probabilities, 
become much easier for SCFGs in CNF, 
and in fact, many papers start by assuming that the given SCFG is in CNF.
In the nonstochastic case, there are standard efficient 
algorithms for
transforming any  CFG to an equivalent one in CNF. 
However, for stochastic grammars
this is not the case and things are much more complicated.
It is known that every SCFG has an equivalent one in CNF \cite{AMP99},
however, as remarked in \cite{AMP99}, their proof is nonconstructive and
does not yield any algorithm (not even an exponential-time algorithm).
Furthermore, it is possible that even though the given SCFG has rational
rule probabilities, the probabilities for any equivalent SCFG in CNF must be irrational,
hence they have to be computed approximately.
We provide here an efficient P-time algorithm for computing such a CNF SCFG.
To do this requires our P-time algorithm for SCFG
termination probabilities, and requires the development of 
considerable additional machinery to handle the elimination of 
probabilistic $\epsilon$-rules
and unary rules, while keeping numerical values polynomially
bounded in size, yet ensuring that the final SCFG meets the desired 
accuracy bounds.

The paper is organized as follows: Section 2 gives basic definitions
and background; Section 3 addresses the solution of PPSs, showing a linear bound
on the number of Newton iterations; Section 4 shows a polynomial time bound in the
Turing model; Section 5 describes briefly the applications to SCFGs.
Due to space, most proofs and technical development are
in the Appendix.

\section{Definitions and Background}

A (finite) {\bf\em multi-type Branching Process} ({\bf BP}),  $G=(V,R)$, 
consists of a (finite) set  $V = \{S_1,\ldots,S_n\}$ of {\em types}, 
and a (finite) set $R = \cup^n_{i=1} R_i$  of {\em rules},
which are
partitioned into distinct rule sets, $R_i$, associated
with each type $S_i$.   Each rule $r \in R_i$ has the form
 $S_i \stackrel{p_r}{\rightarrow} \alpha_r$, 
where $p_r \in (0,1]$,
 and $\alpha_r$ is a finite multiset (possibly the empty multiset) 
whose elements are in $V$.
Furthermore, for every 
 type $S_i$, we have $\sum_{r \in R_i} p_r = 1$.  
 The rule $S_i \stackrel{p_r}{\rightarrow} \alpha_r$ specifies the
 probability with which an entity (or object) of type $S_i$
 generates the multiset $\alpha_r$ of offsprings 
 in the next generation. 
As usual, rule probabilities $p_r$ 
are assumed to be
 rational for computational purposes. 
Multisets $\alpha_r$ over $V$ 
can be encoded by giving a vector $v(\alpha_r) \in \nat^n$, 
with the $i$'th coordinate $v(\alpha_r)_i$
representing the number of elements of type $S_i$ in the multiset $\alpha_r$.
We assume instead that the multisets  
$\alpha_r$ are represented even more succinctly in {\em sparse representation},
by specifying only the non-zero coordinates of the vector $v(\alpha_r)$, encoded in binary.

A BP, $G=(V,R)$, 
defines a discrete-time stochastic (Markov) process, whose states are
multisets 
over $V$, or equivalently elements of $\nat^n$. 
If the state at time $t$ is $\alpha^{t}$, then the next state $\alpha^{t+1}$ 
at time $t+1$
is determined by 
{\em independently} choosing, for each object of each type $S_i$ in  
the multiset $\alpha^t$,
a random rule $r \in R_i$ of the form 
$S_i \stackrel{p_r}{\rightarrow} \alpha_r$,
according to the probability $p_r$ of that rule, yielding
the multiset $\alpha_r$ as the ``offsprings'' of that object in one 
generation.  
The multiset $\alpha^{t+1}$ is then given by the
{\em multiset union} of all such offspring multisets, 
randomly and independently chosen for each object in the multiset $\alpha^t$.
A trajectory ({\em sample path}) of this stochastic process,
starting at time $0$ in initial multiset $\alpha^0$, 
is a sequence
$\alpha^0, \alpha^1, \alpha^2, \ldots$ of multisets over $V$.
Note that if ever the process reaches {\em extinction}, i.e., 
if ever $\alpha^t = \{ \}$ at some time $t \geq 0$,  then 
$\alpha^{t'} = \{ \}$ for all times $t' \geq t$.

Very fundamental quantities associated with a BP,
which are a key to many analyses of BPs,  are its 
vector of {\bf\em extinction
probabilities},  $q^* \in [0,1]^n$,  where $q^*_i$ is defined
as the probability that, starting with initial multiset 
$\alpha^0 := \{ S_i \}$ at time $0$, i.e., starting with a single object of type $S_i$,
the stochastic process eventually reaches extinction, i.e., 
that $\alpha^{t} = \{\}$ at some time $t > 0$. 

Given a BP, $G=(V,R)$,
there is a system of polynomial equations  in $n = |V|$ variables, $x=P(x)$,
that we can associate with $G$, such that the {\em least} non-negative
solution vector for $x = P(x)$  is the vector of extinction
probabilities $q^*$ (see, e.g., \cite{Harris63, rmc}).
Let us define these equations.
For an $n$-vector of variables $x = (x_1,\ldots,x_n)$, and 
a vector $v \in \nat^n$,  we use the shorthand $x^v$ to denote
the monomial $x_1^{v_1}\ldots x^{v_n}_{n}$.
Given BP  $G = (V,R)$, we define equation $x_i = P_i(x)$ by:
$x_i =   \sum_{r \in R_i} p_r x^{v(\alpha_r)}$.
This yields $n$ polynomial equations 
in $n$ variables, which we denote by $x = P(x)$.
It is not hard to establish that $q^* = P(q^*)$.  In fact,
$q^*$  is the {\em least} non-negative solution of $x = P(x)$.
In other words, if $q' = P(q')$ for $q' \in \real^n_{\geq 0}$, then
$q' \geq q^* \geq 0$, i.e., $q'_i \geq q^*_i$ for all $i= 1,\ldots,n$.

Note that this system of polynomial equations $x = P(x)$ has
very special properties.
Namely, 
(I): the coefficients and constant of each polynomial
$P_i(x) = \sum_{r \in R_i} p_r x^{v(\alpha_r)}$
are nonnegative, i.e., $p_r \geq 0$ for all $r$.
Furthermore, (II): the coefficients sum to 1, i.e., 
$\sum_{r \in R_i} p_r = 1$.
We call $x= P(x)$ 
a {\bf\em probabilistic polynomial system} of equations
({\bf PPS}) if it has properties (I) and (II) except that for convenience 
we weaken (II) and also allow
(II$'$): $\sum_{r \in R_i} p_r \leq 1$.   
If a system of equations $x = P(x)$ only satisfies (I), then
we call it a {\bf\em monotone polynomial system} of equations
({\bf MPS}).  

For any PPS, $x=P(x)$,  
$P(x)$ defines a 
{\em monotone}  operator $P: [0,1]^n \rightarrow [0,1]^n$,
i.e., if $y \geq x \geq {\textbf 0}$ then
$P(y) \geq P(x)$. 
For any BP with corresponding PPS $x=P(x)$, $q^*$ is precisely the 
{\em least fixed point} ({\bf LFP}) of the monotone operator 
$P: [0,1]^n \rightarrow [0,1]^n$ (see \cite{rmc}).
A MPS, $x=P(x)$, also defines a  monotone operator 
$P: \real^n_{\geq 0} \rightarrow 
\real^n_{\geq 0}$ on the non-negative 
orthant $\real^n_{\geq 0}$.   An MPS need not in general have any solution in 
$\real^n_{\geq 0}$, but when it does so, it has a {\em least fixed point}
solution $q^* = P(q^*)$ such that $0 \leq q' = P(q')$ implies $q^* \leq q'$.

Note that {\em any}  PPS (with rational coefficients) 
can be obtained as the system of
equations $x=P(x)$ for a corresponding BP $G$ (with
rational rule probabilities), and vice versa.\footnote{``{\em Leaky}'' 
PPSs where
$\sum_{r \in R_i} p_r < 1$ can be translated easily to BPs by adding an
extra dummy type $S_{n+1}$, with rule $S_{n+1} 
\stackrel{1}{\rightarrow} \{ S_{n+1}, S_{n+1} \}$, so $q^*_{n+1} =0$,
and adding to $R_i$, for each ``leaky'' $i$,
the rule $S_i \stackrel{p'_i}{\rightarrow} \{ S_{n+1}, S_{n+1} \}$
with probability $p'_i := (1-\sum_{r \in R_i} p_r)$.
The probabilities $q^*$ 
for the BP (ignoring $q^*_{n+1} = 0$) give the LFP of the PPS .}
Thus, the computational problem of computing the extinction
probabilities of a given BP is equivalent to the problem
of computing the least fixed point (LFP) solution $q^*$ of a
given PPS, $x=P(x)$.
For a PPS $x=P(x)$, we shall use $|P|$ to
denote the sum of the number $n$ of variables and
the numbers of bits of all the nonzero coefficients and
nonzero exponents of all the polynomials in the PPS.
Note that the encoding length of a PPS in sparse representation 
is at least $|P|$ (and at most $O(|P| \log n)$).

The probabilities $q^*$ can in general be irrational, and
even deciding whether $q^*_i \geq 1/2$ is as hard as long
standing open problems, including the {\em square-root sum}
problem, which are not even known to be in NP (see \cite{rmc}).
We instead want to approximate $q^*$ within a desired 
additive error $\epsilon > 0$.
In other words, we want to compute a 
rational vector $v' \in \rat^n \cap [0,1]^n$
such that $\| q^* - v' \|_\infty < \epsilon$.

A PPS, $x = P(x)$, is said to be in
{\em Simple Normal Form} (SNF)
if for every $i = 1,\ldots,n$,  the polynomial $P_i(x)$ has one of two forms:
(1) \underline{Form$_{*}$}: $P_i(x) \equiv x_j x_k$ 
is simply a quadratic monomial;  \ or  (2) \underline{Form$_{+}$}: $P_i(x)$ is a {\em linear} 
expression 
$\sum_{j \in \mathcal{C}_i} p_{i,j} x_j + p_{i,0}$,  for some rational 
non-negative
coefficients $p_{i,j}$ and $p_{i,0}$, and
some index set $\mathcal{C}_i \subseteq
\{1,\ldots,n\}$,
where $\sum_{j \in \mathcal{C}_i \cup \{0\}} p_{i,j} \leq 1$.
We call such a linear equation {\em leaky}
if  $\sum_{j \in \mathcal{C}_i \cup \{0\}} p_{i,j} < 1$.
An MPS is said to be in SNF if the same conditions hold except
we do not require $\sum_{j \in \mathcal{C}_i \cup \{0\}} p_{i,j} \leq 1$.
The following is proved in the appendix.

\begin{proposition}[cf. Proposition 7.3 \cite{rmc}]
\label{prop:snf-form}
Every PPS (MPS), $x = P(x)$, can be transformed in P-time
to an ``{\em equivalent}''  PPS (MPS, respectively),  $y=Q(y)$  in SNF form,
such that 
$|Q| \in O( |P|  )$.
More precisely, the variables $x$ are 
a subset of the variables $y$, and  $y=Q(y)$ has LFP
$p^* \in \real^m_{\geq 0}$ iff $x=P(x)$ has LFP $q^* \in \real^n_{\geq 0}$,
and projecting $p^*$ onto the $x$ variables yields 
$q^*$.
\end{proposition}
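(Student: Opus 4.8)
The plan is to use the standard ``flattening'' of the arithmetic expressions $P_i$ by introducing auxiliary variables, arranged so that the number of new variables and equations is \emph{linear} in $|P|$, and then to verify that the transformation preserves the least fixed point in both directions.

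First I would dispose of the monomials. Consider a monomial $x^{v(\alpha_r)} = \prod_{j} x_j^{v(\alpha_r)_j}$ appearing in some $P_i$; in sparse binary representation its encoding size is $\Theta\!\big(\sum_{j:\,v(\alpha_r)_j>0}(\log v(\alpha_r)_j + \log n)\big)$. For each variable $x_j$ with exponent $e := v(\alpha_r)_j \geq 2$, I would form the powers $x_j^{2}, x_j^{4},\ldots,x_j^{2^{\lfloor\log e\rfloor}}$ by repeated squaring, introducing one fresh variable $z$ and one equation $z = w\cdot w$ (Form$_*$, with $j=k=w$ allowed) per squaring, and then multiply together the $O(\log e)$ powers dictated by the binary expansion of $e$, again via fresh variables and Form$_*$ equations; finally I would multiply the resulting single-variable representatives of the various $x_j^{v(\alpha_r)_j}$ together (one Form$_*$ equation per factor). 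This attaches to each monomial of degree $\geq 2$ a single ``name'' variable $y_r$ together with $O\!\big(\sum_j \log v(\alpha_r)_j + \#\{j : v(\alpha_r)_j>0\}\big)$ new Form$_*$ equations, which is linear in that monomial's encoding size; monomials of degree $\leq 1$ need no new variable. Replacing every degree-$\geq 2$ monomial in $P_i$ by its name turns the original equation $x_i = P_i(x)$ into $x_i = \sum_r p_r\, y_r + p_{i,0}$, a linear form whose coefficients are exactly the original coefficients of $P_i$, hence still summing to $\leq 1$ by property (II$'$); this is precisely Form$_+$ (leaky iff the original sum was $<1$). Summing over all monomials of all $P_i$ gives $O(|P|)$ new variables and equations, each of constant size, so the resulting SNF system $y=Q(y)$ satisfies $|Q| \in O(|P|)$ and is produced in (linear) polynomial time; for the MPS case we simply drop the ``$\leq 1$'' bookkeeping.

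Next I would establish the least-fixed-point correspondence. Let $\pi:\real^m_{\geq 0}\to\real^n_{\geq 0}$ be the projection onto the original coordinates, and let $\widehat{\,\cdot\,}:\real^n_{\geq 0}\to\real^m_{\geq 0}$ be the monotone extension sending each new variable to the value of the product it names, evaluated at the given point. Two elementary facts drive the argument: (a) if $q'=P(q')$ then $\widehat{q'}=Q(\widehat{q'})$, since each new equation merely recomputes a product and each modified equation evaluates to $P_i(q')$; and (b) if $p'=Q(p')$ then, by induction along the acyclic dependency order of the new variables, each new variable's value at $p'$ equals the monomial it names evaluated at $\pi(p')$, so $p'=\widehat{\pi(p')}$ and $\pi(p')=P(\pi(p'))$. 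Now if $q^*$ is the LFP of $x=P(x)$, then by (a) $\widehat{q^*}$ is a fixed point of $Q$, so the LFP $p^*$ of $Q$ satisfies $p^*\leq\widehat{q^*}$, whence $\pi(p^*)\leq\pi(\widehat{q^*})=q^*$; and by (b) $\pi(p^*)$ is a fixed point of $P$, so $\pi(p^*)\geq q^*$. Thus $\pi(p^*)=q^*$, and then $p^*=\widehat{\pi(p^*)}=\widehat{q^*}$. For the ``iff'' on existence: for PPSs both systems always have an LFP in the appropriate unit cube; for MPSs, if $P$ has an LFP $q^*$ then $\widehat{q^*}$ is a fixed point of $Q$, and a monotone continuous operator on $\real^m_{\geq 0}$ possessing a fixed point has an LFP (the limit of Kleene iteration from $\mathbf{0}$, which is nondecreasing and bounded above by that fixed point), and symmetrically via $\pi$ in the other direction.

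I do not expect a genuine obstacle: the construction is routine, and the proposition is essentially Proposition~7.3 of \cite{rmc}. The only points requiring care are (i) keeping the new-variable count \emph{linear} rather than merely polynomial, which is exactly why repeated squaring is needed for monomials with large binary exponents, and (ii) carrying the fixed-point transfer through in \emph{both} directions — in particular that minimality is preserved and that, for MPSs, existence of the LFP on one side forces it on the other.
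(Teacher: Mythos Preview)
Your proposal is correct and follows essentially the same approach as the paper: repeated squaring to build a linear-size arithmetic circuit for each monomial, auxiliary variables for the gates yielding Form$_*$ equations, and the original equations becoming Form$_+$. Your treatment of the LFP correspondence is in fact more careful than the paper's, which simply asserts the bijection between solutions without explicitly arguing minimality or the MPS existence direction; your use of the extension map $\widehat{\,\cdot\,}$, the acyclic structure of the new variables, and Kleene iteration fills in exactly what the paper leaves implicit.
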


\begin{proposition}[\cite{rmc}]
\label{prob1-ptime-scfg-prop}
There is a P-time algorithm that, given a PPS, $x=P(x)$, over $n$ variables,
with LFP $q^* \in \real^n_{\geq 0}$,
determines for every $i = 1,\ldots,n$ whether $q^*_i = 0$ or $q^*_i = 1$
or $0< q^*_i <1$.
\end{proposition}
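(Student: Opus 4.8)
The plan is to reduce to the combinatorial structure of the branching process (equivalently, the dependency structure of the PPS) and exploit classical criticality theory of multi-type branching processes. First I would put the PPS into SNF via Proposition~\ref{prop:snf-form}, so that every equation is either a quadratic monomial $x_i = x_jx_k$ or an affine form $x_i = \sum_{j\in\mathcal{C}_i} p_{i,j}x_j + p_{i,0}$; this preserves the $\{0,1\}$-status of the original variables. Then I would identify the variables with $q^*_i = 0$. Observe that $q^*_i = 0$ iff, in value iteration $x^{(0)}=\mathbf{0},\ x^{(k+1)}=P(x^{(k)})$, the $i$-th coordinate stays $0$ forever; equivalently, $q^*_i=0$ iff $x_i$ cannot ``reach'' any constant term $p_{j,0}>0$ through the directed dependency graph in which $x_i$ points to every variable occurring in $P_i$. (For a monomial $x_i=x_jx_k$ this is the usual AND-node semantics: $q^*_i>0$ needs \emph{both} $q^*_j>0$ and $q^*_k>0$; for an affine form it is an OR-node: $q^*_i>0$ if \emph{some} summand variable has positive LFP value, or if $p_{i,0}>0$.) This is a monotone fixpoint computation over the Boolean lattice, so it stabilizes in $\le n$ rounds and runs in P-time. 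Discard all variables found to have $q^*_i=0$, substituting $0$ for them; the LFP of the reduced system agrees with $q^*$ on the surviving variables, and now $q^*_i>0$ for every remaining $i$.

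On the reduced system I then need to separate $q^*_i=1$ from $0<q^*_i<1$. The clean characterization, going back to the theory of branching processes (Harris), is: $q^*=\mathbf{1}$ for an irreducible (strongly connected) BP iff the expected offspring matrix — the Jacobian $B=P'(\mathbf{1})$ evaluated at the all-ones vector — has spectral radius $\rho(B)\le 1$ (the subcritical or critical case), and $q^*<\mathbf{1}$ coordinatewise iff $\rho(B)>1$ (supercritical). To make this algorithmic without computing eigenvalues exactly, I would proceed by strongly connected components (SCCs) of the dependency graph and process them in reverse topological order. For a bottom SCC $C$, restrict to the subsystem on $C$ (all variables outside $C$ that it depends on have already been resolved and substituted in). If that subsystem, after substitution, is ``proper'' — i.e.\ no leakiness and no constant terms survive in $C$ — then $q^*_i=1$ for all $i\in C$ iff the restricted Jacobian at $\mathbf{1}$ has $\rho\le 1$; testing $\rho(B)\le 1$ for a nonnegative matrix $B$ reduces to an LP feasibility / Collatz–Wielandt condition (does there exist $x>0$ with $Bx\le x$?), which is solvable in P-time. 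Otherwise — if the subsystem restricted to $C$ has any leak or any genuinely positive constant term, or depends on a lower variable already known to have value $<1$ — then $q^*_i<1$ for all $i\in C$. Propagating these determinations upward (a variable $x_i$ with an affine equation has $q^*_i=1$ iff all variables it references have value $1$ and the equation is non-leaky; a monomial $x_i=x_jx_k$ has $q^*_i=1$ iff $q^*_j=q^*_k=1$) finishes all SCCs in P-time.

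The main obstacle, and the step needing the most care, is the correctness of the spectral-radius dichotomy at the level of a single SCC and the reduction of ``$\rho(B)\le 1$'' to a polynomial-time test: one must argue that after the preprocessing (removal of zero-variables, substitution of already-resolved lower variables) each bottom SCC is effectively a strongly connected, proper probabilistic polynomial subsystem, so that the classical criticality theorem applies verbatim; handle degenerate cases such as linear (affine) SCCs, monomial SCCs, and SCCs that mix the two; and verify that the Collatz–Wielandt / LP characterization of $\rho(B)\le 1$ is exact even in the critical boundary case $\rho(B)=1$. I expect these to be routine given the branching-process literature, but they are where the real content lies; the graph-reachability and SCC bookkeeping are straightforward. (The cited source \cite{rmc} supplies precisely this analysis, so in the paper I would invoke it; here I have sketched the self-contained argument.)
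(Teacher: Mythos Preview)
The paper does not supply its own proof of this proposition; it is simply cited from \cite{rmc}. What the paper does disclose about the argument (in the introduction and in a footnote in Appendix~\ref{sec:quadratic-conv-quant-decision}) matches your sketch exactly: deciding $q^*_i=0$ is a purely combinatorial (AND/OR reachability) computation depending only on which coefficients are nonzero, and deciding $q^*_i=1$ proceeds by SCC decomposition together with an LP-based test of whether the spectral radius of the moment matrix (Jacobian at $\mathbf{1}$) of each irreducible piece exceeds $1$. So your plan is the intended one.

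One point to tighten: your ``propagating these determinations upward'' sentence reads as if, once bottom SCCs are resolved, higher variables can be decided by the simple per-equation rules you state (all children equal $1$ and no leak $\Rightarrow$ value $1$; otherwise $<1$). That is not enough for a non-bottom SCC with internal cycles: after substituting the already-resolved lower variables, you must rerun the spectral-radius test on the resulting (now self-contained) subsystem for that SCC. Your earlier description (``process SCCs in reverse topological order \ldots\ restrict to the subsystem on $C$'') already says this correctly; just make sure the final ``propagation'' remark does not undercut it.
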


\noindent Thus, for every PPS, we can detect in P-time 
all the variables $x_j$ such that $q^*_j = 0$
or $q^*_j = 1$.   We can then remove these variables
and  their corresponding equation $x_j = P_j(x)$,
and substitute their values on the right hand sides (RHS)
of the remaining equations.
This yields a new PPS, $x' = P'(x')$, where its LFP solution, 
$q'^*$, is ${\textbf 0} < q'^* < {\textbf 1}$, which corresponds to 
the remaining coordinates of $q^*$.

{\bf\em We can thus henceforth assume, w.l.o.g., that 
any given PPS, $x= P(x)$, 
is in SNF form and has an LFP 
solution $q^*$ such that $\textbf{0} < q^* < \textbf{1}$.}

For a MPS or PPS, $x=P(x)$,  its variable
{\em dependency graph} is defined to be the digraph $H = (V,E)$, with 
vertices $V=\{x_1,\ldots,x_n\}$, such that 
$(x_i,x_j) \in E$ iff in $P_i(x) = \sum_{r \in R_i} p_r x^{v(\alpha_r)}$ 
there is a coefficient $p_r > 0$ such that $v(\alpha_r)_j > 0$.
Intuitively, $(x_i, x_j) \in E$ means that $x_i$ ``depends directly''
on $x_j$.
A MPS or PPS, $x=P(x)$, is called {\bf\em strongly connected} if its 
dependency graph $H$
is strongly connected.
As in \cite{rmc}, for  analysing PPSs we will find it very useful to 
decompose the PPS based on the {\em strongly connected components} 
(SCCs) of its variable dependency graph.

\section{Polynomial upper bounds for Newton's method on PPSs}
\label{sec:poly-bound-newton}

To find a solution for a differentiable system of equations 
$F(x) = {\textbf 0}$, in $n$ variables, {\em Newton's method} uses the
following iteration scheme: start with some initial vector 
$x^{(0)} \in \real^n$, and 
for $k > 0$ let:\\
$x^{(k+1)} := x^{(k)} - F'(x^{(k)})^{-1}(F(x^{(k)}))$,
where $F'(x)$ is the  Jacobian matrix of $F(x)$.

Let $x=P(x)$ be a given PPS (or MPS) in $n$ variables.
Let $B(x) := P'(x)$ denote the Jacobian matrix of $P(x)$.
In other words, $B(x)$ is an $n \times n$ matrix such that 
$B(x)_{i,j} = \frac{\partial P_i(x)}{\partial x_j}$.
Using Newton iteration, starting at $n$-vector $x^{(0)} :=  {\bf 0}$,
yields the following iteration:
\begin{equation}\label{newton-eq}
x^{(k+1)} := x^{(k)} + (I-B(x^{(k)}))^{-1} (P(x^{(k)}) - x^{(k)}))
\end{equation}

\noindent For a vector $z \in \real^n$, 
assuming that matrix $(I - B(z))$ is non-singular, 
we define a single iteration of Newton's method for $x=P(x)$ on $z$ 
via the following operator:
\begin{equation}\label{newton-one-it-eq}
\mathcal{N}_P(z) :=     z + (I-B(z))^{-1} (P(z) - z)
\end{equation}

\noindent It was shown in \cite{rmc} that 
for any MPS, $x= P(x)$, with LFP $q^* \in \real^N_{\geq 0}$,
if we first find and remove the variables that have value $0$ in the 
LFP, $q^*$,
and apply a decomposed variant of Newton's method
that decomposes the system according to the strongly connected 
components (SCCs)
of the dependency graph and processes them bottom-up,
then the values converge
{\em monotonically} to $q^*$.
PPSs are a special case of MPSs, so the same applies to PPSs.
In \cite{lfppoly}, it was pointed out that if $q^* > 0$, i.e., after we remove
the variables $x_i$ where $q^*_i = 0$, decomposition into SCCs 
isn't strictly necessary. (Decomposition is nevertheless very useful 
in practice, as well as in the theoretical analysis,
including in this paper.).
Thus:

\begin{proposition}[cf.  Theorem 6.1 of \cite{rmc} and Theorem 4.1 of \cite{lfppoly}]
\label{prop:monotone-conv-newt}
Let $x=P(x)$ be a MPS, with LFP 
$q^* > {\textbf 0}$.
Then starting at $x^{(0)} := \textbf{0}$, the Newton iterations $x^{(k+1)} := 
\mathcal{N}_P(x^{(k)})$ are  well
defined and monotonically converge to $q^*$, i.e.
$\lim_{k \rightarrow \infty}  x^{(k)} = q^*$,
and $x^{(k+1)} \geq x^{(k)} \geq \textbf{0}$ for all $k \geq 0$.
\end{proposition}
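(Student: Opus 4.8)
The plan is to establish all three claims — well-definedness, monotonicity, and convergence to $q^*$ — by a single induction maintaining an invariant on the iterates, using only elementary properties of monotone polynomial operators. First I would record the facts that drive everything: since each $P_i$ has nonnegative coefficients, $B(x) = P'(x)$ is entrywise nonnegative and monotone in $x$ on $\real^n_{\geq 0}$, and Taylor-expanding the polynomial $P$ around $x$ for $\mathbf 0 \le x \le y$ gives the two-sided bound $P(x) + B(x)(y - x) \le P(y) \le P(x) + B(y)(y - x)$ (left: the discarded terms are nonnegative; right: $B$ is monotone). I will also use that whenever $\rho(B(z)) < 1$ the matrix $(I - B(z))^{-1}$ exists, equals $\sum_{k \ge 0} B(z)^k$, and is therefore $\ge I$ entrywise.

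The core is a one-step lemma: if $\mathbf 0 \le z \le q^*$, $z \le P(z)$, and $I - B(z)$ is nonsingular with nonnegative inverse, then $w := \mathcal N_P(z)$ satisfies $z \le w \le q^*$ and $w \le P(w)$. Indeed $w - z = (I - B(z))^{-1}(P(z) - z) \ge \mathbf 0$ gives $w \ge z$; the left Taylor bound at $(z, q^*)$ gives $B(z)(q^* - z) \le q^* - P(z)$, hence $(I - B(z))(q^* - w) = (I - B(z))(q^* - z) - (P(z) - z) \ge \mathbf 0$, so $q^* \ge w$ after applying the nonnegative inverse; and the left Taylor bound at $(z, w)$ gives $P(w) - w \ge (P(z) - z) - (I - B(z))(w - z) = \mathbf 0$. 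Since $x^{(0)} = \mathbf 0$ satisfies the hypotheses ($P(\mathbf 0) \ge \mathbf 0$), the lemma propagates the invariant: $\mathbf 0 \le x^{(k)} \le x^{(k+1)} \le q^*$ and $x^{(k)} \le P(x^{(k)})$ for all $k$, which is monotonicity. Convergence then follows quickly: the $x^{(k)}$ are nondecreasing and bounded above by $q^*$, so $x^{(k)} \to \hat q \le q^*$; from $(I - B(x^{(k)}))^{-1} \ge I$ we get $x^{(k+1)} - x^{(k)} \ge P(x^{(k)}) - x^{(k)} \ge \mathbf 0$, and since the left side tends to $\mathbf 0$ so does $P(x^{(k)}) - x^{(k)}$, whence $P(\hat q) = \hat q$ by continuity of $P$; minimality of $q^*$ forces $\hat q \ge q^*$, so $\hat q = q^*$. (As a byproduct this shows Newton dominates value iteration step by step.)

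The one genuine obstacle is the hypothesis feeding the induction — that $I - B(x^{(k)})$ is nonsingular with nonnegative inverse, i.e. $\rho(B(x^{(k)})) < 1$, at every step — and this is the only place the assumption $q^* > \mathbf 0$ is used. I would handle it through the strongly-connected-component decomposition of the variable dependency graph, processed bottom-up: a trivial SCC contributes a $1 \times 1$ zero diagonal block, while for a nontrivial SCC $S$ the principal submatrix $B_S(z)$ is irreducible, so Perron--Frobenius applies; combined with $q^*_i - x^{(k)}_i > 0$ for $i \in S$ — itself tracked as part of the invariant, with $q^* > \mathbf 0$ giving the base case and the Taylor bounds the inductive step — one obtains $\rho(B_S(z)) < 1$, and assembling the SCCs yields $\rho(B(x^{(k)})) < 1$ together with nonnegativity of $(I - B(x^{(k)}))^{-1}$. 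This spectral bound is the technical heart; it is precisely what \cite{rmc} established in the strongly connected setting and \cite{lfppoly} extended to the general case, so for the write-up I would either invoke those statements directly or reproduce this SCC/Perron--Frobenius argument.
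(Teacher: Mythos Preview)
The paper does not prove this proposition itself---it is cited from \cite{rmc,lfppoly}---and the only in-paper remark toward a proof is that for PPSs with $0<q^*<1$, monotone convergence follows once one knows $P(x^{(k)})\ge x^{(k)}$ together with Theorem~\ref{thm:spec-full}. Your one-step lemma and convergence argument are correct and amount to exactly this reasoning, carried out via the Taylor inequalities $P(x)+B(x)(y-x)\le P(y)$ for $0\le x\le y$; this is also the shape of the argument in the cited works, so there is nothing to compare against beyond saying your reconstruction is faithful.

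You correctly isolate the spectral bound $\rho(B(x^{(k)}))<1$ as the genuine content and propose the right mechanism (SCC decomposition, Perron--Frobenius, tracking a strict gap $x^{(k)}_S<q^*_S$ on nontrivial SCCs). The only place your sketch is thin is the inductive preservation of that strict gap: if you write it out rather than cite, separate the linear SCCs---where Newton lands exactly on the local fixed point in one step and one must argue $\rho(B_S)<1$ directly from the existence of a strictly positive LFP---from the nonlinear SCCs, where $B(q^*)-B(z)$ has a positive entry inside $S$ whenever $z_S<q^*_S$ and irreducibility then propagates strict positivity of $q^*-\mathcal N(z)$ throughout $S$. The exact-error identity $q^*-\mathcal N(z)=(I-B(z))^{-1}\frac{B(q^*)-B(z)}{2}(q^*-z)$ (Lemma~\ref{newton} in the SNF case, with the obvious higher-degree analogue in general) makes that propagation transparent; your phrase ``the Taylor bounds [give] the inductive step'' is pointing at this but is a bit telegraphic.
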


We will actually establish an
extension of this result in this paper,
because in Section \ref{sec:p-time} we will need to show 
that even when each iterate is 
suitably {\em rounded off}, the rounded Newton iterations
are all well-defined and converge to $q^*$.
The main goal of this section 
is to show that for PPSs, $x=P(x)$, with LFP $0 < q^* < 1$, 
polynomially many iterations of Newton's method,
{\em using exact rational arithmetic}, suffice,
as a function of $|P|$ and  $j$,
to compute $q^*$ to within additive error $1/2^j$.
In fact, we show a much stronger
{\em linear} upper bound with small explicit constants:

\begin{thm}[{\bf Main Theorem of Section \ref{sec:poly-bound-newton}}] 
\label{linearconvgen} 
Let $x=P(x)$ be any PPS in SNF form, with
LFP $q^*$, such that $\mathbf{0} < q^* < \mathbf{1}$.
If we start Newton iteration at $x^{(0)} := \textbf{0}$,
with $x^{(k+1)} :=  \mathcal{N}_P(x^{(k)})$, 
then for any integer $j \geq 0$ the following inequality holds:
$\| q^* - x^{(j + 4|P|)}\|_{\infty} \leq 2^{-j}$.
\end{thm}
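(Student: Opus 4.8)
The goal is a linear bound on Newton iterations for a PPS in SNF with $\mathbf 0 < q^* < \mathbf 1$. The natural strategy, building on \cite{lfppoly} and \cite{fixp}, is a two-phase analysis: a ``bootstrapping'' phase that gets the iterate from $\mathbf 0$ up to some vector that is a constant-factor approximation of $q^*$ (in a suitable multiplicative sense), taking $O(|P|)$ iterations; followed by a ``linear convergence'' phase in which each further iteration roughly halves the error, taking $j + O(1)$ iterations to reach error $2^{-j}$. I would aim to make both phases explicit enough to extract the clean constant $4|P|$.

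\textbf{Key quantitative ingredient.} The heart of the argument is controlling how close $q^*$ is to $\mathbf 1$ and to $\mathbf 0$, i.e., lower bounds on $\min_i(1-q^*_i)$ and $\min_i q^*_i$, in terms of $|P|$. Since $q^*$ is a solution of a polynomial system with coefficients of bit-size at most $|P|$, one expects bounds of the form $1 - q^*_i \geq 2^{-c|P|}$ and $q^*_i \geq 2^{-c|P|}$ for a small constant $c$ (this is the kind of estimate that makes the number $4|P|$ appear). I would establish these first, probably via the SNF structure: for Form$_+$ equations the leakiness or the dependence structure forces $1-q^*_i$ away from $0$, and for Form$_*$ equations $q^*_i = q^*_j q^*_k$ propagates the bound. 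The standard tool here is to relate $\mathbf 1 - q^*$ to the spectral radius / the matrix $(I - B(q^*))^{-1}$ and use that $B(q^*)$ is substochastic-like; one shows $\rho(B(q^*)) < 1$ and quantifies the gap.

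\textbf{The convergence recursion.} For the per-iteration progress, the classical fact (from \cite{rmc,lfppoly}) is the identity/inequality
\[
q^* - \mathcal{N}_P(x) \;\leq\; (I - B(x))^{-1} B(q^*)\,(q^* - x)
\]
valid for $\mathbf 0 \le x \le q^*$, combined with monotonicity $x^{(k)} \le q^*$. Writing the error $e^{(k)} := q^* - x^{(k)}$, one wants to show $e^{(k+1)} \le M^{(k)} e^{(k)}$ where $M^{(k)}$ has norm bounded by something that starts near $1$ and, once $x^{(k)}$ is close enough to $q^*$, drops to $\le 1/2$. I would track a scalar potential like $\max_i e^{(k)}_i / (1 - q^*_i)$ or $\|D^{-1}e^{(k)}\|_\infty$ for a well-chosen diagonal $D$; the SNF form keeps $B(x)$ sparse (entries are single variables or constants) which makes these matrix-norm estimates tractable. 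The bootstrapping phase shows this potential drops below $1/2$ within $\le 3|P|$ (say) steps, and the linear phase then gives the extra $j+|P|$ steps, summing to $j + 4|P|$.

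\textbf{Main obstacle.} The delicate part is the bootstrapping/initial phase: bounding how many iterations it takes to escape the region where $x^{(k)}$ is far from $q^*$ and the contraction factor is close to $1$. In the strongly connected case \cite{lfppoly} this is handled via a cone/eigenvector argument on $B(q^*)$, but here the system need not be strongly connected, so I would decompose along SCCs of the dependency graph (as the paper flags), prove the bound SCC-by-SCC in a bottom-up induction, and argue that the number of ``bad'' iterations is additive — not multiplicative — across the SCC DAG, with each SCC contributing at most a number of iterations proportional to its own encoding size. Making that additivity rigorous, and getting the constant down to $4$, is where the real work lies; the linear-convergence tail is comparatively routine given the quantitative gap bounds on $q^*$.
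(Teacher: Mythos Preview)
Your overall strategy---a two-phase analysis plus an SCC-by-SCC decomposition of the Newton iterates---is substantially more complicated than the paper's route, and the step you correctly flag as the ``main obstacle'' (making the bad-iteration count additive rather than multiplicative across SCCs) is a real obstacle that your proposal does not actually resolve; this is essentially where \cite{lfppoly} got stuck, obtaining only exponential bounds in the strongly connected case and no general bound.

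The key idea you are missing is Lemma~\ref{1-qmean}: for a PPS (crucially using $P(\mathbf 1)\le \mathbf 1$, which fails for general MPSs), the vector $\mathbf 1 - q^*$ is itself a \emph{global} cone vector for $B(q^*)$, i.e.\ $B(q^*)(\mathbf 1-q^*)\le \mathbf 1-q^*$, with no SCC decomposition needed. Combined with the exact identity $q^*-\mathcal N_P(z)=(I-B(z))^{-1}\tfrac{1}{2}(B(q^*)-B(z))(q^*-z)$ of Lemma~\ref{newton} (note the factor $\tfrac12$ and the \emph{difference} of Jacobians, not the inequality you wrote), the cone lemma of \cite{lfppoly} (Lemma~\ref{cone}) gives halving of the error along the direction $\mathbf 1-q^*$ from the very first iteration---there is no bootstrapping phase at all. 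The iteration count then becomes simply $j+\lceil\log_2\frac{(\mathbf 1-q^*)_{\max}}{(\mathbf 1-q^*)_{\min}}\rceil$ (Lemma~\ref{linearconv}), and everything reduces to the single scalar bound $(\mathbf 1-q^*)_{\min}\ge 2^{-4|P|}$ (Theorem~\ref{1comp}). The SCC decomposition is used only inside the proof of that last bound, not in the convergence analysis of Newton's method; and no lower bound on $q^*_i$ is needed anywhere.
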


\noindent We need a sequence of Lemmas.
The next two Lemmas are proved in the appendix.

\begin{lem} \label{int} Let $x=P(x)$ be a MPS, with $n$ variables, in SNF form,
and let $a, b \in \real^n$. Then:
$$P(a)-P(b) = B(\frac{a + b}{2})(a-b)= \frac{B(a) + B(b)}{2}(a-b)$$ \end{lem}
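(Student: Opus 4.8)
The plan is to verify the two claimed identities \emph{coordinate by coordinate}, using the fact that $x=P(x)$ is in SNF form: for each $i$, the polynomial $P_i$ is either affine (Form$_+$) or a single quadratic monomial $x_jx_k$ (Form$_*$, possibly with $j=k$). Fix $i$ and write $B_i(x)$ for the $i$-th row of the Jacobian $B(x)$; the goal for that coordinate is $P_i(a)-P_i(b)=B_i(\tfrac{a+b}{2})(a-b)=\tfrac{B_i(a)+B_i(b)}{2}(a-b)$.

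The structural observation that drives the whole argument is that, since $P$ is in SNF, every $P_i$ has total degree at most $2$, so every entry $\partial P_i/\partial x_j$ of $B(x)$ is an \emph{affine} function of $x$. An affine map commutes with taking midpoints, so $B(\tfrac{a+b}{2})=\tfrac{B(a)+B(b)}{2}$ holds identically; this already gives the second equality, and it remains only to prove the first.

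For the first equality I would split on the form of $P_i$. If $P_i(x)=\sum_j p_{i,j}x_j+p_{i,0}$ is affine, then $B_i$ is the constant row $(p_{i,j})_j$ and $P_i(a)-P_i(b)=\sum_j p_{i,j}(a_j-b_j)=B_i(z)(a-b)$ for \emph{any} evaluation point $z$, in particular $z=\tfrac{a+b}{2}$. If $P_i(x)=x_jx_k$ with $j\neq k$, then $B_i(z)(a-b)=z_k(a_j-b_j)+z_j(a_k-b_k)$; setting $z=\tfrac{a+b}{2}$ and expanding, the cross terms $-a_kb_j+a_kb_j$ and $a_jb_k-a_jb_k$ cancel, leaving $a_ja_k-b_jb_k=P_i(a)-P_i(b)$. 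The diagonal case $P_i(x)=x_j^2$ is the same computation with Jacobian entry $2x_j$: $B_i(\tfrac{a+b}{2})(a-b)=(a_j+b_j)(a_j-b_j)=a_j^2-b_j^2$. Alternatively one can avoid the case split entirely: by the fundamental theorem of calculus applied to $t\mapsto P(b+t(a-b))$ one has $P(a)-P(b)=\bigl(\int_0^1 B(b+t(a-b))\,dt\bigr)(a-b)$, and since $B$ is affine the integrand is an affine function of $t$, whose integral over $[0,1]$ equals its value at the midpoint $t=\tfrac12$, namely $B(\tfrac{a+b}{2})$.

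I do not expect a genuine obstacle: the lemma is an exact algebraic identity valid for all $a,b\in\real^n$, and SNF is exactly the hypothesis that makes $B$ affine, which is the only fact needed. The one thing to keep an eye on is the diagonal monomial case $P_i=x_j^2$, where the Jacobian entry carries the factor $2$; both formulas above absorb it correctly.
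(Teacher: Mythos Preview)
Your proposal is correct. The paper takes exactly the calculus route you sketch as an alternative: it writes $P(a)-P(b)=\bigl(\int_0^1 B(ta+(1-t)b)\,dt\bigr)(a-b)$ via the fundamental theorem of calculus and the chain rule, then observes that each entry of $B(ta+(1-t)b)$ has the form $\alpha+\beta t$ (since $B$ is affine in SNF), so the integral over $[0,1]$ is $\alpha+\beta/2$, which coincides with both $B(\tfrac{a+b}{2})_{ij}$ and $\tfrac{1}{2}(B(a)+B(b))_{ij}$.

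Your primary argument---the coordinate-by-coordinate case split on Form$_+$ versus Form$_*$---is a more elementary alternative that avoids any calculus and makes the cancellation in the quadratic case explicit; the paper's integral approach is slicker and handles all coordinates uniformly without splitting on the form of $P_i$. Both rest on the same structural fact (SNF $\Rightarrow$ $B$ affine), and your observation that affinity of $B$ immediately gives $B(\tfrac{a+b}{2})=\tfrac{B(a)+B(b)}{2}$ is a clean way to dispatch the second equality up front.
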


\begin{lem} \label{newton} 
Let $x=P(x)$ be a MPS in SNF form.
Let $z \in \real^n$ be any vector  
such that $(I - B(z))$ is non-singular, and thus 
$\mathcal{N}_P(z)$ is defined. 
Then:
$$q^* -  \mathcal{N}_P(z) = (I-B(z))^{-1}\frac{B(q^*) - B(z)}{2}(q^* - z)$$
\end{lem}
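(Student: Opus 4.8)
The plan is to derive the identity by a direct algebraic manipulation, using only the definition of $\mathcal{N}_P$ in equation \eqref{newton-one-it-eq}, the fact that $q^*$ is a fixed point ($q^* = P(q^*)$), and the mean-value-type identity of Lemma \ref{int}. No iteration or convergence analysis is needed; this is a one-shot computation valid for \emph{any} vector $z$ at which $(I-B(z))$ is invertible.

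First I would rewrite the target quantity by factoring out $(I-B(z))^{-1}$. Starting from $\mathcal{N}_P(z) = z + (I-B(z))^{-1}(P(z)-z)$, and writing $q^* - z = (I-B(z))^{-1}(I-B(z))(q^*-z)$, we get
\[
q^* - \mathcal{N}_P(z) = (I-B(z))^{-1}\bigl[(I-B(z))(q^*-z) - (P(z)-z)\bigr].
\]
Expanding the bracket gives $(q^*-z) - B(z)(q^*-z) - P(z) + z = q^* - P(z) - B(z)(q^*-z)$. So it remains to show that $q^* - P(z) - B(z)(q^*-z) = \tfrac{1}{2}(B(q^*)-B(z))(q^*-z)$.

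The key step is to invoke $q^* = P(q^*)$, so that $q^* - P(z) = P(q^*) - P(z)$, and then apply Lemma \ref{int} with $a = q^*$ and $b = z$, which yields $P(q^*) - P(z) = \tfrac{1}{2}(B(q^*) + B(z))(q^*-z)$. Subtracting $B(z)(q^*-z) = \tfrac{1}{2}\cdot 2B(z)(q^*-z)$ then gives exactly $\tfrac{1}{2}(B(q^*) - B(z))(q^*-z)$, completing the derivation. I do not anticipate a genuine obstacle here: the only thing to be careful about is that the PPS (hence MPS) is assumed to be in SNF form so that Lemma \ref{int} applies verbatim (this is where the quadratic/linear structure of $P$ is used, making $P(a)-P(b)$ exactly bilinear in the averaged Jacobian), and that $(I-B(z))$ is invertible by hypothesis so the leading factor $(I-B(z))^{-1}$ is legitimate. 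One should also note $q^*$ exists and equals the LFP (guaranteed for PPSs, and assumed for the MPS here via the standing hypothesis that an LFP exists), so that $q^* = P(q^*)$ is available.
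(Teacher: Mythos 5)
Your proposal is correct and follows essentially the same route as the paper's proof: factor out $(I-B(z))^{-1}$, use the fixed-point identity $q^* = P(q^*)$ together with Lemma \ref{int} applied to $a=q^*$, $b=z$, and simplify the resulting matrix difference to $\tfrac{1}{2}(B(q^*)-B(z))$. The only difference is cosmetic (you expand the bracket before invoking Lemma \ref{int}, the paper substitutes the lemma's rearranged form first), and your remarks on where SNF and the invertibility hypothesis are used are exactly right.
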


To prove their exponential upper bounds for {\em strongly connected} PPSs, 
\cite{lfppoly} used the notion of a {\em cone vector}
for the matrix $B(q^*)$, 
that is a vector $d > 0$ such that $B(q^*)d \leq d$.
For a strongly connected MPS, $x=P(x)$, with $q^* > 0$,
the matrix $B(q^*) \geq 0$ is irreducible, and thus 
has a positive eigenvector. 
They used this eigenvector as their 
cone vector $d > 0$. 
However, such an eigenvector 
yields only weak (exponential) bounds.
Instead, we show there is  a different cone vector for
$B(q^*)$, and even for $B(\frac{1}{2}(\textbf{1} + q^*))$, that works
for arbitrary (not necessarily strongly-connected) PPSs:

\begin{lem} \label{1-qmean} If $x= P(x)$ is a PPS in 
$n$ variables, in SNF form,
with LFP ${\textbf 0} < q^* < {\textbf 1}$, and where
$P(x)$ has
Jacobian $B(x)$, then $\forall z \in \real^n$ 
such that $\textbf{0} \leq z \leq \frac{1}{2}(\textbf{1} + q^*)$:  \ \ $B(z)(\textbf{1}-q^*) \leq (\textbf{1}-q^*)$.\\  
In particular, $B(\frac{1}{2}(\textbf{1} + q^*))(\textbf{1}-q^*) \leq (\textbf{1}-q^*)$, and 
$B(q^*)(\textbf{1}-q^*) \leq (\textbf{1}-q^*)$.
\end{lem}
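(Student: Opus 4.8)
The plan is to prove $B(z)(\mathbf{1}-q^*) \leq (\mathbf{1}-q^*)$ one coordinate at a time, exploiting the Simple Normal Form: for each $i$, the polynomial $P_i$ is either a single quadratic monomial (Form$_*$) or an affine function (Form$_+$), so the $i$-th row of the Jacobian $B(z)$ is extremely simple. Throughout I will use two facts: (a) $q^*$ is a fixed point, $q^* = P(q^*)$, which lets me rewrite $q^*_i$ in terms of the other coordinates; and (b) since $q^* < \mathbf{1}$, the vector $\mathbf{1} - q^*$ is (strictly) positive, so in the nonnegative matrix $B(z)$ every entry may be bounded from above by replacing $z$ with its stated upper bound $\frac{1}{2}(\mathbf{1}+q^*)$ without reversing the inequality.

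For a Form$_+$ row, $P_i(x) = \sum_{j \in \mathcal{C}_i} p_{i,j} x_j + p_{i,0}$, the $i$-th row of $B(z)$ is the constant vector with $p_{i,j}$ in column $j$ for $j \in \mathcal{C}_i$ and zeros elsewhere, independent of $z$. So the $i$-th coordinate of $B(z)(\mathbf{1}-q^*)$ is $\sum_{j \in \mathcal{C}_i} p_{i,j}(1-q^*_j)$. Substituting $q^*_i = \sum_{j \in \mathcal{C}_i} p_{i,j} q^*_j + p_{i,0}$ from the fixed-point equation, the desired inequality $\sum_{j} p_{i,j}(1-q^*_j) \leq 1 - q^*_i$ collapses to $\sum_{j \in \mathcal{C}_i} p_{i,j} + p_{i,0} \leq 1$, which is exactly the defining property (II$'$) of a PPS. (In particular these rows hold for every $z \geq \mathbf{0}$, with equality precisely for the non-leaky rows.)

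For a Form$_*$ row, $P_i(x) = x_j x_k$, we have $q^*_i = q^*_j q^*_k$ and the $i$-th row of $B(z)$ has entry $z_k$ in column $j$ and $z_j$ in column $k$ (or the single entry $2z_j$ in column $j$ when $j = k$). Hence the $i$-th coordinate of $B(z)(\mathbf{1}-q^*)$ is $z_k(1-q^*_j) + z_j(1-q^*_k)$. Since $1-q^*_j, 1-q^*_k \geq 0$ and $0 \leq z_j \leq \frac{1}{2}(1+q^*_j)$, $0 \leq z_k \leq \frac{1}{2}(1+q^*_k)$, this is at most $\frac{1}{2}(1+q^*_k)(1-q^*_j) + \frac{1}{2}(1+q^*_j)(1-q^*_k)$, and expanding this gives exactly $1 - q^*_j q^*_k = 1 - q^*_i$; the degenerate case $j=k$ is the still-easier bound $2z_j(1-q^*_j) \leq (1+q^*_j)(1-q^*_j) = 1-q^*_i$. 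Applying the resulting inequality $B(z)(\mathbf{1}-q^*) \leq \mathbf{1}-q^*$ to the two admissible choices $z = \frac{1}{2}(\mathbf{1}+q^*)$ and $z = q^*$ yields the two ``in particular'' statements.

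I do not expect a genuine obstacle in this lemma; it is essentially a term-by-term computation. The one point that deserves care is that the bound $z \leq \frac{1}{2}(\mathbf{1}+q^*)$ really is used for the Form$_*$ rows and cannot be relaxed to, say, $z \in [0,1]^n$ --- taking $z = \mathbf{1}$ would give $z_k(1-q^*_j)+z_j(1-q^*_k) = (1-q^*_j)+(1-q^*_k)$, which exceeds $1-q^*_jq^*_k$ in general --- and that the algebraic identity $\frac{1}{2}(1+q^*_k)(1-q^*_j)+\frac{1}{2}(1+q^*_j)(1-q^*_k) = 1-q^*_jq^*_k$ is exactly what pins down the constant $\frac{1}{2}$; any larger constant would break the quadratic rows.
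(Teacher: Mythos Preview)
Your proof is correct, but it takes a different route than the paper. The paper's argument is uniform: apply Lemma~\ref{int} with $a=\mathbf{1}$ and $b=q^*$ to obtain $P(\mathbf{1})-q^* = B(\tfrac{1}{2}(\mathbf{1}+q^*))(\mathbf{1}-q^*)$, then use $P(\mathbf{1})\leq\mathbf{1}$ (the PPS property) to conclude $B(\tfrac{1}{2}(\mathbf{1}+q^*))(\mathbf{1}-q^*)\leq\mathbf{1}-q^*$, and finally use monotonicity of $B(\cdot)$ in its argument to pass to any $z$ with $\mathbf{0}\leq z\leq\tfrac{1}{2}(\mathbf{1}+q^*)$. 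No case split on SNF is needed; the midpoint $\tfrac{1}{2}(\mathbf{1}+q^*)$ drops out automatically as the mean-value point of Lemma~\ref{int}.

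Your coordinate-by-coordinate computation is more hands-on but has its own merits: it does not invoke Lemma~\ref{int} at all, it makes explicit that Form$_+$ rows satisfy the inequality for \emph{every} $z\geq\mathbf{0}$ (with equality exactly in the non-leaky case), and your closing remark pinpoints that the bound $z\leq\tfrac{1}{2}(\mathbf{1}+q^*)$ is genuinely required only by the Form$_*$ rows, via the identity $\tfrac{1}{2}(1+q^*_k)(1-q^*_j)+\tfrac{1}{2}(1+q^*_j)(1-q^*_k)=1-q^*_jq^*_k$. The paper's version is shorter and conceptually explains the constant $\tfrac{1}{2}$ as arising from the mean-value lemma; yours is self-contained and exposes more structure in the two row types.
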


 \begin{proof} Lemma \ref{int} applied to $\textbf{1}$ and $q^*$ gives:
$P(\textbf{1}) - P(q^*) = 
P(\textbf{1}) - q^* = B(\frac{1}{2}(\textbf{1} + q^*))(\textbf{1}-q^*)$.
But note that $P(\textbf{1}) \leq \textbf{1}$,
because for any PPS, since
the nonnegative coefficients of each polynomial $P_i(x)$ sum to $\leq 1$,
$P(x)$ maps $[0,1]^n$ to $[0,1]^n$.
Thus  $\textbf{1} - q^* \geq 
P(\textbf{1}) - q^* = 
B(\frac{1}{2}(\textbf{1} + q^*))(\textbf{1}-q^*)$.
Now observe that for $0 \leq z \leq \frac{1}{2}(\textbf{1} + q^*)$,
$B(\frac{1}{2}(\textbf{1} + q^*)) \geq B(z) \geq 0$, because
the entries of Jacobian $B(x)$ have nonnegative coefficients.
Thus since $(\textbf{1} - q^*) \geq 0$, 
we have $(\textbf{1}-q^*) \geq B(z)(\textbf{1}-q^*)$.\end{proof}

\noindent For a square matrix $A$, let $\rho(A)$ denote the spectral radius of $A$.

\begin{thm}\label{thm:spec-full} For any PPS, $x=P(x)$,
in SNF form, if we have $0 < q^* < 1$, 
then for all $0 \leq z \leq q^*$,
$\rho(B(z)) < 1$ and $(I-B(z))^{-1}$ exists and is nonnegative.
\end{thm}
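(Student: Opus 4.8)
The plan is to exploit Lemma \ref{1-qmean}, which already gives us a \emph{nonnegative cone vector} $d := \mathbf{1} - q^*$ with $d > 0$ (since $q^* < \mathbf{1}$) satisfying $B(q^*)d \leq d$, and more generally $B(z)d \leq d$ for all $0 \leq z \leq \frac12(\mathbf{1}+q^*)$, in particular for all $0 \leq z \leq q^*$. The standard fact I would invoke is that for a nonnegative matrix $A \geq 0$, the existence of a \emph{strictly positive} vector $d > 0$ with $Ad \leq d$ forces $\rho(A) \leq 1$; and if moreover some strict inequality is present in a way that reaches every strongly connected component, then $\rho(A) < 1$. So the crux is to upgrade the weak inequality $B(z)(\mathbf{1}-q^*) \leq (\mathbf{1}-q^*)$ to a genuinely strict spectral bound $\rho(B(z)) < 1$.

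First I would reduce to the strongly connected case by decomposing along the SCCs of the dependency graph of the PPS (restricted to the variables, which all have $0 < q^*_i < 1$): the spectrum of $B(z)$ is the union of the spectra of its diagonal blocks $B(z)^{(C)}$ over SCCs $C$, so it suffices to show $\rho(B(z)^{(C)}) < 1$ for each $C$. Within a single bottom SCC $C$, the block $B(z)^{(C)}$ is an irreducible nonnegative matrix (if $C$ is a single variable with no self-loop the block is $0$ and there is nothing to prove; otherwise irreducibility is exactly strong connectivity). For an irreducible $A \geq 0$ with a positive $d$ satisfying $Ad \leq d$, Perron–Frobenius theory gives $\rho(A) \leq 1$, with equality \emph{iff} $Ad = d$ exactly. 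So I need to rule out $B(z)^{(C)} d^{(C)} = d^{(C)}$. Here I would use that $z \leq q^*$ and, crucially, that the chain of inequalities producing the cone vector came from $P(\mathbf 1) \leq \mathbf 1$: if $B(z)^{(C)} d^{(C)} = d^{(C)}$ held, then tracing back through Lemma \ref{int} one would be forced into an equality $P(\mathbf 1)_i = 1$ for the coordinates $i \in C$ feeding into this, i.e. the $C$-subsystem would be ``non-leaky'' and behave like a system whose LFP is $\mathbf 1$ on $C$ — contradicting $q^*_i < 1$ for $i \in C$ together with the fact (from Proposition \ref{prop:monotone-conv-newt} / the theory of BPs, cf. \cite{rmc}) that for a bottom SCC, $q^*_i = 1$ for all $i\in C$ is equivalent to the relevant eigenvalue condition. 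The cleanest route is probably: if $\rho(B(q^*)^{(C)}) = 1$ then $q^*$ restricted to $C$ is a fixed point of the $C$-subsystem with spectral radius $1$, which by the known characterization forces $q^*_i = 1$ on $C$, a contradiction; and $\rho(B(z)^{(C)}) \leq \rho(B(q^*)^{(C)})$ by monotonicity of spectral radius on nonnegative matrices since $0 \leq B(z) \leq B(q^*)$ entrywise.

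Once $\rho(B(z)) < 1$ is established, the rest is standard linear algebra: $I - B(z)$ is invertible (no eigenvalue $1$), and the Neumann series $(I-B(z))^{-1} = \sum_{k\geq 0} B(z)^k$ converges; since every term $B(z)^k \geq 0$, the inverse is nonnegative. I expect the \textbf{main obstacle} to be the strictness argument $\rho(B(z)^{(C)}) < 1$: the inequality $B(z)d \leq d$ by itself only gives $\rho \leq 1$, and one genuinely needs to feed in the hypothesis $q^*_i < 1$ (equivalently, that after preprocessing no variable has LFP value $1$) to exclude the boundary case. Organizing that exclusion cleanly — ideally by citing the qualitative characterization from \cite{rmc} that identifies exactly when $q^*$ has a coordinate equal to $1$ in terms of $\rho(B(q^*))$ on the bottom SCCs — is where the real content lies; everything else is bookkeeping about SCC decomposition and Perron–Frobenius.
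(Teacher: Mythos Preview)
Your skeleton is right: cone vector $d=\mathbf{1}-q^*$ from Lemma~\ref{1-qmean}, reduction to SCCs, Perron--Frobenius to get $\rho\le 1$, then Neumann series. The reduction $\rho(B(z))\le\rho(B(q^*))$ by entrywise monotonicity is also exactly what the paper does first. The difference --- and the place where your proposal is shaky --- is the strictness step.

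You want to exclude $\rho(B(q^*)^{(C)})=1$ by citing a ``known characterization from \cite{rmc}'' saying this would force $q^*_i=1$ on $C$. But the characterization in \cite{rmc} is stated in terms of $\rho(B(\mathbf{1}))$, not $\rho(B(q^*))$; the statement you actually need (spectral radius of the Jacobian \emph{at the LFP} is $<1$ when $q^*<\mathbf 1$) is essentially the content of Lemma~\ref{scc-inverse-lem} itself, so invoking it is close to circular. Your ``tracing back through Lemma~\ref{int}'' sketch can be made to work, but it is not yet a proof.

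The paper's device is cleaner and self-contained: Lemma~\ref{1-qmean} gives the cone inequality not just at $q^*$ but at the \emph{midpoint} $\tfrac12(\mathbf 1+q^*)$. In a strongly connected SCC, either (i) the Jacobian is constant, in which case $q^*>0$ forces some positive constant term in $P$, hence some row of the (substochastic, irreducible) matrix $B$ sums to $<1$, giving $\rho(B)<1$; or (ii) the Jacobian is non-constant, so $B(q^*)\le B(\tfrac12(\mathbf 1+q^*))$ with strict inequality in some entry, which against the positive vector $\mathbf 1-q^*$ yields $(B(q^*)(\mathbf 1-q^*))_i<(\mathbf 1-q^*)_i$ for some $i$, and irreducibility propagates this to $B(q^*)^n(\mathbf 1-q^*)<\mathbf 1-q^*$, hence $\rho(B(q^*))<1$. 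For the general (non-strongly-connected) case the paper does not simply cite the block-triangular spectrum fact; it takes a nonnegative Perron eigenvector $v$ of $B(q^*)$ for the top eigenvalue, picks an SCC $S$ with $v_S\neq 0$ that depends only on SCCs with $v_{S'}=0$, so that $B_S(q^*)v_S=\rho(B(q^*))v_S$, and then applies the strongly connected bound to $S$.

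In short: your plan is viable, but the paper's midpoint trick is precisely what fills the gap you flagged as the ``main obstacle,'' and it avoids any appeal to external characterizations.
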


The proof (in the appendix) uses, among other things, Lemma \ref{1-qmean}.
Note that this theorem tells us, in particular,
that for {\em every} $z$ (including $q^*$), such that ${\textbf 0} \leq z \leq q^*$, 
the Newton iteration
${\mathcal N}_P(z)$ is well-defined.
This will be important in Section \ref{sec:p-time}.
We need the following Lemma from \cite{lfppoly}.
(To be self-contained, and to clarify our assumptions, we provide a short
proof in the appendix.)

\begin{lem}[Lemma 5.4 from \cite{lfppoly}] 
\label{cone} 
Let $x=P(x)$ be a MPS, with polynomials of 
degree bounded by 2, with LFP, $q^* \geq 0$.
Let $B(x)$ denote the Jacobian matrix of $P(x)$.
For any positive vector 
$\textbf{d} \in \mathbb{R}^n_{> 0}$ 
that satisfies $B(q^*) \textbf{d} \leq \textbf{d}$, any positive real value $\lambda > 0$, 
and any nonnegative vector $z \in \real^n_{\geq 0}$, 
if $q^* - z \leq \lambda \textbf{d}$, and $(I-B(z))^{-1}$
exists and is nonnegative, then 
$q^* - {\mathcal N}_P(z) \leq \frac{\lambda}{2} \textbf{d}$.\end{lem}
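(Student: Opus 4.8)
The plan is to reduce everything to the exact Newton-error identity and then estimate the right-hand side using only three ingredients: nonnegativity of $(I-B(z))^{-1}$, entrywise monotonicity of the Jacobian, and the cone inequality $B(q^*)\textbf{d}\le\textbf{d}$. Since the proof of Lemma \ref{newton} via Lemma \ref{int} uses nothing about $P$ beyond $\deg P\le 2$ (both sides of $P(a)-P(b)=\tfrac12(B(a)+B(b))(a-b)$ are linear in the coefficients of $P$, and the identity is immediate on each individual monomial of degree at most $2$), the same identity holds in the present setting:
\[ q^* - \mathcal{N}_P(z) \;=\; (I-B(z))^{-1}\,\tfrac12\big(B(q^*)-B(z)\big)(q^*-z). \]
So it suffices to show that $(I-B(z))^{-1}\big(B(q^*)-B(z)\big)(q^*-z)\le\lambda\textbf{d}$.

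Here are the steps in order. First, I would use that the vector $z$ in question lies below the fixed point, $0\le z\le q^*$ (this is the regime in which the lemma is used: the Newton iterates started at $\textbf{0}$ satisfy $x^{(k)}\le q^*$ by Proposition \ref{prop:monotone-conv-newt}), so that monotonicity of the Jacobian — whose entries are polynomials with nonnegative coefficients — gives $0\le B(z)\le B(q^*)$. Second, combining $B(q^*)-B(z)\ge 0$ with the hypothesis $q^*-z\le\lambda\textbf{d}$ and then invoking the cone inequality,
\[ \big(B(q^*)-B(z)\big)(q^*-z)\;\le\;\big(B(q^*)-B(z)\big)\lambda\textbf{d}\;=\;\lambda\big(B(q^*)\textbf{d}-B(z)\textbf{d}\big)\;\le\;\lambda\big(\textbf{d}-B(z)\textbf{d}\big)\;=\;\lambda\,(I-B(z))\textbf{d}. \]
Third, since $(I-B(z))^{-1}\ge 0$ by hypothesis, left-multiplying the displayed inequality by it preserves the order, and using $(I-B(z))^{-1}(I-B(z))=I$ we obtain $(I-B(z))^{-1}\big(B(q^*)-B(z)\big)(q^*-z)\le\lambda\textbf{d}$. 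Substituting this into the error identity and dividing by $2$ yields $q^*-\mathcal{N}_P(z)\le\tfrac{\lambda}{2}\textbf{d}$, as required.

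I do not expect a genuine obstacle: the content of the lemma is exactly this assembly of the Newton-error identity with Jacobian monotonicity and the cone hypothesis. The two points that need care are (i) that the estimate $\big(B(q^*)-B(z)\big)(q^*-z)\le\big(B(q^*)-B(z)\big)\lambda\textbf{d}$ requires $B(q^*)-B(z)\ge 0$, hence must be applied with $z$ below the LFP (so one should read "$0\le z\le q^*$" into the hypotheses, as is satisfied by every Newton iterate here); and (ii) the standard fact that a nonnegative matrix, and in particular $(I-B(z))^{-1}$, turns a vector inequality $u\le(I-B(z))v$ into $(I-B(z))^{-1}u\le v$. Neither is difficult.
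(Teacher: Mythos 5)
Your proof is correct and follows essentially the same route as the paper's: both start from the Newton-error identity of Lemma \ref{newton}, use nonnegativity of $(I-B(z))^{-1}$ and of $B(q^*)-B(z)$ together with the cone inequality $B(q^*)\mathbf{d}\le \mathbf{d}$, and your rearrangement $(B(q^*)-B(z))\lambda\mathbf{d}\le\lambda(I-B(z))\mathbf{d}$ is just the paper's rewriting $B(q^*)-B(z)=(I-B(z))-(I-B(q^*))$ in different clothing. The implicit reliance on $z\le q^*$ to obtain $B(q^*)-B(z)\ge 0$, which you rightly flag, is equally present in the paper's own proof, so your reading of the hypotheses matches how the lemma is actually applied.
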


\noindent For a vector $b \in \real^n$, we shall use the
following notation:  $b_{\text{min}} = \min_i b_i$, and $b_{\text{max}} = \max_i b_i$.

\begin{cor} 
\label{num-iteration-on-cone-lem}
Let $x=P(x)$ be MPS,  
with LFP $q^* > 0$, 
and let $B(x)$ be the Jacobian
matrix for $P(x)$. 
Suppose there is a vector $d \in \mathbb{R}^n$, 
$\textbf{0} < d \leq \textbf{1}$, such that $B(q^*)d \leq d$.
For any positive integer $j >0$, 
if we perform Newton's method starting at $x^{(0)} := \textbf{0}$, then: 
$ \| q^* - x^{(j - \lfloor \log_2 d_{\text{min}}  \rfloor)} \|_{\infty} \leq 2^{-j}$.
\end{cor}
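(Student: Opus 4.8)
The plan is to iterate Lemma~\ref{cone} starting from a suitable initial bound on $q^* - x^{(0)}$. Since $x^{(0)} = \textbf{0}$, we have $q^* - x^{(0)} = q^* \leq \textbf{1}$. Because $\textbf{0} < d \leq \textbf{1}$, every coordinate $d_i$ satisfies $d_i \geq d_{\text{min}}$, so $\textbf{1} \leq (1/d_{\text{min}}) d$ coordinate-wise, and hence $q^* - x^{(0)} \leq \lambda_0 d$ with $\lambda_0 := 1/d_{\text{min}}$. This gives the base case of an induction.

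\smallskip

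\noindent\emph{Proof.}
Set $\lambda_0 := 1/d_{\text{min}}$. Since $\textbf{0} < d \leq \textbf{1}$, for every $i$ we have $1 \leq d_i/d_{\text{min}} = \lambda_0 d_i$, so $q^* - x^{(0)} = q^* \leq \textbf{1} \leq \lambda_0 d$. By Proposition~\ref{prop:monotone-conv-newt} (applicable since $q^* > \textbf{0}$), the Newton iterates $x^{(k)}$ are well-defined, satisfy $\textbf{0} \leq x^{(k)} \leq q^*$, and converge monotonically to $q^*$; in particular, by Theorem~\ref{thm:spec-full} applied with $z = x^{(k)}$ (note $\textbf{0} \leq x^{(k)} \leq q^*$), the matrix $(I - B(x^{(k)}))^{-1}$ exists and is nonnegative for every $k$. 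Now define $\lambda_k := \lambda_0 / 2^k$ and claim by induction on $k$ that $q^* - x^{(k)} \leq \lambda_k d$. The base case $k = 0$ is shown above. For the inductive step, assume $q^* - x^{(k)} \leq \lambda_k d$. Since $x^{(k)} \geq \textbf{0}$, $(I - B(x^{(k)}))^{-1}$ exists and is nonnegative, and $B(q^*) d \leq d$, Lemma~\ref{cone} applies with $z = x^{(k)}$ and $\lambda = \lambda_k$, yielding $q^* - x^{(k+1)} = q^* - \mathcal{N}_P(x^{(k)}) \leq (\lambda_k/2) d = \lambda_{k+1} d$, completing the induction.

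\smallskip

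\noindent Thus for all $k \geq 0$ we have $q^* - x^{(k)} \leq (\lambda_0/2^k) d \leq (\lambda_0 / 2^k) \textbf{1}$, using $d \leq \textbf{1}$ and $x^{(k)} \leq q^*$. Hence $\| q^* - x^{(k)} \|_\infty \leq \lambda_0 / 2^k = 2^{-k}/d_{\text{min}}$. Now fix a positive integer $j > 0$ and take $k := j - \lfloor \log_2 d_{\text{min}} \rfloor$. Since $d_{\text{min}} \leq 1$, we have $\log_2 d_{\text{min}} \leq 0$, so $k \geq j > 0$ is a valid (positive integer) iteration index. Then
\[
\| q^* - x^{(k)} \|_\infty \leq \frac{2^{-k}}{d_{\text{min}}} = \frac{2^{-j + \lfloor \log_2 d_{\text{min}} \rfloor}}{d_{\text{min}}} = 2^{-j} \cdot \frac{2^{\lfloor \log_2 d_{\text{min}} \rfloor}}{d_{\text{min}}} \leq 2^{-j},
\]
where the last inequality holds because $\lfloor \log_2 d_{\text{min}} \rfloor \leq \log_2 d_{\text{min}}$ implies $2^{\lfloor \log_2 d_{\text{min}} \rfloor} \leq d_{\text{min}}$. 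This is exactly the claimed bound $\| q^* - x^{(j - \lfloor \log_2 d_{\text{min}} \rfloor)} \|_\infty \leq 2^{-j}$. \hfill$\qed$

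\smallskip

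\noindent The only delicate points are checking that $(I - B(x^{(k)}))^{-1}$ is indeed nonnegative along the whole Newton trajectory — which follows from Theorem~\ref{thm:spec-full} together with the invariant $\textbf{0} \leq x^{(k)} \leq q^*$ from Proposition~\ref{prop:monotone-conv-newt} — and the bookkeeping that turns the per-iteration halving of $\lambda_k$ into the floor-of-log offset in the iteration count; neither is a real obstacle. The substantive work (constructing a cone vector $d$ with good $d_{\text{min}}$, which is where the factor $4|P|$ in Theorem~\ref{linearconvgen} will come from) lies in applying this corollary with the right $d$, not in the corollary itself.
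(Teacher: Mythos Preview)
Your proof is correct and follows essentially the same approach as the paper: establish the base case $q^* - x^{(0)} \leq (1/d_{\min})\, d$, iterate Lemma~\ref{cone} to halve the coefficient at each step, and then convert the geometric decay into the floor-of-log offset in the iteration count. One minor citation issue: the corollary is stated for a general MPS, but Theorem~\ref{thm:spec-full}, which you invoke for the nonnegativity of $(I-B(x^{(k)}))^{-1}$, is stated only for PPS with $\mathbf{0} < q^* < \mathbf{1}$. For general MPS the existence and nonnegativity of $(I-B(x^{(k)}))^{-1}$ along the Newton trajectory is part of what is established in the references behind Proposition~\ref{prop:monotone-conv-newt} (Theorem~6.1 of \cite{rmc} and Theorem~4.1 of \cite{lfppoly}), so that is the right thing to cite; the paper's own proof is equally terse on this point and simply applies Lemma~\ref{cone} without further comment.
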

 \begin{proof} 
By induction on $k$, we show
$q^* - x^{(k)} \leq  2^{-k}\frac{1}{d_\text{min}}d$.
For the base case, $k=0$, since $d > 0$,  
$\frac{1}{d_\text{min}}d \geq \mathbf{1} \geq q^* = q^* - x^{(0)}$.
For $k > 0$, 
apply Lemma \ref{cone}, setting
$z := x^{(k-1)}$, 
$\lambda := \frac{1}{d_\text{min}} 2^{-(k-1)}$
and ${\mathbf d} := d$.   This yields $q^* - x^{(k)} \leq  
\frac{\lambda}{2} {\mathbf d} = 
2^{-k}\frac{1}{d_\text{min}}d$.
Since we assume $\|d\|_\infty \leq 1$,
we have
$\| 2^{-(j - \lfloor \log_2 d_\text{min} \rfloor)}\frac{1}{d_\text{min}}d \|_{\infty}  \leq 2^{-j} $, and thus 
$\| q^* - x^{(j - \lfloor \log_2 d_{\text{min}}  \rfloor)} \|_{\infty} \leq 2^{-j}$.
\end{proof}

\begin{lem} \label{linearconv} For a PPS in SNF form,
with LFP $q^*$, where ${\textbf 0} < q^* < {\textbf 1}$, if 
we start Newton iteration at $x^{(0)} := \textbf{0}$, then:

\vspace*{-0.28in}
 
$$ \| q^* - x^{(j + {\Large \lceil} (\log_2  
\frac{({\textbf 1}-q^*)_\text{max}}{({\textbf 1}-q^*)_\text{min}}) {\Large \rceil} )}  \|_{\infty} \leq 2^{-j}$$
\end{lem}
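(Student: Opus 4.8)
The plan is to apply Corollary~\ref{num-iteration-on-cone-lem} with a carefully chosen cone vector $d$. By Lemma~\ref{1-qmean}, the vector $\mathbf{1} - q^*$ satisfies $B(q^*)(\mathbf{1}-q^*) \leq (\mathbf{1}-q^*)$, and since $\mathbf{0} < q^* < \mathbf{1}$ we have $\mathbf{1}-q^* > \mathbf{0}$, so this is a legitimate positive cone vector. However, Corollary~\ref{num-iteration-on-cone-lem} additionally requires the cone vector to satisfy $d \leq \mathbf{1}$, which $\mathbf{1}-q^*$ does satisfy, but the bound it produces involves $d_{\text{min}}$, and we want a clean normalization. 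So first I would rescale: set $d := \frac{1}{(\mathbf{1}-q^*)_{\text{max}}}(\mathbf{1}-q^*)$. Then $d > \mathbf{0}$, $d \leq \mathbf{1}$ (in fact $d_{\text{max}} = 1$), and $B(q^*)d \leq d$ still holds since $B(q^*)$ is linear and we only multiplied by a positive scalar. Moreover $d_{\text{min}} = \frac{(\mathbf{1}-q^*)_{\text{min}}}{(\mathbf{1}-q^*)_{\text{max}}}$.

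Next I would simply invoke Corollary~\ref{num-iteration-on-cone-lem} with this $d$. It gives, for every positive integer $j$,
$$\| q^* - x^{(j - \lfloor \log_2 d_{\text{min}} \rfloor)} \|_\infty \leq 2^{-j}.$$
Now $-\lfloor \log_2 d_{\text{min}} \rfloor = \lceil -\log_2 d_{\text{min}} \rceil = \lceil \log_2 (1/d_{\text{min}}) \rceil = \lceil \log_2 \frac{(\mathbf{1}-q^*)_{\text{max}}}{(\mathbf{1}-q^*)_{\text{min}}} \rceil$, which is exactly the offset appearing in the statement of Lemma~\ref{linearconv}. Substituting this identity yields the claimed inequality verbatim. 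One edge case to mention: the statement is trivially about $j \geq 0$; for $j = 0$ the inequality $\|q^*-x^{(k)}\|_\infty \leq 1$ holds automatically since all iterates and $q^*$ lie in $[0,1]^n$ (by monotone convergence, Proposition~\ref{prop:monotone-conv-newt}, combined with $P$ mapping $[0,1]^n$ into itself), so the corollary's restriction to $j > 0$ costs nothing, or one applies the corollary with $j=1$ and notes the offset only grows.

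I expect no serious obstacle here — this is essentially a packaging lemma that converts the abstract cone-vector iteration count of Corollary~\ref{num-iteration-on-cone-lem} into the concrete quantity $\log_2 \frac{(\mathbf{1}-q^*)_{\text{max}}}{(\mathbf{1}-q^*)_{\text{min}}}$ that will later be bounded (in the proof of Theorem~\ref{linearconvgen}) by something like $4|P|$ using the fact that $q^*$ has polynomially-bounded separation from $\mathbf{0}$ and $\mathbf{1}$. The only point requiring a hair of care is checking that rescaling the cone vector is legitimate — i.e. that $B(q^*)(cd) \leq cd$ for $c > 0$ follows immediately from $B(q^*)d \leq d$ by linearity and nonnegativity of $c$ — and that the normalization achieves $d \leq \mathbf{1}$, which it does since we divide by the maximum coordinate. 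Everything else is bookkeeping with floors, ceilings, and logarithms.
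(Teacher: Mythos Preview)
Your proposal is correct and takes essentially the same approach as the paper: set $d := (\mathbf{1}-q^*)/\|\mathbf{1}-q^*\|_\infty$, note $d_{\min} = (\mathbf{1}-q^*)_{\min}/(\mathbf{1}-q^*)_{\max}$, invoke Lemma~\ref{1-qmean} to certify $B(q^*)d \leq d$, and apply Corollary~\ref{num-iteration-on-cone-lem}. Your floor/ceiling bookkeeping and the $j=0$ remark are fine and slightly more explicit than the paper's two-line proof.
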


\vspace*{-0.1in}

\begin{proof}
For $d := \frac{\textbf{1} - q^*}{\|\textbf{1} - q^*\|_\infty}$,  
$d_{min} = \frac{(\textbf{1}-q^*)_\text{min}}{(\textbf{1}-q^*)_\text{max}}$.
By Lemma \ref{1-qmean}, $B(q^*)d \leq d$. 
Apply Corollary
\ref{num-iteration-on-cone-lem}. 
\end{proof}

\begin{lem} \label{sccspprelterm} For a strongly connected PPS,  $x= P(x)$,
with LFP $q^*$, where $0 < q^* < 1$,
for any two coordinates $k, l$ of $\textbf{1}-q^*$:

\vspace*{-0.25in}

$$\frac{(\textbf{1}-q^*)_k}{(\textbf{1}-q^*)_l} \geq 2^{-(2|P|)}$$
\end{lem}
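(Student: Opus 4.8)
The plan is to prove that in a strongly connected PPS with $0 < q^* < 1$, the coordinates of $\mathbf{1} - q^*$ cannot be too disparate, with the ratio bounded below by $2^{-2|P|}$. Since the PPS is assumed (w.l.o.g.) to be in SNF form, each equation $x_i = P_i(x)$ is either of Form$_*$ ($P_i(x) = x_jx_k$) or of Form$_+$ (a leaky or non-leaky linear map). First I would work with the inequality $B(q^*)(\mathbf{1} - q^*) \leq \mathbf{1} - q^*$ from Lemma \ref{1-qmean}, and try to turn it into the reverse direction: a \emph{lower} bound on $(B(q^*)(\mathbf{1}-q^*))_i$ in terms of $(\mathbf{1}-q^*)_i$. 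For a Form$_*$ equation $x_i = x_jx_k$ we have $(\mathbf{1} - q^*)_i = 1 - q^*_jq^*_k = (1-q^*_j) + q^*_j(1-q^*_k)$, which directly relates $(\mathbf{1}-q^*)_i$ to $(\mathbf{1}-q^*)_j$ and $(\mathbf{1}-q^*)_k$ with positive weights; in particular $(\mathbf{1}-q^*)_i \geq q^*_j (1-q^*_k)$ and $(\mathbf{1}-q^*)_i \geq (1-q^*_j)$. For a Form$_+$ equation $x_i = \sum_{j\in\mathcal C_i} p_{i,j}x_j + p_{i,0}$ we get $(\mathbf{1}-q^*)_i = \sum_j p_{i,j}(1-q^*_j) + (1 - \sum_{j\in\mathcal C_i\cup\{0\}}p_{i,j}) \geq \sum_j p_{i,j}(1-q^*_j)$, so again $(\mathbf{1}-q^*)_i \geq p_{i,j}(1-q^*_j)$ for any single $j \in \mathcal C_i$.

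The key observation is that along any edge $(x_i, x_j)$ in the dependency graph, I can bound $(\mathbf{1}-q^*)_i \geq c_{ij}\,(\mathbf{1}-q^*)_j$ where $c_{ij}$ is a positive quantity I can control: for a Form$_+$ edge $c_{ij} = p_{i,j} \geq 2^{-|P|}$ (since the nonzero coefficient $p_{i,j}$ is a rational with bit-size counted in $|P|$, so $p_{i,j} \geq 2^{-|P|}$), and for a Form$_*$ edge to $x_k$, $c_{ik} = q^*_j$ (the other child), while to $x_j$, $c_{ij} = 1$ (since $(\mathbf{1}-q^*)_i \geq (1-q^*_j)$ directly, using $q^*_k \le 1$). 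The only quantity that could be small here is $q^*_j$ appearing as a multiplier at Form$_*$ nodes. So I would next aim for a uniform lower bound $q^*_j \geq 2^{-|P|}$ (or similar) on every coordinate. This should follow because $q^*_j \geq P_j(\mathbf 0)$-type reasoning via the dependency structure, or more robustly: since $q^* > 0$ and the system is strongly connected with the given special structure, one can show $q^*_j$ is bounded below by a product of at most $n$ rule probabilities, hence $\geq 2^{-|P|}$. (This kind of bound on the smallest positive $q^*_j$ is standard for these systems; a clean route is that $q^*$ is a fixed point, so each $q^*_j$ equals a polynomial expression in the other $q^*$'s with coefficients summing appropriately, and strong connectivity forces a derivation tree of bounded size witnessing positivity.)

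Putting these together: fix coordinates $k, l$. By strong connectivity there is a directed path from $x_l$ to $x_k$ in the dependency graph of length at most $n - 1 \leq |P|$. Wait — I need the inequality to chain in the right direction, so I would instead take a path from $l$ to $k$ and multiply the edge bounds $(\mathbf{1}-q^*)_{i} \geq c\, (\mathbf{1}-q^*)_{i'}$ along it, obtaining $(\mathbf{1}-q^*)_l \geq (\prod c) (\mathbf{1}-q^*)_k$, i.e.\ $(\mathbf{1}-q^*)_k / (\mathbf{1}-q^*)_l \geq \prod c$. Hmm, that gives the ratio in the wrong orientation relative to the statement, so I would actually use a path from $k$ to $l$: then $(\mathbf{1}-q^*)_k \geq (\prod_{\text{edges on path}} c)\, (\mathbf{1}-q^*)_l$ with each $c \geq 2^{-|P|}$ (either a rule probability, or a value $q^*_j \geq 2^{-|P|}$, or $1$), and there are at most $n-1 < |P|$ edges, so $\prod c \geq 2^{-|P|\cdot|P|}$. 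This is too weak — it gives $2^{-|P|^2}$, not $2^{-2|P|}$. The main obstacle, then, is getting the \emph{linear} exponent $2|P|$ rather than a quadratic one: I must be careful to account for the total bit-size budget correctly — the sum of bit-sizes of all rule probabilities appearing along the path is at most $|P|$, so $\prod_{\text{Form}_+ \text{ edges}} p_{i,j} \geq 2^{-|P|}$ in aggregate (not $2^{-|P|}$ per edge), and separately I need $\prod q^*_j \geq 2^{-|P|}$ over the Form$_*$ children encountered, which again should follow from a global bound that the product of all relevant $q^*$ values is $\geq 2^{-|P|}$ by a counting argument over the derivation structure. Combining the two aggregate bounds yields the desired $2^{-2|P|}$. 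I would therefore structure the proof to (i) establish the per-edge inequalities, (ii) prove the aggregate bound $\prod$(rule probs along path)$\geq 2^{-|P|}$, (iii) prove the aggregate bound on the product of the $q^*_j$ multipliers is $\geq 2^{-|P|}$, and (iv) chain along a shortest path given by strong connectivity.
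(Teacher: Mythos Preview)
Your overall strategy --- chain the inequality $(\mathbf 1-q^*)_i \geq c_{ij}\,(\mathbf 1-q^*)_j$ along a shortest dependency path from $k$ to $l$, and control the product of the $c_{ij}$ by the aggregate bit-budget $|P|$ --- is exactly the paper's approach. Step (ii) is correct: distinct coefficients along a simple path have total bit-size at most $|P|$, so their product is $\geq 2^{-|P|}$.

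The gap is step (iii). You never establish $\prod q^*_j \geq 2^{-|P|}$; you only assert it ``should follow'' or is ``standard,'' but no such bound is available here, and in fact lower bounds on $q^*_j$ are not part of the toolkit at this point in the paper. Fortunately the whole step is unnecessary: for a Form$_*$ equation $x_i = x_j x_k$ you wrote $1-q^*_i = (1-q^*_j) + q^*_j(1-q^*_k)$ and concluded the edge to $x_k$ carries multiplier $q^*_j$. But the identity is symmetric: equally $1-q^*_i = (1-q^*_k) + q^*_k(1-q^*_j)$, so $(1-q^*_i) \geq (1-q^*_k)$ as well. Both Form$_*$ edges therefore carry multiplier $1$, and only the Form$_+$ coefficients $p_{i,j}$ remain in the product. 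With this fix your argument goes through (and in fact yields $2^{-|P|}$ rather than $2^{-2|P|}$).

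The paper sidesteps the Form$_*$/Form$_+$ case split entirely by working with the Jacobian at $\frac{1}{2}(\mathbf 1+q^*)$ rather than with $q^* = P(q^*)$ directly: Lemma~\ref{1-qmean} gives $B(\tfrac{1}{2}(\mathbf 1+q^*))(\mathbf 1-q^*)\leq \mathbf 1-q^*$, and since every coordinate of $\tfrac{1}{2}(\mathbf 1+q^*)$ is $\geq 1/2$, every nonzero entry $b_{ij}$ of this Jacobian is $\geq \tfrac{1}{2}\times$(some rule coefficient). Chaining along a simple path of length $\leq n$ then gives $\prod b_{ij}\geq 2^{-(|P|+n)}\geq 2^{-2|P|}$. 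This is a cleaner uniform argument; your (corrected) version is more elementary and slightly sharper, but as written it relies on an unproved claim about the size of $q^*_j$.
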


\vspace*{-0.12in}

\begin{proof} Lemma \ref{1-qmean} says that
$B(\frac{1}{2}(\textbf{1} + q^*))(\textbf{1} - q^*) \leq (\textbf{1} - q^*)$.
Since every entry of the vector $\frac{1}{2}(\textbf{1} + q^*))$ is $\geq 1/2$,
every non-zero entry of the matrix
$B(\frac{1}{2}(\textbf{1} + q^*))$ is at least
$1/2$ times a coefficient
of some monomial in some polynomial $P_i(x)$ of $P(x)$.
Moreover, $B(\frac{1}{2}(\textbf{1} + q^*))$ is irreducible.
Calling the entries of $B(\frac{1}{2}(\textbf{1} + q^*))$, $b_{i,j}$, we have a sequence
of {\em distinct} indices, $i_1,i_2,\ldots, i_m$, with $l = i_1$, $k = i_m$, $m \leq n$, where
 each $b_{i_ji_{j+1}} > 0$.
(Just take the ``shortest positive path'' 
 from $l$ to $k$.)
For any $j$:

\vspace*{-0.06in}

$$(B(\frac{1}{2}(\textbf{1} + q^*))(\textbf{1} - q^*))_{i_{j+1}} \geq b_{i_ji_{j+1}}  (\textbf{1} - q^*)_j$$

\noindent Using Lemma \ref{1-qmean} again,
$(\textbf{1} - q^*)_{i_{j+1}} \geq b_{i_ji_{j+1}}  (\textbf{1} - q^*)_{i_j}$.
By simple induction:
$(\textbf{1} - q^*)_k \geq (\prod_{j=1}^{l-1} b_{i_ji_{j+1}}) (\textbf{1} - q^*)_l$.
Note that  $|P|$ includes the encoding size of
each positive coefficient of every polynomial $P_i(x)$. 
We argued before that each $b_{i_ji_{j+1}} \geq c_{i}/2$ for 
some coefficient $c_i > 0$ of some monomial in $P_i(x)$.
Therefore, since each such $c_i$ is
a distinct coefficient that is accounted for in 
$|P|$,  we must have $\prod_{j=1}^{l-1} b_{i_ji_{j+1}} \geq 2^{-(|P| + n)} \geq
2^{-(2|P|)}$, and thus we have:
$(\textbf{1} - q^*)_k \geq 2^{-(2|P|)} (\textbf{1} - q^*)_l$.
\end{proof}

\noindent Combining Lemma \ref{linearconv} 
with Lemma \ref{sccspprelterm} establishes the following:

\begin{thm} \label{scclinearconv} 
For a strongly connected PPS, $x=P(x)$ in $n$ variables, in SNF form, with LFP $q^*$, such that 
$\textbf{0} < q^* < \textbf{1}$, if we start Newton iteration at $x^{(0)} := {\textbf 0}$,
then: $ \| q^* - x^{(j + 2|P|)}\|_{\infty} \leq 2^{-j}$.
\end{thm}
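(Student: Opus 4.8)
The plan is to obtain the theorem directly by chaining the two preceding lemmas, with monotonicity of Newton's method used only to reconcile the iteration indices. First I would invoke Lemma~\ref{sccspprelterm}: since the PPS is strongly connected and $\mathbf{0} < q^* < \mathbf{1}$, for any two coordinates $k,l$ we have $(\mathbf{1}-q^*)_k/(\mathbf{1}-q^*)_l \geq 2^{-2|P|}$. Specializing to the coordinate $k$ that attains $(\mathbf{1}-q^*)_{\text{max}}$ and the coordinate $l$ that attains $(\mathbf{1}-q^*)_{\text{min}}$ yields
$$\frac{(\mathbf{1}-q^*)_{\text{max}}}{(\mathbf{1}-q^*)_{\text{min}}} \;\leq\; 2^{2|P|}.$$
Since this ratio is $\geq 1$ and $2|P|$ is an integer, taking base-$2$ logarithms and rounding up gives $m := \bigl\lceil \log_2\!\bigl((\mathbf{1}-q^*)_{\text{max}}/(\mathbf{1}-q^*)_{\text{min}}\bigr)\bigr\rceil$ with $0 \leq m \leq 2|P|$.

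Next I would feed this into Lemma~\ref{linearconv}, which states $\| q^* - x^{(j+m)} \|_\infty \leq 2^{-j}$ (starting Newton at $x^{(0)} := \mathbf{0}$), with $m$ exactly the ceiling quantity above. This already gives the desired error bound, but at iteration index $j+m$ rather than $j + 2|P|$.

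Finally, to pass from index $j+m$ to the (possibly larger) index $j+2|P|$, I would invoke Proposition~\ref{prop:monotone-conv-newt}, which applies here because $q^* > \mathbf{0}$: starting from $x^{(0)} = \mathbf{0}$, the Newton iterates are well-defined and increase monotonically to $q^*$, so $\mathbf{0} \leq q^* - x^{(k+1)} \leq q^* - x^{(k)}$ coordinatewise for every $k \geq 0$, and hence $\|q^* - x^{(k)}\|_\infty$ is non-increasing in $k$. Since $j + 2|P| \geq j + m$, this yields $\|q^* - x^{(j+2|P|)}\|_\infty \leq \|q^* - x^{(j+m)}\|_\infty \leq 2^{-j}$, which is exactly the claimed inequality (and for $j=0$ it holds trivially anyway, as $\mathbf{0} \leq q^* < \mathbf{1}$).

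I do not expect any real obstacle here, since the substantive work is already packaged in Lemmas~\ref{linearconv} and~\ref{sccspprelterm}. The only points needing a moment of care are (a) observing that $2|P|$ is an integer so that $m \leq 2|P|$ follows cleanly from $\log_2$ of the coordinate ratio being $\leq 2|P|$, and (b) recording explicitly that the monotone convergence of the Newton sequence permits rounding the iteration index up without increasing the $\ell_\infty$-error; both are routine.
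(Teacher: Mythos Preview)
Your proposal is correct and follows the paper's approach (the paper simply says ``Combining Lemma~\ref{linearconv} with Lemma~\ref{sccspprelterm}''), with two minor remarks: you swapped the roles of $k$ and $l$ when specializing Lemma~\ref{sccspprelterm} (take $k$ at the $\min$ and $l$ at the $\max$ to get the inequality in the direction you want), and the appeal to monotone convergence, while valid, is not needed---you can instead apply Lemma~\ref{linearconv} with $j$ replaced by $j' := j + (2|P| - m) \geq j$ to get $\|q^* - x^{(j+2|P|)}\|_\infty \leq 2^{-j'} \leq 2^{-j}$ directly.
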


To get a polynomial upper bound on the number of iterations
of Newton's method for general PPSs, we can apply  
Lemma \ref{linearconv} combined with
a Lemma in \cite{fixp} (Lemma 7.2 of \cite{fixp}), 
which implies that for a PPS
$x=P(x)$ with $n$ variables,
in SNF form,  with LFP $q^*$, where  $q^* < \textbf{1}$,
$(\textbf{1} - q^*)_\text{min} \geq 1/{2n2^{|P|^{c}}}$ for
some constant $c$.
Instead, we prove the following much stronger result:

\begin{thm} \label{1comp} 
For a PPS, $x=P(x)$ in $n$ variables, in SNF form, with LFP $q^*$, such that 
$0 < q^* < 1$, 
for all $i= 1, \ldots, n$: $ 1-q^*_i \geq 2^{-4|P|}$.
In other words, $\| q^* \|_{\infty} \leq 1-2^{-4|P|}$. 
\end{thm}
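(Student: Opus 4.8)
The plan is to reduce the general case to the strongly connected case, for which Lemma~\ref{sccspprelterm} already gives the bound $(\mathbf 1-q^*)_k/(\mathbf 1-q^*)_l \geq 2^{-2|P_C|}$ within a single SCC $C$. First I would decompose the dependency graph of $x=P(x)$ into strongly connected components $C_1,\ldots,C_r$, and fix a topological order so that variables in $C_s$ depend only on variables in $C_s$ and in components appearing earlier in the order. Let $q^*|_C$ denote the restriction of $q^*$ to the variables in $C$; after substituting the (already known, strictly positive) values $q^*_j$ for $j$ outside $C$ into the equations of $C$, one gets a smaller PPS whose LFP is exactly $q^*|_C$. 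Within each such sub-PPS we would like to apply the strongly-connected estimate, but with a twist: the substitution of values $q^*_j < 1$ from lower components typically makes the sub-PPS \emph{leaky}, which only helps, since leakiness pushes $q^*$ further from $\mathbf 1$.

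The key quantitative step is to control how $(\mathbf 1 - q^*)_{\min}$ over one component relates to $(\mathbf 1-q^*)_{\min}$ over the components below it. Concretely, pick a coordinate $i$ in some component $C$ achieving $1-q^*_i = (\mathbf 1-q^*)_{\min}$ globally. Using Lemma~\ref{1-qmean}, $B(\tfrac12(\mathbf 1+q^*))(\mathbf 1-q^*) \leq \mathbf 1-q^*$, so for this coordinate, $(1-q^*_i) \geq \sum_j b_{ij}(1-q^*_j)$ where $b_{ij}$ are the entries of $B(\tfrac12(\mathbf 1+q^*))$, each at least $1/2$ times a coefficient counted in $|P|$. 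If $x_i$ depends on some $x_j$ with $q^*_j$ bounded away from $1$ by a constant (e.g. a coordinate in a leaky equation or one that directly sees a constant term less than the full mass), then $1-q^*_i$ is immediately at least $2^{-c}$ times that gap. Otherwise $x_i$ depends only on coordinates $j$ with $1-q^*_j$ very small, and we can trace a ``shortest positive path'' in $B(\tfrac12(\mathbf 1+q^*))$ — possibly crossing SCC boundaries — of length $\leq n$, along which $(1-q^*)$ can drop by a factor of at most $\prod b_{i_\ell i_{\ell+1}}$; since these are $\leq n$ distinct coefficients each accounted for in $|P|$, the total drop is at most $2^{-(|P|+n)}\geq 2^{-2|P|}$. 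The subtlety compared to Lemma~\ref{sccspprelterm} is that the path now runs through the \emph{whole} graph, not a single SCC, so I need that (i) the path eventually hits a coordinate whose $1-q^*$ value is bounded below by a constant, which must happen because the bottom SCCs of a PPS with $q^*<\mathbf 1$ must have some ``escape'' to value $<1$ (otherwise $q^*$ would be $\mathbf 1$ there, contradicting the preprocessing assumption), and (ii) I can charge the multiplicative loss to distinct entries of $|P|$. Actually the cleanest route is probably induction on the number of SCCs: show that if $(\mathbf 1-q^*)_{\min}$ restricted to the ``lower'' part of the system is at least $2^{-4|P'|}$ for the sub-PPS $P'$ consisting of those components, then adding one more component on top only loses a factor $2^{-2|P_C|}$ by the strongly-connected argument applied to that component (now leaky after substitution), giving $2^{-4|P'| - 2|P_C|} \geq 2^{-4|P|}$ since $|P'| + |P_C| \leq |P|$.

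Once $1-q^*_i \geq 2^{-4|P|}$ for all $i$ is established, $\|q^*\|_\infty \leq 1 - 2^{-4|P|}$ is immediate, and — although not asked here — combining with Lemma~\ref{linearconv} would then give $\|q^*-x^{(j+4|P|)}\|_\infty \leq 2^{-j}$, the main theorem. The main obstacle I anticipate is handling the leakiness carefully when substituting lower-component values: I must verify that the substituted sub-PPS is still a genuine PPS in SNF (or can be cheaply put back in SNF with encoding size still $O(|P_C|)$) so that Lemma~\ref{sccspprelterm}'s proof — which relies on SNF and on coefficients being counted in the encoding size — applies verbatim, and that the coefficients I charge against in different components are genuinely disjoint contributions to $|P|$. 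A secondary point to get right is the base case of the induction: a bottom SCC that is not leaky and has all $q^*$-coordinates $<1$ — by Proposition~\ref{prob1-ptime-scfg-prop}-style reasoning such an SCC must actually be leaky or reach a constant, so the base case reduces to Lemma~\ref{sccspprelterm} with an honest gap of size at least $2^{-|P_C|}$ coming from the leak.
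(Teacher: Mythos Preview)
Your propagation step---feeding bounds from lower SCCs upward via leakiness induced by substitution, then applying Lemma~\ref{sccspprelterm} inside each SCC---is essentially what the paper does, and it works. The gap is in the base case.

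You assert that a bottom SCC with $0<q^*<1$ ``must actually be leaky or reach a constant, so the base case reduces to Lemma~\ref{sccspprelterm} with an honest gap of size at least $2^{-|P_C|}$ coming from the leak.'' But non-leaky bottom SCCs with $P_S(\mathbf{1})=\mathbf{1}$ and $0<q^*<1$ genuinely occur, and the presence of a constant term does \emph{not} bound $1-q^*$ away from $0$. Take the single-variable PPS $x=(\tfrac12+\varepsilon)x^2+(\tfrac12-\varepsilon)$: here $P(1)=1$, there is a constant term close to $1/2$, and yet $1-q^*=4\varepsilon/(1+2\varepsilon)$, which goes to $0$ with $\varepsilon$. (The theorem still holds because $|P|\to\infty$ as $\varepsilon\to 0$, but your argument provides no mechanism to see this---there is no leak, and the constant term is irrelevant.) Lemma~\ref{sccspprelterm} is only a \emph{ratio} bound; without an absolute anchor for some coordinate it gives nothing.

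This is exactly the hard part of the paper's proof, and it requires a genuinely different idea. The paper first shows, for a strongly connected PPS with $P(\mathbf{1})=\mathbf{1}$, that $\|(I-B(\mathbf{1}))(\mathbf{1}-q^*)\|_\infty\le\|\mathbf{1}-q^*\|_\infty^2$ (using Lemma~\ref{int} and the SNF structure). It then splits on whether $I-B(\mathbf{1})$ is singular. If not, a Cramer-type bound on $\|(I-B(\mathbf{1}))^{-1}\|_\infty$ in terms of the product of denominators of its entries yields $\|\mathbf{1}-q^*\|_\infty\geq 1/\|(I-B(\mathbf{1}))^{-1}\|_\infty\geq 2^{-2|P|}$. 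If $I-B(\mathbf{1})$ \emph{is} singular, one finds a short solution $v'$ to $(I-B(\mathbf{1}))v=(I-B(\mathbf{1}))(\mathbf{1}-q^*)$ and invokes Perron--Frobenius: since $\rho(B(\mathbf{1}))>1$, the vector $(\mathbf{1}-q^*)-v'$ (an eigenvector for eigenvalue $1$) cannot be positive, so some coordinate has $1-q^*_i\le v'_i$, which combined with the norm bound on $v'$ and Lemma~\ref{sccspprelterm} gives $\|\mathbf{1}-q^*\|_\infty\geq 2^{-3|P|}$. Your proposal is missing this entire matrix-analytic argument; without it the induction never gets off the ground.
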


\noindent The proof of Theorem \ref{1comp} is in the appendix.
We thus get the Main Theorem of this section:\\
{\em Proof of Theorem \ref{linearconvgen} 
{\bf (Main Theorem of
Sec. 3)}}.
By Lemma \ref{linearconv},  
$\| q^* - 
x^{(j + {\Large \lceil} (\log
\frac{(\textbf{1}-q^*)_\text{max}}{(\textbf{1}-q^*)_\text{min}}) 
{\Large \rceil})} \|_{\infty} \leq 2^{-j}$.
But by Theorem \ref{1comp}, 
$\lceil (\log \frac{({\textbf 1}-q^*)_{max}}{({\textbf 1}-q^*)_{min}}) \rceil  
 \leq  \lceil \log \frac{1}{(\mathbf{1}-q^*)_{min}} \rceil 
 \leq 
\lceil \log 2^{\; 4 |P|} \rceil    
=  4 |P|$. \qed \\

\noindent {\bf\large  Addendum.}  In Appendix 
\ref{sec:quadratic-conv-quant-decision} we extend Theorem
\ref{linearconvgen}, to show
that, given a PPS, $x=P(x)$, with LFP $0 < q^* < 1$, 
if we start Newton iteration
at $x^{(0)} := {\mathbf 0}$, then 
for all $i \geq 1$, \ 
$\| q^* - x^{(32|P| + 2 + 2i)}\|_{\infty} \leq \frac{1}{2^{2^{i}}}$.
We then use this (explicit) ``quadratic convergence'' 
result to show that the quantitative {\bf\em decision problem}
for the LFP $q^*$ of PPSs, which asks,
given a PPS $x=P(x)$ over $n$ variables, 
and given a rational number $r \in [0,1]$, decide
whether $q^*_i > r$, is decidable {\em in the unit-cost arithmetic
RAM model of computation} in polynomial time 
(and thus is reducible to PosSLP).
These results were not mentioned in the STOC'12 conference version
of this paper.  They follow from results in this paper combined
with some results we established in a subsequent paper at 
ICALP'12 (\cite{bmdp}).

\section{Polynomial time in the standard Turing model of computation}

\label{sec:p-time}

The previous section showed that for a PPS, $x=P(x)$,  
using $(4|P| + j)$
iterations of Newton's method starting at $x^{(0)}:=0$, we obtain $q^*$ within
additive error $2^{-j}$.  
However, performing even $|P|$ iterations of Newton's method {\em exactly} 
may not be feasible in P-time in the {\em Turing} model, because  
the encoding size of iterates $x^{(k)}$ can become very large.  
Specifically, by repeated squaring, the
rational numbers representing the iterate 
$x^{(|P|)}$ may require encoding size exponential in $|P|$.

In this section, we show 
that we can nevertheless approximate in P-time the LFP $q^*$ 
of a PPS, $x=P(x)$.
We do so by showing that we can
{\em round down} all coordinates of each Newton iterate $x^{(k)}$
to a suitable polynomial length, and still have a well-defined
iteration that converges in nearly
the same number of iterations to 
$q^*$.   Throughout this section we assume every PPS is in SNF form.

\begin{defn}{(``Rounded down Newton's method'', with rounding parameter $h$.)}
Given a PPS, $x=P(x)$,  
with LFP $q^*$, 
where ${\textbf 0} < q^* < {\textbf 1}$,
in the ``rounded down Newton's method'' with integer 
rounding parameter $h > 0$, 
we compute a sequence of 
iteration vectors $x^{[k]}$,  where the initial starting vector is again $x^{[0]} 
:= \mathbf{0}$,
and such that for each $k \geq 0$, given $x^{[k]}$, we compute 
$x^{[k+1]}$ as follows:

\begin{enumerate}
\item  First, compute $x^{\{k+1\}} :=  \mathcal{N}_P(x^{[k]})$, 
where the Newton
iteration operator $\mathcal{N}_P(x)$ was defined in equation 
(\ref{newton-one-it-eq}).
(Of course we need to show that all such Newton iterations are defined.)

\vspace*{-0.06in}

\item For each coordinate $i=1,\ldots,n$, set $x^{[k+1]}_i$ 
to be equal 
to the maximum (non-negative) multiple of $2^{-h}$ which is $\leq \max (x^{\{k+1\}}_i, 0)$.
(In other words, round down $x^{\{ k+1\}}$ to the nearest multiple of 
$2^{-h}$, while making sure that the result is non-negative.)
\end{enumerate}
\end{defn}

\begin{thm}[{\bf Main Theorem of Section \ref{sec:p-time}}]
\label{Ptime}  Given a PPS, $x=P(x)$, with LFP $q^*$,
such that $0 < q^* < 1$,  
if we  use the rounded down Newton's method with parameter $h = j + 2 + 4 |P|$,
then the iterations are all defined,  
for every $k \geq 0$ we have  $0 \leq x^{[k]} \leq q^*$, 
and furthermore after $h= j+2 + 4|P|$ iterations we have:
$ \hspace*{0.2in} \| q^* - x^{[j+2+4|P|]} \|_{\infty} \leq 2^{-j}$.

\end{thm}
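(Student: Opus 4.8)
The goal is to show that rounding down each Newton iterate to a multiple of $2^{-h}$ with $h = j + 2 + 4|P|$ preserves both the invariant $0 \leq x^{[k]} \leq q^*$ and essentially the convergence rate. The natural strategy is induction on $k$, maintaining the stronger invariant that tracks how far $x^{[k]}$ is below $q^*$ in the direction of the cone vector $d := (\mathbf{1} - q^*)/\|\mathbf{1}-q^*\|_\infty$. First I would establish the base case: $x^{[0]} = \mathbf{0}$, and since Theorem \ref{1comp} gives $(\mathbf{1}-q^*)_{\min} \geq 2^{-4|P|}$, we have $d_{\min} \geq 2^{-4|P|}$, so $q^* \leq \mathbf{1} \leq \frac{1}{d_{\min}} d \leq 2^{4|P|} d$.

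For the inductive step, assume $0 \leq x^{[k]} \leq q^*$ and $q^* - x^{[k]} \leq \lambda_k d$ for an appropriate $\lambda_k$. Since $x^{[k]} \leq q^*$, Theorem \ref{thm:spec-full} guarantees $(I - B(x^{[k]}))^{-1}$ exists and is nonnegative, so $x^{\{k+1\}} := \mathcal{N}_P(x^{[k]})$ is well-defined. By Lemma \ref{newton}, $q^* - \mathcal{N}_P(x^{[k]}) = (I-B(x^{[k]}))^{-1}\frac{B(q^*)-B(x^{[k]})}{2}(q^*-x^{[k]})$, and since all factors on the right are nonnegative (using $B(q^*) \geq B(x^{[k]}) \geq 0$ because $x^{[k]} \leq q^*$), we get $x^{\{k+1\}} \leq q^*$; combined with the explicit Newton formula \eqref{newton-one-it-eq} and nonnegativity of $(I-B(x^{[k]}))^{-1}(P(x^{[k]})-x^{[k]})$ (here one needs $P(x^{[k]}) \geq x^{[k]}$, which follows from the monotone-iteration structure — essentially Proposition \ref{prop:monotone-conv-newt} applied coordinatewise, or directly since $x^{[k]} \leq q^*$ and monotonicity), we also get $x^{\{k+1\}} \geq x^{[k]} \geq 0$. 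Then rounding down to the nearest multiple of $2^{-h}$ keeps $x^{[k+1]}$ nonnegative and $\leq x^{\{k+1\}} \leq q^*$, so the invariant $0 \leq x^{[k+1]} \leq q^*$ survives. For the quantitative part, Lemma \ref{cone} (applicable since $B(q^*)d \leq d$ by Lemma \ref{1-qmean}, $(I-B(x^{[k]}))^{-1} \geq 0$, and $q^* - x^{[k]} \leq \lambda_k d$) gives $q^* - \mathcal{N}_P(x^{[k]}) \leq \frac{\lambda_k}{2} d$. Rounding down moves each coordinate by less than $2^{-h}$, and since $d_{\min} \geq 2^{-4|P|}$ we have $2^{-h} = 2^{-j-2-4|P|} \leq 2^{-j-2} d_{\min} \leq 2^{-j-2} \cdot \frac{d_i}{\|d\|_\infty} \cdot \|d\|_\infty$, so componentwise the rounding error is at most $2^{-j-2} d_i$ after a short estimate; thus $q^* - x^{[k+1]} \leq (\frac{\lambda_k}{2} + 2^{-j-2}) d$. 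Choosing $\lambda_0 = 2^{4|P|}$ one checks by induction that $\lambda_k \leq 2^{-k}2^{4|P|} + 2^{-j-1}$ (the geometric-series tail of the accumulated rounding errors is bounded by $2^{-j-1}$), so after $k = h = j+2+4|P|$ steps, $\lambda_h \leq 2^{-j-2-4|P|}2^{4|P|} + 2^{-j-1} = 2^{-j-2} + 2^{-j-1} < 2^{-j}/\|d\|_\infty$ — wait, since $\|d\|_\infty = 1$, this gives $\|q^* - x^{[h]}\|_\infty \leq \lambda_h \|d\|_\infty < 2^{-j}$, as required (a slightly more careful bookkeeping of constants closes the gap to exactly $2^{-j}$).

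\textbf{The main obstacle.} The delicate point is controlling the \emph{accumulation} of rounding errors across all $h$ iterations while simultaneously preserving well-definedness: each rounding step must leave the iterate in the region $\{0 \leq z \leq q^*\}$ where Theorem \ref{thm:spec-full} guarantees $(I-B(z))^{-1} \geq 0$, and it is precisely \emph{rounding down} (never up) that makes this work — an upward round could push a coordinate above $q^*$, losing the spectral-radius bound and potentially making a subsequent Newton step undefined or non-monotone. So the two halves of the argument are coupled: the invariant $x^{[k]} \leq q^*$ is what keeps the iteration defined, and keeping it defined is what lets Lemma \ref{cone} control the error. The bookkeeping of the constant $h = j+2+4|P|$ (rather than just $j + 4|P|$ from Theorem \ref{linearconvgen}) is exactly the slack needed: $+1$ to absorb the geometric tail of rounding errors and $+1$ for the per-step rounding at the final iterate, with $4|P|$ reappearing because $d_{\min} \geq 2^{-4|P|}$ is what converts a $2^{-h}$ absolute rounding error into a $2^{-j-2}$-fraction of $d$.
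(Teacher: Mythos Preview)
Your overall strategy---tracking $q^* - x^{[k]} \leq \lambda_k d$ with $d = (\mathbf{1}-q^*)/\|\mathbf{1}-q^*\|_\infty$, applying Lemma~\ref{cone} for the halving, absorbing the $2^{-h}$ rounding into a geometric tail, and using Theorem~\ref{1comp} for $d_{\min} \geq 2^{-4|P|}$---is exactly the paper's approach (split there into Lemmas~\ref{round-exist} and~\ref{lem:explicit-bound-rounded}), and your arithmetic for $\lambda_h \leq 2^{-j-2} + 2^{-j-1} < 2^{-j}$ is correct.

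There is, however, one genuine error. You assert that $P(x^{[k]}) \geq x^{[k]}$, claiming it ``follows from the monotone-iteration structure --- essentially Proposition~\ref{prop:monotone-conv-newt} applied coordinatewise, or directly since $x^{[k]} \leq q^*$ and monotonicity.'' This is false, and the paper explicitly warns about it in the paragraph preceding Lemma~\ref{round-exist}: after rounding down, $P(x^{[k]}) \geq x^{[k]}$ can fail (e.g.\ for a Form$_+$ coordinate one has $P_i(x^{\{k\}}) = x^{\{k\}}_i$ exactly, and rounding each input coordinate down can make $P_i(x^{[k]}) < x^{[k]}_i$). Monotonicity of $P$ together with $x^{[k]} \leq q^*$ gives only $P(x^{[k]}) \leq P(q^*) = q^*$, not the inequality you need; and Proposition~\ref{prop:monotone-conv-newt} concerns the \emph{exact} iterates $x^{(k)}$, whose invariant $P(x^{(k)}) \geq x^{(k)}$ is precisely what rounding destroys.

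Fortunately this claim is unnecessary. You used it only to conclude $x^{\{k+1\}} \geq 0$, but the rounding rule already forces nonnegativity via the $\max(\,\cdot\,,0)$: if $x^{\{k+1\}}_i < 0$ then $x^{[k+1]}_i = 0$, and one still has $0 \leq x^{[k+1]}_i \leq q^*_i$ and $x^{[k+1]}_i \geq x^{\{k+1\}}_i - 2^{-h}$. So simply delete the sentence invoking $P(x^{[k]}) \geq x^{[k]}$; the upper bound $x^{\{k+1\}} \leq q^*$ (which you proved correctly via Lemma~\ref{newton} and Theorem~\ref{thm:spec-full}) together with the rounding definition already gives $0 \leq x^{[k+1]} \leq q^*$, and the rest of your argument goes through unchanged.
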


We prove this via some lemmas.
The next lemma proves that the iterations are always well-defined, 
and yield vectors $x^{[k]}$ such that $\mathbf{0} \leq x^{[k]} \leq q^*$.
Note however that, unlike 
Newton iteration using exact arithmetic, 
we {\em do not} claim (as in Proposition 
\ref{prop:monotone-conv-newt}) that $x^{[k]}$
 converges {\em monotonically} to $q^*$. 
It may not.
It turns out we don't need this: all we need is
that $0 \leq x^{[k]} \leq q^*$, for all $k$.
In particular, it may not hold that $P(x^{[k]}) \geq x^{[k]}$. 
For establishing the
monotone convergence of Newton's method on MPSs (Proposition \ref{prop:monotone-conv-newt}), 
the fact that $P(x^{(k)}) \geq x^{(k)}$ 
is key (see \cite{rmc}). 
Indeed, note that for PPSs, once we know that $(P(x^{(k)}) - x^{(k)}) \geq 0$, 
Theorem \ref{thm:spec-full} and the defining equation  
of Newton iteration, (\ref{newton-eq}), already proves monotone convergence: 
$x^{(k)}$ is well-defined 
and $x^{(k+1)} \geq x^{(k)} \geq 0$, for all $k$.
However,  $P(x^{[k]}) \geq x^{[k]}$ may no longer hold after rounding down.
If, for instance, the polynomial $P_i(x)$ has degree 1
(i.e., has Form$_{+}$), 
 then one can show that after any positive number of iterations $k \geq 1$, 
we will have that 
$P_i(x^{\{k\}}) = x^{\{k\}}_i$.  So,
if we are unlucky, rounding down each coordinate of $x^{\{k\}}$
to a multiple of $2^{-h}$
could indeed give $(P(x^{[k+1]}))_i < x^{[k+1]}_i$.

\begin{lem} 
\label{round-exist}
If we run the rounded down Newton method
starting with $x^{[0]} := \textbf{0}$
on a PPS, $x=P(x)$, with LFP $q^*$,  ${\mathbf 0} < q^* < {\mathbf 1}$, 
then for all $k \geq 0$, $x^{[k]}$ is well-defined and $0 \leq x^{[k]} \leq q^*$.
\end{lem}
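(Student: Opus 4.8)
The plan is to prove Lemma \ref{round-exist} by induction on $k$, maintaining the invariant that $\mathbf 0 \le x^{[k]} \le q^*$ and that $x^{[k]}$ is well-defined. The base case $k=0$ is immediate since $x^{[0]} = \mathbf 0$ and $q^* > \mathbf 0$. For the inductive step, assume $\mathbf 0 \le x^{[k]} \le q^*$. The first thing to check is that the Newton operator $\mathcal N_P(x^{[k]})$ is actually defined, i.e.\ that $(I - B(x^{[k]}))$ is non-singular. This is exactly where Theorem \ref{thm:spec-full} is used: since $\mathbf 0 \le x^{[k]} \le q^*$, we get $\rho(B(x^{[k]})) < 1$, so $(I - B(x^{[k]}))^{-1}$ exists and is nonnegative, hence $x^{\{k+1\}} := \mathcal N_P(x^{[k]})$ is well-defined.

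Next I would show $x^{\{k+1\}} \le q^*$, i.e.\ that a single exact Newton step from a point below $q^*$ stays below $q^*$. Using Lemma \ref{newton} with $z := x^{[k]}$ (valid since $(I-B(z))^{-1}$ exists), we have
$$q^* - \mathcal N_P(x^{[k]}) = (I - B(x^{[k]}))^{-1}\,\frac{B(q^*) - B(x^{[k]})}{2}\,(q^* - x^{[k]}).$$
All three factors on the right are nonnegative: $(I - B(x^{[k]}))^{-1} \ge 0$ by Theorem \ref{thm:spec-full}; $B(q^*) - B(x^{[k]}) \ge 0$ because $x^{[k]} \le q^*$ and the Jacobian entries have nonnegative coefficients so $B$ is monotone; and $q^* - x^{[k]} \ge 0$ by the inductive hypothesis. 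Hence $q^* - x^{\{k+1\}} \ge \mathbf 0$, i.e.\ $x^{\{k+1\}} \le q^*$.

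Finally I would handle the rounding step. By definition, $x^{[k+1]}_i$ is the largest nonnegative multiple of $2^{-h}$ that is $\le \max(x^{\{k+1\}}_i, 0)$, so immediately $x^{[k+1]}_i \ge 0$ and $x^{[k+1]}_i \le \max(x^{\{k+1\}}_i, 0)$. Since $x^{\{k+1\}}_i \le q^*_i < 1$, we have $\max(x^{\{k+1\}}_i, 0) \le q^*_i$ (using $q^*_i > 0$), and therefore $x^{[k+1]}_i \le q^*_i$. This gives $\mathbf 0 \le x^{[k+1]} \le q^*$, closing the induction; note the rounding-down-to-a-nonnegative-value step is precisely what ensures the invariant survives, since it can only decrease coordinates (relative to the positive part) and cannot push anything above $q^*$ or below $\mathbf 0$.

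The main obstacle, and the only nontrivial point, is verifying that $\mathcal N_P(x^{[k]})$ is defined at each step — this is not automatic because, as noted in the text, we do \emph{not} have monotonicity $x^{[k]} \le x^{[k+1]}$ nor $P(x^{[k]}) \ge x^{[k]}$ after rounding, so one cannot invoke the classical monotone-convergence argument of Proposition \ref{prop:monotone-conv-newt}. The resolution is that Theorem \ref{thm:spec-full} gives invertibility of $(I - B(z))$ with a nonnegative inverse for \emph{every} $z$ with $\mathbf 0 \le z \le q^*$, not just for points on the Newton trajectory; so as long as the induction keeps us in the box $[\mathbf 0, q^*]$, each Newton step is legitimate. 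The rest is the straightforward sign bookkeeping above.
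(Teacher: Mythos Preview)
Your proof is correct and follows essentially the same argument as the paper's own proof: induction on $k$, using Theorem \ref{thm:spec-full} to guarantee that $(I-B(x^{[k]}))^{-1}$ exists and is nonnegative whenever $0 \le x^{[k]} \le q^*$, then applying Lemma \ref{newton} to see that $q^* - x^{\{k+1\}}$ is a product of three nonnegative factors, and finally noting that the rounding step can only move coordinates downward toward (but not below) zero. Your closing paragraph also correctly identifies the key point---that Theorem \ref{thm:spec-full} applies to all $z$ in the box $[\mathbf 0, q^*]$, which is exactly what the paper emphasizes just after stating that theorem.
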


\noindent 
The next key lemma shows that the  rounded version still makes good progress
towards the LFP.

\begin{lem}
\label{lem:explicit-bound-rounded}
For a PPS, $x=P(x)$, with LFP $q^*$, such that $0 < q^* < 1$, if we
apply the rounded down Newton's method with parameter $h$,
starting at $x^{[0]} := \textbf{0}$, then for all $j' \geq 0$, we have:
$$ \| q^* - x^{[j'+1]} \|_{\infty} \leq 2^{-j'} + 2^{-h+1 + 4|P|}$$
\end{lem}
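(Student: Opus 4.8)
The plan is to compare one step of the rounded-down Newton method, started from $x^{[k]}$, with one exact Newton step from the \emph{same} point $x^{[k]}$, and to track how the rounding error accumulates across iterations. First I would fix notation: let $y^{(k)}$ denote the result of applying exact Newton iteration $k$ times starting from $x^{[0]} = \mathbf 0$, so Theorem \ref{linearconvgen} gives $\|q^* - y^{(k)}\|_\infty \le 2^{-(k-4|P|)}$. The trouble is that $x^{[k]} \ne y^{(k)}$, so I cannot directly invoke that bound. Instead, the key structural tool is Lemma \ref{cone} together with the cone vector $d := (\mathbf 1 - q^*)/\|\mathbf 1 - q^*\|_\infty$ from Lemma \ref{1-qmean}, which satisfies $B(q^*)d \le d$ and $0 < d \le \mathbf 1$. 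By Lemma \ref{round-exist}, every $x^{[k]}$ satisfies $0 \le x^{[k]} \le q^*$, so by Theorem \ref{thm:spec-full} the matrix $(I - B(x^{[k]}))^{-1}$ exists and is nonnegative, hence $\mathcal N_P(x^{[k]})$ is well-defined and Lemma \ref{cone} is applicable at each step.

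The heart of the argument is an induction establishing a bound of the form $q^* - x^{[k]} \le \lambda_k\, d$ for a suitable decreasing-then-stabilizing sequence $\lambda_k$. For the base case $k=0$, since $d \ge d_{\min}\mathbf 1$ and $d_{\min} = (\mathbf 1 - q^*)_{\min}/(\mathbf 1 - q^*)_{\max} \ge 2^{-4|P|}$ by Theorem \ref{1comp}, we get $q^* \le \mathbf 1 \le (1/d_{\min})\, d \le 2^{4|P|} d$, so $\lambda_0 = 2^{4|P|}$ works. For the inductive step, suppose $q^* - x^{[k]} \le \lambda_k d$. Apply Lemma \ref{cone} with $z := x^{[k]}$ and this $\lambda_k$ to conclude $q^* - x^{\{k+1\}} = q^* - \mathcal N_P(x^{[k]}) \le \tfrac{\lambda_k}{2} d$. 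Then I must account for the rounding step: by definition $x^{[k+1]}_i$ differs from $\max(x^{\{k+1\}}_i, 0)$ by less than $2^{-h}$, and since $x^{[k+1]} \le x^{\{k+1\}}$ coordinate-wise (rounding down) while also $x^{\{k+1\}}_i \le q^*_i$ is \emph{not} guaranteed — wait, it is, by Lemma \ref{round-exist}'s proof structure, or more simply one observes $x^{[k+1]} \le \max(x^{\{k+1\}},0)$ and the clamp-to-nonnegative only helps — we get $q^* - x^{[k+1]} \le (q^* - x^{\{k+1\}}) + 2^{-h}\mathbf 1 \le \tfrac{\lambda_k}{2} d + 2^{-h}\mathbf 1 \le (\tfrac{\lambda_k}{2} + 2^{-h}/d_{\min}) d \le (\tfrac{\lambda_k}{2} + 2^{-h+4|P|}) d$. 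So the recurrence is $\lambda_{k+1} = \tfrac{\lambda_k}{2} + 2^{-h+4|P|}$.

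Solving this linear recurrence with $\lambda_0 = 2^{4|P|}$ gives $\lambda_k = 2^{-k}\lambda_0 + 2^{-h+4|P|}\sum_{i=0}^{k-1}2^{-i} \le 2^{-k+4|P|} + 2^{-h+1+4|P|}$. Finally, since $\|d\|_\infty = 1$, the bound $q^* - x^{[k]} \le \lambda_k d$ yields $\|q^* - x^{[k]}\|_\infty \le \lambda_k$. Setting $k = j'+1$ gives $\|q^* - x^{[j'+1]}\|_\infty \le 2^{-(j'+1)+4|P|} + 2^{-h+1+4|P|}$; I would then need to reconcile the first term with the claimed $2^{-j'}$. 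This suggests the indexing in the statement is calibrated so that the exact-Newton convergence rate has already been folded in — likely the intended reading is that after the preprocessing one uses the \emph{sharper} per-step contraction, or equivalently one should re-examine whether the base case can start the induction from the stronger bound $\|q^* - x^{[0]}\|_\infty \le \lambda_0$ with $\lambda_0 = 1$ rather than $2^{4|P|}$, in which case $\lambda_k \le 2^{-k} + 2^{-h+1+4|P|}$ and $k = j'+1$ gives exactly $\|q^* - x^{[j'+1]}\|_\infty \le 2^{-j'} + 2^{-h+1+4|P|}$ as claimed. The main obstacle, and the step I would be most careful about, is precisely this bookkeeping: ensuring that the rounding error $2^{-h}$ in each coordinate translates into only a $2^{-h+4|P|}$ error in the $d$-scaled cone bound (this uses $d_{\min} \ge 2^{-4|P|}$, i.e., Theorem \ref{1comp}, which is essential and is exactly the place the argument fails for general PPSs with $q^*_i = 1$), and that the clamping to nonnegative values in step 2 of the rounded method never destroys the invariant $0 \le x^{[k]} \le q^*$ nor the cone inequality — both of which follow from $q^* \ge \mathbf 0$ and monotonicity of $B$, but deserve an explicit line.
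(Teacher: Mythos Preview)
Your approach is exactly the paper's: use the cone vector $(\mathbf{1}-q^*)$ (or its normalization), apply Lemma~\ref{cone} for the contraction step, add $2^{-h}$ per coordinate for the rounding, convert that into a $2^{-h}/(\mathbf{1}-q^*)_{\min}$ increment in the cone parameter, and solve the resulting linear recurrence. The paper carries out precisely this computation and arrives, as you do, at
\[
\|q^* - x^{[k+1]}\|_\infty \;\le\; (2^{-k} + 2^{-h+1})\cdot \frac{\|\mathbf{1}-q^*\|_\infty}{(\mathbf{1}-q^*)_{\min}} \;\le\; (2^{-k}+2^{-h+1})\,2^{4|P|}\,,
\]
which is $2^{-k+4|P|} + 2^{-h+1+4|P|}$. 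The paper then writes ``$= 2^{-k} + 2^{-h+1+4|P|}$'', which is simply an arithmetic slip; the bound actually proved is the one you obtained, with the extra $4|P|$ in the first exponent. So your suspicion is correct and your derivation is the right one.

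Your proposed repair, however, does not work: taking $\lambda_0 = 1$ would require $q^* \le d = (\mathbf{1}-q^*)/\|\mathbf{1}-q^*\|_\infty$ coordinatewise, and this fails whenever some $q^*_i$ is close to~$1$ while another $q^*_j$ is not (e.g.\ $q^*_i = 0.9$, $q^*_j = 0.1$ gives $d_i = 1/9 < q^*_i$). The correct starting value really is $\lambda_0 = 1/d_{\min}$, and the first term in the final bound really is $2^{-j'+4|P|}$, not $2^{-j'}$.

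This does not matter for the application: in the proof of Theorem~\ref{Ptime} one sets $j' := j + 4|P| + 1$, so $2^{-j'+4|P|} = 2^{-(j+1)}$ anyway, and the argument there goes through verbatim with the corrected bound. So your write-up should simply state and prove the bound $\|q^* - x^{[j'+1]}\|_\infty \le 2^{-j'+4|P|} + 2^{-h+1+4|P|}$ and note that this is what Theorem~\ref{Ptime} actually uses.
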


The proofs of Lemmas \ref{round-exist} and \ref{lem:explicit-bound-rounded}
use the results of the previous section and bound the effects of the rounding.
The proofs are given in the Appendix.
We can then show the main theorem:

\begin{proof}[Proof of Theorem \ref{Ptime} ({\bf Main Theorem of Sec. 4})] 
In 
Lemma \ref{lem:explicit-bound-rounded}
let $j' := j+4|P|+1$ and $h := j + 2 + 4|P|$.  We have:
$\| q^* - x^{[j + 2 + 4|P|]} \|_{\infty} \leq 2^{-(j+1+4|P|) } + 
2^{-(j+1)} \leq 2^{-(j+1)} + 2^{-(j+1)} = 2^{-j}.$
\end{proof}

\begin{cor} \label{main-alg-corollary} 
Given any PPS, $x=P(x)$, with LFP $q^*$,
we can approximate $q^*$ within additive error $2^{-j}$
in time polynomial in $|P|$ and $j$
(in the standard  Turing model of computation).
More precisely, we can compute a vector $v$, ${\mathbf 0} \leq v \leq q^*$, such that
$\| q^* - v  \|_\infty \leq 1/2^{-j}$.\end{cor}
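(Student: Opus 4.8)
The plan is to reduce the general case to the ``nice'' case handled by Theorem \ref{Ptime}, and then observe that each rounded Newton iteration can itself be carried out in polynomial time. First I would invoke the preprocessing machinery already assembled in Section 2: given an arbitrary PPS $x=P(x)$, apply Proposition \ref{prop:snf-form} to convert it in P-time to an equivalent PPS in SNF form with $|Q| \in O(|P|)$, and then apply Proposition \ref{prob1-ptime-scfg-prop} to identify in P-time exactly which coordinates of the LFP equal $0$ or $1$. Substituting those known values into the right-hand sides and deleting the corresponding equations yields, as noted in the text, a residual PPS in SNF form whose LFP $q'^{*}$ satisfies $\mathbf{0} < q'^{*} < \mathbf{1}$, and whose encoding size is still $O(|P|)$. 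It suffices to approximate $q'^{*}$ within additive error $2^{-j}$ and then reinsert the exact $0/1$ coordinates.

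Next I would apply Theorem \ref{Ptime} to the residual PPS with rounding parameter $h = j + 2 + 4|P|$: after $h$ iterations of the rounded-down Newton method, the iterate $x^{[h]}$ satisfies $\mathbf{0} \leq x^{[h]} \leq q'^{*}$ and $\|q'^{*} - x^{[h]}\|_\infty \leq 2^{-j}$. The vector $v$ we output is $x^{[h]}$ (with the $0/1$ coordinates of the original problem restored in their proper positions); this immediately gives $\mathbf{0} \leq v \leq q^*$ and $\|q^* - v\|_\infty \leq 2^{-j}$, as required. So the only remaining point is the running time.

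For the running-time bound I would argue that a single rounded-down Newton iteration is computable in time polynomial in $|P|$ and $j$. The key observation is that by construction every coordinate of $x^{[k]}$ is a nonnegative multiple of $2^{-h}$ that is $\leq q^*_i < 1$, so each $x^{[k]}_i$ has encoding size $O(h) = O(|P| + j)$. Computing $x^{\{k+1\}} = \mathcal{N}_P(x^{[k]})$ from (\ref{newton-one-it-eq}) requires evaluating $P(x^{[k]})$ and the Jacobian $B(x^{[k]})$ (entries of polynomially bounded size, since the polynomials are in SNF and hence of degree $\leq 2$), forming the matrix $I - B(x^{[k]})$, and solving one $n \times n$ rational linear system — all standard exact rational arithmetic that is polynomial-time in $n$ and in the bit-sizes of the entries (e.g. by Gaussian elimination with the usual bound on intermediate numerators/denominators). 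Note that $(I - B(x^{[k]}))$ is nonsingular: by Lemma \ref{round-exist} we have $\mathbf{0} \leq x^{[k]} \leq q^*$, so Theorem \ref{thm:spec-full} applies and guarantees $(I - B(x^{[k]}))^{-1}$ exists. Finally, rounding each coordinate down to the nearest multiple of $2^{-h}$ (and truncating below at $0$) is trivial. Thus one iteration costs $\mathrm{poly}(|P|, j)$ time, and since we perform exactly $h = O(|P| + j)$ iterations, the total time is $\mathrm{poly}(|P|, j)$.

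The only subtlety — and the step I would be most careful about — is the last one: one must confirm that the exact rational solution of the linear system $(I - B(x^{[k]}))\, y = P(x^{[k]}) - x^{[k]}$ has numerators and denominators of size polynomial in the input bit-length, so that Gaussian elimination over $\mathbb{Q}$ runs in polynomial time. This is the classical bound (via Cramer's rule / Hadamard's inequality) on the bit-complexity of solving rational linear systems, and it applies here because the entries of $I - B(x^{[k]})$ and of $P(x^{[k]}) - x^{[k]}$ all have bit-size $O(|P| + h) = O(|P| + j)$. With that in hand the corollary follows.
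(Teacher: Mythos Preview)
Your proposal is correct and follows essentially the same approach as the paper's own proof: reduce via Propositions~\ref{prop:snf-form} and~\ref{prob1-ptime-scfg-prop} to an SNF PPS with $\mathbf{0}<q^*<\mathbf{1}$, apply Theorem~\ref{Ptime} with $h=j+2+4|P|$, and observe that each rounded iteration involves only polynomial-size rational linear algebra. In fact you supply more detail than the paper does (explicitly invoking Lemma~\ref{round-exist} and Theorem~\ref{thm:spec-full} for nonsingularity, and Cramer/Hadamard for the bit-size bound on the linear solve), which is entirely appropriate.
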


\section{Application to probabilistic parsing of general SCFGs}

\label{sec:application-scfg}

We briefly describe an application to some important
problems for stochastic context-free grammars (SCFGs).
For definitions, background, and a detailed treatment we
refer to Appendix \ref{app-sec:application-scfg}. 

We are given a (SCFG) 
$G =  (V, \Sigma,  R, S)$ with set $V$ of nonterminals, 
set $\Sigma$ of terminals, set $R$ of probabilistic rules 
with rational probabilities, and start symbol $S \in V$. 
The SCFG induces probabilities on terminal strings,
where the probability $p_{G,w}$ of string $w \in \Sigma^*$
is the probability that $G$ derives $w$.
A basic computational problem is: 
Given a SCFG $G$ and string $w$, 
compute the probability $p_{G,w}$ of $w$. 
This probability is generally irrational, thus we want 
to compute it approximately to desired accuracy $\delta > 0$.
We give the first polynomial-time algorithm for this
problem that works for arbitrary SCFGs.

\begin{thm}
\label{string-prob}
There is a polynomial-time algorithm that, given as input
a SCFG $G$, a string $w$ and a rational $\delta >0$ in
binary representation, approximates the probability 
$p_{G,w}$ within $\delta$, i.e., computes a
value $v$ such that $|v - p_{G,w}| < \delta$.
\end{thm}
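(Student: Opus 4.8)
The plan is to reduce the computation of $p_{G,w}$ to the computation of the LFP of an associated PPS, so that Corollary \ref{main-alg-corollary} applies. First I would build, from the SCFG $G=(V,\Sigma,R,S)$ and the fixed string $w$ of length $m$, a product-style construction: introduce nonterminals indexed by pairs (or triples) consisting of a nonterminal $A\in V$ together with a designated substring interval $w[i..j]$ of $w$, intended to carry the probability that $A$ derives exactly the substring $w[i..j]$. The rules of $G$, after first putting the \emph{string-position bookkeeping} aside, induce a new collection of probabilistic rules on these indexed symbols, where a rule $A\to \beta$ with probability $p$ gives rise to rules that split the interval $w[i..j]$ among the symbols of $\beta$ in all consistent ways, each inheriting the probability $p$. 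This yields a system $x = Q(x)$ in which the coefficients are exactly the original rule probabilities and the number of variables is polynomial in $|G|$ and $m$ (roughly $|V|\cdot m^2$, with the right-hand sides controlled once $G$ is first brought to a bounded-degree form). Crucially, the coefficients out of each indexed nonterminal still sum to at most $1$, so $x=Q(x)$ is a PPS; its LFP coordinate corresponding to $(S, w[1..m])$ equals $p_{G,w}$ by a standard induction on derivation length / parse-tree structure.

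The subtle point, and the reason the result is nontrivial for \emph{arbitrary} SCFGs, is the handling of $\epsilon$-rules and unit (chain) rules, which can create derivations of unbounded length generating a bounded substring, so that the naive indexed system is not quite finite-state in the interval coordinate and the ``split the interval'' rules alone do not capture everything. I would first preprocess $G$: using Proposition \ref{prob1-ptime-scfg-prop}-style qualitative analysis on the termination-probability PPS of $G$, I can identify nonterminals with $\epsilon$-derivation probability $0$ or handle the $\epsilon$-generating part separately, and I would exploit the CNF-conversion machinery alluded to in the introduction (result (2) of \emph{Our Results}) — but here one must be careful, since that conversion is itself only approximate and produces irrational probabilities. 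The cleaner route is to avoid full CNF conversion and instead absorb $\epsilon$- and unit-rule effects directly into the coefficients of $x=Q(x)$: the coefficient multiplying a given transition gets multiplied by an appropriate \emph{termination/$\epsilon$-closure probability}, which is itself a coordinate of the LFP of the termination PPS of $G$. Since those closure probabilities are irrational in general, I would not compute them exactly; instead I would keep the whole thing as a \emph{single} PPS by adjoining the termination-PPS variables of $G$ to the variable set, so that $x=Q(x)$ is genuinely a PPS with \emph{rational} coefficients (the original rule probabilities), whose LFP simultaneously encodes the closures and the desired string probability. The main obstacle is exactly this: showing that such a combined PPS of polynomial size exists, that its LFP coordinate for $(S,w)$ is exactly $p_{G,w}$, and that $\epsilon$-rules do not blow up the system — this is where the ``considerable additional machinery'' promised in the introduction is needed, and it is the part I would expect to occupy most of the proof.

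Once the combined PPS $x=Q(x)$ is in hand with $|Q|$ polynomial in $|G|$ and $m$, the rest is immediate: I apply Proposition \ref{prop:snf-form} to put it in SNF, apply Proposition \ref{prob1-ptime-scfg-prop} to strip the $0/1$ coordinates, and then invoke Corollary \ref{main-alg-corollary} to approximate the LFP within additive error $2^{-j}$ in time polynomial in $|Q|$ and $j$. Choosing $j = \lceil \log_2(1/\delta)\rceil + 1$ and reading off the coordinate for $(S, w[1..m])$ gives a value $v$ with $|v - p_{G,w}| < \delta$ in time polynomial in $|G|$, $m$, and $\log(1/\delta)$, as required. The only remaining care is to confirm that the LFP coordinate we read off is the one equal to $p_{G,w}$ and not merely an upper bound — this follows from the parse-tree induction establishing that the relevant coordinate of the \emph{least} solution counts all (and only) derivation trees of $w$, each weighted by its probability, with the infinite sum over $\epsilon$-expansions converging precisely to the closure probabilities encoded by the adjoined variables.
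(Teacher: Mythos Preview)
Your central claim---that the indexed system $x=Q(x)$ with variables $q^A_{i,j}$ is a PPS because ``the coefficients out of each indexed nonterminal still sum to at most $1$''---is false, and this is a genuine gap rather than a detail. For a rule $A \stackrel{p}{\rightarrow} BC$ and an interval $[i,j]$ of length $\ell$, the inside recursion contributes one monomial $p \cdot q^B_{i,k} q^C_{k,j}$ for \emph{each} split point $k$, so the coefficient sum in the equation for $q^A_{i,j}$ picks up $p\cdot(\ell+1)$ from that single rule (allowing empty sides once you adjoin the $e_B$ variables), and in total $P_{(A,i,j)}(\mathbf{1}) = \sum_r p_r \cdot (\text{number of splits for } r)$, which is $\Theta(|w|)$, not $\leq 1$. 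Concretely, take $S \stackrel{1/2}{\rightarrow} SS$, $S \stackrel{1/2}{\rightarrow} a$, and $w = a^m$: the equation for $q^S_{0,m}$ has coefficient sum $(m-1)/2$, which exceeds $1$ for $m\geq 4$. So $x=Q(x)$ is an MPS but not a PPS, and Corollary~\ref{main-alg-corollary} does not apply; the PPS-specific bound of Theorem~\ref{1comp} (namely $1-q^*_i \geq 2^{-4|P|}$), on which the whole polynomial convergence analysis rests, is simply unavailable. Structurally, what you have built is the product of the SCFG with the DFA for $\{w\}$, which is a \emph{multi}-exit RMC (exits indexed by DFA state), and the paper's introduction already notes that approximating termination probabilities for general multi-exit RMCs is PosSLP-hard; the LFP of your MPS does lie in $[0,1]^n$, but that alone does not rescue Newton's method.

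The paper's route is quite different and does not attempt to package $p_{G,w}$ as a single PPS coordinate. It uses the PPS machinery only for the $\epsilon$-probabilities $E(A)$ (which \emph{are} termination probabilities of a genuine PPS), then builds an approximate ``conditioned'' grammar $G^{(3)}_\delta$ with no $\epsilon$-rules by dividing through by $NE(A)$, then eliminates unit rules by solving linear systems (this step needs a separate conditioning analysis because some entries involve $E(B)$, which can be doubly-exponentially small; the paper shows these transitions can be safely dropped because they always sit next to a transition of probability $\geq 2^{-9|G^{(2)}|}$ to an absorbing state). Once an approximate CNF grammar is in hand, the inside probabilities are computed by the CKY recursion \eqref{cky-dyn-equation}, which is an acyclic bottom-up computation---no fixed point and no Newton's method at all---with a straightforward multiplicative error propagation $(4m^2)^{|w|}\delta$. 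The PPS/Newton result is used exactly once, to approximate $E(A)$; everything downstream is finite linear algebra and dynamic programming with rounding.
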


The heart of the algorithm involves the transformation of
the given SCFG $G$ to another ``approximately equivalent" SCFG $G'$ with rational rule probabilities that is in Chomsky Normal Form (CNF).
More precisely, for every SCFG $G$ there is a SCFG $G''$
in CNF that is {\em equivalent} to $G$ 
in the sense that it gives
the same probability to all the strings of $\Sigma^*$.
However, it may be the case that any such grammar must have
irrational probabilities, and thus cannot
be computed explicitly. Our algorithm computes a CNF SCFG
grammar $G'$ that has the same structure (i.e. rules) 
of such an equivalent CNF grammar $G''$ and has 
rational rule probabilities that approximate the
probabilities of $G''$ to sufficient accuracy $\delta$
(we say that $G'$ {\em $\delta$-approximates} $G''$)
such that $p_{G',w}$ provides the desired approximation
to $p_{G,w}$, and in fact this holds for all strings
up to any given length $N$.

\begin{thm}
\label{cnf}
There is a polynomial-time algorithm that, given a SCFG $G$,
a natural number $N$ in unary, and a rational $\delta >0$ 
in binary, computes a new SCFG $G'$ in CNF that 
$\delta$-approximates a SCFG in CNF that is equivalent to $G$,
and furthermore $|p_{G,w} - p_{G',w}| \leq \delta$ for
all strings $w$ of length at most $N$.
\end{thm}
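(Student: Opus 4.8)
The plan is to build the CNF grammar $G'$ in stages, mirroring the classical (non-stochastic) CNF transformation but tracking probabilities throughout and invoking the PPS machinery from Sections \ref{sec:poly-bound-newton}--\ref{sec:p-time} wherever an infinite sum of derivation weights must be evaluated. First I would put $G$ into a normal form where every rule has the shape $A \to BC$, $A \to a$, $A \to B$, or $A \to \epsilon$ (binarizing long right-hand sides by introducing fresh nonterminals and splitting probabilities $1$ along a chain — this is exact and polynomial in size). The two genuinely hard steps are then: (i) eliminating $\epsilon$-rules, and (ii) eliminating unary (chain) rules $A \to B$; both require summing over infinitely many derivations, which is precisely where termination probabilities enter.

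For the $\epsilon$-rules: for each nonterminal $A$ let $e_A$ be the probability that $A$ derives the empty string $\epsilon$. These $e_A$ satisfy a PPS: $e_A = \sum_{A\to\epsilon} p + \sum_{A\to B} p\, e_B + \sum_{A\to BC} p\, e_B e_C$, which is a genuine PPS (coefficients nonnegative, summing to $\le 1$). By Corollary \ref{main-alg-corollary} we can compute, in time polynomial in $|G|$ and $\log(1/\delta')$, rational approximations $\tilde e_A$ with $|e_A - \tilde e_A|$ as small as we like. Using these, I would replace each rule $A\to BC$ by the rules $A\to BC$, $A\to B$, $A\to C$ (the last two covering the cases where $C$ resp.\ $B$ silently vanishes), with probabilities $p$, $p\,e_C$, $p\,e_B$ respectively, and then delete all $\epsilon$-rules; the start symbol's $\epsilon$-probability is handled separately if $\epsilon$ is in the language. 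The subtlety is that we must use the exact $e_A$ to get an \emph{exactly} equivalent CNF grammar $G''$, but we only have $\tilde e_A$; so the grammar $G'$ we actually output has the rule \emph{structure} of $G''$ and rule probabilities within $\delta'$ of those of $G''$. I then need a perturbation/stability lemma: if two SCFGs with the same rule structure have rule probabilities within $\eta$ coordinatewise, then the string probabilities $p_{G',w}$ and $p_{G'',w}$ differ by at most (number of rules used in a parse) $\cdot\,\eta\,\cdot$ (a bound on total parse weight), which for strings of length $\le N$ in a CNF grammar gives a bound like $O(\mathrm{poly}(N,|G|))\cdot \eta$; choosing $\eta = \delta'$ small enough (polynomially many bits) makes this $\le \delta/3$.

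For the unary rules: after $\epsilon$-elimination the chain rules $A\to B$ form a weighted graph, and I would compute the "chain closure" — for each pair $(A,B)$ the total probability $c_{A,B}$ that $A$ derives $B$ using only unary steps — as the entry of $(I - U)^{-1}$ where $U$ is the substochastic matrix of immediate unary-rule probabilities. This is a linear system; $(I-U)^{-1}$ exists and is nonnegative because $U$ is substochastic with the relevant spectral radius $< 1$ (the qualitative analysis of Proposition \ref{prob1-ptime-scfg-prop}, or a direct argument, rules out the degenerate case), and it can be solved exactly or approximately in polynomial time. Then each terminal rule $B\to a$ and each binary rule $B\to CD$ is "pulled back" along chains: add $A\to a$ with probability $\sum_B c_{A,B}\,p_{B\to a}$ and $A\to CD$ with probability $\sum_B c_{A,B}\,p_{B\to CD}$, and delete all unary rules. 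Again the entries of $(I-U)^{-1}$ may be irrational, so $G'$ uses rational approximations and I reuse the same perturbation lemma.

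The main obstacle I anticipate is controlling \emph{numerical size} while guaranteeing the final accuracy: each elimination stage composes approximate quantities, and naively the required precision could blow up. The remedy is to work backwards from the target $\delta$ and the length bound $N$: fix once and for all a polynomial precision bound $h = \mathrm{poly}(|G|, N, \log(1/\delta))$, compute all of $\tilde e_A$, the chain-closure entries, and the binarization constants to $h$ bits using Corollary \ref{main-alg-corollary} and exact/approximate linear algebra, keep every intermediate rational number rounded to $h$ bits, and then verify via the perturbation lemma (applied once per elimination stage, the errors adding up) that the cumulative error in $p_{G',w}$ over all $|w|\le N$ stays below $\delta$. Since each stage blows up the grammar by only a polynomial factor and adds only a polynomially bounded multiplicative error amplification (bounded in terms of $N$ and grammar size for length-$\le N$ parses), choosing $h$ larger than this bound by a constant factor suffices, and the whole procedure runs in time polynomial in $|G|$, $N$, and $\log(1/\delta)$. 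Theorem \ref{string-prob} then follows immediately: transform $G$ to $G'$ in CNF with the above guarantees for the given $N := |w|$, and run the standard inside (CKY-style) algorithm on $G'$ and $w$, which for a CNF SCFG computes $p_{G',w}$ exactly by a polynomial-size arithmetic computation on the rule probabilities, hence approximately within $\delta$ in the Turing model.
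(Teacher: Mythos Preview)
Your overall outline (binarize, eliminate $\epsilon$-rules, eliminate unary rules, perturbation analysis) matches the paper's strategy, but two concrete gaps prevent the argument from going through as stated.

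First, your $\epsilon$-elimination does not yield an SCFG, and your claim that $U$ is substochastic is false. Take $A \to BB$ with probability $1$, $B \to \epsilon$ with probability $0.9$, $B \to b$ with probability $0.1$; your construction gives $A$ the rules $A\to BB$ ($1$), $A\to B$ ($0.9$), $A\to B$ ($0.9$), so the weights out of $A$ sum to $2.8$ and $U_{A,B}=1.8$. The paper avoids this by \emph{normalizing}: from each rule $A\to BC$ it creates three rules with probabilities $p\cdot NE(B)NE(C)/NE(A)$, $p\cdot NE(B)E(C)/NE(A)$, $p\cdot E(B)NE(C)/NE(A)$, producing a genuine proper SCFG $G^{(3)}$ whose string probabilities are the conditional probabilities $p^A_{G,w}/NE(A)$. (Your unnormalized $U$ is in fact diagonally similar to the paper's substochastic matrix via $\mathrm{diag}(NE(A))$, so $\rho(U)<1$ still holds---but this requires an argument you do not give, and the resulting $G''$ is not an SCFG as the theorem demands.)

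Second, and this is the real crux, you do not address the conditioning of the linear system for unary elimination. Saying ``fix a polynomial precision $h$ and solve approximately'' is not enough: you need a polynomial bound on $\|(I-U)^{-1}\|_\infty$, since otherwise polynomially small perturbations of $U$ (which is all you have, the entries being only approximately computable) can cause arbitrarily large errors in $(I-U)^{-1}$. After the paper's normalized $\epsilon$-elimination the unary transition probabilities contain factors $E(B)$ that can be as small as $2^{-2^{\Theta(|G|)}}$, and the paper must show that these doubly-exponentially-small transitions can be \emph{dropped} from the matrix without significantly changing hitting probabilities---the structural reason being that whenever a nonterminal $A$ has such a tiny unary transition $r'(2)$ or $r'(3)$, it also has the corresponding binary rule $r'(1)$ with probability at least $2^{-9|G^{(2)}|}$ leading directly to an absorbing state. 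Only after this pruning can one bound $\|(I-P')^{-1}\|_\infty$ (Claim~\ref{claim:small-interse-norm}) and invoke a standard perturbation theorem for linear systems. Your sketch omits this structural step entirely; without it the ``polynomial error amplification per stage'' you assert for the unary step is unjustified.
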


The algorithm for Theorem \ref{cnf} involves a series of transformations.
There are two complex steps in the series. 
The first is elimination of $\epsilon$-rules.
This requires introduction of irrational
rule probabilities if we were to preserve equivalence,
and thus can only be done approximately. 
We effectively show that computation of
the desired rule probabilities in this transformation can
be reduced to computation of termination probabilities
for certain auxiliary grammars; 
furthermore, the structure of the construction
has the property that the reduction essentially preserves approximations.
The second complicated step is the elimination of
unary rules. This requires the solution of certain
linear systems whose coefficients are irrational
(hence can only be approximately computed)
and furthermore some of them may be extremely 
(doubly exponentially) small, which could potentially
cause the system to be very ill-conditioned.
We show that fortunately this does not happen, by
a careful analysis of the structure of the constructed
grammar and the associated system.
The details are quite involved and are given in the appendix.

Once we have an approximately equivalent CNF SCFG $G'$
we can compute $p_{G',w}$ using a well-known variant
of the CKY parsing algorithm, which runs in polynomial time
in the unit-cost RAM model, but 
the numbers may become exponentially long. 
We show that we can do the computation approximately with sufficient accuracy to obtain a good approximation
of the desired probability $p_{G,w}$ in P-time in the Turing model, and thus prove Theorem \ref{string-prob}.

We note that Chomsky Normal Form is used as the starting point
in the literature  for several other important problems
concerning SCFGs.
We expect that Theorem \ref{cnf} and the
techniques we developed in this paper will enable
the development of efficient P-time algorithms for these
problems that work for arbitrary SCFGs.

\bibliographystyle{plain}

\appendix

\section{Appendix A: Missing proofs in Sections 2-4}

\subsection{Proof of Proposition \ref{prop:snf-form}}

\noindent {\bf Proposition \ref{prop:snf-form}}
[cf. also Proposition 7.3 \cite{rmc}].
{\em 
Every PPS (MPS), $x = P(x)$, can be transformed in P-time
to an ``{\em equivalent}''  PPS (MPS, respectively),  $y=Q(y)$  in SNF form,
such that 
$|Q| \in O( |P|  )$.
More precisely, the variables $x$ are 
a subset of the variables $y$, and  $y=Q(y)$ has LFP
$p^* \in \real^m_{\geq 0}$ iff $x=P(x)$ has LFP $q^* \in \real^n_{\geq 0}$,
and projecting $p^*$ onto the $x$ variables yields 
$q^*$.}

\begin{proof}
We prove we can convert any PPS (MPS),  $x=P(x)$, 
to SNF form by adding new auxiliary variables, obtaining a different system of 
polynomial equations $y=Q(y)$ with $|Q|$  linear in $|P|$. 

To do this, we simply observe that we can use repeated squaring
and Horner's rule
to express any monomial
$x^{\alpha}$ via a circuit (straight-line program) with gates  $*$,
and with the variables $x_i$ as input.
Such a circuit will have size $O(m)$ where $m$ is the sum of the 
numbers of bits of the positive elements in the vector $\alpha$ of exponents.
We can then convert such a circuit to a system of equations,
by simply replacing the original monomial $x^\alpha$ by a new variable $y$,
and 
by simply using auxiliary variables in place of the gates of the
circuit to ``compute'' the monomial $x^\alpha$ that the variable
$y$ should be equal to.

Note that by doing this every monomial on the RHS of any of the original 
equations $x_i = P_i(x)$ will have
been replaced by a single variable, and thus those original equations
will now become Form$_{+}$ linear equations, and note that
all internal gates of the circuit 
for representing $x^\alpha$, represented by a variable $y_i$, 
give simply the product of two other variables, and
thus their corresponding equations are simply of the form $y_i = y_j y_k$,
which constitutes a Form$_{*}$ equation.

Importantly, note that the system of equations so obtained will
still remain  a system of monotone (and respectively, probabilistic) 
polynomial equations, if the original system was monotone (respectively,
probabilistic), because each new auxiliary variable $y_i$, 
that we introduce (which acts as a gate in the circuit for the
monomial $x^\alpha$),  will be associated with an equation
of the form $y_i = y_j y_j$, which indeed is both a monotone and
probabilistic equation.

Furthermore, the new system of equations $y = Q(y)$ has the property
that (a) any  solution $p'' \in \real^n_{\geq 0}$ of $y = Q(y)$, when projected on to the 
$x$ variables,
yields a solution $p' \in \real^n_{\geq 0}$ to the original system of equations, 
$x= P(x)$, and
(b) any solution $q' \in \real^n_{\geq 0}$ 
to the original system of equations $x=P(x)$
yields a {\em unique} solution $q''$ to the expanded system of  
equations, $y=Q(y)$, by uniquely solving for the values of the new auxiliary
variables using their equations (which are derived from the
arithmetic circuit).

The $O(|P| )$  bound that is claimed for $|Q|$ 
follows easily from the fact that the circuit
representing each monomial $x^\alpha$ has size $O(m )$,
where $m$ is the sum of the  numbers of bits of the positive elements in the vector $\alpha$.
\end{proof}

\subsection{Proof of Lemma \ref{int}.}

\noindent {\bf Lemma \ref{int}}.
{\em Let $x=P(x)$ be a MPS, with $n$ variables, in SNF form,
and let $a, b \in \real^n$. Then:
$$P(a)-P(b) = B(\frac{a + b}{2})(a-b)= \frac{B(a) + B(b)}{2}(a-b)$$
}

\begin{proof}
Let the function  $f: \real \rightarrow \real^n$ be 
given by $f(t) := t a + (1-t) b = b + t (a-b)$.   Define $G(t) := P(f(t))$.
 
From the fundamental theorem of calculus, and
using the matrix form of the chain rule from multi-variable calculus
(see, e.g., \cite{apostol74} Section 12.10), we have:
$$P(a)-P(b) = G(1) - G(0) = \int^1_0  B(f(t))(a-b) \,dt$$
By linearity, we can just take out $(a-b)$ from the integral 
as a constant, and we get:
$$P(a)-P(b) = (\int^1_0  B(ta + (1-t)b) \,dt) (a-b)$$
We need to show that
$$\int^1_0  B(ta + (1-t)b) \,dt = B(\frac{a + b}{2}) = \frac{B(a) + B(b)}{2} $$
Since all monomials in $P(x)$ have degree at most 2,
each entry of the Jacobian matrix $B(x)$ is a polynomial of degree $1$ over variables in $x$. For any integers $i,j$, with  
$0 \leq i \leq n$, $0 \leq j \leq n$, there are thus 
real values $\alpha$ and $\beta$ with 
$$(B(ta + (1-t)b))_{ij} = \alpha  + \beta t$$
Then
$$(\int^1_0  B(ta + (1-t)b) \,dt)_{ij} = \int^1_0 
(\alpha  + \beta t) \,dt = \alpha + \frac{\beta}{2}$$
$$(B(\frac{a + b}{2}))_{ij} = \alpha + \frac{\beta}{2}$$
$$(\frac{B(a) + B(b)}{2})_{ij} = \frac{1}{2}((\alpha + \beta) + \alpha) = \alpha + \frac{\beta}{2}$$
\end{proof}

\subsection{Proof of Lemma \ref{newton}}

\noindent {\bf Lemma \ref{newton}.}
{\em  
Let $x=P(x)$ be a MPS in SNF form.
Let $z \in \real^n$ be any vector  
such that $(I - B(z))$ is non-singular, and thus 
$\mathcal{N}_P(z)$ is defined. 
Then \footnote{Our proof of this does not use the fact that 
$x=P(x)$ is a MPS.  
We only use the fact that
$q^*$ is {\em some} solution to $x=P(x)$, that
$\mathcal{N}_P(z)$  is well-defined,
and that $P(x)$ consists
of polynomials of degree bounded by at most 2.}:

$$q^* -  \mathcal{N}_P(z) = (I-B(z))^{-1}\frac{B(q^*) - B(z)}{2}(q^* - z)$$
}

\begin{proof}
\noindent Lemma \ref{int}, applied to $q^*$ and $z$, gives:
$q^* - P(z) = \frac{B(q^*) + B(z)}{2}(q^* - z)$.
Rearranging, we get:
\begin{equation}
\label{mid-eq-for-prop-one-newt}
P(z) - z = (I - \frac{B(q^*) + B(z)}{2})(q^* - z)
\end{equation}

\noindent Replacing $(P(z) - z)$ in equation (\ref{newton-one-it-eq})
by the right hand side of equation (\ref{mid-eq-for-prop-one-newt})
and subtracting
both sides of (\ref{newton-one-it-eq}) from $q^*$, gives:

\vspace*{-0.2in}

\begin{eqnarray*}q^* -  
\mathcal{N}_P(z) & = &  (q^* - z) - (I-B(z))^{-1}
(I - \frac{B(q^*) + B(z)}{2})(q^* - z)\\
& = &   (I-B(z))^{-1} (I-B(z)) (q^* - z) - (I-B(z))^{-1}
(I - \frac{B(q^*) + B(z)}{2})(q^* - z)\\
& = &  (I-B(z))^{-1} ( (I-B(z))  - 
(I - \frac{B(q^*) + B(z)}{2}) ) (q^* - z)\\
& = &  (I-B(z))^{-1}   
(\frac{B(q^*) - B(z)}{2})  (q^* - z)
\end{eqnarray*}

\vspace*{-0.13in}
\end{proof}

\subsection{Proof of Theorem \ref{thm:spec-full}}

\noindent {\bf Theorem \ref{thm:spec-full}.} 
{\em For any PPS, $x=P(x)$,
in SNF form, if we have $0 < q^* < 1$, 
then for all $0 \leq z \leq q^*$,
$\rho(B(z)) < 1$ and $(I-B(z))^{-1}$ exists and is nonnegative.}

\begin{proof}

For any square matrix $A$, let $\rho(A)$ denote the spectral radius of $A$. 
We need the following basic fact:

\begin{lem}[see, e.g., \cite{HornJohnson85}]
\label{series} If A is a square matrix with $\rho(A) < 1$ then $(I-A)$ is non-singular, 
the series $\sum_{k=0}^\infty A^k$ converges, and
	$(I - A)^{-1} = \sum_{k=0}^\infty A^k$.
\end{lem}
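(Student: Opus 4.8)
The plan is to establish the three assertions of the lemma in order, using only elementary properties of matrix norms together with the completeness of the (finite-dimensional) space of $n\times n$ matrices. First I would dispose of non-singularity: if $I-A$ were singular there would be a nonzero vector $v$ with $(I-A)v=0$, i.e.\ $Av=v$, so $1$ would be an eigenvalue of $A$ and hence $\rho(A)\geq 1$, contradicting the hypothesis; therefore $I-A$ is invertible. The remaining work is to show the Neumann series converges and sums to $(I-A)^{-1}$.

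For convergence, the one genuine analytic input is that $\rho(A)<1$ forces $A^k\to 0$, in fact geometrically. I would obtain this from the standard fact that for every $\varepsilon>0$ there is a submultiplicative matrix norm $\|\cdot\|$ with $\|A\|\leq \rho(A)+\varepsilon$ (via Schur triangularization $A=UTU^{*}$ followed by a diagonal rescaling that shrinks the strictly-upper part of $T$, or alternatively via Gelfand's formula $\rho(A)=\lim_k\|A^k\|^{1/k}$). Fixing $\varepsilon$ with $r:=\rho(A)+\varepsilon<1$ gives $\|A^k\|\leq\|A\|^{k}\leq r^{k}$. Then for $N<M$ the partial sums $S_N:=\sum_{k=0}^{N}A^{k}$ satisfy $\|S_M-S_N\|\leq\sum_{k=N+1}^{M}r^{k}\leq r^{N+1}/(1-r)$, which tends to $0$ as $N\to\infty$; hence $(S_N)$ is Cauchy, and by completeness it converges to some matrix $S=\sum_{k=0}^{\infty}A^{k}$. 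In particular $A^{k}\to 0$.

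Finally, to identify $S$ with $(I-A)^{-1}$ I would telescope: $(I-A)S_N=\sum_{k=0}^{N}A^{k}-\sum_{k=0}^{N}A^{k+1}=I-A^{N+1}$, and likewise $S_N(I-A)=I-A^{N+1}$. Since matrix multiplication is continuous and $S_N\to S$, $A^{N+1}\to 0$, letting $N\to\infty$ yields $(I-A)S=I=S(I-A)$, so $S$ is the two-sided inverse of $I-A$, i.e.\ $(I-A)^{-1}=\sum_{k=0}^{\infty}A^{k}$. The main obstacle is the single nontrivial step, namely $\rho(A)<1\Rightarrow A^{k}\to 0$; I would record the norm inequality $\|A\|\leq\rho(A)+\varepsilon$ with a brief Schur-form justification (or simply cite it, e.g.\ \cite{HornJohnson85}), after which the rest is the geometric-series bookkeeping above.
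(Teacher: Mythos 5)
Your proof is correct. Note that the paper does not prove this lemma at all: it is stated as a standard fact with a citation to Horn and Johnson, so there is no in-paper argument to compare against. What you give is precisely the canonical Neumann-series proof found in that reference --- the eigenvalue argument for non-singularity (which is in fact redundant, since your telescoping identity $(I-A)S_N = S_N(I-A) = I - A^{N+1}$ together with $A^{N+1}\to 0$ already exhibits a two-sided inverse), the choice of a submultiplicative norm with $\|A\|\le \rho(A)+\varepsilon<1$ to get geometric decay, Cauchy partial sums plus completeness, and passage to the limit. All steps are sound, and the one nontrivial ingredient (existence of a norm adapted to the spectral radius, via Schur triangularization with diagonal rescaling) is correctly identified and legitimately citable.
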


For all $0 \leq z \leq q^*$,  $B(z)$ is a nonnegative matrix,
and since the entries of the Jacobian matrix $B(x)$ 
have nonnegative coefficients, $B(x)$ is monotone in $x$, i.e., if 
$0 \leq z \leq q^*$,  then $0 \leq B(z) \leq B(q^*)$,
and thus by basic facts about non-negative matrices 
$\rho(B(z)) \leq \rho(B(q^*))$.
Thus by  Lemma \ref{series} it suffices to establish
that $\rho(B(q^*)) < 1$.
We will first prove this for {\em strongly connected} PPSs:

\begin{lem} \label{scc-inverse-lem} 
For any strongly connected PPS,  $x = P(x)$,
in SNF form with LFP $q^*$, such that 
$\textbf{0} < q^* < \textbf{1}$, we have $\rho(B(q^*)) < 1$.
\end{lem}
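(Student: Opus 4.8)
The statement to prove is just the bound $\rho(B(q^*)) < 1$; the nonnegativity and existence of $(I-B(z))^{-1}$ for all $0 \le z \le q^*$ claimed in Theorem~\ref{thm:spec-full} then follow from Lemma~\ref{series} together with $0 \le B(z) \le B(q^*)$ and standard monotonicity of the spectral radius on nonnegative matrices, as already sketched in the surrounding text, so the plan is to concentrate on this spectral bound. The first observation I would make is that $B(q^*)$ is an \emph{irreducible} nonnegative matrix: since $q^* > \mathbf{0}$, the entry $B(q^*)_{ij}$ is strictly positive exactly when $x_j$ appears with a positive coefficient in $P_i$, i.e. exactly when $(x_i,x_j)$ is an edge of the variable dependency graph; as the PPS is strongly connected, $B(q^*)$ has the non-zero pattern of a strongly connected digraph and is therefore irreducible.

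Next I would obtain $\rho(B(q^*)) \le 1$ from Lemma~\ref{1-qmean} by a Perron--Frobenius subinvariance argument. That lemma gives $B(q^*)(\mathbf{1}-q^*) \le (\mathbf{1}-q^*)$ with $\mathbf{1}-q^* > \mathbf{0}$. Letting $u > \mathbf{0}$ be a strictly positive left Perron eigenvector of $B(q^*)$ (which exists since $B(q^*)$ is irreducible and nonnegative; see \cite{HornJohnson85,BP94}), we get $\rho(B(q^*))\, u^\top(\mathbf{1}-q^*) = u^\top B(q^*)(\mathbf{1}-q^*) \le u^\top(\mathbf{1}-q^*)$, and since $u^\top(\mathbf{1}-q^*) > 0$ this yields $\rho(B(q^*)) \le 1$, with equality only if $B(q^*)(\mathbf{1}-q^*) = (\mathbf{1}-q^*)$. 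It then remains to rule out $\rho(B(q^*)) = 1$, which I would do by contradiction.

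So suppose $\rho(B(q^*)) = 1$, hence $B(q^*)(\mathbf{1}-q^*) = \mathbf{1}-q^*$. Revisiting the proof of Lemma~\ref{1-qmean} (which used Lemma~\ref{int} and $P(\mathbf{1}) \le \mathbf{1}$), we have $\mathbf{1}-q^* \ge P(\mathbf{1})-q^* = B(\frac{1}{2}(\mathbf{1}+q^*))(\mathbf{1}-q^*) \ge B(q^*)(\mathbf{1}-q^*) = \mathbf{1}-q^*$, so all three inequalities are equalities. This gives (a) $P(\mathbf{1}) = \mathbf{1}$, and (b) $\big(B(\frac{1}{2}(\mathbf{1}+q^*)) - B(q^*)\big)(\mathbf{1}-q^*) = 0$; since that matrix is nonnegative and $\mathbf{1}-q^* > \mathbf{0}$, in fact $B(\frac{1}{2}(\mathbf{1}+q^*)) = B(q^*)$. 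Now I would invoke the SNF structure: if some equation were of Form$_*$, say $x_i = x_jx_k$, then the Jacobian entry $B(x)_{ij}$ equals $x_k$ (or $2x_j$ when $j=k$), which takes different values at $\frac{1}{2}(\mathbf{1}+q^*)$ and at $q^*$ unless the corresponding coordinate of $q^*$ equals $1$ — impossible since $q^* < \mathbf{1}$. Hence every $P_i$ has Form$_+$, so $P(x) = Bx + c$ with the \emph{constant} matrix $B = B(q^*)$ and, by (a), $c = \mathbf{1} - B\mathbf{1}$. Finally, choosing an index $i^*$ with $(\mathbf{1}-q^*)_{i^*} = (\mathbf{1}-q^*)_{\text{max}}$ and setting $y := \mathbf{1} - \frac{\mathbf{1}-q^*}{(\mathbf{1}-q^*)_{\text{max}}}$, we have $y \ge \mathbf{0}$, $y_{i^*} = 0$, and, using $B(\mathbf{1}-q^*) = \mathbf{1}-q^*$, one checks $P(y) = By + c = B(y-\mathbf{1}) + \mathbf{1} = (y-\mathbf{1}) + \mathbf{1} = y$, so $y$ is a nonnegative fixed point of $x = P(x)$. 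Since $q^*$ is the \emph{least} nonnegative fixed point, $q^* \le y$, forcing $q^*_{i^*} \le y_{i^*} = 0$, contradicting $q^* > \mathbf{0}$. Therefore $\rho(B(q^*)) < 1$.

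I expect the main obstacle to be this last step: extracting the exact equality $B(q^*)(\mathbf{1}-q^*) = \mathbf{1}-q^*$ from the hypothesis $\rho(B(q^*)) = 1$, and then squeezing the chain of equalities to deduce simultaneously $P(\mathbf{1}) = \mathbf{1}$ and local linearity of $P$ via the SNF structure, after which the least-fixed-point hypothesis is precisely what forces the contradictory zero coordinate of $q^*$. Everything else — the irreducibility of $B(q^*)$, the $\rho \le 1$ bound, and the final deduction about $(I-B(z))^{-1}$ from the Neumann series — is routine Perron--Frobenius bookkeeping.
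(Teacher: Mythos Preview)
Your proof is correct, and it takes a genuinely different route from the paper's argument. The paper splits into two cases. If the Jacobian $B(x)$ is constant (i.e., all equations are Form$_+$), it argues directly that $B$ is an irreducible substochastic matrix with at least one row summing to $<1$ (since $q^*>0$ forces some positive constant term), hence $\rho(B)<1$. If $B(x)$ is non-constant, the paper uses $B(\tfrac{1}{2}(\mathbf{1}+q^*))(\mathbf{1}-q^*)\le (\mathbf{1}-q^*)$ together with the strict inequality $B(q^*)_{ij}<B(\tfrac{1}{2}(\mathbf{1}+q^*))_{ij}$ at some entry to get $(B(q^*)(\mathbf{1}-q^*))_i<(\mathbf{1}-q^*)_i$ for some $i$; it then propagates this strict inequality through the irreducible matrix via powers to conclude $B(q^*)^n(\mathbf{1}-q^*)\le\beta(\mathbf{1}-q^*)$ for some $\beta<1$, giving $\rho(B(q^*))<1$.

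By contrast, you argue uniformly: first $\rho\le 1$ by subinvariance, and then if $\rho=1$ you squeeze the chain of inequalities to force both $P(\mathbf{1})=\mathbf{1}$ and $B(\tfrac{1}{2}(\mathbf{1}+q^*))=B(q^*)$, which via SNF eliminates all Form$_*$ equations and reduces to the linear case with $c=\mathbf{1}-B\mathbf{1}$; you then exhibit a nonnegative fixed point $y$ with a zero coordinate, contradicting minimality of $q^*>0$. This is slightly more conceptual: it unifies the two cases and makes explicit use of the \emph{least} fixed point hypothesis, which the paper's non-constant case does not invoke. The paper's argument, on the other hand, is a bit more hands-on with the spectral radius and avoids the detour through constructing an auxiliary fixed point. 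Both are clean; yours has the mild advantage of not needing a separate case split.
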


\begin{proof}
If the Jacobian $B(x)$ is constant, then $B(q^*) = B(\textbf{1}) = B$.
In this case, $B$ is actually an irreducible substochastic matrix,
and since we have removed all variables $x_i$ such that $q^*_i =0$,
it is easy to see that some polynomial $P_i(x)$ must have contained
a positive constant term, and therefore, in the (constant) Jacobian matrix $B$ 
there is some row whose entries sum to $< 1$.
Since $B$ is also irreducible, we then 
clearly have that $\lim_{m \rightarrow \infty} B^m = 0$.
But this is equivalent to saying that $\rho(B) < 1$.
Thus we can assume that the Jacobian $B(x)$ is non-constant.
By Lemma \ref{1-qmean}:
 $$B(\frac{1}{2}(\textbf{1} + q^*))(\textbf{1}-q^*) \leq (\textbf{1}-q^*)$$
We have $\textbf{1}-q^* > 0$, and $B(\frac{1}{2}(\textbf{1} + q^*)) \geq 0$.
Thus, by induction, for any positive integer power $k$, we have
\begin{equation}\label{power-ineq}
B(\frac{1}{2}(\textbf{1} + q^*))^k (\textbf{1}-q^*) \leq (\textbf{1}-q^*)
\end{equation}
Now, since $B(x)$ is non-constant, and $B(x)$ is monotone in $x$,
and since  $q^* < \frac{1}{2}(\textbf{1} + q^*)$,
we have $B(q^*) \leq B( \frac{1}{2}(\textbf{1} + q^*))$
and furthermore there is some entry $(i,j)$ such that 
 $B(q^*)_{i,j} < B( \frac{1}{2}(\textbf{1} + q^*))_{i,j}$,
it follows that:
$$  (B(q^*)(\textbf{1}-q^*))_i <   (B( \frac{1}{2}(\textbf{1} + q^*))
(\textbf{1}-q^*))_i \leq  (\textbf{1}-q^*)_i$$
Therefore, since $B(q^*)$ is irreducible, it follows that 
for any coordinate $r$ 
there exists a power $k \leq n$ such that 
$(B(q^*)^k(\textbf{1}-q^*))_r < (\textbf{1}-q^*)_r$.
Therefore, 
$B(q^*)^n(\textbf{1}-q^*) < (\textbf{1}-q^*)$.
Thus, there exists some $0 < \beta < 1$, such that
$B(q^*)^n(\textbf{1}-q^*) \leq \beta (\textbf{1}-q^*)$.
Thus, by induction on $m$, for all $m \geq 1$,
we have $B(q^*)^{nm}(\textbf{1}-q^*) \leq \beta^m (\textbf{1}-q^*)$.
But $\lim_{m \rightarrow \infty} \beta^m = 0$,
and thus since $(\textbf{1}-q^*)>0$, it must be the case
that $\lim_{m \rightarrow \infty} B(q^*)^{nm} = 0$  (in all coordinates).
But this last statement is equivalent to saying that $\rho(B(q^*)) < 1$.
\end{proof}

Now we can proceed to arbitrary PPSs.  
We want to show that $\rho(B(q^*)) < 1$. 
Consider an eigenvector $v \in \real^n_{\geq 0}$, $v \neq 0$, 
of $B(q^*)$, associated with the eigenvalue $\rho(B(q^*))$, 
with $B(q^*)v = \rho(B(q^*))v$.   Such an eigenvector exists
by standard fact in Perron-Frobenius theory  
(see, e.g., Theorem 8.3.1 \cite{HornJohnson85}).

Consider any subset $S \subseteq \{1,\ldots,n\}$ of variable indices, 
and let
$x_S = P_S(x_S,x_{D_S})$ 
denote  the subsystem of $x=P(x)$ associated with the vector $x_S$ of 
variables in set $S$, where $x_{D_S}$ denotes the variables not in $S$.
Note that  $x_S = P_S(x_S,q^*_{D_S})$ is itself a PPS.
We call $S$ {\em strongly connected} if $x_S = P_S(x_S,q^*_{D_S})$ is
a strongly connected PPS. 

By Lemma \ref{scc-inverse-lem},
for any such strongly connected PPS given by indices $S$,
if we define its Jacobian by $B_S(x)$, then
$\rho(B_S(q^*)) < 1$. If $S$ defines a bottom strongly connected component that depends on no other
components in the system $x=P(x)$, then
we would have that $B_S(q^*)v_S = \rho(B(q^*))v_S$ where $v_S$ is the subvector of $v$ with coordinates 
in $S$.  Unfortunately $v_S$ might in general be the zero vector. 
However, if we take $S$ to be a strongly connected component 
that has $v_S \not= 0$ and such that the SCC $S$ only depends on SCCs $S'$ with 
$v_{S'} = 0$, then we still have $B_S(q^*)v_S = \rho(B(q^*))v_S$. 
Thus, by another standard fact from Perron-Frobenius
theory (see Theorem 8.3.2 of \cite{HornJohnson85}),  $\rho(B_S(q^*)) \geq \rho(B(q^*))$.
But since $\rho(B_S(q^*)) < 1$, this implies $\rho(B(q^*)) < 1$.  
\end{proof}

\subsection{Proof of Lemma \ref{cone}}

\noindent {\bf Lemma \ref{cone}}
[Lemma 5.4 from \cite{lfppoly}]{\bf .} 
{\em 
Let $x=P(x)$ be a MPS, with polynomials of 
degree bounded by 2, with LFP, $q^* \geq 0$.
Let $B(x)$ denoted the Jacobian matrix of $P(x)$.
For any positive vector 
$\textbf{d} \in \mathbb{R}^n_{> 0}$ 
that satisfies $B(q^*) \textbf{d} \leq \textbf{d}$, any positive real value $\lambda > 0$, 
and any nonnegative vector $z \in \real^n_{\geq 0}$, 
if $q^* - z \leq \lambda \textbf{d}$, and $(I-B(z))^{-1}$
exists and is nonnegative, then 
$q^* - {\mathcal N}_P(z) \leq \frac{\lambda}{2} \textbf{d}$.}

\begin{proof} By Lemma \ref{newton},  
$q^* - {\mathcal N}_P(z) = (I-B(z))^{-1}\frac{1}{2}(B(q^*) - B(z))(q^* - z)$.
Note that matrix $(I-B(z))^{-1}\frac{1}{2}(B(q^*) - B(z))$ is nonnegative:
we assumed $(I-B(z))^{-1} \geq 0$ and the positive
coefficients in $P(x)$ and in $B(x)$ mean $(B(q^*) -
B(z)) \geq 0$.  This and the assumption that
$q^* - z \leq \lambda d$ yields:  
$q^* -  {\mathcal N}_P(z)
 \leq   (I-B(z))^{-1}\frac{1}{2}(B(q^*) - B(z))\lambda \textbf{d}$.
We can rearrange as follows:
\begin{eqnarray*}
q^* -  {\mathcal N}_P(z) & \leq &  (I-B(z))^{-1}\frac{1}{2}(B(q^*) - B(z))\lambda \textbf{d}\\
& = &     (I-B(z))^{-1}\frac{1}{2}((I - B(z)) - (I - B(q^*)))\lambda \textbf{d}\\
& = &   \frac{\lambda}{2} ( I -   \;  (I-B(z))^{-1} (I - B(q^*)) \; ) \textbf{d}\\
& = &  \frac{\lambda}{2} \textbf{d} -  \; \frac{\lambda}{2} (I-B(z))^{-1} (I - B(q^*)) \textbf{d}
\end{eqnarray*}
If we can show that
$\frac{\lambda}{2}(I-B(z))^{-1}(I - B(q^*)) \textbf{d} \geq 0$, we are done. 
By assumption:
$(I - B(q^*)) \textbf{d} \geq 0$,
and since we assumed $(I - B(z))^{-1} \geq 0$ 
and $\lambda > 0$, we have:
$\frac{\lambda}{2} (I-B(z))^{-1}(I - B(q^*))\textbf{d} \geq 0$.
\end{proof}

\subsection{Proof of Theorem \ref{1comp}.}
\label{sec:app-proof-of-1comp}

Recall again that we assume that the PPS,  $x=P(x)$,
is in SNF form, 
where each equation $x_i = P_i(x)$ is either of the form $x_i= x_j x_k$,
or is of the form $x_i = \sum_{j} p_{i,j} x_j + p_{i,0}$.
There is one equation for each variable.
If $n$ is the number of variables, we can assume w.l.o.g. that 
$|P| \geq 3n$ (i.e. the input has at least 3 bits per variable).

We know that the ratio of largest and smallest non-zero components of
$\textbf{1}-q^*$ is smaller than $2^{2|P|}$ in the strongly connected case
(Lemma \ref{sccspprelterm}).  In
the general case, two variables may not depend on each other,
even indirectly. Nevertheless,  we can establish a good  upper bound on
coordinates of $q^* < 1$. As before, we start with the strongly connected
case:
\begin{thm} 
\label{thm:scc_one_fixed_lfp_bound}
Given a strongly connected PPS, $x=P(x)$, with $P(\textbf{1}) =
\textbf{1}$, with LFP $q^*$, such that $\textbf{0} < q^* < \textbf{1}$, and
  with rational coefficients, then
$$q^*_i < 1 - 2^{-3|P|}$$
for some $1 \leq i \leq n$.
\end{thm}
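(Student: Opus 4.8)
The plan is to reduce Theorem~\ref{thm:scc_one_fixed_lfp_bound} to a quantitative lower bound on the spectral gap $\rho(B(\mathbf{1})) - 1$, and then to establish that bound by an integrality argument on the characteristic polynomial of $B(\mathbf{1})$. First I would set up an exact eigenvector identity: applying Lemma~\ref{int} to the pair $a = \mathbf{1}$, $b = q^*$, and using $P(\mathbf{1}) = \mathbf{1}$ and $P(q^*) = q^*$, gives $\mathbf{1} - q^* = B(\frac{1}{2}(\mathbf{1}+q^*))\,(\mathbf{1}-q^*)$. Write $v := \mathbf{1} - q^* > \mathbf{0}$ and $M := B(\frac{1}{2}(\mathbf{1}+q^*)) \geq 0$; since the PPS is strongly connected, $M$ is irreducible, $v$ is (up to scaling) its Perron vector, and $\rho(M) = 1$. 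Two structural facts follow from $\mathbf{0} < q^* < \mathbf{1}$ and $P(\mathbf{1})=\mathbf{1}$: the Jacobian $B(x)$ is non-constant, so there is at least one Form$_{*}$ equation $x_i = x_j x_k$ (otherwise $B$ is a fixed substochastic matrix and, since $P(\mathbf{1}) = \mathbf{1}$, a direct computation forces $q^* \in \{\mathbf{0},\mathbf{1}\}$); and some Form$_{+}$ equation has a positive constant term $p_{l_0,0} \geq 2^{-|P|}$ (otherwise $P(\mathbf{0}) = \mathbf{0}$, hence $q^* = \mathbf{0}$).

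Next I would reduce the statement to a spectral gap estimate. Comparing entries, $B(\mathbf{1}) = M + N$ where $N \geq 0$ is supported on the Form$_{*}$ rows, and for a Form$_{*}$ row $x_i = x_j x_k$ one has $(Nv)_i = (1-q^*_j)(1-q^*_k) \leq \|v\|_\infty^2$. Hence $B(\mathbf{1})\,v \leq v + \|v\|_\infty^2\,\mathbf{1} \leq (1 + \|v\|_\infty^2/v_{\min})\,v$, so $\rho(B(\mathbf{1})) \leq 1 + \|v\|_\infty^2/v_{\min}$; moreover $\rho(B(\mathbf{1})) > 1$, since $B(\mathbf{1}) \geq M$ with $B(\mathbf{1}) \neq M$ (they differ in the Form$_{*}$ rows), $B(\mathbf{1})$ is irreducible, and $\rho(M) = 1$. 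By Lemma~\ref{sccspprelterm}, $v_{\min} \geq 2^{-2|P|}\|v\|_\infty$, so
\[
\|v\|_\infty \ \geq\ 2^{-2|P|}\,(\rho(B(\mathbf{1})) - 1).
\]
It therefore suffices to prove $\rho(B(\mathbf{1})) - 1 \geq 2^{-|P|}$: this yields $\|\mathbf{1}-q^*\|_\infty \geq 2^{-3|P|}$, and the coordinate attaining that maximum satisfies $q^*_i \leq 1 - 2^{-3|P|}$ (the strict inequality being recovered from the slack in these estimates or from a marginally sharper form of Lemma~\ref{sccspprelterm}).

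The crux, which I expect to be the main obstacle, is the spectral gap bound $\rho(B(\mathbf{1})) - 1 \geq 2^{-|P|}$. Every entry of $B(\mathbf{1})$ is either a rule probability $p_{i,j}$ of the PPS or an element of $\{0,1,2\}$ (from a Form$_{*}$ row), so clearing denominators gives a nonnegative integer matrix $C = D \cdot B(\mathbf{1})$ with $D \leq 2^{|P|}$, $\rho(C) = D\,\rho(B(\mathbf{1})) > D$, and $\rho(B(\mathbf{1})) \leq \|B(\mathbf{1})\|_\infty \leq 2$. A black-box root-separation (Mahler-type) bound for $\det(\lambda I - C)$ only gives $\rho(B(\mathbf{1})) - 1 \geq 2^{-\mathrm{poly}(|P|)}$ with a superlinear exponent, which is too weak. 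To obtain a linear exponent one must exploit that the Form$_{*}$ rows of $C$ are extremely sparse (at most two nonzero entries, each in $\{1,2\}$): expanding $\det(\lambda I - C)$ along those rows collapses it to a polynomial whose integer coefficients have bit-length $O(|P|)$ --- essentially the only arithmetic complexity comes from the distinct rule probabilities, whose encoding lengths sum to at most $|P|$, while the (possibly doubly-exponential) exponents of the original PPS enter only $O(|P|)$ bits through the indices of these sparse rows. Once that characteristic polynomial (or, if $D$ happens to be a root, the polynomial obtained after dividing out $(\lambda - D)$) is known to have $O(|P|)$-bit coefficients, the integrality estimate $|\det(DI - C)| \geq 1$ together with $\rho(C) \leq 2D$ forces $\rho(C) - D \geq 2^{-O(|P|)}$ with a small explicit constant, and hence $\rho(B(\mathbf{1})) - 1 \geq 2^{-O(|P|)}$. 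Keeping this determinant estimate tight enough that the exponent remains linear in $|P|$ --- rather than quadratic, as any generic bound would give --- is where the real work lies; the eigenvector identity and the perturbation reduction above are routine.
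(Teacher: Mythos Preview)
Your reduction to a spectral gap bound is a genuinely different route from the paper's, and the first two steps (the eigenvector identity $M v = v$ with $v = \mathbf{1}-q^*$, and the perturbation estimate $\rho(B(\mathbf{1})) - 1 \leq \|v\|_\infty^2/v_{\min} \leq 2^{2|P|}\|v\|_\infty$) are correct. The paper instead works directly with $(I - B(\mathbf{1}))$: it bounds $\|(I-B(\mathbf{1}))^{-1}\|_\infty$ via Cramer's rule in the nonsingular case, and uses a matroid-exchange argument plus Perron--Frobenius in the singular case, never passing through the spectral radius.

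The gap is in your spectral-gap estimate. First, a slip: after scaling by $D$, the Form$_*$ rows of $C = D\,B(\mathbf{1})$ have entries in $\{0,D,2D\}$, not $\{0,1,2\}$. More importantly, the sparsity argument does not give $O(|P|)$-bit coefficients for the characteristic polynomial: even expanding along all Form$_*$ rows leaves you with products of $D$'s and sub-determinants whose combined bit-length is $\Theta(n|P|)$, not $\Theta(|P|)$. The cleanest integrality argument actually available is simpler than what you wrote: since the product of all denominators in $B(\mathbf{1})$ is at most $D \leq 2^{|P|}$, one has $|\det(I - B(\mathbf{1}))| \geq 1/D$ whenever it is nonzero, and then $\rho - 1 \geq D^{-1}/\prod_{i\neq \max}|1-\lambda_i| \geq 2^{-|P|}/3^{n-1}$. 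With $|P|\geq 3n$ this yields roughly $\rho - 1 \geq 2^{-1.6|P|}$ and hence only $\|\mathbf{1}-q^*\|_\infty \gtrsim 2^{-3.6|P|}$, which misses the stated $2^{-3|P|}$. The reason the paper gets the sharper constant is that it bounds $\|(I-B(\mathbf{1}))^{-1}\|_\infty \leq n\,D\,\|I-B(\mathbf{1})\|_\infty^n$ directly, and crucially $\|I-B(\mathbf{1})\|_\infty \leq 3$ is a small absolute constant, so the $n$-th power contributes only $3^n = 2^{O(n)}$ rather than any factor involving $D^n$.

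Finally, your treatment of the singular case $\det(I - B(\mathbf{1})) = 0$ is incomplete: dividing the characteristic polynomial by $(\lambda - D)$ need not preserve the denominator bound you rely on, so the integrality estimate no longer applies directly. The paper handles this case with a separate and somewhat delicate argument (replacing dependent rows of $I - B(\mathbf{1})$ by unit vectors to obtain a nonsingular matrix, then combining with Lemma~\ref{sccspprelterm} and the fact that a nonnegative irreducible matrix has no positive eigenvector for a non-Perron eigenvalue). Your proposal would need an analogous case split.
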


\begin{proof} Consider the vector $(I-B(\textbf{1}))(\textbf{1}-q^*)$. As $P(\textbf{1}) = \textbf{1}$, 
by Lemma \ref{int} we have\\ $B(\frac{1}{2}(\textbf{1} + q^*))(\textbf{1}-q^*) = \textbf{1} - q^*$ and so
$$(B(\textbf{1}) - I)(\textbf{1}-q^*) = (B(\textbf{1}) - B(\frac{1}{2}(\textbf{1} + q^*))) (1 - q^*)$$
This is zero except for coordinates of $\text{Form}_{*}$ as rows of $B(\frac{1}{2}(\textbf{1} + q^*))$ and $B(\textbf{1})$ that correspond to $\text{Form}_{+}$ 
equations are identical. If we have an expression of Form$_{*}$, $(P(x))_i = x_jx_k$, then
\begin{eqnarray*}
(B(\textbf{1}) - I) (1-q^*)_i & = & (B(\textbf{1})  - B(\frac{1}{2}(\textbf{1} + q^*))) (1 - q^*))_i\\
&= &   (1/2)(1 - q^*_k)(1-q^*_j) +  (1/2) (1-q^*_j)(1-q^*_k)\\
& = &   (1-q^*_k)(1-q^*_j)
\end{eqnarray*} 
Consequently:
\begin{equation}\label{bound-on-1-B-times_1-q-ineq}
\|(I-B(\textbf{1}))(\textbf{1}-q^*)\|_\infty \leq  \|(\textbf{1}-q^*)\|_\infty^2
\end{equation}

Now suppose that $(I-B(\textbf{1}))$ is non-singular.
In that case, we have that:
$$\textbf{1}-q^* = (I-B(\textbf{1}))^{-1}(I-B(\textbf{1}))(\textbf{1}-q^*)$$
$$\| \textbf{1}-q^* \|_\infty \leq \|(I-B(\textbf{1}))^{-1}\|_\infty \|(I-B(\textbf{1}))(\textbf{1}-q^*)\|_\infty$$
$$\|\textbf{1}-q^*\|_\infty \leq \|(I-B(\textbf{1}))^{-1}\|_\infty  \|(\textbf{1}-q^*)\|_\infty^2$$

\begin{equation}\label{bound-app-1-q-eq} \|\textbf{1}-q^*\|_\infty \geq \frac{1}{\|(I-B(\textbf{1}))^{-1}\|_\infty}
\end{equation}
where $\| \cdot \|_\infty$ on matrices is the induced norm of $\| \cdot \|_\infty$ on vectors. $\|A\|_\infty$ for an 
$n \times m$ matrix $A$ with entries $a_{ij}$ is the maximum absolute value row sum $\text{max}_{i=1}^n\sum_{j=1}^n |a_{ij}|$.

So an upper bound on $\|(I-B(\textbf{1}))^{-1}\|_\infty$ will give the lower bound on $\|\textbf{1}-q^*\|_\infty$ we are looking for.
\begin{lem} \label{invdet} Let A be a non-singular $n \times n$ matrix with rational entries. If the product of the denominators of all these entries is $m$, then
$$\|A^{-1}\|_\infty \leq nm\|A\|_\infty^n$$\end{lem}
\begin{proof}
The $i,j$th entry of $A^{-1}$ satisfies:
$$(A^{-1})_{ij} = \frac{\text{det}(M_{ij})}{\text{det}(A)}$$
where $M_{ij}$ is the $i,j$th minor of $A$, made by deleting row $i$ and column $j$.  $\|M_{ij}\|_\infty \leq \|A\|_\infty$ as we've removed entries from rows. We always have $|\text{det}(M_{ij})| \leq \|M_{ij}\|_\infty^n$ (see, e.g., \cite{HornJohnson85} page 351), so:
\begin{equation}\label{det-eq}
|(A^{-1})_{ij}| \leq \frac{\|A\|_\infty^n}{|\text{det} (A)|}
\end{equation}
Meanwhile $\text{det} (A)$ is a non-zero rational number (because by assumption $A$ is non-singular).
If we consider the expansion for the determinant 
$\text{det}(A) = \sum_\sigma \text{sgn} \sigma \prod_{i = 1}^n a_{i\sigma(i)}$, 
then the denominator of each term $\prod_{i = 1}^n a_{i\sigma(i)}$ is a product of denominators of 
distinct entries $a_{i\sigma(i)}$ and therefore divides $m$.
Since every term can thus be rewritten with denominator $m$, the sum can also be written
with denominator $m$, and therefore  $|\text{det} (A)| \geq \frac{1}{m}$.
Thus, plugging into inequality (\ref{det-eq}), we have:
$$|(A^{-1})_{ij}| \leq m\|A\|_\infty^n$$
Taking the maximum row sum $\|A^{-1}\|_\infty$,
$$\|A^{-1}\|_\infty \leq nm\|A\|_\infty^n$$
\end{proof}

If we take $(I - B(\textbf{1}))$ to be the matrix $A$ of Lemma \ref{invdet}, then
noting that the product of all the denominators in $(I-B(\textbf{1}))$ is at most $2^{|P|}$,
this  gives:
$$\|(I-B(\textbf{1}))^{-1}\|_\infty \leq n2^{|P|}\|(I-B(\textbf{1}))\|_\infty^n$$
Of course $\|(I-B(\textbf{1}))\|_\infty \leq 1 + \|B(\textbf{1})\|_\infty \leq 3$ \  (note 
that here
we are using the fact that the system is in SNF normal form).  Thus
$$\|(I-B(\textbf{1}))^{-1}\|_\infty \leq 3^nn2^{|P|}$$
Using inequality (\ref{bound-app-1-q-eq}), and since  as discussed,
w.l.o.g., $|P| \geq  3 n \geq  n \log 3 + \log n$,  this gives:
\begin{eqnarray*} \|\textbf{1}-q^*\|_\infty & \geq &  \frac{1}{n}2^{-|P|}3^{-n}
> 2^{-2|P|}
\end{eqnarray*}

\noindent Now consider the other case where $(I-B(\textbf{1}))$ is singular. 
We can look for a small solution $v$ to:

\begin{equation}\label{silly-eq}
(I-B(\textbf{1}))v = (I-B(\textbf{1}))(1-q^*)
\end{equation}

\begin{lem} Suppose we have an equation 
$Ax =b$, with $A$ a singular $n \times n$ matrix, $b$ a non-zero vector, and we know that $Ax =b$ has a solution. 
Then it must have a solution $AA'^{-1}b = b$ where $A'$ is a non-singular matrix generated from $A$ by replacing some rows with rows that have a single $1$ entry and the rest $0$.
\end{lem}

\begin{proof} 
  If $A$ has rank $r<n$, then there are linearly independent
  vectors $a_1,a_2,\ldots,a_r$ such that $a_1^T, a_2^T, \ldots, a_r^T$ are
  rows of $A$ and other rows of $A$ are linear combinations of these.
  Let $e_1, e_2,\ldots,e_n$ be the canonical basis of $\mathbb{R}^n$, i.e.
  each $e_i$ has $i$th coordinate 1 and the rest 0.  By the 
  well known fact that the set of linearly independent subsets of a vector space form a matroid,
  and in particular satisfy the exchange property of a matroid 
(see any good linear algebra or
  combinatorics text, e.g,. \cite{cameron94}, Proposition 12.8.2) ,
  we know there is a basis for $\mathbb{R}^n$ of the form
  $\{a_1,a_2, \ldots , a_r,e_{i_{r+1}}, e_{i_{r+2}},\ldots ,e_{i_{n}}\}$ for some
  choice of $i_{r+1}, i_{r+2},\ldots ,i_n$.  We form a matrix $A'$ with
  elements of this basis as rows by starting with $A$ and keeping $r$
  rows corresponding to $a_1^T, a_2^T, \ldots
  a_r^T$, and replacing the others in some order with $e_{i_{r+1}}^T,
  e_{i_{r+2}}^T,\ldots ,e_{i_{n}}^T$.  Specifically, there
  is a permutation $\sigma$ of $\{1,\ldots ,n\}$ such that if $1 \leq k
  \leq r$, the $\sigma(k)$'th row of $A'$ and $A$ are $a_k^T$ and if $r
  < k \leq n$, the $\sigma(k)$'th row of $A'$ is $e_{i_k}^T$.

$A'$ is non-singular since its rows form a basis of $\mathbb{R}^n$. It remains to show that $AA'^{-1}b = b$.
Since $Ax=b$ has a solution and the set $R$ of rows $a_1^T, \ldots, a_r^T$ spans the row space of $A$,
every equation corresponding to a row of $Ax=b$ is a linear
combination of the $r$ equations corresponding to the rows in $R$.
Therefore, if $x$ any vector that satisfies the $r$ equations corresponding to the rows in $R$ then it satisfies
all the equations of $Ax=b$. The vector $A'^{-1}b$ satisfies these $r$ equations
by the definition of $A'$. Therefore, $AA'^{-1}b = b$.
\end{proof}

We can replace some rows of $(I-B(\textbf{1}))$ to get an $A'$ using
this Lemma and then use Lemma \ref{invdet} on 
$$v' =
A'^{-1}(I-B(\textbf{1}))(1-q^*)$$ 
We still have $\|A'\|_\infty \leq 3$
and the product of all the denominators of non-zero entries is smaller
than $2^{|P|}$. As for $\|(I-B(1))^{-1}\|_{\infty}$ before:
$$\|A'^{-1}\|_\infty \leq 3^nn2^{|P|}$$
Now, using inequality (\ref{bound-on-1-B-times_1-q-ineq}), we have

\begin{equation}\label{v-prime-upperbound-eq}
\|v'\|_\infty \leq 3^n n2^{|P|} \|( \textbf{1}-q^*)\|_\infty^2
\end{equation}
Now by equation (\ref{silly-eq}), we have that
$(I - B(1)) ((1-q^*) -v') = 0$.  Thus
$(1-q^*) - v'$ is an eigenvector of $B(\textbf{1})$ with eigenvalue 1. 
But we know that $B(\textbf{1})$ is nonnegative, irreducible, and has spectral radius bigger than 1
(because $q^* < \textbf{1}$ by assumption, see e.g., \cite{rmc} proof 
of Theorem 8.1).  
Thus Perron-Frobenius theory (e.g., see Corollary 8.1.29 in \cite{HornJohnson85}) gives 
us that $(1-q^*) - v'_i$ is not a positive vector
(because the only positive eigenvectors are associated with the top eigenvalue). Thus some coordinate $i$ has:
$$v'_i \geq 1 - q^*_i$$
Thus, by inequality (\ref{v-prime-upperbound-eq}), we have:
$$1-q^*_i \leq 
 3^n n2^{|P|} \|( \textbf{1}-q^*)\|_\infty^2$$
but the proof of Lemma \ref{sccspprelterm} gave that:
$$(1-q^*_i) 2^{|P|+n} \geq \|(\textbf{1}-q^*)\|_\infty$$
Combining these inequalities, we have
\begin{eqnarray*}
1-q^*_i & \leq & 
 3^n n2^{|P|} \|( \textbf{1}-q^*)\|_\infty^2 \\
& \leq &   3^n n2^{|P|} (1-q^*_i) 2^{|P|+n} \|( \textbf{1}-q^*)\|_\infty
\end{eqnarray*}

Dividing both sides by $(1-q^*_i)$, we have that:
\begin{eqnarray*}
\|(\textbf{1}-q^*)\|_\infty & \geq &  \frac{1}{ 6^n n2^{2|P|}} \\
& > & 2^{-3|P|}
\end{eqnarray*}
\end{proof}

\begin{thm} 
\label{1comp-appendix} Given $x= P(x)$, a general PPS in SNF normal form 
with rational coefficients and with LFP,  $\textbf{0} < q^* < \textbf{1}$, then
$$q^*_i < 1 - 2^{-4|P|}$$
for all $1 \leq i \leq n$.
\end{thm}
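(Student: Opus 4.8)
The plan is to establish $1 - q^*_i \geq 2^{-4|P|}$ for every variable $x_i$ by propagating a lower bound from the bottom of the dependency graph up to $x_i$ along a path, invoking the strongly connected case (Theorem \ref{thm:scc_one_fixed_lfp_bound}) only at the leaves of the condensation. I would first record two elementary per-edge inequalities obtained directly from $q^* = P(q^*)$. Write $\ell := \mathbf{1} - q^*$. If $P_i(x) \equiv x_j x_k$ (Form$_*$), then $\ell_i = 1 - q^*_j q^*_k \geq \max(\ell_j, \ell_k)$, since $q^*_j, q^*_k \leq 1$. If $P_i(x) = \sum_m p_{i,m} x_m + p_{i,0}$ (Form$_+$), then $\ell_i = \mathrm{leak}_i + \sum_m p_{i,m}\ell_m$ with $\mathrm{leak}_i := 1 - p_{i,0} - \sum_m p_{i,m} \geq 0$, so $\ell_i \geq \mathrm{leak}_i$ and $\ell_i \geq p_{i,m}\ell_m$ for each $m$. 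Hence along every edge $x_i \to x_m$ of the dependency graph (i.e. $x_m$ occurs in $P_i$) we have $\ell_i \geq w_{i,m}\ell_m$, where $w_{i,m} = 1$ if $i$ is a Form$_*$ variable and $w_{i,m}$ equals a positive coefficient of $P_i$ if $i$ is Form$_+$.

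\emph{Second, a witness in each bottom SCC.} Let $S$ be any bottom strongly connected component of the dependency graph (a sink of the condensation). I claim some $t \in S$ satisfies $\ell_t \geq 2^{-3|P|}$. If some Form$_+$ equation $x_{i_0} = P_{i_0}(x)$ with $i_0 \in S$ is leaky, then $\mathrm{leak}_{i_0}$ is a positive rational whose denominator divides a product of denominators occurring in $P$, so $\mathrm{leak}_{i_0} \geq 2^{-|P|}$, and by Step~1, $\ell_{i_0} \geq 2^{-|P|} \geq 2^{-3|P|}$; this also covers the degenerate case where $S$ is a single variable (with or without a self-loop), whose equation is then a leaky Form$_+$ equation. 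Otherwise every Form$_+$ equation of $S$ is non-leaky, so $P_S(\mathbf{1}_S) = \mathbf{1}_S$; since $S$ is a bottom SCC, the subsystem $x_S = P_S(x_S)$ is self-contained, strongly connected, has rational coefficients, and has LFP $\mathbf{0} < q^*_S < \mathbf{1}$, so Theorem \ref{thm:scc_one_fixed_lfp_bound} gives $\|\mathbf{1} - q^*_S\|_\infty > 2^{-3|P_S|} \geq 2^{-3|P|}$ (using $|P_S| \leq |P|$).

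\emph{Third, propagation along a simple path.} Any variable $x_r$ reaches, in the dependency graph, some bottom SCC $S$, and hence reaches the witness $x_t$ of the previous step. Fix a simple path $x_r = x_{v_0} \to x_{v_1} \to \cdots \to x_{v_L} = x_t$ with $L \leq n$ and the $v_p$ pairwise distinct; iterating the per-edge inequality gives $\ell_r \geq \big(\prod_{p=0}^{L-1} w_{v_p, v_{p+1}}\big)\ell_t$. The factors $w_{v_p,v_{p+1}}$ that are $< 1$ are positive coefficients of the polynomials $P_{v_0}, P_{v_1}, \ldots$, which are \emph{distinct} polynomials because the path is simple; since the bit-lengths of all nonzero coefficients of $P$ sum to at most $|P|$, their product is at least $2^{-|P|}$, while the remaining factors equal $1$. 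Therefore $\ell_r \geq 2^{-|P|} \cdot 2^{-3|P|} = 2^{-4|P|}$, i.e. $q^*_r \leq 1 - 2^{-4|P|}$, which is the claim.

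The genuinely hard ingredient is entirely outsourced: it is the non-leaky strongly connected bound, namely Theorem \ref{thm:scc_one_fixed_lfp_bound}. Beyond that, the points needing care are all in the last step: one must notice that it suffices to bound a \emph{single} variable per bottom SCC, so the strongly connected bound is invoked only at the leaves of the condensation and never iterated upward (iterating it per SCC would degrade the constant), and that a simple path consumes coefficients from pairwise distinct equations, which is exactly what pins the final exponent at $4|P|$ rather than something larger. Note that, unlike cruder arguments, this approach never needs Lemma \ref{sccspprelterm} to spread a bound across an SCC.
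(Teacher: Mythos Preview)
Your proof is correct and follows essentially the same route as the paper: find a variable $x_t$ with $\ell_t \geq 2^{-3|P|}$ (either a leaky Form$_+$ variable, or via Theorem \ref{thm:scc_one_fixed_lfp_bound} applied to a non-leaky bottom SCC), then propagate the bound back to an arbitrary $x_r$ along a simple dependency path, losing at most a factor $2^{-|P|}$. The paper packages the case split as a separate lemma (``$x_i$ reaches a leaky variable, or reaches a bottom SCC with $P_S(\mathbf{1})=\mathbf{1}$''), whereas you go straight to a bottom SCC and split there; the per-edge inequalities you isolate are exactly the ones the paper uses implicitly in its propagation step.
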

\begin{proof}

\begin{lem} 
\label{lem:depend-cases}
Any variable $x_i$ either depends (directly or indirectly)\footnote{meaning
that in the dependency graph the other variable's node can be reached from 
the node corresponding to $x_i$.} 
on a variable in 
a bottom SCC $S$ such that 
$P_S(\textbf{1}) = \textbf{1}$
(meaning there is no directly ``leaking'' variable in that SCC), 
or it depends (directly or indirectly) on some variable $x_j$ of 
$\text{Form}_\text{+}$ with $P(x)_j = p_{j,0} + \sum_{j=1}^n p_{i,j} x_j$
where $\sum_{j=0}^m p_{i,j} < 1$ (thus, a leaky variable).\end{lem}

\begin{proof} Suppose that in the set of variables $x_i$ depends on, $D_i$, every variable of $\text{Form}_\text{+}$, $x_j$, with $P(x)_j = p_{j,0} + \sum_{k=1}^n p_{j,k} x_k$
has $\sum_{j=0}^m p_{i,j} = 1$. Then we can verify that $P_{D_i}(\textbf{1}) = \textbf{1}$. $D_i$ contains some bottom SCC $S \subseteq D_i$. For this SCC $P_S(\textbf{1}) = \textbf{1}$\end{proof}

Suppose that $x_j$ is of $\text{Form}_\text{+}$ with $P(x)_j = p_{j,0} + \sum_{k=1}^n p_{j,k} x_k$ where $\sum_{k=0}^m p_{j,k} < 1$.
Then $q^*_j = P(q^*)_j$ has $q^*_j \leq \sum_{k=0}^m p_{j,k}$. $1 - \sum_{k=0}^m p_{j,k}$ is a rational with a denominator smaller than the product of the denominators of all the  $p_{j,k}$. We have:
$$1 - \sum_{k=0}^m p_{j,k} \geq 2^{-|P|}$$
Thus in such a case:
$$q^*_j \leq 1 - 2^{-|P|}$$

Lemma \ref{lem:depend-cases} says that any $x_i$ either depends 
on such a variable, or on 
a variable to which Theorem \ref{thm:scc_one_fixed_lfp_bound} applies. 
That is, $x_i$ depends on some $x_j$ with
$$q^*_j \leq 1 - 2^{-3|P|}$$
There is some sequence $x_{l_l}, x_{l_2}, \ldots, x_{l_m}$ with $l_1 = j$, $l_2 - i$ and for every $0 \leq k < m $, $P(x_{l_{k+1}})$ contains a term with $x_{l_{k+1}}$.
If $x_{l_{k+1}}$ has $\text{Form}_\text{*}$, then $q^*_{l_{k+1}} \leq q^*_{l_k}$.  If $x_{l_{k+1}}$ has $\text{Form}_\text{+}$, then 
$1 - q^*_{l_{k+1}} \geq p_{l_{k+1},l_k} (1-q^*_{l_k})$. By an easy induction:
$$1 - q^*_i \geq (\prod_{x_{l_k} \text{ has Form}_\text{+}} p_{l_{k+1},l_k} ) (1- q^*_j)$$
Again, $|P|$ is at least the number of bits describing these rationals $p_{l_{k+1},l_k}$, and thus
$$1 - q^*_i \geq 2^{-|P|} (1- q^*_j)$$
Since we already know that $q^*_j \leq 1- 2^{-3|P|}$, i.e.,
that $(1-q^*_j) \geq  2^{-3|P|}$, we obtain:
$$1 - q^*_i \geq 2^{-|P|} 2^{-3|P|} = 2^{-4|P|}$$
This completes the proof of the theorem.
\end{proof}

\subsection{Proof of Lemma \ref{round-exist}}

\noindent {\bf Lemma  \ref{round-exist}}.
{\em If we run the rounded down Newton method
starting with $x^{[0]} := \textbf{0}$
on a PPS, $x=P(x)$, with LFP $q^*$,  ${\mathbf 0} < q^* < {\mathbf 1}$, 
then for all $k \geq 0$, $x^{[k]}$ is well-defined and $0 \leq x^{[k]} \leq q^*$.}

\begin{proof} We prove this by induction
on $k$.  The base case $x^{[0]} = 0$ is immediate. Suppose the claim
holds for $k$ and thus $0 \leq x^{[k]} \leq q^*$. Lemma \ref{newton} 
tells us that
$$q^* -  x^{\{k+1\}} = (I-B(x^{[k]}))^{-1}\frac{B(q^*) - B(x^{[k]})}{2}(q^* - x^{[k]})$$
Now the fact that $0 \leq x^{[k]} \leq q^*$ yields that 
each of the following inequalities hold: 
$(q^* - x^{[k]}) \geq 0$, $B(q^*) - B(x^{[k]}) \geq 
0$.
Furthermore, by Theorem \ref{thm:spec-full}, we have that
$\rho(B(x^{[k]})) < 1$, and thus that $(I-B(x^{[k]}))$ is non-singular and
$(I-B(x^{[k]}))^{-1} \geq 0$.
We thus conclude that $q^* -  x^{\{k+1\}} \geq 0$,
i.e., that $x^{\{k\}} \leq q^*$.
The rounding down 
ensures that $0 \leq x_i^{[k+1]} \leq x_i^{\{k+1\}}$
unless $x_i^{\{k+1\}} < 0$, in which case $x_i^{[k+1]} =0$.
in both cases, we have that  $0 \leq x^{[k+1]} \leq q^*$. 
So we are done by induction.
\end{proof}

\subsection{Proof of Lemma \ref{lem:explicit-bound-rounded}}

\noindent {\bf Lemma \ref{lem:explicit-bound-rounded}}.
{\em For a PPS, $x=P(x)$, with LFP $q^*$, such that $0 < q^* < 1$, if we
apply the rounded down Newton's method with parameter $h$,
starting at $x^{[0]} := \textbf{0}$, then for all $j' \geq 0$, we have:
$$ \| q^* - x^{[j'+1]} \|_{\infty} \leq 2^{-j'} + 2^{-h+1 + 4|P|}$$}

\begin{proof} 
Since $x^{[0]} := 0$:

\begin{equation}
\label{base-equation}
q^* - x^{[0]} = q^* \leq \textbf{1}   \leq \frac{1}{(\textbf{1} - q^*)_\text{min}} (\textbf{1} - q^*)
\end{equation}
For any $k \geq 0$, if
$q^*-x^{[k]} \leq \lambda (\textbf{1} - q^*)$, then
by Lemma \ref{cone}
we have:

\begin{equation}
\label{eq:half-in-round}
q^*-x^{\{k+1\}}\leq (\frac{\lambda}{2}) (\textbf{1} - q^*)
\end{equation}
Observe that after every iteration $k > 0$,  in every coordinate $i$ we have:

\begin{equation}
\label{eq:almost-bigger-in-round}
x_i^{[k]} \geq x_i^{\{k\}} - 2^{-h}
\end{equation}
This holds simply because we are rounding down $x^{\{k\}}_i$ by
at most $2^{-h}$, unless it is negative in which case $x^{[k]}_i =0> x^{\{k\}}_i$.
Combining the two inequalities (\ref{eq:half-in-round}) and
(\ref{eq:almost-bigger-in-round})
 yields the following inequality:

$$q^*-x^{[k+1]}\leq (\frac{\lambda}{2}) (\textbf{1} - q^*) + 2^{-h} \textbf{1} \leq
(\frac{\lambda}{2} + \frac{2^{-h}}{(\textbf{1} - q^*)_\text{min}}) (\textbf{1} - q^*)$$

\noindent Taking inequality (\ref{base-equation}) as the base case (with $\lambda =
\frac{1}{(\textbf{1} - q^*)_\text{min}}$), by 
induction on $k$, for all 
$k \geq 0$:

$$q^*-x^{[k+1]} \leq (2^{-k} + \sum_{i = 0}^k 2^{-(h+i)}) \frac{1}{(\textbf{1} - q^*)_\text{min}} (\textbf{1} - q^*)$$

\noindent But $\sum_{i = 0}^k 2^{-(h+i)} \leq 2^{-h+1}$ and 
$\frac{\|\textbf{1} - q^*\|_\infty}{(\textbf{1} - q^*)_\text{min}} 
\leq \frac{1}{ (\textbf{1} - q^*)_\text{min}} \leq 2^{4|P|}$,
by Theorem \ref{1comp}.  Thus:
$$q^*-x^{[k+1]} \leq (2^{-k} + 2^{-h+1})2^{4|P|} \textbf{1}$$
Clearly, we have $q^*-x^{[k]} \geq 0$ for all $k$. Thus we have shown
that for all $k \geq 0$:
$$\|q^*-x^{[k+1]}\|_\infty \leq (2^{-k} + 2^{-h+1})2^{4|P|} = 2^{-k} + 2^{-h + 1 + 4 |P|}.$$
\end{proof}

\subsection{Proof of Corollary \ref{main-alg-corollary}}

\noindent {\bf Corollary \ref{main-alg-corollary}.}
{\em Given any PPS, $x=P(x)$, with LFP $q^*$,
we can approximate $q^*$ within additive error $2^{-j}$
in time polynomial in $|P|$ and $j$
(in the standard  Turing model of computation).
More precisely, we can compute a vector $v \leq q^*$ such that $v \in [0,1]^n$ and
$\| q^* - v  \|_\infty \leq 1/2^{-j}$.}

\begin{proof} 
Firstly, by Propositions \ref{prop:snf-form} and
\ref{prob1-ptime-scfg-prop},  we
can assume $x=P(x)$ is in SNF form, and that ${\textbf 0} < q^* < {\textbf 1}$.
By Theorem \ref{Ptime}, 
the rounded down Newton's method 
with parameter $h = j + 2 + 4 |P|$, for $h = j+2 + 4|P|$ iterations, 
computes a rational vector $v = x^{[h]}$
such that $v \in [0,1]^n$, and 
$\| q^* - v  \|_\infty \leq 1/2^{-h}$.

Furthermore, for all $k$, with $0 \leq k \leq h$,
$x^{[k]}$ 
has encoding size polynomial in $|P|$ and $j$.
We then simply need to note that
all the linear algebra operations, that is: matrix multiplication,
addition, and matrix inversion, required in a single
iteration of Newton's method, can be performed exactly on rational inputs in
polynomial time and yield rational results with a polynomial size.
\end{proof}

\section{Appendix B: Application to parsing for general SCFGs}

\label{app-sec:application-scfg}

In this section, we use our P-time algorithm for approximating 
the extinction probabilities $q^*$ of BPs, 
equivalently the termination probabilities of a
{\em stochastic context-free grammar} (SCFG), a.k.a., 
the {\em partition function} of an SCFG, 
in order to 
provide the first P-time algorithm for solving  
an approximate version of a key probabilistic 
parsing problem, with respect
to {\em any} given SCFG,
including grammars that contain $\epsilon$ rules, i.e., rules of the form 
$A \stackrel{p}{\rightarrow} \epsilon$, where $\epsilon$
denotes the empty string and $A$ is an arbitrary nonterminal 
of the grammar.

{\em String Probability Problem:} 
Given a SCFG $G$, and a finite string $w \in \Sigma^*$ over
the terminal alphabet $\Sigma$ of $G$, 
compute the probability, $p_{G,w}$,  that the 
stochastic grammar $G$  
generates the finite string $w$.

For SCFGs that are already in Chomsky Normal Form (CNF)
there is a well known dynamic programming algorithm 
for this
problem (see, e.g., \cite{ManSch99}).  This is based on 
a direct extension to the probabilistic
setting of the classic Cocke-Kasami-Younger (CKY)
dynamic programming algorithm 
for parsing, i.e., determining whether an ordinary context-free grammar
in CNF form
can generate a given string $w$
(see, e.g., \cite{HU79}).

As is well known, any ordinary (non-stochastic)
context-free grammar can be converted to one in CNF form
that generates exactly the same set of strings.

However, the situation for SCFGs is more subtle.
It is known that for every SCFG, $G$, 
there {\em exists} another
SCFG, $G'$, that is in CNF form,  which has the same
probability of generating
any finite string.  This was shown by 
Abney, McAllester, and Pereira in \cite{AMP99}.
As mentioned in \cite{AMP99} their proof of this is nonconstructive 
and yielded no algorithm to obtain $G'$
from $G$.  Moreover, as we shall see,
when $G$ contains only rules with rational
probabilities, it can nevertheless be the case that there does not exist
any $G'$ in CNF which also  has only rational rule
probabilities and which generates every string with 
the same probability as $G$.   In other words, 
the claim that every SCFG $G$ with rational rule probabilities is 
``equivalent'' to an SCFG $G'$ in CNF form 
form only holds in general if $G'$ is allowed to have irrational
rule probabilities (even though $G$ does not).

We shall nevertheless show that both these issues
can be overcome: (1) the nonconstructive
nature of the prior existence arguments, and (2) the fact that CNF form SCFGs
must in general have irrational rule probabilities.   
In fact, we shall give a constructive transformation
from any SCFG to one in CNF form, and we shall
show that given any SCFG $G$
with rational rule probabilities,
it is possible to compute in P-time a new CNF SCFG, $G'$,
which also has only rational rule probabilities, and which
suitably approximates $G$.
Our proof of this shall make crucial use of our P-time algorithm
for approximating the LFP $q^*$ of a PPS, i.e., approximating
termination probabilities for arbitrary SCFGs.

When an SCFG $G$ is already in CNF form, assuming its rule
probabilities are rational,
probabilistic versions of the CKY algorithm can 
be used to compute 
the {\em exact and rational} 
probabilities $p_{G,w}$.
These algorithms use only polynomially many $\{+,*\}$-arithmetic
operations over the rule probabilities of the SCFG $G$,
and
thus they run in P-time in the unit-cost arithmetic
RAM model of computation.
Variants of these  algorithms can be made to work more generally,
when the SCFG is not in CNF form but contains no $\epsilon$-rules.
However, for general SCFGs that do
contain arbitrary $\epsilon$-rules,  these algorithms do not work.

Let us see why,
unlike ordinary CFGs,
it is not possible to ``convert'' an arbitrary SCFG
with $\epsilon$-rules and rational rule probabilities 
to an ``exactly equivalent'' SCFG in CNF form, 
without irrational rule probabilities.
This follows easily from the fact that
the total parse probabilities $p_{G,w}$ can in general be irrational,
even when all rule probabilities of $G$ are rational,
and therefore we can not possibly compute
$p_{G,w}$ exactly using only finitely many $\{+,*\}$-arithmetic operations
over the rational rule probabilities.  
To see that $p_{G,w}$ can be irrational,
consider the following simple grammar:
$S \stackrel{1}{\rightarrow} aA$,
$A \stackrel{1/2}{\rightarrow} AA$, 
$A \stackrel{1/4}{\rightarrow} \epsilon$,
$A  \stackrel{1/4}{\rightarrow} NN$,
and $N \stackrel{1}{\rightarrow} NN$.
It is easy to see that the probability that this grammar generates
the string $a$ is precisely 
 the probability of termination starting at nonterminal $A$, which is
the least non-negative solution of $x = (1/2) x^2 + 1/4$, which
is $1- 1/\sqrt{2}$.

As we saw, the {\em total} probability
of $G$ generating $w$, $p_{G,w}$,
may be irrational for SCFGs $G$ with $\epsilon$-rules.
In typical applications one will not need to compute
$p_{G,w}$ 
``exactly''.
It will suffice to approximate it ``well'',
or to answer questions about it such as whether $p_{G,w} \geq r$
for a given rational probability $r \in [0,1]$.  It is easy to see
that these questions are at least as hard as the corresponding questions
for the termination probabilities of a SCFG.

\begin{prop}
Given a SCFG $G$ (with rational probabilities) we can construct in polynomial time another SCFG $H$
such that the termination probability $p_G$ of $G$ is equal to
the probability $P_{H,\epsilon}$ that $H$ generates the empty string $\epsilon$
(or any other string $w$).
Hence, the problem of deciding if $p_{H,\epsilon} \geq 1/2$ for a given SCFG $H$
is SQRT-SUM-hard and PosSLP-hard.
\end{prop}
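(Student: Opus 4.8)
The plan is to obtain $H$ from $G$ by a purely syntactic transformation that \emph{erases} all terminal symbols, so that every terminating derivation of $G$ turns into a derivation of $H$ whose yield is the empty string, while non-terminating derivations of $H$ still produce no finite string. Concretely, let $\pi : (V \cup \Sigma)^* \to V^*$ be the homomorphism that deletes every terminal and fixes every nonterminal. Define $H$ to have the same nonterminals and start symbol as $G$, and for each rule $A \stackrel{p}{\rightarrow} \alpha$ of $G$ the rule $A \stackrel{p}{\rightarrow} \pi(\alpha)$ in $H$ (if two rules of $G$ for the same nonterminal map to the same right-hand side, either keep them separate or merge them and add probabilities; it makes no difference). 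This is computable in linear time, and since rule probabilities are left untouched, the probabilities of the rules of each nonterminal of $H$ still sum to $1$, so $H$ is a legitimate SCFG.

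First I would argue that a random derivation of $H$ is, rule for rule, the same stochastic object as a random derivation of $G$: the two grammars differ only in the terminal material on right-hand sides, which affects neither which rules are available at a nonterminal nor their probabilities. Consequently a derivation of $G$ terminates --- i.e.\ yields a finite string, equivalently all nonterminals are eventually eliminated --- with exactly the same probability that the corresponding derivation of $H$ terminates; and whenever a derivation of $H$ terminates, its yield is the $\pi$-image of the yield of the $G$-derivation, hence $\epsilon$. Therefore $p_{H,\epsilon}$, which is by definition the probability that a derivation of $H$ halts with yield $\epsilon$ (infinite derivations contribute nothing, and all halting ones have yield $\epsilon$), equals the probability that a derivation of $G$ halts, i.e.\ the termination probability $p_G$ of $G$. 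For the ``any other string $w$'' variant I would introduce a fresh start nonterminal $S_0$ with the single rule $S_0 \stackrel{1}{\rightarrow} w\,S$, where $S$ is the start symbol of the erased grammar; then $H$ generates the finite string $w$ with probability $p_G$ and generates no other finite string with positive probability.

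The hardness conclusions then follow by composing this reduction with the known hardness of the SCFG termination problem. In \cite{rmc} it was shown that deciding whether the termination probability of a given SCFG (equivalently, the relevant coordinate of the LFP $q^*$ of its associated PPS) is $\geq 1/2$ is both SQRT-SUM-hard and PosSLP-hard; since $p_{H,\epsilon} = p_G$, any procedure deciding $p_{H,\epsilon} \geq 1/2$ decides $p_G \geq 1/2$, so the former problem inherits both lower bounds.

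The transformation and its analysis are entirely elementary, so I do not expect a genuine obstacle; the one point that will need care is to fix the SCFG semantics precisely --- that $p_{H,\epsilon}$ counts exactly the \emph{terminating} derivations of $H$ (all of which yield $\epsilon$), and that the ``termination probability of $G$'' is precisely the probability of a finite derivation of $G$ from its start symbol, i.e.\ the appropriate coordinate of the LFP $q^*$ of the PPS built from $G$. Once that correspondence is stated, the probability bookkeeping is immediate.
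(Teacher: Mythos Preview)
Your proposal is correct and follows essentially the same approach as the paper: erase all terminal symbols from the right-hand sides of the rules of $G$ to obtain $H$, observe that derivations of $G$ and $H$ are in probability-preserving bijection with terminating derivations of $H$ yielding $\epsilon$, and then invoke the hardness results from \cite{rmc}; for an arbitrary string $w$ the paper likewise adds a fresh start nonterminal with a single rule $S' \stackrel{1}{\rightarrow} Sw$ (your $S_0 \stackrel{1}{\rightarrow} wS$ works just as well).
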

\begin{proof}
Given a SCFG $G$, remove all the terminal symbols from the right hand sides
of all the rules,  and let $H$ be the resulting SCFG.
Clearly, there is a 1-to-1 correspondence between the derivations of $G$ and $H$.
A derivation of $G$ derives a terminal string iff the corresponding derivation of $H$
derives the empty string $\epsilon$.
Therefore, $p_G = p_{H,\epsilon}$.
The SQRT-SUM-hardness and PosSLP-hardness of the problem of deciding
whether  $p_{H,\epsilon} \geq 1/2$ follows from the hardness
of the analogous question $p_{G} \geq 1/2 ?$ for the termination probability (\cite{rmc}).

We can modify the  reduction, if desired, to show the same result for
any string $w$ instead of $\epsilon$: just add to $H$
a new start nonterminal $S'$, with the rule 
$S' \stackrel{1}{\rightarrow} S w$.
\end{proof}

Since deciding whether $p_{G,w} \geq p$ is hard, 
we focus on {\em approximating} the probabilities
$p_{G,w}$. 
The proposition implies  that the problem of approximating the probability $p_{G,w}$ 
for a given SCFG $G$ and string $w$
is at least as hard as the problem of approximating the termination probability of
a given SCFG. We will show in this section the converse:
 As we shall see, we will be able to use our P-time
approximation algorithm for SCFG termination probabilities, combined
with other ideas, to approximate $p_{G,w}$ in P-time.  It is important
to note that all of our P-time algorithms are in the standard Turing
model of computation, not in the more powerful unit-cost arithmetic
RAM model of computation.

Computation of the total probabilities $p_{G,w}$ 
forms a key ingredient in the well known {\em inside-outside algorithm} 
\cite{Baker79,LariYoung90} for learning the maximum likelihood parameters
of an SCFG by using example parses as training instances.
Namely, the {\em inside} subroutine of the inside-outside algorithm
is precisely a subroutine for computing $p_{G,w}$.  
However, as mentioned, the well known CKY-based dynamic programming 
algorithm for computing $p_{G,w}$  
applies only to stochastic grammars that are
already in CNF form, and in particular
to grammars that have
no $\epsilon$ rules, or else have only one $\epsilon$ rule 
associated with
the start nonterminal $S$, where $S$ can not appear on the
right hand side (RHS) of any other rule.

The inside-outside algorithm can be viewed as an extension of the Baum-Welch 
forward-backward algorithm 
for finite-state hidden Markov models 
(or, more generally, a version of the EM algorithm for finding
maximum likelihood parameters for statistical models),
to the setting of SCFGs with hidden parameters.
It 
is used heavily both in statistical NLP 
(\cite{ManSch99})
and in biological sequence
analysis (see \cite{DEKM99}) for learning the parameters
of a stochastic grammar based on parsed training instances.
In the case of biological sequence analysis, SCFGs are used 
for predicting the secondary structure (i.e., two dimensional
folding structure) of RNA molecules based on their nucleic acid sequence,
where a given parse of a sequence corresponds to a particular folding pattern.
The parameters of the SCFG are learned
using given folded RNA strands as training instances  
(see, e.g., \cite{SBHMSUH94,DEKM99}).
Some standard and natural grammars used for analysing 
RNA secondary structure do contain $\epsilon$-rules,
and are not in CNF form (see, e.g., the RNA grammar in \cite{DEKM99}, on page 273).
For these grammars, typically what researchers do is to devise 
tailor-made algorithms specific to the grammar 
for computing probabilities like $p_{G,w}$. 
It is clearly desirable to instead 
have a version of the inside-outside algorithm,
as well as versions of other algorithms related to probabilistic parsing, 
that are applicable on arbitrary SCFGs, including those with $epsilon$-rules, 
since in many applications
the most natural grammar may not be in CNF form and may require
$\epsilon$-rules.

Note that, as mentioned, 
in general, it is not true that any SCFG $G$ with rational rule
probabilities can be converted to a CNF form SCFG $G'$ with rational
rule probabilities which generates {\em exactly} the same probability 
distribution on strings.  
To obtain exactly the same probability distribution on strings,
the CNF grammar $G'$ may need to contain rules with
irrational probabilities.

We now begin a detailed formal treatment.
Recall that an SCFG $G =  (V, \Sigma,  R, S)$ consists
of a finite set $V$ of {\em nonterminals}, a 
{\em start} nonterminal $S \in V$,
a finite set $\Sigma$ of {\em alphabet} symbols,
and a finite list of {\em rules}, $R = \langle r_1, \ldots, r_k \rangle$,
where each rule $r_i$ is specified by a triple
$(A,p, \gamma)$ which we shall 
denote by $A \stackrel{p}{\rightarrow} \gamma$,  where $A \in V$ is a nonterminal,
$p \in [0,1]$ is the probability  associated with this rule, 
and $\gamma \in (V \cup \Sigma)^*$ is
a {\em possibly empty} string of terminals and nonterminals.
A rule 
of the form $A \stackrel{p}{\rightarrow} \epsilon$, where
$\epsilon$ is the empty string, is called an {\em $\epsilon$-rule}.

For technical convenience, we allow for the possibility that two
distinct rules $r_i$ and $r_j$, $i \neq j$, may nevertheless correspond to 
exactly the same triple $(A,p,\gamma)$, and that they may 
have rules with both identical left hand side (LHS) and right hand side (RHS).
For these reasons, we distinguish different rules $r_i$ and $r_j$ by their 
indices $i$ and $j$, and that is why $R$ is not viewed as a {\em set},
but a list of rules.

For a rule $r \in R$ whose corresponding triple is $(A,p,\gamma)$, we 
define $LHS(r) := A$,  $p(r) := p$, and $RHS(r) := \gamma$. 
For a rule $r$, where $LHS(r) = A$, we say
that rule $r$ is {\em associated} with the nonterminal $A$.
Let $R_A$ denote the set of rules associated with nonterminal $A$.
We call $G$ a {\em stochastic} (or {\em probabilistic}) 
context-free grammar (SCFG or PCFG), if 
for every nonterminal 
$A \in V$, we have $p_A \leq 1$, where $p_A$ is
defined as the
sum of the probabilities of rules in $R_A$,
i.e.:  \[ p_A := \sum_{r \in R_A} p(r) \] 

An SCFG is called {\em proper}
if $p_A = 1$ for all nonterminals $A$. 
It is however easy to see that requiring properness 
for SCFGs is without loss
of generality, even when the grammar needs to be in a
special normal form, such as CNF, because we can always 
make the stochastic grammar proper by adding an extra rule 
$A \stackrel{1-p_A}{\rightarrow}  NN$  which
carries the residual probability $(1-p_A)$,
where $N$ is a new nonterminal and where there is a 
new rule $N \stackrel{1}{\rightarrow} NN$.
This yields a new proper SCFG which has exactly the same probability
of generating any particular finite string of terminals as did the 
old SCFG, and has the same finite parse trees with the same probability.
We can therefore assume, w.l.o.g., that all the input SCFGs we consider
in this paper are proper.
\footnote{In some of our algorithms, while processing input SCFGs,
they may become improper, in which case we can clearly
then convert them back again to proper SCFGs, by the same method.
It is worth noting that 
in some definitions of PCFGs,
notably in \cite{NedSat08b}, 
the authors even permit the sum of the probabilities
of rules associated with a given nonterminal $A$ to be $p' >  1$.
Specifically, they define PCFGs to be
any {\em weighted context-free grammar} where all rules have a
weight $p \in [0,1]$,  without the condition that the weights associated
with a given nonterminal $A$ must sum to $\leq 1$,  but with the
added condition that the {\em total weight} of generating any finite
string must be in $[0,1]$.  The ``weight'' of a given
string generated by a weighted grammar is defined and computed analogously 
to the way we compute the total probability of a string being
generated by an SCFG.  We shall not elaborate on the definition here.
As we shall discuss, this definition
of PCFGs as a more general subclass of weighted context-free grammars is 
in fact too general in several important ways.  In particular, 
we showed in \cite{rmc}
that such weighted grammars subsume the general RMC model, 
for which we proved in \cite{rmc}
that computing or even approximating termination probabilities 
to within 
any nontrivial approximation threshold is already 
at least as hard as some long standing open problems
in numerical computation, namely SQRT-SUM and PosSLP, 
which are problems not even known to be in NP.
Thus it is unlikely that one could devise a P-time algorithm
for approximating the ``termination probability'' for the
generalized definition of PCFGs based on weighted grammars
that is given by \cite{NedSat08b}.
However, the important point is that, as we show, we don't need to 
solve this more general problem in order to approximate $p_{G,w}$ 
for standard SCFGs.
We restrict ourselves in this paper to the more standard definition of
SCFGs (or PCFGs).  Namely, we assume that probability of rules
associated with each nonterminal must sum to $\leq 1$, and in fact
w.l.o.g., that they sum to exactly $1$.}

An SCFG is in {\em Chomsky Normal Form} (CNF) if it satisfies 
the following conditions:
\begin{itemize}
\item  The grammar does not contain any $\epsilon$-rule except possibly for a rule $S \stackrel{p}{\rightarrow}
\epsilon$
associated with the start nonterminal $S$;   if it does contain such a rule then $S$ does not
appear on the right hand side of any rule in the grammar.
  
\item Every rule, other than $S \stackrel{p}{\rightarrow} \epsilon$,
is either of the form $A \stackrel{p}{\rightarrow} BC$,
or of the form $A \stackrel{p}{\rightarrow} a$
where $A$, $B$, and $C$ are nonterminals in $V$ and $a \in \Sigma$ is
a terminal symbol.
\end{itemize}

We define a finite {\em parse tree}, $t$, for a string $w \in \Sigma^*$
in a SCFG, $G$, starting at (or {\em rooted at}) nonterminal $A$,
to be a rooted, labeled, ordered, finite tree,
such that all leaf nodes of $t$ are labeled by a terminal symbol in 
$\Sigma$ or by $\epsilon$, and such that all
internal (i.e., non-leaf)
nodes are labeled by a pair $(B,r)$,
where $B \in V$ is some nonterminal of the grammar,
and $r \in R_B$ is some rule of the SCFG associated with $B$.
For an internal node $z$ that has label $(B,r)$, we define
$L_1(z) := B$, and  $L_2(z) := r$ to describe its two labels.

If an internal node $z$ has $L_2(z) = r$,  
and $RHS(r) =  \gamma$,
then node $z$ must have exactly $|\gamma|$ children, 
unless $\gamma = \epsilon$.
The children of $z$ are then labeled, from left to right, 
by the sequence of symbols in $\gamma$.  (If $\gamma = \epsilon$,
then the single child is a leaf labeled by $\epsilon$.).
Finally, for $t$ to be a finite parse tree for the string $w \in \Sigma^*$,
it must be the case that the labels on the leaves of the tree, 
when concatenated together from left
to right, form precisely the string $w$.  Note that the
empty string, $\epsilon$, is an identity element for the 
concatenation operator. 

We consider two parse trees to be identical
if they are isomorphic as rooted labeled ordered trees,
where the label of an internal node $z$ includes both the associated
nonterminal $L_1(z)$ and the associated rule $L_2(z)$.
We use $T^A_{G,w}$ to denote the set of distinct parse trees
rooted at nonterminal $A$ 
for the string $w \in \Sigma^*$ 
in the SCFG $G$.

We now describe a probabilistic 
{\em derivation} for a SCFG,  $G$, starting at a nonterminal 
$A$, as a stochastic process that proceeds
to generate a random derivation tree, which may either be infinite, or may
terminate at a finite parse tree.  The derivation
starts as a tree $T_0$ with a single root node $root$ which 
the process will randomly ``grow'' into a tree as follows.
The root is labeled by the start nonterminal $A$, so $L_1(root) := A$. 
At each step of the derivation, for every current leaf node, $z$, of the 
current derivation tree, $T_j$, such that the leaf node $z$ has $L_1(z) = A$,
we ``expand'' that  
nonterminal occurrence by randomly choosing a 
rule $r \in R_A$, letting $L_2(z) := r$, 
where the rule $r$ is chosen independently at random
for each such leaf node, according to the probabilities $p(r)$ 
of the rules $r \in R_A$. \footnote{Note that if the SCFG is not
proper, then as described this is not a well-defined 
stochastic process.  We rectify this by simply asserting
that if the SCFG is not proper, then with the
residual probability $(1-p_A)$ at every leaf labeled by $A$
we generate two new children labeled by a new special nonterminal $N$
which will generate an infinite tree with probability 1,
via a rule $N \stackrel{1}{\rightarrow} NN$.  This corresponds
to the way we converted any (CNF-form) SCFG into a proper (CNF-form) SCFG.}
We then 
use the chosen rule $r$ to
add $|RHS(r)|$ new children for that leaf node, where
these children are labeled, from left to right, by the
sequence of terminal and nonterminal symbols 
in $\gamma = RHS(r)$.  If $\gamma=\epsilon$, then there
is only one child added, labeled by $\epsilon$. 
We continue to repeat this ``expansion'' process 
until the derivation yields a finite
parse tree having only terminal symbols (including possibly $\epsilon$) 
labeling all of its leaves, in which 
case the process stops. Otherwise, i.e., if the process never encounters a
finite parse tree having only terminal symbols labeling the leaves, 
then the derivation never stops and goes on forever, generating an
a infinite sequence of larger and larger 
derivation trees.   If the derivation stops 
and generates a finite parse tree, $t$,
then if the concatenation of the sequence of symbols on the
leaves of that parse tree $t$ is a string $w \in \Sigma^*$,
we say that the derivation process on the SCFG $G$, starting at 
nonterminal $A$, 
has generated the string $w$.
We use 
$P_{G,A}(t)$ to denote the probability
that the finite parse tree $t$ rooted at $A$
is generated by grammar $G$ starting at nonterminal $A$.
It is clear that $P_{G,A}(t)$ is
the product 
over all internal nodes of $t$ of
the probability of the rule associated with that internal node.  
In other words:

\begin{equation}\label{prod-parse-tree-prob-eq}
P_{G,A}(t) =  \prod_{\{ \mbox{$z$} \; \mid \; \mbox{$z$ is an internal node of $t$\} }} p(L_2(z)) 
\end{equation}

We denote by
$p^A_{G,w}$ the probability that starting
at nonterminal $A$ of the grammar $G$ the derivation process generates 
the string $w$.
Clearly we have:
\begin{equation}\label{sum-of-parse-probs-eq}
p^A_{G,w} = \sum_{t \in T^A_{G,w}} P_{G,A}(t)
\end{equation}

We now extend the definition of ``derivation'' process,
so that it can start not just at a nonterminal, but at
a string of terminals and nonterminals, as follows.

For any string $\gamma \in (V \cup \Sigma)^*$ of terminals and 
nonterminal of $G$, if $\gamma = \epsilon$, the
derivation process simply begins and ends with a tree consisting
of one node labeled by $\epsilon$.
Otherwise, if $\gamma = \gamma_1 \ldots \gamma_m$,
where $\gamma_i \in (V \cup \Sigma)$ for $i=1,\ldots, m$,
the derivation process consists of a sequence of derivation
processes, starting at each symbol $\gamma_i$, for $i$ going from 
$1$ to $m$.  If $\gamma_i$ is a nonterminal $A$, then the derivation
process is the same as that starting at $A$.  If $\gamma_i$
is a terminal symbol, then the termination process starting at $\gamma_i$
simply begins and ends with a tree consisting of
one node labeled by the terminal symbol $\gamma_i$.
If the entire sequence of derivation processes terminate and
generate finite parse trees, then if the sequential concatenation
of the strings generated by each of this sequence of parse trees 
yields a string $w \in \Sigma^*$ we say that this derivation
process starting at $\gamma$ generated the string $w$.
Let  $p^\gamma_{G,w}$ denote the probability that 
derivation of the SCFG $G$,
starting 
with the string $\gamma$,
generates the terminal string $w$.

The {\em termination probability} of an SCFG, $G$,
starting at nonterminal $A$, denoted $q^{A}_{G}$,
is the probability with which the derivation
process starting at $A$ eventually stops and generates a finite string, and a 
finite parse tree. It 
is clearly given by:
\[ q^A_{G} = \sum_{w \in \Sigma^*}  p^A_{G,w} \]

An SCFG $G$ is called {\em consistent} if $q^S_G = 1$,
where $S$ is the start nonterminal of $G$. 
Note that even if the given SCFG $G$ is proper
(meaning the probabilities of rules associated with every nonterminal
 sum to 1), 
this \underline{does not} necessarily imply that $G$ is 
{\em consistent}.  
Indeed, we know that proper SCFGs need
not terminate with probability 1.  For example, the SCFG  given by
$S \stackrel{2/3}{\rightarrow} SS$ , $S \stackrel{1/3}{\rightarrow} a$,
is proper but only terminates with probability 1/2.

For the encoding of {\em input} SCFGs, for purposes
of analyzing the complexity of algorithms,
we assume that the probabilities associated with each rule of the
input SCFG are rational values encoded in the usual
way, by giving their numerator and denominator in binary.
We shall use $|G|$ to denote the encoding size of an input SCFG $G$,
i.e., the number of bits required to represent $G$ with this
binary encoding for the rational rule probabilities.
In our formal analysis, when reasoning about our algorithms,
we will in fact need to consider 
SCFGs whose rule probabilities
can be irrational, but we shall not  need to actually compute these
probabilities exactly, only approximately.

The general statement of the {\em approximate total
probability parsing problem} is as follows.
We are given as input: an SCFG, $G$,  which w.l.o.g. we assume
to be {\em proper},
we are  also given a finite word  $w = w_1 \ldots w_n \in \Sigma^*$   
over the   terminal alphabet $\Sigma$ of the SCFG $G$.  Finally,
we are also given a rational error threshold  $\delta > 0$.
As output,
we wish to approximate within error $\delta$ 
the probability that a probabilistic derivation
of $G$ generates the string $w$, which we
denote by $p_{G,w} := p^S_{G,w}$, where $S$ is
the start nonterminal of $G$.
In other words, we want our algorithm to output a rational value $v \in [0,1]$ 
such that $| v - p_{G,w} | < \delta$.
Importantly, we allow the grammar $G$ to have $\epsilon$-rules of
the form  $A \stackrel{p}{\rightarrow} \epsilon$.  
As we have discussed, 
allowing such rules makes this problem substantially more difficult.

Our first main aim is to prove the following theorem:

\begin{thm}\label{main-scfg-parsing-thm}(Approximation of the total parse
probability of a string on an SCFG)
There is a polynomial-time algorithm for 
approximating the total probability that a given string $w$
is generated by a given arbitrary SCFG, $G$, including an SCFG that contains
arbitrary $\epsilon$-rules.

More precisely, there is a polynomial-time algorithm
that, given as input any (proper) SCFG, $G$,
with rational rule probabilities
and with a terminal alphabet $\Sigma$, 
given any string $w \in \Sigma^*$,
and given any 
rational value $\delta >0$ in standard binary representation,
computes a rational value $v \in [0,1]$ such that $| v - p_{G,w} | < \delta$.
\end{thm}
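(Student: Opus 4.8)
The plan is to prove the theorem by reducing, in polynomial time and in the standard Turing model, to parsing with respect to a grammar in Chomsky Normal Form, for which a probabilistic CKY-style dynamic program is available, while carefully controlling the numerical error introduced along the way. The overall pipeline has three conceptual stages: (i) transform the input SCFG $G$ into an ``approximately equivalent'' CNF SCFG $G'$ with rational rule probabilities of polynomial bit-length, i.e. establish Theorem \ref{cnf} with accuracy $\delta' := \delta/2$ and length bound $N := |w|$; (ii) run a probabilistic CKY dynamic program on $G'$ and the string $w$ to compute $p_{G',w}$; (iii) since the exact CKY computation uses only polynomially many $\{+,\times\}$ operations but its intermediate rationals can have exponential bit-length, perform it with rounded arithmetic of polynomial precision and bound the accumulated error by $\delta/2$. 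Composing, $|v - p_{G,w}| \le |v - p_{G',w}| + |p_{G',w}-p_{G,w}| < \delta/2+\delta/2 = \delta$.

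Stage (i) is itself carried out as a sequence of normal-form transformations adapted to the stochastic setting: (a) make $G$ proper and introduce a fresh start symbol (w.l.o.g., as in the Background section); (b) replace terminals occurring inside long right-hand sides by fresh nonterminals; (c) eliminate $\epsilon$-rules; (d) eliminate unary (chain) rules; (e) break long right-hand sides into binary ones. Steps (a), (b), (e) are purely syntactic, preserve the string distribution \emph{exactly}, and only blow up the grammar size polynomially, so they contribute no error; the two genuinely hard steps are (c) and (d). For $\epsilon$-rule elimination I would use that the ``correct'' probability with which a nonterminal $A$ derives $\epsilon$ is exactly a termination probability of the auxiliary grammar obtained by deleting all terminals from right-hand sides of $G$ --- equivalently the relevant coordinate of the LFP $q^*$ of an associated PPS. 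By Proposition \ref{prob1-ptime-scfg-prop} we first detect which of these are $0$ or $1$; the rest lie strictly in $(0,1)$ and, by Corollary \ref{main-alg-corollary}, can be approximated in P-time to any additive accuracy $2^{-\mathrm{poly}}$. The new rule probabilities of the $\epsilon$-free grammar are obtained from the old ones by simple polynomial expressions (summing over which occurrences on a right-hand side ``vanish''), so approximating the termination probabilities finely enough yields rule probabilities within $2^{-\mathrm{poly}}$ of the ideal ones; a Lipschitz/perturbation argument then shows the induced string probabilities $p_{G',w}$ for all $w$ of length $\le N$ move by at most $\delta/4$. Renormalization to restore properness is done exactly on the computed rationals.

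The hardest step, and the one I expect to require the most work, is stage (d), unary-rule elimination. After (c) each nonterminal carries a ``unary subgraph'' with transition probabilities $p(A\to B)$ that are now irrational but approximately known; summing over all chains $A \Rightarrow^* B$ of unary rules preceding a non-unary rule amounts to inverting $I-U$, where $U$ is the substochastic matrix of unary-rule probabilities within a unary strongly connected component. The real danger is that $U$ could be so close to stochastic that $I-U$ is catastrophically ill-conditioned and $(I-U)^{-1}$ has doubly-exponentially large entries, making polynomial-precision approximations of $U$ useless. The main obstacle is to show this cannot happen for the grammars produced by the preceding steps: by analysing the structure of the constructed grammar --- in particular that every nonterminal inside a unary SCC has an ``escape'' probability that is not too small --- I would, using bounds in the spirit of Lemma \ref{sccspprelterm} and Theorem \ref{1comp}, bound $\|(I-U)^{-1}\|_\infty \le 2^{\mathrm{poly}(|G|)}$, so that approximating $U$ to accuracy $2^{-\mathrm{poly}(|G|)}$ suffices to get the new (binary) rule probabilities within $2^{-\mathrm{poly}(|G|)}$ of the ideal ones, hence $p_{G',w}$ within $\delta/4$ more.

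Finally, for stages (ii)--(iii): on the CNF grammar $G'$ the probability $p_{G',w}$ is computed by the standard probabilistic CKY recursion over subintervals of $w$, which is a polynomial-size arithmetic circuit over the (rational, polynomial-length) rule probabilities of $G'$; evaluating it while rounding every intermediate value down to a multiple of $2^{-\mathrm{poly}(|G|,|w|)}$ keeps all numbers of polynomial bit-length, and since each gate is monotone in its inputs with bounded ``amplification'', a straightforward forward error analysis bounds the total error of the rounded evaluation by $\delta/2$. Collecting the error contributions of steps (c), (d) and the CKY rounding, and noting every transformation runs in P-time in $|G|$, $|w|$ and $\log(1/\delta)$, yields the claimed polynomial-time algorithm outputting $v\in[0,1]$ with $|v-p_{G,w}|<\delta$. \qed
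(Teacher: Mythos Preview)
Your pipeline matches the paper's almost exactly: put the grammar in SNF, clean out nonterminals with $E(A)=1$, eliminate $\epsilon$-rules via the ``conditioned'' grammar whose rule probabilities involve $E(\cdot)$ and $NE(\cdot)$ (approximated using Corollary~\ref{main-alg-corollary} and bounded below via Theorem~\ref{1comp}), eliminate unary rules by computing hitting probabilities in the induced Markov chain, and finally run CKY with rounding. Stages (a), (b), (e) and your CKY error analysis are as in the paper.

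Two points deserve sharpening. First, your description of $\epsilon$-elimination (``summing over which occurrences vanish'') is too loose: the construction that actually works conditions on non-emptiness, so the new probabilities are of the form $p\cdot NE(B)\,NE(C)/NE(A)$, $p\cdot NE(B)\,E(C)/NE(A)$, etc.; the division by $NE(A)$ is essential, and Theorem~\ref{1comp} is what keeps $NE(A)\ge 2^{-O(|G|)}$ so this division is benign.

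Second, and more importantly, your plan for unary elimination differs from the paper's, and the hand-wave ``every nonterminal inside a unary SCC has an escape probability that is not too small'' hides the actual obstacle. After $\epsilon$-elimination the unary matrix $U$ contains entries with factors $E(B)$, which can be as small as $2^{-2^{\Theta(|G|)}}$; so entrywise bounds ``in the spirit of'' Lemma~\ref{sccspprelterm} and Theorem~\ref{1comp} do \emph{not} by themselves bound $\|(I-U)^{-1}\|$. The structural fact you need (and do not state) is that every such ``tiny'' unary transition $r'(2)\colon A\to B$ or $r'(3)\colon A\to C$, arising from a rule $A\to BC$, is always accompanied by the non-unary escape rule $r'(1)\colon A\to BC$ whose probability involves only $NE$-factors and is therefore $\ge 2^{-O(|G|)}$. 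The paper exploits this by \emph{deleting} the tiny unary transitions first (their total contribution to any hitting probability is dominated by the coexisting escape and hence negligible) and only then bounding the inverse of the resulting, well-conditioned system via Claim~\ref{claim:small-interse-norm} and Theorem~\ref{thm:lin-cond-bound}. Your direct bound $\|(I-U)^{-1}\|_\infty\le 2^{\mathrm{poly}}$ can in fact be pushed through using the same structural observation (from every transient state there is a path of length $\le n$ to absorption using only non-tiny transitions), but you should make that observation explicit; without it the step is a genuine gap.
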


Crucial for establishing these results is the 
following normalization theorem, which is of more general applicability.
It says that any SCFG 
can be converted in P-time to a suitable  ``approximate'' SCFG
which is in CNF form.   Let us give a precise definition of a notion
of approximate SCFG.

\begin{defn}
For any SCFG $G = (V,\Sigma,R,S)$,  and any  $\delta > 0$, 
we define a set of SCFGs, denoted,  $B_{\delta}(G)$,
called the {\em $\delta$-ball around $G$}, as following.
$B_{\delta}(G)$ consists of all SCFGs, $G' = (V,\Sigma,R',S)$,
such that $G'$ has exactly
the same nonterminals $V$, terminal alphabet $\Sigma$, 
start nonterminal $S$, as $G$,
and furthermore such that the rules in $R'$
of $G'$ that have non-zero probability are exactly the same as the rules $R$ of $G$
that have non-zero probability, and furthermore for every
rule $r \in R$ of $G$, the corresponding rule $r' \in R'$ of $G'$, 
we have $|p(r) - p(r')| \leq \delta$.\\
For any $G' \in B_{\delta}(G)$, we say that $G'$  $\delta$-approximates $G$.
\end{defn}

\begin{thm}\label{main-scfg-cnf-form-thm}(Approximation of an SCFG by an SCFG in CNF form)
There is a polynomial-time algorithm 
that, given as input any (proper) SCFG, $G$,
with rational rule probabilities, given any 
natural number $N$ {\em represented in unary},
and given any 
rational value $\delta >0$ in standard binary representation,
computes a new SCFG, $G'$, such that $G'$ is in Chomsky Normal Form,
and has rational rule probabilities,
and such that $G' \in B_\delta(G'')$, where
$G''$ is an SCFG in Chomsky Normal Form, which possibly has {\em irrational}
rule probabilities, but such that for all string $w \in \Sigma^*$ 
we have $p_{G,w} = p_{G'',w}$.
Furthermore, the $\delta$-approximation $G'$ of $G''$ is such
that for all strings $w \in \Sigma^*$, such that $|w| \leq N$, we have:
$ |p_{G,w} - p_{G',w} | \leq \delta $.
\end{thm}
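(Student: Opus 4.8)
\emph{Proof plan.} The plan is to follow the classical pipeline that converts an ordinary CFG into Chomsky Normal Form, but to perform its two arithmetically nontrivial stages only approximately, and then to argue that (i) the composition of the \emph{exact} versions of all the transformations produces a CNF SCFG $G''$ that gives every string $w\in\Sigma^*$ the same probability as $G$ (in general with irrational rule probabilities), while (ii) the composition of the \emph{approximate} versions, run with a small enough precision $\delta'$, produces a \emph{rational} CNF SCFG $G'$ with the same rule set as $G''$, lying in $B_\delta(G'')$, and moreover agreeing with $G$ on all string probabilities $p_{G,w}$ with $|w|\le N$ up to additive error $\delta$. The pipeline is: (a) add a fresh start symbol $S_0$ with rule $S_0\to S$ of probability $1$ (so that the possibly-irrational start-$\epsilon$ rule can later be isolated); (b) eliminate $\epsilon$-rules; (c) eliminate unary rules $A\to B$; (d) replace every terminal $a$ occurring inside a rule of length $\ge 2$ by a fresh nonterminal $T_a$ with rule $T_a\to a$ of probability $1$; (e) binarize each remaining rule $A\to X_1\cdots X_k$ with $k\ge 3$ using fresh nonterminals and auxiliary rules of probability $1$. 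Stages (a), (d), (e) preserve both rationality and the exact probability distribution on finite parse trees (hence on strings), so they contribute no error; all the work is in (b) and (c).

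\emph{Stage (b): $\epsilon$-elimination.} For each nonterminal $X$ let $e_X := p^X_{G,\epsilon}$ be the probability that a derivation from $X$ produces the empty string. Deleting every rule whose right-hand side contains a terminal symbol yields a PPS whose LFP is exactly the vector $(e_X)_X$; by Proposition~\ref{prob1-ptime-scfg-prop} we can determine in P-time which $e_X$ equal $0$ or $1$ (nonterminals with $e_X=1$ derive only $\epsilon$ and are spliced out combinatorially), and by Corollary~\ref{main-alg-corollary} we can approximate the remaining $e_X\in(0,1)$ to any additive precision $\delta'$ in time polynomial in $|G|$ and $\log(1/\delta')$. The transformed grammar then contains, for each old rule $A\to X_1\cdots X_k$ and each set $J$ of nonterminal positions to be erased, a rule whose right-hand side deletes the positions in $J$ and replaces each remaining nonterminal $X$ by a new nonterminal $\widehat X$ (``$X$ conditioned on not deriving $\epsilon$''), with probability $p(A\to X_1\cdots X_k)\cdot\prod_{i\in J}e_{X_i}\cdot\prod_{i\notin J,\ X_i\in V}(1-e_{X_i})$, divided by the normalizing factor $1-e_A$ for the rules of $\widehat A$. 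Each new rule probability is thus a fixed rational function of the $e_X$'s whose only denominators are factors $1-e_X$; since Theorem~\ref{1comp} applied to the above PPS (after removing the coordinates equal to $0$ or $1$) gives $1-e_X\ge 2^{-4|P|}$, these divisions are harmless and a $\delta'$-approximation of $(e_X)_X$ yields a $2^{O(|G|)}\delta'$-approximation of every new rule probability, with the exact values being the corresponding component of $G''$.

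\emph{Stage (c): unary-rule elimination.} After stage (b) the grammar is $\epsilon$-free; let $M$ be the matrix whose $(A,B)$ entry is the probability of the unary rule $A\to B$ (if present), and let $u:=(I-M)^{-1}$ be the matrix of unary-chain ``visit'' weights. The transformation replaces each unary rule by supplying, for every non-unary rule $B\to\gamma$, a rule $A\to\gamma$ of probability $\sum_B u_{A,B}\,p(B\to\gamma)$. The entries of $M$ are exactly the (possibly irrational, possibly extremely small) probabilities produced by stage (b), so the linear system defining $u$ cannot be solved exactly, and the crucial point is that it is nevertheless well conditioned. I would prove this by combining sub-stochasticity of $M$ with the structural fact that along every unary cycle in the grammar produced by stage (b) the total probability leaking to non-unary rules is bounded below by a product of $\mathrm{poly}(|G|)$ quantities, each at least $2^{-4|P|}$ by the lower bounds of Section~\ref{sec:poly-bound-newton}; this gives $\rho(M)\le 1-2^{-\mathrm{poly}(|G|)}$ and hence $\|(I-M)^{-1}\|_\infty\le 2^{\mathrm{poly}(|G|)}$. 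Substituting the rational $\delta'$-approximations of the entries of $M$, solving the resulting rational linear system exactly in P-time, and invoking the perturbation bound $\|(I-M)^{-1}-(I-M')^{-1}\|\le\|(I-M)^{-1}\|\,\|(I-M')^{-1}\|\,\|M-M'\|$ then yields a $2^{\mathrm{poly}(|G|)}\delta'$-approximation of $u$, hence of every new rule probability.

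\emph{Assembly, error accounting, and the main obstacle.} Let $G''$ be the CNF SCFG produced by the exact pipeline and $G'$ the one produced by the approximate pipeline with precision $\delta'$. The above gives $G'\in B_{C\delta'}(G'')$ with $C=2^{\mathrm{poly}(|G|)}$, so taking $\delta':=\delta/C$ makes $\log(1/\delta')$ polynomial in $|G|$ and $\log(1/\delta)$ and yields $G'\in B_\delta(G'')$. For the string bound, $p_{G,w}=p_{G'',w}$ for all $w$ by equivalence, and since $G''$ is in CNF every parse tree for a string of length $n$ has exactly $2n-1$ internal nodes; running the inside (CKY) recursion simultaneously for $G''$ and $G'$ and bounding the propagated error by induction on the interval length gives $|p_{G'',w}-p_{G',w}|\le 2^{O(N)}\,\|p(G'')-p(G')\|_\infty$ for all $|w|\le N$, so shrinking $\delta'$ by a further factor $2^{O(N)}$ — which, as $N$ is given in unary, costs only $O(N)$ additional bits — secures $|p_{G,w}-p_{G',w}|\le\delta$. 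I expect the main obstacle to be the conditioning claim in stage (c): ruling out that the (singly- or doubly-) exponentially small coefficients introduced by $\epsilon$-elimination accumulate so as to make $I-M$ nearly singular requires a careful structural analysis of the grammar built in stage (b), and it is precisely here that the quantitative lower bounds on $1-q^*$ from Section~\ref{sec:poly-bound-newton} are used in an essential way.
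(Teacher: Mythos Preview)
Your overall pipeline and the identification of stages (b) and (c) as the only arithmetically delicate steps match the paper exactly, and your use of Theorem~\ref{1comp} to lower-bound the denominators $1-e_A$ and your CKY error propagation are both right. There is, however, one genuine gap: your order of operations performs $\epsilon$-elimination on the original grammar \emph{before} binarization. For a rule $A\to X_1\cdots X_k$ with $k$ nullable nonterminal positions this produces up to $2^k$ new rules, and since $k$ can be as large as $|G|$, the resulting grammar may have size exponential in the input; the algorithm is then not polynomial-time. The paper avoids this by converting to simple normal form (your stages (d)+(e)) \emph{first}, so that every rule already has length at most $2$ when $\epsilon$-elimination runs and each old rule spawns at most three new ones. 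Reordering your pipeline to $(a),(d),(e),(b),(c)$ fixes this.

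On stage (c), your intuition is right but the one-line justification (``along every unary cycle the leak is $\ge 2^{-\mathrm{poly}}$'') is not literally true: a nonterminal that in $G^{(2)}$ had only unary and $\epsilon$-rules has, after stage (b), outgoing unary probabilities summing to exactly $1$, so a cycle among such nonterminals leaks nothing. The argument that actually works (and that your sketch gestures at) distinguishes these ``type~(i)'' nonterminals---whose outgoing unary edges all carry weight $p\cdot NE(B)/NE(A)\ge 2^{-O(|G|)}$---from ``type~(ii)'' nonterminals that had a terminal or binary rule, which after stage~(b) retain a non-unary rule of probability $\ge 2^{-O(|G|)}$. After discarding dead nonterminals, every state reaches a type~(ii) state within $n$ steps along large edges, so absorption occurs within $n{+}1$ steps with probability $\ge 2^{-\mathrm{poly}(|G|)}$, and $\|(I-M)^{-1}\|_\infty\le 2^{\mathrm{poly}(|G|)}$ follows. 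The paper takes a slightly different route here: rather than bounding $\|(I-M)^{-1}\|$ directly, it first \emph{deletes} the doubly-exponentially small unary transitions (those carrying an $E(B)$ factor), observing that each such transition competes at its source state with a singly-exponentially large leak (the $r'(1)$ rule) and so its removal perturbs hitting probabilities by at most $\delta'$; it then applies a standard condition-number bound to the pruned system, whose smallest positive entry is now $\ge 2^{-O(|G|)}$. Both approaches rest on the same structural observation; once stated precisely, yours is a bit more direct.
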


In other words, the P-time computed CNF form SCFG, $G'$, is a ``good
enough approximation'' of $G$ when it comes
to the total parse probability of all strings up to the desired length,
$N$, and $G'$ also $\delta$-approximates  a CNF form SCFG, $G''$, with irrational
rule probabilities, such that $G$ and $G''$ generate exactly the same probability
distribution on finite strings.
We emphasize however that the length $N$ needs to be given in
unary for this algorithm to run in P-time.

Let us note here again that Abney, McAllister, and Pereira \cite{AMP99}
(Theorem 4),
have established the {\em existence} of a SCFG in CNF
form that has exactly the same probability of generating any nonempty string as the
original SCFG.   However, as they mention, 
their existence result is completely non-constructive, and yields
no algorithm for computing or approximating such an SCFG.
Of course, we note again that any such SCFG may require irrational
rule probabilities.

We shall thus show that the
non-constructive result of \cite{AMP99} can be made entirely constructive,
and that in fact an approximate version can be carried out in P-time.
Specifically, we show that an SCFG, $G$, can 
be put through
a sequence of ``{constructive  transformations}'', some of which we don't
actually compute explicitly, because they involve irrational rule probabilities,
which ultimately leads, firstly, to an SCFG (with irrational rule
probabilities) which
is in CNF form, and which has exactly the same probability of generating
any string,  and secondly, thereafter to an ``approximate SCFG'' which
has approximately the same probability of generating any string up
to a desired length $N$, and which can be computed in P-time from
the original SCFG.

To begin our series of SCFG transformations,
let us first observe that an obvious adaptation
of the methods we used to prove Proposition \ref{prop:snf-form}, which
showed that we can
 convert any MPS or PPS into one which is in {\em simple normal
form} (SNF), can be used to also convert
any SCFG $G$ to one that is also in a 
{\em simple normal form} (SNF).
By definition, a SCFG is SNF form
if it only contains
the following four kinds of rules:

\begin{enumerate}

\item  $A \stackrel{p}{\rightarrow} B C$, where $B$, $C$ are
nonterminals in $V$.

\item  $A \stackrel{p}{\rightarrow} B$, where $B$ is a nonterminal
symbol.

\item  $A  \stackrel{p}{\rightarrow}  a$, where 
where $a \in \Sigma$ is a terminal symbol.

\item  $A  \stackrel{p}{\rightarrow} \epsilon$,
$\epsilon$ denotes the empty string.
\end{enumerate}

\begin{lem}
\label{grammar-sd2nf-lem}
Any SCFG, $G$, with rational rule probabilities, 
can be converted in P-time to a SCFG, $G^{(1)}$, in SNF
form such that $G^{(1)}$ has the same terminal symbols $\Sigma$ 
as $G$, such that $G^{(1)}$ has rational rule probabilities,
and such that $G^{(1)}$ generates exactly the same probability
distribution on finite strings in $\Sigma^*$,
i.e., such that $p_{G^{(1)},w} = p_{G,w}$ for all strings 
$w \in \Sigma^*$,
and (thus) also $G$ and $G^{(1)}$ have 
the same probability of termination.

Furthermore, if the original grammar $G$ had no $\epsilon$-rules
then the new SNF grammar $G^{(1)}$ 
will also have no $\epsilon$-rules.

Likewise, if the grammar $G$ only has a single $\epsilon$-rule 
$S \stackrel{p}{\rightarrow} \epsilon$, where $S$ is the start 
nonterminal, and where $S$ doesn't appear on the
RHS of any rule in $G$, then the same would hold for $G^{(1)}$. 
\end{lem}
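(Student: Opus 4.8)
The plan is to adapt, essentially verbatim, the auxiliary-variable construction from the proof of Proposition~\ref{prop:snf-form}, reading ``variable'' as ``nonterminal'', ``term on the right-hand side'' as ``right-hand side of a rule'', and multiplication of variables as concatenation of symbols. The one extra wrinkle, compared with the PPS case, is that an SNF right-hand side of length $\geq 2$ must consist \emph{solely} of nonterminals, so any terminals occurring inside a long right-hand side must first be hidden behind fresh deterministic nonterminals.

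Concretely, I would build $G^{(1)}$ in two stages. Stage~1: for every terminal $a \in \Sigma$ that occurs in the right-hand side of some rule of length $\geq 2$, introduce a fresh nonterminal $X_a$, add the rule $X_a \stackrel{1}{\rightarrow} a$, and replace every occurrence of $a$ in every right-hand side of length $\geq 2$ by $X_a$; after Stage~1 every right-hand side of length $\geq 2$ is a string of nonterminals. Stage~2: for every rule $A \stackrel{p}{\rightarrow} B_1 B_2 \cdots B_m$ with $m \geq 3$ (all $B_i$ nonterminals), introduce fresh nonterminals $Y_2, \ldots, Y_{m-1}$ and replace it by the chain $A \stackrel{p}{\rightarrow} B_1 Y_2$, $\ Y_2 \stackrel{1}{\rightarrow} B_2 Y_3, \ \ldots, \ Y_{m-2} \stackrel{1}{\rightarrow} B_{m-2} Y_{m-1}, \ Y_{m-1} \stackrel{1}{\rightarrow} B_{m-1} B_m$. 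Rules already of one of the four SNF shapes (length $0$; length $1$; or length $2$ with both symbols now nonterminals) are left untouched. Every freshly introduced nonterminal has a single associated rule, of probability $1$, so $G^{(1)}$ is proper if $G$ was and all rule probabilities stay rational; a rule of right-hand side length $m$ is replaced by $O(m)$ rules over $O(m)$ new nonterminals whose probabilities ($1$) have $O(1)$ encoding size, so the transformation runs in polynomial time and $|G^{(1)}| = O(|G|)$.

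For correctness I would exhibit a probability-preserving bijection between finite parse trees. Given a finite parse tree $t$ of $G$ rooted at $S$ and yielding $w$, replace each internal node that uses an original long rule or a rule containing a terminal in a long right-hand side by the obvious ``gadget'' subtree built from the corresponding chain of probability-$1$ rules and $X_a$-rules; this produces a finite parse tree of $G^{(1)}$ rooted at $S$ that yields the same $w$ and, by equation~(\ref{prod-parse-tree-prob-eq}), has the same product of rule probabilities, since every inserted rule contributes a factor $1$. Conversely, since each fresh $Y_i$ (resp. $X_a$) is deterministic and can only have been introduced inside such a gadget, every finite $G^{(1)}$-parse tree rooted at $S$ decomposes uniquely into gadgets which contract back to a $G$-parse tree yielding the same string, and the two maps are mutually inverse. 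Summing over $T^S_{G,w}$ and using~(\ref{sum-of-parse-probs-eq}) gives $p_{G^{(1)},w} = p_{G,w}$ for all $w \in \Sigma^*$, hence also equal termination probabilities (sum over $w$). For the $\epsilon$-rule clauses: the construction never creates an $\epsilon$-rule and never puts $S$ on a right-hand side (we substitute only the fresh $X_a$ and $Y_i$, never $S$, and $S$ is not fresh), so if $G$ has no $\epsilon$-rule then neither does $G^{(1)}$, and if $G$ has only $S \stackrel{p}{\rightarrow} \epsilon$ with $S$ absent from all right-hand sides, the same holds for $G^{(1)}$.

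The main thing to get right is the bijection argument: one must verify that the gadget decomposition of an arbitrary finite $G^{(1)}$-parse tree is unique and total --- i.e. that the deterministic nonterminals force exactly the gadget shape and nothing else can occur --- so that gadget contraction is well-defined and exactly inverts gadget expansion. Everything else (rationality, properness, the P-time size bound, and the $\epsilon$-rule bookkeeping) is routine.
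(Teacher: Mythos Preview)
Your proposal is correct and takes essentially the same approach as the paper's proof sketch, which only says to add auxiliary probability-$1$ nonterminals to abbreviate long right-hand sides and to introduce deterministic nonterminals generating single terminals, explicitly leaving the remaining details as an exercise. Your two-stage construction and parse-tree bijection are precisely the details the paper omits; the only cosmetic difference is that the paper describes the two stages in the opposite order (shorten long right-hand sides first, then hide terminals).
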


The proof is analogous to  
the proof of Proposition \ref{prop:snf-form}.  It 
involves adding new auxiliary nonterminals  and 
new auxiliary rules, each  having probability 1,  
in order to suitably ``abbreviate'' the sequences of symbols $\gamma$
on right hand side (RHS) of rules 
$A \stackrel{p}{\rightarrow} \gamma$, whenever $|\gamma| \geq 3$.
We do this repeatedly until all such RHSs, $\gamma$, have $|\gamma| \leq 2$.
To obtain the normal form,
we may then also need to introduce nonterminals that generate 
a single terminal symbol  with probability 1.
We leave the rest of the proof as an easy exercise for the reader.

Clearly, every SCFG in Chomsky Normal Form form is in SNF form (but clearly not the
other way around).
We shall show that an SNF normal form SCFG can be ``transformed''
to CNF form, albeit to a CNF SCFG which may possibly
have {\em irrational} rule probabilities, and which we will
not actually compute.
If however the SNF SCFG happens to contain no $\epsilon$ rules,
then we shall see that the resulting CNF SCFG has only
rational rule probabilities, and can be computed exactly in P-time.

We will
use $E_G(A) := p^A_{G,\epsilon}$ to denote the probability
that starting with the nonterminal $A$,
the grammar $G$ generates the empty string
$\epsilon$, and we will use $NE_G(A) = 1-E_G(A)$ to denote
the probability that nonterminal $A$ does {\em not} generate
the empty string.  When the grammar $G$ itself is clear
from the context, we will use $E(A)$ to denote $E_G(A)$
and $NE(A)$ to denote $NE_G(A)$.

\begin{lem} \label{e-and-ne-in-ptime-lem}
\mbox{}
\begin{enumerate}
\item There is a P-time algorithm that,
given a SCFG, $G$, for each nonterminal $A$ of $G$
determines whether $E(A) = 1$, i.e., $NE(A)=0$,
and likewise whether $E(A)=0$, i.e.,
$NE(A) = 1$.

\item There is a P-time algorithm
that, given an SCFG $G$,
any nonterminal $A$ of $G$, and given any rational $\delta' > 0$,  
computes a $\delta'$-approximation of $E(A)$ (and of $NE(A)$), 
i.e., computes a
rational value $v^A_E \in [0,1]$ 
such that
$|E(A) - v^A_{E} | \leq \delta'$.
And  
thus, letting $v^A_{NE} := 1-v^A_E$,  we
also compute $v^A_{NE}$ such that $| NE(A) - v^A_{NE} | 
\leq \delta$. 
\end{enumerate} 
\end{lem}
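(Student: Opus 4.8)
The plan is to reduce the computation of the vector $(E_G(A))_{A \in V}$ to computing the least fixed point of a PPS, and then invoke the results already established in Sections \ref{sec:poly-bound-newton} and \ref{sec:p-time}. First I would form a modified grammar $G_\epsilon = (V, \Sigma, R_\epsilon, S)$ obtained from $G$ by deleting every rule $r$ such that $RHS(r)$ contains at least one terminal symbol; the surviving rules are exactly those of the form $A \stackrel{p}{\rightarrow} \gamma$ with $\gamma \in V^* \cup \{\epsilon\}$. The key claim is that for every nonterminal $A$, $E_G(A) = p^A_{G,\epsilon}$ equals the termination probability $q^A_{G_\epsilon}$ of $G_\epsilon$ started at $A$. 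Indeed, any derivation of $G$ that ever applies a rule whose right-hand side contains a terminal necessarily yields a nonempty string, hence cannot generate $\epsilon$; conversely, a derivation using only rules of $G_\epsilon$ that terminates must, since its leaves can only be labelled by $\epsilon$, generate exactly the empty string. Thus the finite parse trees of $G$ rooted at $A$ that generate $\epsilon$ are exactly the finite parse trees of $G_\epsilon$ rooted at $A$, and they carry the same probabilities by equation (\ref{prod-parse-tree-prob-eq}); summing as in (\ref{sum-of-parse-probs-eq}) gives $E_G(A) = q^A_{G_\epsilon}$, and likewise $NE_G(A) = 1 - q^A_{G_\epsilon}$.

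Next I would write down the PPS $x = P(x)$ naturally associated with $G_\epsilon$, with one variable $x_A$ per nonterminal and $P_A(x) = \sum_{r \in (R_\epsilon)_A} p(r) \prod_{X \in RHS(r)} x_X$, where the empty product (for a rule $A \stackrel{p}{\rightarrow}\epsilon$) contributes the constant $p(r)$. Its coefficients are nonnegative and, for each $A$, sum to at most $1$, so this is genuinely a PPS, and by the standard correspondence its LFP $q^*$ satisfies $q^*_A = q^A_{G_\epsilon} = E_G(A)$ for all $A$. Applying Proposition \ref{prop:snf-form} I first bring this PPS to SNF form (the auxiliary variables introduced are irrelevant, since the original variables $x_A$ are retained with unchanged LFP values). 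Part (1) is then immediate: Proposition \ref{prob1-ptime-scfg-prop} determines in P-time, for each coordinate, whether $q^*_A$ equals $0$, equals $1$, or lies strictly in between; since $NE_G(A) = 1 - E_G(A)$, this simultaneously decides whether $NE(A)=0$ (i.e. $E(A)=1$) and whether $NE(A)=1$ (i.e. $E(A)=0$). For part (2) I would invoke Corollary \ref{main-alg-corollary} on the same PPS, which computes in time polynomial in the encoding size and in $\log(1/\delta')$ a rational vector $v$ with $\|q^* - v\|_\infty \le \delta'$; setting $v^A_E := v_A$ and $v^A_{NE} := 1 - v^A_E$ gives the required approximations, since $|NE(A) - v^A_{NE}| = |E(A) - v^A_E| \le \delta'$.

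The only genuinely delicate point is the first claim: that $E_G(A)$ is precisely the termination probability of $G_\epsilon$, equivalently the \emph{least} fixed point of the associated PPS rather than some larger fixed point. This is a standard fact of the same flavour as the identification of SCFG termination probabilities with the LFP of the corresponding monotone system (see \cite{rmc}), and the proof is the routine one: one shows by induction on $k$ that the sum of $P_{G_\epsilon,A}(t)$ over parse trees $t \in T^A_{G_\epsilon,\epsilon}$ of height at most $k$ equals the value-iteration iterate $(P^{(k)}(\mathbf{0}))_A$, and these quantities increase monotonically to both $E_G(A)$ and the LFP of $x = P(x)$. I would either carry out this short induction or cite the corresponding argument in \cite{rmc} verbatim, noting that $G_\epsilon$ is just an ordinary SCFG (possibly with $p_A < 1$, which the PPS framework already accommodates). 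Everything else is a direct application of earlier results in this paper.
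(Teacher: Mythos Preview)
Your proposal is correct and follows essentially the same approach as the paper: form the modified grammar by deleting all rules whose right-hand side contains a terminal, identify $E_G(A)$ with the termination probability of that grammar starting at $A$, and then invoke Proposition~\ref{prob1-ptime-scfg-prop} for part (1) and Corollary~\ref{main-alg-corollary} for part (2). Your write-up is in fact more detailed than the paper's own proof, which simply asserts the key identification and cites the earlier results; your added justification that the $\epsilon$-parse trees of $G$ coincide with the terminating parse trees of $G_\epsilon$, and your remark on why the LFP (rather than some other fixed point) is obtained, are both correct and make the argument more self-contained.
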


\begin{proof}
Part (1.) of Lemma \ref{e-and-ne-in-ptime-lem}
follows directly from the fact (\cite{rmc}) that we can 
decide whether the termination probability of a SCFG is 
$=1$, or is $=0$, in P-time. 

Part (2.) of Lemma \ref{e-and-ne-in-ptime-lem} follows
directly from our main result, Corollary 
\ref{main-alg-corollary},
which says that we can $\delta'$-approximate the termination
probability of a SCFG in P-time.

To see why these hold, simply note that $E(A)$ 
is precisely the termination probability, starting at nonterminal
$A$, of a new SCFG obtained from $G$ by removing
all rules that have any terminal
symbol $a \in \Sigma$ occurring on the 
RHS.\footnote{More precisely, we remove each such rule 
$B \stackrel{p}{\rightarrow} \gamma$, and in order to still
maintain a {\em proper} SCFG, we add a new ``dead'' rule
$B \stackrel{p}{\rightarrow} NN$,  where 
$N$ is a new ``dead'' nonterminal
symbol that has associated with it the rule $N \stackrel{1}{\rightarrow} NN$.}
\end{proof}

Consider a SCFG, $G^{(1)}$, in SNF form.
As the next step of our transformation, 
we shall obtain a new SNF SCFG, $G^{(2)}$, where we
remove all nonterminals $A$ from the SCFG, $G^{(1)}$,
such that $E(A) = 1$.   We do so as follows:  
first, compute in P-time whether $E(A) = 1$ for every
nonterminal $A$.
If $E(A) =1$, then remove all rules associated with $A$, 
i.e., all rules in $R_A$, and
furthermore remove
every occurrence
of $A$ from the RHS of any rule. In other words,
if $\gamma$ is the right hand side of some rule and $A$
occurs in $\gamma$, then remove those occurrences of $A$,
and leave the remaining symbols in their original order.
If this results in an empty
RHS of a rule, then the RHS becomes the empty string $\epsilon$.

In the special case where
$S$ is the start nonterminal of $G^{(1)}$  and $E(S) = 1$,
the SCFG $G^{(1)}$ generates the empty string with
probability 1, and in this case we make $G^{(2)}$ the {\em trivial
SCFG} consisting of only one rule: 
$S \stackrel{1}{\rightarrow} \epsilon$.

\begin{defn}
We call a SCFG, $G$, in SNF form \underline{\em cleaned}
if it contains no nonterminals
$A$ such that $E(A) = 1$, unless  $E(S) =1$ where
$S$ is the start nonterminal of $G$,
in which case $G$
is the \underline{\em trivial} SCFG consisting of a single rule  given by
$S \stackrel{1}{\rightarrow} \epsilon$.
\end{defn}

\noindent The above discussion establishes
the following Lemma.

\begin{lem} \label{cleaning-2-scfg-lemma}
{(Cleaned SCFG: removal of trivial nonterminals)}
Given an input SCFG, $G^{(1)}$ in SNF form,
we can
compute in P-time  a \underline{\em cleaned} SCFG in SNF form,  $G^{(2)}$,
such that for all strings $w \in \Sigma^*$ we have
\[  p_{G^{(1)},w} =  p_{G^{(2)},w} \]
\end{lem}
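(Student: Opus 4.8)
The construction and the P-time bound are essentially already in place. Part (1) of Lemma \ref{e-and-ne-in-ptime-lem} lets us compute in P-time the set $Z := \{A \in V \mid E_{G^{(1)}}(A) = 1\}$; deleting the rules in $R_A$ for each $A \in Z$ and deleting every occurrence of a symbol of $Z$ from the right-hand side of each remaining rule (replacing an emptied right-hand side by $\epsilon$) is then a single linear pass, and if $S \in Z$ we instead output the trivial grammar $S \stackrel{1}{\rightarrow} \epsilon$. First I would record two easy facts. (i) $G^{(2)}$ is in SNF: for a surviving rule (left-hand side $A \notin Z$), a rule $A \to BC$ turns into one of $A\to BC$, $A\to C$, $A\to B$, $A\to\epsilon$, a rule $A\to B$ turns into $A\to B$ or $A\to\epsilon$, and rules $A\to a$ and $A\to\epsilon$ are unchanged, so all four SNF shapes are preserved. (ii) If $S\in Z$ then $E_{G^{(1)}}(S)=p^{S}_{G^{(1)},\epsilon}=1$, hence $p_{G^{(1)},\epsilon}=1$ and (since $\sum_w p^{S}_{G^{(1)},w}=q^S_{G^{(1)}}\le 1$) $p_{G^{(1)},w}=0$ for $w\neq\epsilon$; this is exactly the string distribution of the trivial grammar, which is cleaned by definition, so this case is done.

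For the main case $S\notin Z$, the plan is to prove the stronger statement that $p^A_{G^{(1)},w}=p^A_{G^{(2)},w}$ for \emph{every} nonterminal $A$ of $G^{(2)}$ (equivalently, every $A\in V\setminus Z$) and every $w\in\Sigma^*$. The lemma's equation is the instance $A=S$, and the instance $w=\epsilon$ gives $E_{G^{(2)}}(A)=E_{G^{(1)}}(A)<1$ for all $A\notin Z$, i.e. that $G^{(2)}$ is cleaned. I would prove this via a probability-preserving correspondence on parse trees, using (\ref{prod-parse-tree-prob-eq}) and (\ref{sum-of-parse-probs-eq}). A preliminary observation (here one may assume w.l.o.g. that $G^{(1)}$ has no rules of probability $0$): since $E_{G^{(1)}}(B)=1$ for $B\in Z$ and $\sum_{t\in T^B_{G^{(1)},\epsilon}}P_{G^{(1)},B}(t)=1$ already exhausts the total weight of finite parse trees rooted at $B$, in fact \emph{every} finite parse tree rooted at a nonterminal of $Z$ yields $\epsilon$, and at least one exists. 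Now define $\pi(t)$, for a finite parse tree $t\in T^A_{G^{(1)},w}$, by pruning every maximal subtree whose root is labelled (by its $L_1$) with a nonterminal of $Z$, inserting a single $\epsilon$-leaf below any node whose associated rule thereby becomes $A'\to\epsilon$; since each pruned subtree yields $\epsilon$, the result lies in $T^A_{G^{(2)},w}$, with an index-preserving correspondence between the rule used at each surviving node of $t$ and the rule used at the corresponding node of $\pi(t)$. Conversely, each $t'\in T^A_{G^{(2)},w}$ equals $\pi(t)$ for some $t$ (re-attach an arbitrary finite parse tree rooted at the appropriate nonterminal of $Z$ at each position where a symbol was deleted), and the sets $\pi^{-1}(t')$ partition $T^A_{G^{(1)},w}$.

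The key computation is the fiber sum. Each internal node $z$ of $t'$ carries a rule of $G^{(2)}$ which is the image of a \emph{unique} rule of $G^{(1)}$ of the same probability $p(\cdot)$ (the transformation is an index-preserving map on the rule list, so no two rules are merged), and the $Z$-labelled children deleted at $z$ may be filled in independently by arbitrary finite parse trees rooted at the relevant nonterminals of $Z$; hence, with the product running over internal nodes $z$ of $t'$ and $r_z$ the rule of $G^{(2)}$ used at $z$,
\[
\sum_{t\in\pi^{-1}(t')}P_{G^{(1)},A}(t)\;=\;\prod_{z}\Bigl(p(r_z)\cdot\!\!\prod_{B\text{ deleted at }z}\!\!E_{G^{(1)}}(B)\Bigr)\;=\;\prod_{z}p(r_z)\;=\;P_{G^{(2)},A}(t'),
\]
since $E_{G^{(1)}}(B)=1$ for every deleted child $B$. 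Summing over $t'\in T^A_{G^{(2)},w}$ and applying (\ref{sum-of-parse-probs-eq}) on both sides gives $p^A_{G^{(1)},w}=p^A_{G^{(2)},w}$, completing the proof. I expect the only delicate part to be setting up this parse-tree correspondence and its fibers cleanly — in particular identifying the factor $E_{G^{(1)}}(B)=1$ attached to each deleted $Z$-child, and using the paper's convention that rules form an indexed list so that the rule-level transformation is injective; the SNF check, the P-time bound, and the reduction of "cleaned" to the $w=\epsilon$ instance are all routine.
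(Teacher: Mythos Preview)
Your proposal is correct and follows the same construction as the paper; in fact the paper's ``proof'' is just the sentence ``The above discussion establishes the following Lemma,'' treating the probability preservation as self-evident, whereas you actually spell out the parse-tree correspondence and the fiber-sum computation (and correctly exploit the paper's convention that rules form an indexed list so that no collisions occur). So your argument is essentially the paper's, but with the details filled in.
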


We are now ready for a critical step in our ``transformation''
which involves irrational probabilities.  We
will not actually compute this ``transformation'' exactly in our
algorithms, but rather we will later do so ``approximately'' 
in an appropriate
way.

\begin{lem}{(Conditioned SCFG: removal of epsilon rules)}
\label{lem-cnf-construct-step1}
Given any \underline{cleaned} non-trivial SCFG, $G^{(2)}$, 
in SNF form,
there is an SCFG, $G^{(3)}$ which has the
same terminals and nonterminals as $G^{(2)}$ and which is also in 
SNF form, but which does not contain any $\epsilon$ rules, and such 
that for all non-empty strings $w \in \Sigma^+$ and all nonterminals $A$ we have:

\[p^A_{G^{(2)},w} =  p^A_{G^{(3)},w} * NE_{G^{(2)}}(A) \]
The SCFG $G^{(3)}$ may contain rules with irrational
probabilities, even if $G^{(2)}$ does not.\footnote{In fact,
our proof establishes a more precise relationship between the parse trees
of $G^{(2)}$ and  $G^{(3)}$ and their 
respective probabilities, but since we
will not later use this stronger fact, we refrain from describing the 
precise relationship within the statement of the Lemma.}
\end{lem}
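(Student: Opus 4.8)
The plan is to define $G^{(3)}$ explicitly from $G^{(2)}$ by ``dividing out'' the probability of generating $\epsilon$, to check the result is a proper SCFG in SNF form with no $\epsilon$-rules, and then to prove the stated identity by comparing the two grammars' inside-probability systems as monotone least fixed points. Since $G^{(2)}$ is \emph{cleaned}, every nonterminal $A$ has $E_{G^{(2)}}(A) < 1$, so $NE_{G^{(2)}}(A) > 0$; abbreviate $e_A := E_{G^{(2)}}(A)$ and $n_A := NE_{G^{(2)}}(A) = 1-e_A > 0$ (well defined, though possibly irrational, and not computed here). Recall $(e_A)$ is the least fixed point of the ``empty-string'' subsystem $e_A = \sum_{A \stackrel{p}{\rightarrow} B} p\, e_B + \sum_{A \stackrel{p}{\rightarrow} BC} p\, e_B e_C + \sum_{A \stackrel{p}{\rightarrow} \epsilon} p$. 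I would build $G^{(3)}$ from $G^{(2)}$ by: deleting every $\epsilon$-rule; replacing each terminal rule $A \stackrel{p}{\rightarrow} a$ by $A \stackrel{p/n_A}{\rightarrow} a$; replacing each unary rule $A \stackrel{p}{\rightarrow} B$ by $A \stackrel{p\, n_B/n_A}{\rightarrow} B$; and replacing each binary rule $A \stackrel{p}{\rightarrow} BC$ by the three rules $A \stackrel{p\, n_B n_C/n_A}{\rightarrow} BC$, $A \stackrel{p\, e_B n_C/n_A}{\rightarrow} C$, $A \stackrel{p\, n_B e_C/n_A}{\rightarrow} B$ (coalescing rules with identical sides, or keeping them as a list). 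Morally, $p^A_{G^{(3)},w}$ is the probability that $G^{(2)}$ generates $w$ from $A$ \emph{conditioned on $A$ not deriving $\epsilon$}, and the three replacements of $A\to BC$ record the cases ``neither child, the left child, or the right child derives $\epsilon$''. Using the $e$-system, the properness of $G^{(2)}$, and the identity $n_B n_C + e_B n_C + n_B e_C = 1-e_B e_C$, a short calculation shows the $G^{(3)}$-probabilities of the rules at $A$ sum to $\frac{1}{n_A}\big(\sum_{A\to a} p + \sum_{A\to B} p\, n_B + \sum_{A\to BC} p(1-e_B e_C)\big) = \frac{n_A}{n_A} = 1$, so $G^{(3)}$ is a proper SCFG; it is manifestly in SNF form, has the same nonterminals and terminals as $G^{(2)}$, and has no $\epsilon$-rule.

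For the probability identity, fix a non-empty $w = w_1\cdots w_m$ and, for $0 \le i \le j \le m$, write $w[i{:}j] = w_{i+1}\cdots w_j$ (so $w[i{:}i] = \epsilon$). The arrays $y^A_{i,j} := p^A_{G^{(2)},w[i:j]}$ and $z^A_{i,j} := p^A_{G^{(3)},w[i:j]}$ are the least fixed points of the finite monotone systems obtained by unrolling each grammar over the substrings of $w$ (the standard inside recursion; the least-fixed-point property follows from the parse-tree characterization of string probabilities together with monotone fixed-point theory, cf. \cite{rmc}). Note $y^A_{i,i} = e_A$ and, since $G^{(3)}$ has no $\epsilon$-rule, $z^A_{i,i} = 0$. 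I would prove $y^A_{i,j} = n_A z^A_{i,j}$ for all $i < j$ by two inequalities: (i) the array $(y^A_{i,j}/n_A)_{i<j}$ satisfies the $G^{(3)}$-system, hence dominates its least fixed point, giving $y^A_{i,j}/n_A \ge z^A_{i,j}$; and (ii) the array equal to $n_A z^A_{i,j}$ for $i<j$ and to $e_A$ for $i=j$ satisfies the $G^{(2)}$-system, hence dominates $y$, giving $n_A z^A_{i,j} \ge y^A_{i,j}$. Both verifications are the same bookkeeping: in a binary rule $A\to BC$ the convolution $\sum_{k=i}^{j}(\cdot)^B_{i,k}(\cdot)^C_{k,j}$ has its $k=i$ and $k=j$ endpoints equal to $e_B(\cdot)^C_{i,j}$ and $e_C(\cdot)^B_{i,j}$ — exactly the contributions of the two induced unary rules $A\to C$ and $A\to B$ of $G^{(3)}$ once the $n$-factors are accounted for — while the interior $i<k<j$ terms match the induced binary rule, and the $\epsilon$-rules of $G^{(2)}$ affect only the $i=j$ equations, which reproduce the $e$-system. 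Setting $i=0$, $j=m$ yields $p^A_{G^{(2)},w} = n_A\, p^A_{G^{(3)},w} = p^A_{G^{(3)},w}\cdot NE_{G^{(2)}}(A)$, and irrationality of some $e_A$ (e.g. $e_A = 1-1/\sqrt 2$ for the grammar displayed earlier) shows $G^{(3)}$ may genuinely need irrational rule probabilities. Equivalently, one can argue directly with parse trees via a weight-preserving ``pruning'' map that deletes every maximal $\epsilon$-deriving subtree of a $G^{(2)}$-parse tree and relabels the binary node above it as a unary node; the fixed-point computation is the clean algebraic shadow of this correspondence, the factor $E(\cdot)$ arising from summing probabilities over the pruned subtrees.

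The main obstacle is the binary case: a binary rule of $G^{(2)}$ can degenerate into a \emph{unary} step of $G^{(3)}$ whenever one of its two children derives $\epsilon$, and assigning those induced unary rules the correct probabilities — with the right $e_B, e_C, n_B, n_C$ factors — is precisely what makes the $G^{(3)}$-system close up and remain a valid (proper) SCFG, and what makes the two inequalities in the fixed-point argument come out to exactly the same computation. The division by $n_A$ is legitimate only because $G^{(2)}$ was first cleaned so that $n_A > 0$, which is exactly the purpose of that preprocessing step.
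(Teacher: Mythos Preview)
Your construction of $G^{(3)}$ and your properness calculation are exactly the same as the paper's. The difference is in how the identity $p^A_{G^{(2)},w} = n_A\, p^A_{G^{(3)},w}$ is proved. The paper works directly at the level of parse trees: it defines the pruning map $g_{w,A}$ you sketch at the end, checks it is well-defined and onto, and then proves by structural induction on $t'$ (a several-case argument) that $P_{G^{(3)},A}(t')\cdot n_A = \sum_{t\in g_{w,A}^{-1}(t')} P_{G^{(2)},A}(t)$. Summing over $t'$ gives the identity. Your main argument instead compares the two inside-probability systems as monotone least fixed points: you show that $(y^A_{i,j}/n_A)_{i<j}$ is a fixed point of the $G^{(3)}$-system and that $(n_A z^A_{i,j})_{i<j}$ together with $(e_A)_{i=j}$ is a fixed point of the $G^{(2)}$-system, and the two inequalities collapse to an equality. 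The bookkeeping you describe is correct; in particular the endpoint terms $k=i$ and $k=j$ of the binary convolution supply exactly the two induced unary rules of $G^{(3)}$, and the $i=j$ equations reproduce the $e$-system, so both substituted arrays are genuine fixed points (not merely pre- or post-fixed points). What your route buys is brevity and an argument that avoids the tree-level case analysis; what the paper's route buys is the stronger parse-tree-by-parse-tree correspondence alluded to in the footnote. Either approach is fine here.
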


According to this Lemma,
for any cleaned non-trivial SCFG, $G^{(2)}$, and any non-empty string $w$,
\[p^A_{G^{(3)},w} = p^A_{G^{(2)},w}/NE_{G^{(2)}}(A)\]
In other words, the probability 
of generating the non-empty string $w$ 
in $G^{(3)}$ starting at nonterminal $A$,
is precisely the
{\em conditional probability}  that $G^{(2)}$ generates 
the string $w$ starting at nonterminal 
$A$, conditioned on the event that $G^{(2)}$ does
not generate the empty string starting at $A$.
This is why we call $G^{(3)}$ the {\em ``conditioned SCFG''} 
for $G^{(2)}$.\footnote{Let
us mention here that our proof of this Lemma is related in spirit to, 
but is quite different from, our 
proof in \cite{rmc-model-checking} of a {\em conditioned summary chain}
construction for Recursive Markov Chains.}

\begin{proof}
Given $G^{(2)} = (V,\Sigma,R^{(2)},S)$,
we define the new SCFG, $G^{(3)} = (V,\Sigma,R^{(3)},S)$,
as follows. Below, whenever we refer to E(A) or NE(A)
for any nonterminal $A$, these are with respect to the SCFG
$G^{(2)}$, i.e., $E(A) := E_{G^{(2)}}(A)$ and $NE(A) := NE_{G^{(2)}}(A)$:

\begin{enumerate} 
\item  For each rule $r$ of the form $A \stackrel{p}{\rightarrow} a$ 
in $R^{(2)}$,
where $a \in \Sigma$ is a
single terminal symbol,  we put into $R^{(3)}$ the following rule: 

{\Large
\vspace*{-0.1in}
\begin{equation*} \begin{CD}
r': \ A  @>\frac{p}{NE(A)}>> a
\end{CD}
\end{equation*}}

\item  For each rule $r$ of the form $A \stackrel{p}{\rightarrow} B$ 
in $R^{(2)}$,
where $B \in V$ is a
single nonterminal symbol,  we put into $R^{(3)}$ the following rule:

{\Large
\begin{equation*} \begin{CD}
r': \ A  @>\frac{p * NE(B)}{NE(A)}>> B
\end{CD}
\end{equation*}
}

\item For each rule $r$ of the form $A \stackrel{p}{\rightarrow} B C$
in $R^{(2)}$,  where $B, C \in V$ are nonterminals, we put all
of the following three rules into $R^{(3)}$:

{\Large
\begin{equation*} 
\begin{CD}
r'(1): \ A  @>\frac{p * NE(B) * NE(C)}{NE(A)}>> B C
\end{CD}
\end{equation*}
}

{\Large
\begin{equation*} 
\begin{CD}
r'(2):  \ A  @>\frac{p * NE(B) * E(C)}{NE(A)}>> B 
\end{CD}
\end{equation*}}

{\Large
\begin{equation*}
 \begin{CD}
r'(3): \ A  @>\frac{p * E(B) * NE(C)}{NE(A)}>> C
\end{CD}
\end{equation*}}

We do not put any other rules into $R^{(3)}$.  This 
completes the definition of $G^{(3)}$.

Notice that it is possible that the rule probability for
some of these rules will be $0$, because $E(B)$ and
$E(C)$ can be $0$.  In such a case, those rules have probability
$0$, meaning we can simply remove them from $R^{(3)}$.
Notice also that the rule probabilities for rules in $R^{(3)}$ are
all well-defined, because $G^{(2)}$ is a cleaned SCFG, and
thus $NE(A) >  0$ for all nonterminals $A$.

\end{enumerate}

\begin{claim}
If $G^{(2)}$ is a proper SCFG, then 
so is $G^{(3)}$.  
\end{claim}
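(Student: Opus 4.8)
The plan is to verify properness of $G^{(3)}$ directly, one nonterminal at a time: for each $A \in V$ one must show that the probabilities of the rules of $R^{(3)}$ with left-hand side $A$ sum to exactly $1$. First I would fix $A$ and split $R^{(2)}_A$ according to the four SNF rule shapes — terminal rules $A \stackrel{p}{\rightarrow} a$, empty rules $A \stackrel{p}{\rightarrow} \epsilon$, unary rules $A \stackrel{p}{\rightarrow} B$, and binary rules $A \stackrel{p}{\rightarrow} BC$. Reading off the construction of $G^{(3)}$ (each terminal rule produces one rule of probability $p/NE(A)$, each unary rule one rule of probability $p\,NE(B)/NE(A)$, each binary rule the three rules $r'(1),r'(2),r'(3)$, and the $\epsilon$-rules of $R^{(2)}_A$ produce nothing), the total probability mass on rules of $R^{(3)}$ with left-hand side $A$ equals $S_A/NE(A)$, where
\[
S_A \;:=\; \sum_{A \to a} p \;+\; \sum_{A \to B} p\, NE(B) \;+\; \sum_{A \to BC} p\bigl(NE(B)NE(C) + NE(B)E(C) + E(B)NE(C)\bigr),
\]
the sums ranging over the rules of $R^{(2)}_A$ of the indicated shape. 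Every denominator is legitimate because $G^{(2)}$ is cleaned, so $NE(D) > 0$ for each nonterminal $D$; degenerate images of probability $0$ (arising when some $E(\cdot)$ vanishes) may simply be dropped and do not affect $S_A$.

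The key algebraic observation is the identity
\[
NE(B)NE(C) + NE(B)E(C) + E(B)NE(C) \;=\; 1 - E(B)E(C),
\]
which follows by expanding $1 = \bigl(NE(B)+E(B)\bigr)\bigl(NE(C)+E(C)\bigr)$ and cancelling the $E(B)E(C)$ term. Substituting this into $S_A$ and writing $NE(B) = 1 - E(B)$ gives
\[
S_A \;=\; \Bigl(\textstyle\sum_{A \to a} p + \sum_{A \to B} p + \sum_{A \to BC} p\Bigr) \;-\; \Bigl(\textstyle\sum_{A \to B} p\,E(B) + \sum_{A \to BC} p\,E(B)E(C)\Bigr).
\]

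To finish I would invoke two facts about $G^{(2)}$. Since $G^{(2)}$ is proper, the probabilities of all rules of $R^{(2)}_A$ sum to $1$, so the first parenthesized quantity above equals $1 - \sum_{A \to \epsilon} p$. And the probability $E(A) = p^{A}_{G^{(2)},\epsilon}$ satisfies, by a one-step analysis of the derivation process, the recursion $E(A) = \sum_{A \to \epsilon} p + \sum_{A \to B} p\,E(B) + \sum_{A \to BC} p\,E(B)E(C)$ (a terminal rule never derives $\epsilon$, an $\epsilon$-rule always does, a unary rule derives $\epsilon$ with probability $E(B)$, and a binary rule with probability $E(B)E(C)$, these two events being independent), so the second parenthesized quantity equals $E(A) - \sum_{A \to \epsilon} p$. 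Hence $S_A = \bigl(1 - \sum_{A \to \epsilon} p\bigr) - \bigl(E(A) - \sum_{A \to \epsilon} p\bigr) = 1 - E(A) = NE(A)$, and the mass on rules of $R^{(3)}$ with left-hand side $A$ is $S_A/NE(A) = 1$. Since $A$ was arbitrary, $G^{(3)}$ is proper (and, incidentally, $S_A = NE(A) > 0$ shows each $A$ does retain at least one positive-probability rule); as the probabilities of $R^{(3)}_A$ are nonnegative and sum to $1$, each lies in $[0,1]$, so $G^{(3)}$ is a genuine SCFG.

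I do not expect any real obstacle: the argument is a short computation. The only points that want care are bookkeeping — grouping the three images $r'(1),r'(2),r'(3)$ of each binary rule together when forming $S_A$, and discarding degenerate zero-probability images — together with the recursion for $E(A)$, which is standard but perhaps worth stating as a separate remark since it (and its analogue for non-cleaned subgrammars) is reused implicitly elsewhere in this development.
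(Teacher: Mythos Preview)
Your proof is correct and follows essentially the same approach as the paper: both show that the numerators of the $G^{(3)}$-rule probabilities at $A$ sum to $NE(A)$, then divide by the common denominator $NE(A)$. The only cosmetic difference is that the paper phrases this via the one-step recursion for $NE(A)$ (each numerator is $p(r)$ times the probability that $RHS(r)$ does \emph{not} derive $\epsilon$), whereas you route through the complementary recursion for $E(A)$ and the identity $NE(B)NE(C)+NE(B)E(C)+E(B)NE(C)=1-E(B)E(C)$; these are equivalent presentations of the same computation.
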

\begin{proof}
 To see why this claim holds, observe that for
every nonterminal $A$, 
\[NE(A) = \sum_{r \in R_A} 
p(r) * (1-p^{RHS(r)}_{G^{(2)},\epsilon}) \]
where the sum is over all rules $r$ associated with nonterminal $A$
in $G^{(2)}$.
In other words, the probability that $A$ does not generate the
empty string in $G^{(2)}$
is equal
to the weighted sum of the probabilities that the RHSs $\gamma$ of
rules associated with $A$ do not generate the empty string.
But then note that:

\begin{enumerate} 
\item For a rule of the form $A \stackrel{p}{\rightarrow} a$,
the probability that the RHS $a$ doesn't generate the empty string is 1.

\item For a rule of the form $A \stackrel{p}{\rightarrow} B$,
the probability that the RHS $B$ does not generate the empty
string is $NE(B)$.

\item For a rule of the form $A \stackrel{p}{\rightarrow} B C$,
the probability that the RHS $BC$ does not generate the
empty string is:  $NE(B) * NE(C) + E(B) * NE(C) + NE(B) * E(C)$.
This is because we need at least one of $B$ or $C$ to not
generate the empty string, and whether each of them does so
or not is an independent event.
\end{enumerate}

From this we see that if we sum the probabilities
of the rules associated with $A$ in $G^{(3)}$, 
assuming that $G^{(2)}$ is proper, the
numerators of these sums will sum up to $NE(A)$ and thus since all of them
have denominator $NE(A)$, the SCFG $G^{(3)}$ is also proper.
\end{proof}

\noindent We next have the key claim:

\begin{claim}\label{condprob-claim}
For any nonterminal $A$, and 
for all non-empty strings $w \in \Sigma^+$, we have:

\[p^A_{G^{(2)},w} =  p^A_{G^{(3)},w} * NE_{G^{(2)}}(A) \]

\end{claim}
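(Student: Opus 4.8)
The plan is to construct an explicit, essentially probability-preserving correspondence between the finite parse trees of $G^{(2)}$ and those of $G^{(3)}$, obtained by deleting exactly those subtrees that derive the empty string $\epsilon$; the total probability lost in this deletion is precisely what accounts for the factor $NE_{G^{(2)}}(A)$.

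Fix a nonterminal $A$ and a non-empty string $w \in \Sigma^+$. For a parse tree $t \in T^A_{G^{(2)},w}$, I would call an internal node \emph{productive} if the subtree of $t$ rooted at it derives a non-empty string, and \emph{null} otherwise. The root is productive since it derives $w \neq \epsilon$, and the set of productive nodes is closed under taking parents, so the productive nodes form a connected subtree of $t$ containing the root. A productive node carrying a rule $B \rightarrow a$ automatically has a single leaf child; a productive node carrying a unary rule $B \rightarrow C$ has its unique child productive; and a productive node carrying a binary rule $B \rightarrow CD$, deriving some $w_1 w_2 \neq \epsilon$, has $w_1 \neq \epsilon$ or $w_2 \neq \epsilon$, hence either two productive children or exactly one productive and one null child, never two null children. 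I would then define the \emph{skeleton} $\sigma(t)$ of $t$: delete all null subtrees, and relabel each surviving (productive) node according to the construction of $G^{(3)}$ --- a node carrying $B\rightarrow a$ or $B\rightarrow C$ gets the rule $r'$; a node carrying $B\rightarrow CD$ with both children productive gets $r'(1)$; a node carrying $B\rightarrow CD$ whose right (resp.\ left) child is null gets $r'(2)$ (resp.\ $r'(3)$), keeping only its surviving child. A structural induction then shows that $\sigma(t)$ is a legal parse tree in $T^A_{G^{(3)},w}$ using only rules of positive $G^{(3)}$-probability: the required $NE_{G^{(2)}}(\cdot)$ factors are positive because $G^{(2)}$ is cleaned, and the required $E_{G^{(2)}}(\cdot)$ factors are positive because the deleted null subtrees themselves witness that the relevant nonterminals derive $\epsilon$.

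Next I would verify that $\sigma$ maps onto $T^A_{G^{(3)},w}$: given $t'$, one builds a preimage by reattaching, at each node carrying $r'(2)$ (resp.\ $r'(3)$), an arbitrary finite parse tree deriving $\epsilon$ rooted at the deleted nonterminal $D$ (resp.\ $C$), which exists since positivity of that rule's probability forces $E_{G^{(2)}}(D) > 0$ (resp.\ $E_{G^{(2)}}(C) > 0$). The heart of the argument is then to fix $t' \in T^A_{G^{(3)},w}$ and evaluate $\sum_{t:\,\sigma(t)=t'} P_{G^{(2)},A}(t)$. The fibre $\sigma^{-1}(t')$ is parametrized by independent choices of an $\epsilon$-parse tree at each $r'(2)$- and $r'(3)$-node of $t'$, so by (\ref{prod-parse-tree-prob-eq}) and (\ref{sum-of-parse-probs-eq}) the sum factors over the internal nodes $z'$ of $t'$, where an $r'(2)$-node arising from a rule $B\rightarrow CD$ contributes $p(r)\,E_{G^{(2)}}(D)$, an $r'(3)$-node contributes $p(r)\,E_{G^{(2)}}(C)$, and every other node contributes just $p(r)$. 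Dividing this term by term by $P_{G^{(3)},A}(t') = \prod_{z'} p_{G^{(3)}}(L_2(z'))$ and substituting the $G^{(3)}$-rule probabilities, the $z'$-factor of the quotient is $NE_{G^{(2)}}(L_1(z'))$ divided by the product of $NE_{G^{(2)}}(L_1(\cdot))$ over the internal children of $z'$ in $t'$ (so just $NE_{G^{(2)}}(L_1(z'))$ when $z'$ carries $B\rightarrow a$, which has only a leaf child). Since every internal node of $t'$ other than the root is an internal child of exactly one internal node, this product telescopes to $NE_{G^{(2)}}(L_1(\text{root})) = NE_{G^{(2)}}(A)$, so $\sum_{t:\,\sigma(t)=t'} P_{G^{(2)},A}(t) = NE_{G^{(2)}}(A)\cdot P_{G^{(3)},A}(t')$. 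Summing over $t' \in T^A_{G^{(3)},w}$ and applying (\ref{sum-of-parse-probs-eq}) once more gives $p^A_{G^{(2)},w} = NE_{G^{(2)}}(A)\cdot p^A_{G^{(3)},w}$, which is the Claim.

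I expect the main obstacle to be the bookkeeping in the telescoping step: one must be careful that the ``internal children'' counted in the denominator are exactly the nodes of $t'$ lying immediately below $z'$ other than terminal leaves (and that deleted null children do not appear), and one must verify that the $NE_{G^{(2)}}(\cdot)$ factors introduced by the denominators of the $G^{(3)}$-rule probabilities cancel precisely against the $NE_{G^{(2)}}(\cdot)$ factors appearing one level up for each internal child, while the $E_{G^{(2)}}(\cdot)$ factors from the deleted $\epsilon$-subtrees match the $E_{G^{(2)}}(\cdot)$ numerators in the $r'(2)$- and $r'(3)$-rule probabilities. Edge cases involving $G^{(3)}$-rules of probability $0$ are harmless, since then the fibre $\sigma^{-1}(t')$ is empty and $P_{G^{(3)},A}(t') = 0$, so both sides of the fibrewise identity vanish; and the properness/consistency of $G^{(3)}$, established in the preceding Claim, is not needed here.
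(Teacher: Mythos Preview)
Your proposal is correct and follows essentially the same approach as the paper: the paper defines the same pruning map (called $g_{w,A}$, removing $\epsilon$-maximal subtrees), verifies it is well-defined and onto in the same way, and then establishes the fibrewise identity $\sum_{t\in g_{w,A}^{-1}(t')}P_{G^{(2)},A}(t)=NE_{G^{(2)}}(A)\cdot P_{G^{(3)},A}(t')$. The only packaging difference is that the paper proves this last identity by an explicit induction on the depth of $t'$ (its Claim~\ref{3-to-2-claim}), whereas you express the node-by-node ratio as $NE(L_1(z'))/\prod_{\text{internal children }c}NE(L_1(c))$ and let it telescope; these are two presentations of the same cancellation, and your telescoping formulation is slightly more concise.
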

\begin{proof}

We shall prove this key claim as follows.
For every non-empty string $w \in \Sigma^+$, and
every nonterminal $A$
we will define a mapping, $g_{w,A}$, from 
finite parse trees $t \in T^A_{G^{(2)},w}$ 
to finite parse trees $g_{w,A}(t) \in T^A_{G^{(3)},w}$.
We shall establish that the mapping $g_{w,A}$ has the following properties:

\begin{itemize}
 \item[(1.)] The mapping $g_{w,A}$  is well-defined, meaning that 
if $T^A_{G^{(2)},w} \neq \emptyset$, then for  
any parse tree $t \in T^A_{G^{(2)},w}$,
we have $g_{w,A}(t) \in T^A_{G^{(3)},w}$.

\item[(2.)] The mapping $g_{w,A}$ is onto,  meaning if 
$T^A_{G^{(3)},w} \neq \emptyset$,
then for any tree $t' \in T^A_{G^{(3)},w}$ 
we have  
$g_{w,A}^{-1}(t') \neq \emptyset$.

\item[(3.)]  Finally, the following equality 
holds for all parse trees $t' \in T^A_{G^{(3)},w}$:

\begin{equation}\label{map-of-parse-prob-eq}
 P_{G^{(3)},A}(t') * NE_{G^{(2)}}(A) 
= \sum_{t \in g_{w,A}^{-1}(t')}  P_{G^{(2)},A}(t)    
\end{equation}

In other words, the probability of parse tree $t'$ of $w$ rooted at $A$ in 
$G^{(3)}$
times the probability that 
the nonterminal $A$ does not generate the empty string $\epsilon$ in
$G^{(2)}$, 
is the same as the sum of the probabilities of all parse trees $t$ 
of $w$ in $G^{(2)}$ rooted at $A$
that are mapped to $t'$ by $g_{w,A}$. 
\end{itemize}

\noindent Once we establish the above three properties for the mapping
$g_{w,A}$, which we shall define shortly, the Claim \ref{condprob-claim} 
follows basically immediately,
because:

\begin{eqnarray*}
p^A_{G^{(2)},w} & = &  \sum_{t \in T^A_{G^{(2)},w}}  
P_{G^{(2)},A}(t)   \hspace{0.2in}  \mbox{(by equation (\ref{sum-of-parse-probs-eq}))}\\
& = & \sum_{t' \in T^A_{G^{(3)},w}}  \sum_{t \in g^{-1}_{w,A}(t')} 
P_{G^{(2)},A}(t) \\
& = &  \sum_{t' \in T^A_{G^{(3)},w}}
P_{G^{(3)},A}(t')  * NE_{G^{(2)}}(A) \hspace{0.2in}
\mbox{(by equation (\ref{map-of-parse-prob-eq}))}\\
& = & (\sum_{t' \in T^A_{G^{(3)},w}}
P_{G^{(3)},A}(t') )  * NE_{G^{(2)}}(A)\\
& = & p^A_{G^{(3)},w} * NE_{G^{(2)}}(A)  \hspace{0.2in}
\mbox{(by equation (\ref{sum-of-parse-probs-eq}))}
\end{eqnarray*}

It now remains to define $g_{w,A}$, and then to establish properties (1.)-(3.).

Given a parse tree $t \in T^A_{G^{(2)},w}$,  we define 
$g_{w,A}(t)$ via a simple kind of "pruning" of $t$, as follows.
Let us call a subtree $t^*$ of $t$ a {\em $\epsilon$-maximal subtree}
if, firstly, all
leaves of $t^*$ are labeled by $\epsilon$, 
and secondly, 
either $t^*$ is $t$, or else it is not the case that
all leaves of the subtree 
rooted at the immediate parent of the root of $t^*$, within $t$,
are also $\epsilon$.
So, $\epsilon$-maximal subtrees are maximal sub-parse-trees of $t$ that
generate the empty string.

We shall define $g_{w,A}(t)$ to be the ``pruning'' of $t$ obtained by 
removing all $\epsilon$-maximal subtrees of $t$.
We do not replace the removed subtrees by anything. 
To be precise, when we remove one of the ordered children
of an internal node of $t$, and the subtree rooted
at that child, we retain the relative ordering of the other
children with respect to each other.

Firstly, note that $g_{w,A}(t)$ will indeed retain the root node
of $t$, labeled $A$.  This is because $t$ is a parse tree of the
non-empty string $w$, and thus it can not be the case that
all leaves of $t$ are labeled by $\epsilon$.   

Our definition of $g_{w,A}(t)$ is not yet complete.
In more detail, we have to define all labels, including rule
labels, of all nodes in $g_{w,A}(t)$.  We do so as follows.

To do so, first note that 
for every node $z$ of $t$ that is retained in $g_{w,A}(t)$,
note that $z$ is a leaf in $g_{w,A}(t)$ if and only if it was already
a leaf in $t$.  

For every leaf node $z$ of $g_{w,A}(t)$ we retain exactly the
same label, $L(z) = a \in \Sigma$,  which was its label in $t$.
 
For every internal node $z$ of $g_{w,A}(t)$, 
we retain exactly the same nonterminal label, $L_1(z) = B$, 
labeling the corresponding node $z$ in $t$.
Furthermore, if in $t$ the node $z$ was labeled by a rule $r= L_2(z)$,
then we do as follows:

\begin{enumerate}
\item If the rule $r$ is of the form $B \stackrel{p}{\rightarrow} a$,
for some terminal symbol $a \in \Sigma$,
then in $g_{w,A}(t)$ we let $L_2(z)$ be the corresponding rule $r'$ given by 
$B \stackrel{p/NE(B)}{\rightarrow} a$.

\item If the rule $r$ is of the form $A' \stackrel{p}{\rightarrow} B'$,
for some nonterminal symbol $B'$,
then in $g_{w,A}(t)$ we let $L_2(z)$ be the corresponding rule $r'$ 
specified by 
$B \stackrel{p*NE(C)/NE(B)}{\rightarrow} C$.

\item If the rule $r$ is of the form $B \stackrel{p}{\rightarrow} C D$
for nonterminal symbols $C$ and $D$, then 
in $g_{w,A}(t)$ we shall assign $L_2(z)$
one of the three corresponding rules $r'(1)$, $r'(2)$, or $r'(3)$,
based on the following:

If in $t$ neither child of $z$ was an $\epsilon$-maximal subtree,
then in $g_{w,A}(t)$ we let $L_2(z) := r'(1)$.
If in $t$ the right child of $z$ was an $\epsilon$-maximal subtree,
then in $g_{w,A}(t)$ we let $L_2(z) := r'(2)$.
If in $t$ the left child of $z$ was an $\epsilon$-maximal subtree,
then in $g_{w,A}(t)$ we let $L_2(z) := r'(3)$.

\end{enumerate}

The reader can easily confirm that these are the only possibilities,
since otherwise the node $z$ would have been ``pruned out'', by
the definition of the tree defining $g_{w,A}(t)$.
So this mapping of rules to nodes of $g_{w,A}(t)$ is well-defined,
i.e., $g_{w,A}(t) \in T^A_{G^{(3)},w}$.
Indeed, consider the parent node, $z'$, of the root of an $\epsilon$-maximal
subtree, $t^*$, of $t$, and suppose that node $z'$ is labeled by
a nonterminal
$A'$.  Note that the rule $r$ associated
with the parent node, $z'$, in $t$ must be of the form
$A' \stackrel{p}{\rightarrow} B' C'$, where $B'$ and $C'$
are non-terminals, and one of them, say $B'$ w.l.o.g., 
is the label of the root of the
$\epsilon$-maximal subtree $t^*$.  
This is because if the rule associated with $z'$ was a {\em linear} 
rule of the form $A' \stackrel{p}{\rightarrow} B'$, then $t^*$ would
not be an $\epsilon$-maximal subtree in $t$.
Thus if $t^*$ is rooted at the left child, labeled $B'$,
of node $z'$, by construction, the rules $R^{(3)}$ of $G^{(3)}$
will include the rule $r'(3)$ given by:
{\large
\begin{equation*} 
\begin{CD}
A'  @>\frac{p * E(B') * NE(C')}{NE(A')}>> C'
\end{CD}
\end{equation*}}

\noindent and we have defined the parse tree $g_{w,A}(t)$ so that it 
uses this rule
of $G^{(3)}$ at the node $z'$.
In other words, we have let $L_2(z') =r'(3)$.
Similarly, if $t^*$ is rooted at the right child of $z'$
labeled by $C'$, we have made $g_{w,A}(t)$ use the rule 
{\large
\begin{equation*} 
\begin{CD}
A'  @>\frac{p * NE(B') * E(C')}{NE(A')}>> B'
\end{CD}
\end{equation*}}
which exists by definition of $R^{(3)}$.
Thus $g_{w,A}(t)$, as defined, is a parse tree of $G^{(3)}$ and
is clearly a parse tree of the string $w$.
We have thus established (1.), i.e., 
that indeed $g_{w,A}(t) \in T^A_{G^{(3)},w}$.

Next we establish $(2.)$, that the mapping $g_{w,A}$ is onto.  
Suppose $T^A_{G^{(3)},w} \neq \emptyset$, and
suppose that $t' \in T^A_{G^{(3)},w}$.
If $t'$ does not contain any internal node labeled with
a rule of the types $r'(2)$ or $r'(3)$, then
it is easy to see that exactly the same parse tree, where
we replace every rule label $r'$ or $r'(1)$ by their corresponding version
$r$ in $G^{(2)}$, is indeed a parse tree $t \in T^A_{G^{(2)},w}$
such that $g_{w,A}(t) = t'$.

If on the other hand $t'$ does have some internal node labeled
with a rule of the type $r'(2)$ or $r'(3)$, then without
loss of generality (by symmetric arguments) 
suppose it is a rule of the form 
$r'(2)$ given by $A' \stackrel{p*NE(B')*E(C')/NE(A')}{\rightarrow} B'$.
Note that  
it must be the case that $E(C') > 0$ (otherwise $t'$ has
probability $0$ and thus is not a parse tree in $G^{(3)}$), and thus
there is some parse tree rooted at $C'$ which generates the empty
string $\epsilon$.

Thus, for all such rules of the form $r'(2)$ 
labeling a node $z$ of $t'$, we will be able to convert the rule
at the corresponding node $z$ of a parse tree $t$ of $G^{(2)}$ 
to the original rule $r$ of the form $A' \stackrel{p}{\rightarrow} B' C'$
from which $r'(2)$ was generated, 
and then we can add {\em any} sub-parse tree for the empty string $\epsilon$,
rooted at the child of $z$ in $t$ labeled by nonterminal $C'$.
We can also obviously do the symmetric thing for nodes labeled by
rules $r'(3)$ in $t'$.
In this way, we will have constructed a tree $t \in T^A_{G^{(2)},w}$ such
that $g_{w,A}(t) = t'$
This establishes property $(2.)$, namely that $g_{w,A}$ is onto.

Finally, we have to establish the key property $(3.)$,
namely that for every parse tree $t' \in T^{A}_{G^{(3)},w}$,
we have:
\[ P_{G^{(3)},A}(t') * NE_{G^{(2)}}(A) = \sum_{t \in g^{-1}_{w,A}(t')} P_{g^{(2)},A}(t) \]

The key to establishing this equality is the following inductive claim.
Let us define a mapping $h$ from rules of $G^{(3)}$ back to their
``corresponding'' rule in $G^{(2)}$.  Specifically, for every rule
$r'$ of $G^{(3)}$ we see easily that by our definition of $R^{(3)}$ 
this rule was generated directly from a ``corresponding'' rule $r$ 
in $R^{(2)}$. We simply define $h(r') := r$.

We extend this mapping $h$ to a tree $t' \in T^{A}_{G^{(3)},w}$,
by defining $h(t')$ to be the {\em multi-set} of rules in $R^{(2)}$
that arise by mapping back the rule label $L_2(z)$ of every internal
node in $t'$ to its corresponding rule $h(L_2(z))$ in $R^{(2)}$.
It is important that $h(t')$ is a multi-set, i.e., that it retains
$k$ copies of the same rule $r$ if there are $k$ nodes $z$ of $t'$ 
for which $h(L_2(z)) = r$. 

We need some more definitions.  For a tree $t' \in T^{A}_{G^{(3)},w}$,
let us define two other multi-sets of rules in $G^{(2)}$,  namely,
$Z_{t',2}$ and $Z_{t',3}$,  where $Z_{t',2}$ is a multi-set of rules
in $R^{(2)}$
containing one copy of a rule $r \in R^{(2)}$ for every 
instance of the corresponding rule $r'(2)$
that labels some node $z$ of $t'$.
Similarly, $Z_{t',3}$ is a multi-set containing one copy of
a rule $r \in R^{(2)}$ for every instance of the corresponding
rule $r'(3)$ that labels some node $z$ of $t'$.
Notice that all rules in the multi-sets $Z_{t',2}$ and $Z_{t',3}$ are
of the form $A' \stackrel{p}{\rightarrow} B' C'$.

Let us define the following multi-sets corresponding to $Z_{t',2}$
and $Z_{t',3}$.  Namely, let $K_{t',2}$ be the multi-set of
nonterminals in $G^{(2)}$ defined by taking every rule instance 
$r \in Z_{t',2}$ and if $r$ has the form  $A' \stackrel{p}{\rightarrow} B' C'$,
then adding a copy of $C'$ to $K_{t',2}$.
Likewise let $K_{t',3}$ be the multi-set of
nonterminals in $G^{(2)}$ defined by taking every rule instance 
$r \in Z_{t',3}$ and if $r$ has the form  $A' \stackrel{p}{\rightarrow} B' C'$,
then adding a copy of $B'$ to $K_{t',3}$.

We are now ready to state and prove a key claim.

\begin{claim}\label{3-to-2-claim}
For every parse tree $t' \in T^A_{G^{(3)},w}$, we have

{\large
\begin{equation}\label{parse-in-3-claim-eq}
 P_{G^{(3)},w}(t') =   
\frac{(\prod_{r \in h(t')} p(r)) * (\prod_{C' \in K_{t',2}} E(C')) * 
(\prod_{B' \in K_{t',3}} E(B'))}{ NE(A)}
\end{equation} }

Note that the products are indexed over multi-sets, not sets.
\end{claim}
\begin{proof}
We prove this claim by induction on the depth of
the parse tree $t'$.

For the base case, if the parse tree $t'$ has depth 1, then
it has only one internal node which is the root, and 
that root is labeled by a rule $r'$
of the form:

\vspace*{-0.1in}

{\large
\begin{equation*} \begin{CD}
\ A  @>\frac{p(r)}{NE(A)}>> a
\end{CD}
\end{equation*}}

\vspace*{0.05in}

\noindent Thus $t'$ is a parse tree of the string $w= a$, and
$P_{G^{(3)},w}(t') =  p/NE(A)$.
But since $K_{t',2} = K_{t',3} = \emptyset$, we see that 
the right hand side of equation (\ref{parse-in-3-claim-eq})
is also equal to $p(r)/NE(A)$.

Inductively, suppose that $t'$ has depth $\geq 2$.  
There are different cases to consider, based on the rule labeling
the root of $t'$. 

\begin{enumerate}

\item Suppose that the root $z$ of $t'$ is labeled by a rule
$L_2(z) = r'$ which has the form:

\vspace*{-0.1in}

{\large
\begin{equation*} \begin{CD}
\ A  @>\frac{p(r) * NE(B)}{NE(A)}>> B
\end{CD}
\end{equation*}}

\vspace*{0.05in}

\noindent and that 
$h(r') = r \in R^{(2)}$, where rule $r$
has the form $A \stackrel{p(r)}{\longrightarrow} B$.

Thus the root $z$ of $t'$ has only one child node in $t'$, call it $z^*$.
Let $t^* \in T^{B}_{G^{(3)},w}$ denote the parse subtree of $t'$ 
rooted at $z^*$.
 We know that $L_1(z') = B$, and by
inductive assumption we know that 
\[ P_{G^{(3)},w}(t^*) =   
\frac{(\prod_{r \in h(t^*)} p(r)) * (\prod_{C' \in K_{t^*,2}} E(C')) * 
(\prod_{B' \in K_{t^*,3}} E(B'))}{ NE(B)} \]

But note that $P_{G^{(3)},w}(t') =  P_{G^{(3)},w}(t^*) * p(r')$, 
and by multiplying 
and canceling, we get 
$P_{G^{(3)},w}(t') =  P_{G^{(3)},w}(t^*) * (p(r)*NE(B)/NE(A)) =
\frac{(\prod_{r \in h(t')} p(r)) * (\prod_{C' \in K_{t',2}} E(C')) * 
(\prod_{B' \in K_{t',3}} E(B'))}{ NE(A)}$. 

That completes the induction in this case.

\item Suppose that the root $z$ of $t'$ is labeled by a rule
$L_2(z) = r'(1)$ which has the form:

\vspace*{-0.1in}

{\large
\begin{equation*} 
\begin{CD}
A  @>\frac{p(r) * NE(B_1) * NE(B_2)}{NE(A)}>> B_1 B_2
\end{CD}
\end{equation*}}

\vspace*{-0.05in}

\noindent and that $h(r'(1))= r \in R^{(2)}$, such that rule $r$
has the form $A \stackrel{p(r)}{\rightarrow} B_1 B_2$.

In this case, the root $z$ of $t'$ has two children,
a left child $z_1$ and a right child $z_2$.   
Let $t_1 \in T^{B_1}_{G^{(3)},w_1}$ and $t_2 \in T^{B_2}_{G^{(3)},w_2}$
be the two parse trees rooted at $z_1$ and $z_2$ respectively.
Clearly we must have $w = w_1 w_2$.

We know that $L_1(z_1) = B_1$ and $L_1(z_2) = B_2$.
Moreover, by inductive assumption, we know that for $i=1,2$, we have 
\[ P_{G^{(3)},w_i}(t_i) =   
\frac{(\prod_{r \in h(t_i)} p(r)) * (\prod_{C' \in K_{t_i,2}} E(C')) * 
(\prod_{B' \in K_{t_i,3}} E(B'))}{ NE(B_i)} \]

Note again that $P_{G^{(3)},w}(t') =  P_{G^{(3)},w_1}(t_1) * 
P_{G^{(3)},w_2}(t_2) * p(r')$.
 
Again, by multiplying 
and canceling, we get 
$P_{G^{(3)},w}(t') =  P_{G^{(3)},w_1}(t_1) * P_{G^{(3)},w_2}(t_2) * 
(p(r)*NE(B_1)* NE(B_2)/NE(A)) =
\frac{(\prod_{r \in h(t')} p(r)) * (\prod_{C' \in K_{t',2}} E(C')) * 
(\prod_{B' \in K_{t',3}} E(B'))}{ NE(A)}$. 

This establishes the inductive claim in this case.

\item   Suppose that the root $z$ of $t'$ is labeled by a rule
$L_2(z) = r'(2)$ which has the form:

\vspace*{-0.1in}

{\large
\begin{equation*} 
\begin{CD}
A  @>\frac{p(r) * NE(B_1) * E(B_2)}{NE(A)}>> B_1
\end{CD}
\end{equation*}}

\vspace*{-0.05in}

\noindent and that $h(r'(2))= r \in R^{(2)}$, such that rule $r$
has the form $A \stackrel{p(r)}{\longrightarrow} B_1 B_2$.

In this case, the root $z$ of $t'$ has one child,
$z_1$.
Let $t_1 \in T^{B_1}_{G^{(3)},w}$ 
be the parse tree rooted at $z_1$.
We know that $L_1(z_1) = B_1$.
Moreover,  by inductive assumption, we know that 
\[ P_{G^{(3)},w}(t_1) =   
\frac{(\prod_{r \in h(t_1)} p(r)) * (\prod_{C' \in K_{t_1,2}} E(C')) * 
(\prod_{B' \in K_{t_1,3}} E(B'))}{ NE(B_1)} \]

Note again that $P_{G^{(3)},w}(t') =  P_{G^{(3)},w_1}(t_1) * 
p(r')$.
 
Observe that the multiset $K_{t',2}$ consists of $K_{t_1,2}  \cup \{ B_2 \}$ 
where the union here denotes a {\em multi-set union},
so it contains an added copy of $B_2$.   
Thus, by multiplying 
and canceling, we get 
\begin{eqnarray*}
P_{G^{(3)},w}(t') & = &  P_{G^{(3)},w_1}(t_1) * 
(p(r)*NE(B_1)* E(B_2)/NE(A))\\ & = & 
\frac{(\prod_{r \in h(t')} p(r)) * (\prod_{C' \in K_{t',2}} E(C')) * 
(\prod_{B' \in K_{t',3}} E(B'))}{ NE(A)}
\end{eqnarray*} 

This establishes the inductive claim in this case.

\item Suppose that the root $z$ of $t'$ is labeled by a rule
$L_2(z) = r'(3)$ which has the form:

\vspace*{-0.1in}

{\large
\begin{equation*} 
\begin{CD}
A  @>\frac{p(r) * E(B_1) * NE(B_2)}{NE(A)}>> B_2
\end{CD}
\end{equation*}}

\vspace*{-0.05in}

\noindent and that $h(r'(3))= r \in R^{(2)}$, such that rule $r$
has the form $A \stackrel{p(r)}{\rightarrow} B_1 B_2$.

This case is entirely analogous (and symmetric) to the previous one, 
and thus an
identical  argument shows that the inductive claim
holds also in this case.

\end{enumerate}

This completes the inductive proof of the claim, since we
have considered all possible rule types that can label the
root node of $t'$.

\end{proof}

We now use Claim \ref{3-to-2-claim} to show that property (3.) holds
for the mapping $g_{w,A}$.

Consider a parse tree $t' \in T^{A}_{G^{(3)},w}$.
Claim \ref{3-to-2-claim} tells us that
\[ P_{G^{(3)},w}(t') =   
\frac{(\prod_{r \in h(t')} p(r)) * (\prod_{C' \in K_{t',2}} E(C')) * 
(\prod_{B' \in K_{t',3}} E(B'))}{ NE(A)} \] 

Note that for any nonterminal $B$,
$E(B)$ is the sum of the probabilities of all distinct parse
trees rooted at $B$ which generate the empty string $\epsilon$.
Let $ET(B)$ be the set of all these parse trees that generate $\epsilon$ from $B$.

Let us now consider the probability of parse trees 
in $g^{-1}_{w,A}(t') \subseteq T^{A}_{G^{(2)},w}$.
Note that
each such parse tree $t \in g^{-1}_{w,A}(t')$
can be specified by specifying how $t$ has ``expanded''
every nonterminal in the multisets $K_{t',2}$,
and $K_{t',3}$ into parse trees of the string $\epsilon$.
Specifically, $t$ is determined
by specifying for each occurrence of each nonterminal $B$ 
in both $K_{t',2}$ and $K_{t',3}$, which
parse tree of $ET(B)$ is used to expand $B$ into a parse
tree for $\epsilon$.  
Then the probability of $t$ is given by 
the product of $(\prod_{r \in h(t')} p(r))$ multiplied
by the product of all these chosen parse trees of $\epsilon$
chosen to expand every nonterminal occurrence in $K_{t',2}$ and in 
$K_{t',3}$.

But then since for every nonterminal $B$, $E(B)$ is the
sum of the probabilities of all distinct parse trees rooted
at $B$ which generate $\epsilon$,
we can see that, by summing over all parse trees $t \in g^{-1}_{w,A}(t')$,
and then collecting  like terms, 
we get 
\[\sum_{t \in g^{-1}_{w,A}(t')}  P_{G^{(2)},w}(t) 
 = (\prod_{r \in h(t')} p(r)) * (\prod_{C' \in K_{t',2}} E(C')) * 
(\prod_{B' \in K_{t',3}} E(B'))\]

But then by Claim \ref{3-to-2-claim}, the identity 
(\ref{map-of-parse-prob-eq}) follows,
and thus we have established property (3.) of the mapping $g_{w,A}$,
which is the last thing we needed to establish to complete the proof
of Claim \ref{condprob-claim}.
\end{proof}

This completes the proof of Lemma \ref{lem-cnf-construct-step1},
and establishes the correctness of the quantitative
properties it asserts for the conditioned SCFG, $G^{(3)}$,
which is in SNF form, and which furthermore contains no $\epsilon$-rules.
\end{proof}

Next we show how to ``transform''  $G^{(3)}$ to get rid
of the ``linear'' rules of the form $A \stackrel{p}{\rightarrow} B$,
and thus obtain a CNF form SCFG.

\begin{lem}\label{4th-step-transform-lem}
Given any SCFG, $G^{(3)}$, which is
in SNF form\footnote{We assume, as always, that $G^{(3)}$ is
proper, and it is easy to check that all our transformations
maintain the properness of the SCFG}, 
and which contains no $\epsilon$-rules,
there is an SCFG $G^{(4)}$ in  Chomsky-Normal-Form (CNF),
such that $G^{(4)}$ has the same terminals and nonterminals as $G^{(3)}$
and such that for all nonterminals $A$ and all strings $w \in \Sigma^*$,  
we have $p^{A}_{G^{(3)},w} = p^{A}_{G^{(4)},w}$.

Furthermore, if
$G^{(3)}$ has only rational rule probabilities, then the 
transformation from $G^{(3)}$ to $G^{(4)}$ 
is effective and efficient,
in the sense that $G^{(4)}$ 
also has only rational rule probabilities, and $G^{(4)}$ can
be computed in P-time from $G^{(3)}$.
When $G^{(3)}$ has irrational rule probabilities, then $G^{(4)}$ 
still exists but may require irrational rule probabilities.
\end{lem}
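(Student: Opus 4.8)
The plan is to eliminate the unary rules $A\stackrel{p}{\rightarrow}B$ of $G^{(3)}$ by ``unrolling'' maximal chains of them, as in the classical non-stochastic construction, while tracking probabilities via a geometric series. First I would compute the set $D$ of \emph{dead} nonterminals of $G^{(3)}$, i.e.\ those from which no terminal string can be derived using positive-probability rules. Since every rule of $G^{(3)}$ has positive probability, this is the standard context-free ``non-generating symbol'' computation: it runs in linear time and uses only the \emph{structure} of $G^{(3)}$, so it is meaningful even when the rule probabilities are irrational. As $G^{(3)}$ has no $\epsilon$-rules we have $p^A_{G^{(3)},\epsilon}=0$ for all $A$, and moreover $p^A_{G^{(3)},w}=0$ for all $w$ whenever $A\in D$. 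In $G^{(4)}$ each $A\in D$ gets the single CNF rule $A\stackrel{1}{\rightarrow}AA$: this is proper and gives $p^A_{G^{(4)},w}=0$ for all $w$, matching $G^{(3)}$. Let $N$ be the set of remaining (``live'') nonterminals; the real work is to build the rules of each $A\in N$ in $G^{(4)}$.

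I expect the main obstacle to be showing that the relevant linear system is solvable. Let $U$ be the $N\times N$ matrix with $U_{A,B}$ equal to the total probability of the unary rules $A\stackrel{p}{\rightarrow}B$ of $G^{(3)}$ with $B\in N$. I claim $\rho(U)<1$, so that by Lemma~\ref{series} $(I-U)^{-1}=\sum_{k\ge0}U^k$ exists and is nonnegative. Since the row sums of $U$ are at most $1$ (bounded by the total rule probability of $A$), $U$ is substochastic, $\rho(U)\le1$, and $\rho(U)=\max_C\rho(U_C)$ over the strongly connected components $C$ of the unary dependency graph restricted to $N$, where $U_C$ is the principal submatrix on $C$. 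If some $U_C$ had $\rho(U_C)=1$ then $U_C\ne0$; it is nonnegative and irreducible, and a positive left Perron eigenvector for eigenvalue $1$ forces every row sum of $U_C$ to be exactly $1$, so $U_C$ is stochastic. But then every nonterminal of $C$ has \emph{all} its probability on unary rules leading back into the finite set $C$, hence every derivation from such a nonterminal stays inside $C$ forever, never emits a terminal, so $C\subseteq D$ — contradicting $C\subseteq N$. Thus $\rho(U)<1$. Note $((I-U)^{-1})_{A,B}$ is exactly the sum over all unary chains $A=A_0\to\cdots\to A_m=B$ with all $A_i\in N$ of the product of the corresponding unary rule probabilities.

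Given this, I define $G^{(4)}$ on $N$ by: for every $A,B\in N$ and every non-unary rule $r$ of $G^{(3)}$ with left-hand side $B$ (so $r$ has the form $B\stackrel{p(r)}{\rightarrow}CD$ or $B\stackrel{p(r)}{\rightarrow}a$), include in $G^{(4)}$ the rule with left-hand side $A$ and the same right-hand side as $r$, with probability $((I-U)^{-1})_{A,B}\cdot p(r)$ (merging duplicate right-hand sides by adding probabilities). These right-hand sides are already of CNF type, and $G^{(4)}$ has no $\epsilon$-rules. From $\sum_B((I-U)^{-1})_{A,B}\bigl(1-(\text{row sum of }U\text{ at }B)\bigr)=((I-U)^{-1}(I-U)\mathbf1)_A=1$, the rule probabilities assigned to $A$ so far sum to $1-\varrho_A$ where $\varrho_A=\sum_B((I-U)^{-1})_{A,B}\,d_B\ge0$ and $d_B$ is the probability mass of $B$'s unary rules going into $D$; whenever $\varrho_A>0$ such a $B$ witnesses $D\ne\emptyset$, and I restore properness by adding the CNF rule $A\stackrel{\varrho_A}{\rightarrow}A_0A_0$ for a fixed $A_0\in D$. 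Then $G^{(4)}$ is a proper SCFG in CNF with the same nonterminals as $G^{(3)}$. To prove $p^A_{G^{(4)},w}=p^A_{G^{(3)},w}$ for all $A,w$, I would, exactly in the spirit of the proof of Lemma~\ref{lem-cnf-construct-step1}, define a map $g$ from finite parse trees of $G^{(3)}$ to finite parse trees of $G^{(4)}$ that contracts each maximal unary chain — which, the tree being finite, necessarily ends at a non-unary rule at a live nonterminal — to a single node carrying the corresponding $G^{(4)}$ rule. The map $g$ preserves the generated string, is onto $T^A_{G^{(4)},w}$, and satisfies $P_{G^{(4)},A}(t')=\sum_{t\in g^{-1}(t')}P_{G^{(3)},A}(t)$ since the probability of a contracted rule was \emph{defined} as the sum over all unrollings of the product of the contracted rule probabilities; summing over $t'$ yields the claim. (The residual rules $A\to A_0A_0$ never appear in a finite parse tree that generates a terminal string, so they do not affect $p^A_{G^{(4)},w}$.)

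Finally, effectiveness. When $G^{(3)}$ has rational rule probabilities of total bit-size $m$, computing $D$ and $U$ is immediate, and $(I-U)^{-1}$, hence all rule probabilities of $G^{(4)}$, can be computed exactly over $\mathbb{Q}$ by Gaussian elimination in time polynomial in $m$ and the grammar size, with all intermediate numbers of polynomial bit-size (each entry of the inverse is a ratio of subdeterminants of $I-U$, polynomially bounded by Hadamard's inequality). When $G^{(3)}$ has irrational probabilities the same construction produces a well-defined (possibly irrational) $G^{(4)}$, since everything above — in particular $\rho(U)<1$ and $(I-U)^{-1}\ge0$ — was established using only that $U$ is nonnegative, substochastic, and has no all-internal live SCC, none of which refers to the actual values. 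The one delicate point, not needed for this lemma as stated but important once $G^{(3)}$ is itself only approximately available, is that the entries of $(I-U)^{-1}$ can be large when the live SCCs of $U$ are only barely substochastic, so the associated linear system must be shown to be well enough conditioned; that analysis is carried out separately.
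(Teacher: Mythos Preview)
Your proposal is correct and takes essentially the same approach as the paper: both eliminate unary rules by computing, for each nonterminal $A$ and each non-unary right-hand side $\gamma$, the total probability of unary chains from $A$ ending in a rule with RHS $\gamma$, via the linear system $(I-U)^{-1}$ (equivalently, hitting probabilities in an associated absorbing Markov chain). The differences are cosmetic---you separate live/dead nonterminals up front and prove $\rho(U)<1$ by an SCC/Perron--Frobenius argument, whereas the paper restricts per $\gamma$ to nonterminals that can reach $\gamma$ and invokes the standard fact that the substochastic matrix on transient states has spectral radius $<1$; and you spell out the parse-tree contraction bijection more explicitly than the paper, which simply asserts the preservation of string probabilities as ``easy to see''.
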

\begin{proof}
The only thing we need to do in order
to obtain $G^{(4)}$ from $G^{(3)}$ is to eliminate ``linear'' rules
of the form $A \stackrel{p}{\rightarrow} B$, where $A$ and $B$
are nonterminals.

This is easy to do by 
solving a suitable system of linear equations
whose coefficients are taken from rule probabilities in the
grammar $G^{(3)}$.

Specifically, consider the following finite-state Markov
chain,  $M$,  whose states consist of the set of nonterminals of
$G^{(3)}$ as well as the set of all distinct ``right-hand sides'', 
$\gamma$, that appear in any rule $A \stackrel{p}{\rightarrow} \gamma$
of $G^{(3)}$, and where $\gamma$ is  not a single nonterminal.

The probabilistic transitions of $M$
consist of the rules of $G^{(3)}$.
In other words, if there is a grammar rule 
$A \stackrel{p}{\rightarrow} \gamma$, then there is 
a probabilistic transition  $(A, p, \gamma)$ in $M$.
Note that we can assume $G^{(3)}$ is a proper SCFG,
so this defines all the transitions out of nonterminal states of $M$.
Finally, for every state $\gamma$ that is not a single nonterminal,
we add an absorbing self-loop transition $(\gamma, 1, \gamma)$ to $M$.

Consider any RHS, $\gamma$, which is not a single nonterminal.
Let $q^*_{A,\gamma}$ denote the probability that, in the finite-state
Markov chain, $M$, starting at state $A$, we eventually hit the state $\gamma$.

Using these hitting probabilities we can easily eliminate the linear
rules of $G^{(3)}$.  We ``construct'' $G^{(4)}$ as follows.
In   $G^{(4)}$ we 
remove all linear rules from $G^{(3)}$, and for every
nonterminal $A$ and RHS $\gamma$, where $\gamma$ is not a single nonterminal,
if $q^*_{A,\gamma} > 0$, then add the rule 
$A \stackrel{q^*_{A,\gamma}}{\longrightarrow} \gamma$ to $G^{(4)}$.

To maintain the properness of the grammar, if the
sum of the probabilities of the rules of a nonterminal $A$ 
is less than 1, we add as usual a rule $A \rightarrow NN$ with
the remaining probability where $N$ is a dead nonterminal.

It is easy to see that the total probability $p^{A}_{G^{(4)},w}$ 
of generating any particular
string $w$ in $G^{(4)}$ starting at nonterminal $A$  
remains the same as the total probability $p^{A}_{G^{(3)},w}$ of
generating $w$ starting at $A$ in $G^{(3)}$.

It only remains to show that if the rule probabilities of $G^{(3)}$
are rational, then we can compute the hitting probabilities $q^*_{A,\gamma}$
in polynomial time.
But it is a well known fact that hitting probabilities can be obtained
by solving a corresponding system of linear equations.

Specifically, consider RHS, $\gamma$, which is not itself a single 
nonterminal.  We can easily determine the set of nonterminals
$A$ for which the hitting probability $q^*_{A,\gamma} > 0$ is positive.
This is the case if and only if in the underlying graph of the
Markov chain $M$ there is a path from the state $A$ to the state $\gamma$.

Suppose there are $n$ distinct nonterminals $A$ in $G^{(3)}$ such that 
$q^*_{A,\gamma} > 0$.
Let us index these $n$ nonterminals as: $A_1,\ldots, A_n$.
Let  $P$ denote the $n \times n$ substochastic matrix whose $(i,j)$'th entry
$P_{i,j}$  is the one-step transition probability
from state $A_i$ to state $A_j$ in the Markov chain $M$.
Let the column $n$-vector $b^{\gamma}$ be defined as follows:
$b^{\gamma}_i$ is the one-step transition probability from state
$A_i$ to state $\gamma$ in $M$.
Then if we let the column $n$-vector $x$ of variables, $x_i$, represent
the unknown hitting probabilities, $q^*_{A_i,\gamma}$,  we have
the following linear system of equations:

\[  x = Px + b^\gamma \]

\noindent which is equivalent to the linear system of equations

\begin{equation}\label{hit-prob-gamma-equation}
 (I-P)x = b^\gamma 
\end{equation}

Clearly, letting $x_i = q^*_{A_i,\gamma}$ is one solution
to this equation.
Moreover,
since $P$ represents the transition submatrix of all the transient
states within a finite-state Markov chain,
it follows from standard facts 
(see, e.g., \cite{BP94}, Lemma 8.3.20)
that $\rho(P) < 1$, where $\rho(P)$ denotes the spectral radius
of the substochastic matrix $P$.
It thus follows from 
Lemma \ref{series} that the matrix $(I-P)$ is non-singular,
that $(I-P)^{-1} = \sum^{\infty}_{k=0} P^k$.
Therefore there is 
a {\em unique} solution vector $x^* = (I-P)^{-1} b^\gamma$ for
the system of linear equations in (\ref{hit-prob-gamma-equation}),
where $x^*_i = q^*_{A_i,\gamma}$
are precisely the hitting probabilities for every 
$i$.

Thus, if $G^{(3)}$ has only rational rule probabilities,
then we can compute $G^{(4)}$ in P-time by solving one such
a system of linear equations for each RHS, $\gamma$, of 
a rule in $G^{(3)}$, which is
not itself a single nonterminal.

In fact, even when $G^{(3)}$ contains irrational rule
probabilities, we will later use the linear system of equations
(\ref{hit-prob-gamma-equation}) in important ways in our
approximability analysis.
\end{proof}

Lemma \ref{4th-step-transform-lem} allows us to finally ``obtain'' a
SCFG $G^{(4)}$ which is in CNF form, starting from our original
SCFG, $G$, via a sequence of transformations. 
Unfortunately, 
in the process of obtaining $G^{(4)}$ some of our transformations
required possibly introducing grammar rules with irrational rule probabilities.
We now show that we can nevertheless efficiently compute
a suitable {\em approximation} to $G^{(4)}$.
The first step toward this is to establish the following Lemma.

For any SCFG, $G$, let 
$G^{(i)}$, $i= 1, \ldots, 4$, denote the SCFG obtained from $G$
via the sequence of transformations
described in Lemmas 
\ref{grammar-sd2nf-lem} to \ref{4th-step-transform-lem}.
In general, $G^{(i)}$ may have irrational rule probabilities,
even when $G$ does not.  Nevertheless, we shall show that,
given $G$ and $\delta > 0$, we can compute in P-time a SCFG 
$G^{(i)}_\delta \in B_\delta(G^(i))$, for $i=1,\ldots,4$.
First:

\begin{lem}\label{g3-approx-lem}
There is a polynomial-time algorithm
that, given any proper SCFG $G$ with rational
rule probabilities, and given any 
rational value $\delta >0$ in standard binary representation, 
computes a proper SCFG,  $G^{(3)}_{\delta} \in B_{\delta}(G^{(3)})$,
with rational rule probabilities.
In other words, given $G$ and $\delta > 0$, 
we can compute in P-time a $\delta$-approximation of $G^{(3)}$.
\end{lem}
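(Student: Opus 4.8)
The plan is to observe that, in the chain of transformations $G\to G^{(1)}\to G^{(2)}\to G^{(3)}$, the only step that can introduce irrational numbers is the last one — the removal of $\epsilon$-rules of Lemma~\ref{lem-cnf-construct-step1} — and to carry out that step numerically. First I would compute $G^{(1)}$ (Lemma~\ref{grammar-sd2nf-lem}) and then $G^{(2)}$ (Lemma~\ref{cleaning-2-scfg-lemma}) \emph{exactly} in P-time: both transformations use only rational arithmetic on the given rule probabilities, so $G^{(2)}$ is a cleaned SNF SCFG with rational rule probabilities of polynomial encoding size (the trivial case $E_{G^{(2)}}(S)=1$ is immediate, so assume $G^{(2)}$ non-trivial). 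By the explicit construction in the proof of Lemma~\ref{lem-cnf-construct-step1}, every rule probability of $G^{(3)}$ has the form $p\cdot X\cdot Y/NE(A)$, where $p$ is a rule probability of $G^{(2)}$, $A$ is a nonterminal of $G^{(2)}$, and $X,Y$ each lie in $\{1\}\cup\{E(B):B\in V\}\cup\{NE(B):B\in V\}$, with $E(\cdot)=E_{G^{(2)}}(\cdot)$ and $NE(\cdot)=1-E(\cdot)$. So $G^{(3)}$ is an explicit rational function, computable with polynomially many arithmetic operations, of the finitely many quantities $\{E(B)\}_{B\in V}$; it suffices to approximate these well and substitute.

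Two quantitative ingredients drive the error analysis. First, by part (2) of Lemma~\ref{e-and-ne-in-ptime-lem}, for any rational $\delta'>0$ we can compute, in time polynomial in $|G|$ and $\log(1/\delta')$, rationals $\widetilde E(B)$ with $|\widetilde E(B)-E(B)|\le\delta'$ for all $B$ at once, and we set $\widetilde{NE}(B):=1-\widetilde E(B)$. Second, since $G^{(2)}$ is \emph{cleaned} we have $E(B)<1$, i.e.\ $NE(B)>0$, for every nonterminal $B$; moreover, applying Theorem~\ref{1comp} to the SNF PPS whose LFP yields the termination probabilities of the auxiliary grammar used in Lemma~\ref{e-and-ne-in-ptime-lem} — after deleting its $0$- and $1$-valued coordinates, exactly as in the proof of Corollary~\ref{main-alg-corollary} — gives a uniform lower bound $NE(B)\ge 2^{-c|G|}$ for some fixed constant $c$. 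I would then choose $\delta':=\delta\cdot 2^{-(c+c')|G|}$ for a suitable constant $c'$; this keeps $\log(1/\delta')$ polynomial in $|G|$ and $\log(1/\delta)$ (so the whole procedure runs in P-time) and guarantees $\widetilde{NE}(A)\ge\tfrac12 NE(A)\ge 2^{-c|G|-1}>0$, so that dividing by $\widetilde{NE}(A)$ is safe.

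Now define the SCFG $G^{(3)}_\delta$ by replacing each rule probability $p\cdot X\cdot Y/NE(A)$ of $G^{(3)}$ with $p\cdot\widetilde X\cdot\widetilde Y/\widetilde{NE}(A)$. A routine propagation-of-errors estimate — a product of at most three factors in $[0,1]$, each known to additive error $\le\delta'$, divided by a quantity that is known to error $\le\delta'$ and is $\ge 2^{-c|G|-1}$ — shows each approximate rule probability differs from its exact value by at most $O(2^{c|G|})\cdot\delta'$, hence, by the choice of $\delta'$, by at most $\delta$ (in fact by at most $\delta/\mathrm{poly}(|G|)$). Thus $G^{(3)}_\delta$ has the same nonzero-probability rules as $G^{(3)}$, each within additive $\delta$, so $G^{(3)}_\delta\in B_\delta(G^{(3)})$. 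To make $G^{(3)}_\delta$ additionally \emph{proper} without changing its rule set, I would, for each nonterminal $A$, single out the rule $r_A$ of largest exact probability $p(r_A)$ (so $p(r_A)\ge 1/k_A$, where $k_A\le 3|R^{(2)}|$ is the number of $A$-rules of $G^{(3)}$), compute the approximations of all the \emph{other} $A$-rules rounded \emph{down} to additive error $\le\delta/(2k_A)$, and set $\widetilde p(r_A):=1-\sum_{r\ne r_A}\widetilde p(r)$; rounding down keeps $\widetilde p(r_A)\ge 0$ and within $\delta$ of $p(r_A)$ (we may assume $\delta<1/(2k_A)$ w.l.o.g., since otherwise we simply run the algorithm with a smaller $\delta$), and the $A$-rule probabilities now sum to exactly $1$.

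The only genuinely non-routine part is the second paragraph: we are dividing by $NE(A)$, which can be exponentially small in $|G|$, so without the lower bound $NE(A)\ge 2^{-\mathrm{poly}(|G|)}$ coming from Theorem~\ref{1comp} no polynomially-small target precision $\delta'$ would control the error, and the reduction would fail to be polynomial time. Everything else — the exact computation of $G^{(1)},G^{(2)}$, the substitution, the arithmetic of the error bound, and the properness bookkeeping — is standard once that lower bound is available.
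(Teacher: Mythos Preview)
Your proposal is correct and follows essentially the same route as the paper: compute $G^{(1)},G^{(2)}$ exactly, invoke Theorem~\ref{1comp} to get the crucial lower bound $NE(A)\ge 2^{-\mathrm{poly}(|G|)}$, approximate the $E(\cdot),NE(\cdot)$ values via Lemma~\ref{e-and-ne-in-ptime-lem}, and propagate the error. The one implementation difference is how properness is enforced: the paper approximates only the \emph{numerators} of the rule probabilities for each nonterminal $A$ and normalizes by their sum (with a short claim bounding $|a_i/b - a'_i/b'|$), which makes the resulting SCFG proper automatically; you instead approximate each full fraction and patch properness afterward by absorbing the slack into a largest-probability rule. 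Both work. One technicality you (and, to be fair, the paper at this stage) pass over lightly: rules of type $r'(2),r'(3)$ can have $E(B)$ positive but doubly-exponentially small, so ``rounding down'' could send them to~$0$ and violate the $B_\delta$ requirement that nonzero rules stay nonzero; since part~(1) of Lemma~\ref{e-and-ne-in-ptime-lem} lets you detect $E(B)>0$ exactly, the fix is just to floor such approximations at a tiny positive value.
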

\begin{proof}
To prove this theorem, we will show that every
step of our transformations beginning with $G$ and resulting
in the CNF SCFG $G^{(3)}$ 
can be carried out either exactly or
approximately in P-time.

\vspace{0.1in}

\noindent \underline{$G \leadsto G^{(1)} \leadsto G^{(2)}$}:
It was already established 
in Lemma \ref{grammar-sd2nf-lem}
and 
 Lemma \ref{cleaning-2-scfg-lemma}
that we can carry out these first two steps of the
transformation {\em exactly} in P-time.   
Specifically, 
given any SCFG, $G$, with rational rule probabilities,
we can construct in P-time a {\em cleaned} SCFG, $G^{(2)}$,
in SNF form
with rational rule probabilities such that, in particular, for all 
strings $w \in \Sigma^*$, we have $p_{G,w} = p_{G^{(2)},w}$.

Furthermore, we can assume that $G^{(2)}$ is nontrivial,
meaning that the start nonterminal
does not generate the empty string with
probability $1$, because if this was the case, then we know
the transformation would have computed as $G^{(2)}$ 
the {\em trivial} 
CNF SCFG consisting of only the single rule $S \stackrel{1}{\rightarrow} \epsilon$.  In that case we would be done, so we assume w.l.o.g. that the result
was not this trivial SCFG.

\vspace*{0.1in}

\noindent \underline{$G^{(2)} \leadsto G^{(3)}_{\delta}$}:
Recall that the key 
transformation $G^{(2)} \leadsto G^{(3)}$
may introduce irrational rule probabilities into 
the ``conditioned SCFG'', $G^{(3)}$.
Given $G^{(2)}$, and given $\delta > 0$, 
we now show how to compute in P-time a proper
SCFG $G^{(3)}_{\delta} \in B_{\delta}(G^{(3)})$.

To do this, we make crucial use of our P-time approximation algorithm 
for termination probabilities of an SCFG, 
and in particular its corollary, Lemma \ref{e-and-ne-in-ptime-lem},
which tells that that given an SCFG $G^{(2)}$,
for each nonterminal $A$,  we can determine in P-time whether 
$E(A) = 0$ or $NE(A) = 0$,
or whether $E(A) = 1$ or $NE(A) =1$,
and given any $\delta' > 0$,  
we can in P-time approximate $E(A)$ and $NE(A)$ within distance
$\delta'$, i.e., we can compute 
rational values $v^A_E \in [0,1]$ and $v^A_{NE} \in [0,1]$ such that
$|E(A) - v^A_{E} | \leq \delta'$, and $| NE(A) - v_{NE} | \leq \delta'$.
We will see how to choose $\delta'$ shortly.

Note that the probability $p(r') > 0$ of a rule $r'$ in $G^{(3)}$ 
can only have one of several possible forms.
In each case, $p(r')$ is given by
an expression whose denominator is $NE(A) = NE_{G^{(2)}}(A)$ for
some nonterminal $A$ of $G^{(2)}$.

Note that $E(A)$ 
is precisely the termination probability, starting at nonterminal
$A$, of a new SCFG obtained from $G^{(2)}$ by removing
all rules that have any terminal
symbol $a \in \Sigma$ occurring on the 
RHS.  It thus follows that $NE(A)$ is the non-termination probability
starting at $A$ for that SCFG.  
By  Lemma \ref{e-and-ne-in-ptime-lem}, we can determine in P-time
whether $NE(A) = 1$, and by construction of $G^{(2)}$ we know $NE(A) \neq 0$.
Since termination probabilities 
of an SCFG are the LFP, $q^*$, of a corresponding
PPS, $x=P(x)$, which has the same
encoding size as $G$, we can conclude from Theorem
\ref{1comp} that for all
nonterminals $A$, where $NE(A) \neq 1$,  $NE(A) \geq 1/2^{4|G^{(2)}|}$.
Let us define $\zeta =  1/2^{4|G^{(2)}|}$.

For a fixed nonterminal $A$, since every rule $r'$ of $G^{(3)}$ associated
with $A$ has an expression whose denominator is the same, namely $NE(A)$,
since we have the lower bound $NE(A) \geq \zeta$, 
and since $G^{(3)}$ is proper, then
in order to make sure that our resulting SCFG $G^{(3)}_{\delta}$ is
also proper, 
it suffices
to only approximate ``sufficiently well'' the {\em numerators} 
of each rule probability $p(r')$
for rules $r'$ associated with $A$, and then normalize all these
values by their sum in order to get proper rule probabilities.

For each rule $r'$ of $G^{(3)}$ with positive probability $p(r') > 0$,
we wish to compute a rational value $v_{r'} \in (0,1]$ such that 
$| p(r') - v_{r'}| 
\leq \delta$.  

Note that $m := 3|G^{(2)}|$ is an easy upper bound on the number of
distinct rules in $G^{(3)}$.
How well do we have to approximate the numerators of probabilities, $p(r')$,
for rules $r'$  
associated with a nonterminal $A$ in $G^{(3)}$, 
in order to be sure that after normalizing
by their sum, these numerator values  yield
probabilities $v_{r'}$ such that the inequality 
$| p(r') - v_{r'} | \leq \delta$ holds for each one?
 The following claim addresses this:

\begin{claim}\label{how-good-each-rule-approx-claim}
Suppose $a_1, \ldots, a_r \in (0,1]$, where $r \leq m$,
suppose $b \in (\zeta,1]$, $0 < \zeta < 1$,
and suppose that $\sum^r_{i=1} a_i  = b$.
For any $\delta > 0$, such that $\delta \leq 1/(4m)$,
let $\delta' :=   \delta (\zeta/2)^2/ 4m$.
Suppose we find
$a'_1, \ldots, a'_m  \in (0,1]$ such that $|a_i - a'_i| \leq \delta'$
for all $i=1,\ldots, r$.
Let $b' = \sum^r_{i=1} a'_i$.  Then
for all $i= 1,\ldots, r$:
\[ | \frac{a_i}{b} -  \frac{a'_i}{b'} | \leq \delta \]
\end{claim}
\begin{proof}
First note that $|b -  b'| = | \sum^r_{i=1} (a_i - a'_i)| 
\leq r * \delta' \leq m * \delta'$.  
Then note that $0 < b' \leq b +  m \delta' \leq 1 + \delta (\zeta/2)^2/4
\leq 2$. 
Note also that $b' \geq b - m \delta' \geq \zeta/2$,
the last inequality following because $b \geq \zeta$, and
$m \delta' \leq \zeta/2$.
 Now we have:

\begin{eqnarray*}
| \frac{a_i}{b} -  \frac{a'_i}{b'}|  & =  &   
| \frac{b' a_i - b a'_i}{b b'} | \\
& \leq & | \frac{2 m \delta' \max(b,b')}{b b'} | \\
& \leq & | \frac{\delta (\zeta/2)^2}{b b'} | \\
& \leq & \delta
\end{eqnarray*}
\end{proof}

It follows from Claim 
\ref{how-good-each-rule-approx-claim} that
in order to approximate every rule probability within
distance $\delta > 0$, where we assume $\delta \leq 1/4 m$
where $m = 3 |G^{(2)}|$,
it suffices to approximate the numerators of 
the probabilities $p(r')$ for every
rule $r'$ associated with $A$ in $G^{(3)}$ within 
distance $\delta' =   \delta (\zeta/2)^2/ 4m$,  where 
$\zeta =  1/2^{4|G^{(2)}|}$ is the lower bound we know 
for $NE(A)$,  and then to normalize these approximated
numerators by their sum in order to obtain the respective 
probabilities.

To complete the proof,
we now consider 
separately all the possible forms the rule probability $p(r')$
could take, and show how to approximate each of their numerators
within $\delta'$.

\begin{enumerate}
\item Suppose $p(r') = p(r)/ NE(A) >0$, 
where $p(r)$ is the 
given rational 
rule probability of the corresponding rule $r$ of $G^{(2)}$.

In this case, we already have the exact rational numerator probability,
$p(r)$.

\item Suppose $p(r') = p(r)*NE(B)/ NE(A) >0$, 
where $p(r)$ is the 
given rational 
rule probability of the corresponding rule $r$ of $G^{(2)}$.

Since $p(r)$ is a probability, to approximate $p(r) * NE(B)$
within additive error $\delta'$, 
it suffices to approximate $NE(B)$ to within additive error
$\delta'$.   We already know how to do this in P-time,
because $NE(B)$ is the non-termination probability of a SCFG
that we can derive in P-time from $G^{(2)}$.

\item 
Suppose $p(r') = p(r)*E(B_1)*NE(B_2)/ NE(A) >0$.

To compute a $\delta'$-approximation $a_{r'}$ of the numerator,
such that  $| a_{r'} - p(r)*E(B_1) * NE(B_2) | \leq \delta'$,
we let $\delta'' = \delta'/4$,
and we compute  approximations $v^{B_1}_E$ and $v^{B_2}_{NE}$, 
of $E(B_1)$ and $NE(B_2)$, respectively,
such that $|E(B_1) - v^{B_1}_E | \leq \delta''$ and
$|NE(B_2) - v^{B_2}_{NE} | \leq \delta''$,
and we let $a_{r'} :=  p(r)*v^{B_1}_E * v^{B_2}_{NE}$.
Then:
\begin{eqnarray*}
| p(r)*v^{B_1}_E * v^{B_2}_{NE} - p(r)*E(B_1) * NE(B_2)| & \leq &
| v^{B_1}_E * v^{B_2}_{NE} - E(B_1) * NE(B_2) | \\
& \leq  &  2 \delta'' * \max(v^{B_1}_E,  v^{B_2}_{NE}, E(B_1), NE(B_2))\\
& \leq  &  \delta'
\end{eqnarray*}

\item The only remaining case is when
$p(r') = p(r)*NE(B_1)*NE(B_2)/ NE(A) >0$.  Its proof argument is
identical to  the previous case.
We can just replace $E(B_1) * NE(B_2)$ by $NE(B_1) * NE(B_2)$ 
and $v^{B_1}_E * v^{B_2}_{NE}$ by  $v^{B_1}_{NE} * v^{B_2}_{NE}$
in that argument.
\end{enumerate}

We have thus established that
there is a polynomial time algorithm that, given $G^{(2)}$ and $\delta > 0$, 
computes an SCFG  $G^{(3)}_{\delta} \in B_{\delta}(G^{(3)})$.
\end{proof}

Our next goal is to prove the following
Theorem:

\begin{thm}\label{g4-approx-step-thm}
There is a polynomial-time algorithm
that, given any proper SCFG $G$ with rational
rule probabilities, and given any 
rational value $\delta' >0$ in standard binary representation, 
computes a proper SCFG,  $G^{(4)}_{\delta'} \in B_{\delta'}(G^{(4)})$,
with rational rule probabilities.
\end{thm}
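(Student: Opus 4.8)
The plan is to append one more approximation step, $G^{(3)}_\delta \leadsto G^{(4)}_{\delta'}$, that carries out the transformation $G^{(3)}\leadsto G^{(4)}$ of Lemma \ref{4th-step-transform-lem} \emph{approximately}. Recall that that transformation replaces the unary rules of $G^{(3)}$ by rules $A\stackrel{q^*_{A,\gamma}}{\to}\gamma$, where $q^*_{A,\gamma}$ is the probability of hitting the non‑unary right‑hand side $\gamma$ from $A$ in the finite‑state Markov chain $M$ whose transitions are the rules of $G^{(3)}$, and that each vector $(q^*_{A_i,\gamma})_i$ is the unique solution of a linear system $(I-P)x=b^\gamma$, with $P$ the substochastic transition submatrix of $M$ on the nonterminals that reach $\gamma$ and $b^\gamma$ the one‑step transition vector into $\gamma$. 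The idea is: (i) use Lemma \ref{g3-approx-lem} to compute in P‑time a proper $G^{(3)}_\delta\in B_\delta(G^{(3)})$ for a suitably small rational $\delta$; since $B_\delta(\cdot)$ preserves the support, the graph of the chain, the set of distinct $\gamma$'s, and the $\gamma$‑reaching sets are unchanged; (ii) for each such $\gamma$ form the perturbed \emph{rational} system $(I-P_\delta)x=b^\gamma_\delta$ and solve it exactly by P‑time rational Gaussian elimination; (iii) clamp the outputs to $[0,1]$ and, exactly as in the normalization step of Lemma \ref{g3-approx-lem}, let the residual probability $1-\sum_\gamma q_{A,\gamma}$ be carried by a dead rule $A\to NN$ so that $G^{(4)}_{\delta'}$ is proper. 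For the error analysis of (ii) I would use the standard perturbation bound: writing $\kappa:=\|(I-P)^{-1}\|_\infty$, if $\|P-P_\delta\|_\infty,\|b^\gamma-b^\gamma_\delta\|_\infty\le\delta$ and $\kappa\delta\le 1/2$, then $I-P_\delta$ is nonsingular, $\|(I-P_\delta)^{-1}\|_\infty\le 2\kappa$, and the computed solution differs from $(q^*_{A_i,\gamma})_i$ by at most $O(\kappa^2\delta)$ in $\ell_\infty$. Hence, \emph{provided $\kappa\le 2^{\mathrm{poly}(|G|)}$}, taking $\delta:=\delta'\,2^{-c\cdot\mathrm{poly}(|G|)}$ for a suitable constant $c$ (a rational of polynomial bit length given $\delta'$ in binary) forces the total $\ell_\infty$ error below $\delta'$; combining with the exact P‑time steps $G\leadsto G^{(1)}\leadsto G^{(2)}$ and with Lemma \ref{g3-approx-lem} then yields $G^{(4)}_{\delta'}\in B_{\delta'}(G^{(4)})$ in P‑time.

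The entire difficulty is concentrated in the \textbf{key structural bound} $\|(I-P)^{-1}\|_\infty\le 2^{\mathrm{poly}(|G|)}$, i.e., that the chains $M$ arising from $G^{(3)}$ are never more than singly‑exponentially ill‑conditioned. This is genuinely delicate, because individual entries of $P$ and $b^\gamma$ can be \emph{doubly}‑exponentially small: the rules $r'(2),r'(3)$ of Lemma \ref{lem-cnf-construct-step1} carry factors $E_{G^{(2)}}(B)$, and a positive value $E_{G^{(2)}}(B)$ — a positive coordinate of the LFP of a PPS of size $O(|G|)$ — can be doubly‑exponentially small in $|G|$; so one cannot lower‑bound the entries of $I-P$ and invoke a determinant estimate like Lemma \ref{invdet}. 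The point to establish is that these tiny probabilities never inflate $\|(I-P)^{-1}\|_\infty=\max_i\mathbb{E}[\text{time to absorption in }M\mid\text{start at }A_i]$ (using $(I-P)^{-1}\ge 0$ for a substochastic transient $P$, Lemma \ref{series}); they only make some of the hitting probabilities $q^*_{A,\gamma}$ themselves doubly‑exponentially small, which is harmless for an additive approximation, since such a value is simply reported as (essentially) $0$. I would prove the bound by a structural analysis of the unary‑rule digraph of $G^{(3)}$: $G^{(3)}$ is proper, so at every nonterminal some rule has probability $\ge 1/|G|$; $G^{(3)}$ has no $\epsilon$‑rules; and the identity $NE_{G^{(2)}}(A)=\sum_{r\in R_A}p(r)\bigl(1-p^{RHS(r)}_{G^{(2)},\epsilon}\bigr)$ together with $NE_{G^{(2)}}(A)\ge 2^{-O(|G|)}$ (Theorem \ref{1comp}) shows that a nonterminal with only unary successors has a unary out‑transition of probability $\ge 2^{-O(|G|)}$ (a largest‑$NE$ successor), while a unary cycle with no escape cannot reach $\gamma$ and is excluded from $P$. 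Decomposing along the unary SCCs, one then shows that from every transient state relevant to $\gamma$ the chain is absorbed \emph{somewhere} within $\mathrm{poly}(|G|)$ steps with probability at least $2^{-\mathrm{poly}(|G|)}$, whence $\kappa\le n\cdot 2^{\mathrm{poly}(|G|)}=2^{\mathrm{poly}(|G|)}$. The crux — the place where the "careful analysis of the structure of the constructed grammar" is really needed — is bounding, for each unary SCC, the probability of failing to leave it in polynomially many steps away from $1$ by only a singly‑exponential margin, i.e., ruling out that the large‑probability rules conspire to trap the chain doubly‑exponentially long; the $E_{G^{(2)}}(\cdot)$‑factor rules cannot do this because being small they contribute to escape rather than to staying.

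Finally, a few routine points need checking. The computed solution is within $\delta'$ of $(q^*_{A_i,\gamma})_i\in[0,1]^n$, so clamping into $[0,1]^n$ is harmless; properness is maintained as in Lemma \ref{g3-approx-lem}, except that here, since the residual mass $1-\sum_\gamma q^*_{A,\gamma}$ (the probability the unary chain from $A$ never applies a non‑unary rule) can itself be tiny, one does \emph{not} renormalize by dividing but simply truncates the computed $\hat q_{A,\gamma}$ by at most $|G|\delta'$ (absorbed into the target accuracy) so their sum is $\le 1$ and routes the remainder through the dead rule $A\to NN$. The number of distinct right‑hand sides $\gamma$ and the dimensions of all the systems are polynomial in $|G|$, exact rational linear algebra on polynomial‑bit‑length inputs runs in polynomial time, and $\delta$ has polynomial bit length given $\delta'$ in binary; so the whole procedure, on top of the exact transformations $G\leadsto G^{(1)}\leadsto G^{(2)}$ and of Lemma \ref{g3-approx-lem}, runs in time polynomial in $|G|$ and in the bit length of $\delta'$, proving Theorem \ref{g4-approx-step-thm}.
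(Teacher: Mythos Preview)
Your overall plan---compute $G^{(3)}_\delta$ via Lemma~\ref{g3-approx-lem}, solve the perturbed hitting-probability systems by rational Gaussian elimination, and control the error via a bound on $\kappa=\|(I-P)^{-1}\|_\infty$---is the same skeleton the paper uses, and you correctly identify the real difficulty: entries of $P$ carrying an $E_{G^{(2)}}(\cdot)$ factor can be doubly-exponentially small, so $\kappa$ cannot be bounded by a crude determinant estimate on $I-P$.

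However, your sketch of the structural bound $\kappa\le 2^{\mathrm{poly}(|G|)}$ has a genuine gap. Your closing sentence---``the $E_{G^{(2)}}(\cdot)$-factor rules cannot do this because being small they contribute to escape rather than to staying''---is backwards. The $E$-factor rules are precisely $r'(2)$ and $r'(3)$, which are \emph{unary} rules of $G^{(3)}$ and hence transitions \emph{among} nonterminal (transient) states of $M$; they contribute to staying, not to escaping. What is missing is the structural fact the paper actually exploits: whenever $A$ has a rule $r'(2)$ or $r'(3)$, the same nonterminal $A$ also has the sibling rule $r'(1):A\to BC$, whose probability $p\cdot NE(B)\cdot NE(C)/NE(A)$ involves only $NE(\cdot)$ factors and is therefore at least $\zeta:=2^{-9|G^{(2)}|}$ by Theorem~\ref{1comp}; and $r'(1)$ goes to an \emph{absorbing} state of $M$. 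Together with your (correct) observation that a nonterminal with only $r'$-type unary rules has each outgoing transition of probability $\ge 2^{-O(|G|)}$, this does give: from every transient state there is a path of length $\le n$ to absorption along transitions each of probability $\ge\zeta$, hence $\kappa\le n\zeta^{-n}=2^{O(|G|^2)}$. So your direct route can be completed, but not via the reasoning you wrote.

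For comparison, the paper does \emph{not} bound $\|(I-P)^{-1}\|_\infty$ for the original $P$ at all. It uses the same $r'(1)$-observation differently: it \emph{deletes} from $M$ every $r'(2),r'(3)$ transition of probability below $\zeta\delta'/2$, obtaining a chain $M'$; since at each such state the deleted mass is dominated by the guaranteed $r'(1)$ escape probability $\ge\zeta$, the hitting probabilities change by at most $\delta'$. In $M'$ every positive transition probability is bounded below in terms of $\zeta$ and $\delta'$, so $\|(I-P')^{-1}\|_\infty$ is bounded trivially, and then a standard condition-number theorem handles the replacement of the remaining (irrational) entries by their $G^{(3)}_\delta$ approximations. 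Your route is conceptually more direct (one perturbation, one condition-number bound) once the $\kappa$-bound is established; the paper's two-step argument (delete, then perturb) front-loads the structural insight and leaves only an elementary inverse-norm estimate.
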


\begin{proof}
Recall that to obtain $G^{(4)}$ from $G^{(3)}$ 
we have to eliminate from $G^{(3)}$ the ``linear'' rules of the form 
$A \stackrel{p}{\rightarrow} B$, where $A$ and $B$ are nonterminals.
Lemma \ref{4th-step-transform-lem} showed how this can be done.
The proof of 
Lemma \ref{4th-step-transform-lem} considered the finite-state Markov
chain,  $M$,  whose states consist of the set of nonterminals of
$G^{(3)}$ as well as the set of all distinct ``right-hand sides'', 
$\gamma$, that appear in any rule $A \stackrel{p}{\rightarrow} \gamma$
of $G^{(3)}$, and where $\gamma$ is  not a single nonterminal,
and where the probabilistic transitions of $M$
basically correspond to the rules of $G^{(3)}$, plus 
extra absorbing self-loop transitions $(\gamma, 1, \gamma)$,
for every state $\gamma$ that in not a single nonterminal.

Note that $M$ may have irrational probabilities.
The proof of \ref{4th-step-transform-lem} showed that
the probabilities $q^*_{A_i,\gamma}$ of eventually 
reaching the state $\gamma$ from the state $A_i$ in $M$ 
can be used as the probabilities of new rules 
$A_i \stackrel{q^*_{A_i,\gamma}}{\longrightarrow} \gamma$, after eliminating
all linear rules, to obtain the SCFG $G^{(4)}$,
such that for all strings $w$,
and all nonterminals $A$,
we will have $p^A_{G^{(3)},w} = p^A_{G^(4),w}$.

Regardless whether $M$ has transitions with irrational probabilities
or not, the proof of Lemma \ref{4th-step-transform-lem} showed
that the probabilities $q^*_{A,\gamma}$ 
which we need can be obtained
as follows: we can first identify those probabilities 
$q^*_{A,\gamma}$ that are greater than $0$
by using only the underlying ``graph'' structure of the grammar rules 
with positive probability, without the need to access their actual probability.
Note that we determine which rules of $G^{(3)}$ have positive
probability by simply looking at which rules of $G^{(3)}_{\delta}
\in B_{\delta}(G^{(3)})$ 
have positive probability, for whatever $\delta > 0$ we have chosen,
because the definition of the approximate set $B_{\delta}(G^{(3)})$ requires
that positive probability rules retain positive probability in the
approximate SCFGs.

Once we have computed those cases where $q^*_{A,\gamma} = 0$,
in what remains the probabilities $q^*_{A,\gamma} > 0$ can be obtained
as the {\em unique} solution $x^* = (I-P)^{-1}b^{\gamma}$ 
of a corresponding linear system of 
equations  given in (\ref{hit-prob-gamma-equation}), which has the form:

\[  (I-P)x = b^\gamma \]

\noindent where $P$ is a substochastic $n \times n$ matrix.
Note that by basic facts about transient states in Markov chains
and substochastic matrices $P$, 
we have that $(I-P)^{-1} = \sum^{\infty}_{k=0} P^k \geq 0$.

Note that $P$ and $b^\gamma$ may have irrational entries,
because they have been derived from rule probabilities in $G^{(3)}$.
We may hope that by approximating sufficiently well the entries of 
$P$ and $b^\gamma$,
which are all rule probabilities of $G^{(3)}$, 
we can then use the resulting approximated linear system of equations, which
will hopefully
still have a unique solution which is close to the unique solution
of the original linear equation system.

Unfortunately, we can not do this in a very naive
way, because some
of the rule probabilities of $G^{(3)}$ (and thus transition
probabilities of $M$) 
have in their numerator expressions containing $E(B) = E_{G^{(2)}}(B)$ for some nonterminal 
$B$ of
$G^{(2)}$.   Since $E(B)$ amounts to the termination probability
of a SCFG (with rational rule probabilities) 
whose encoding size is $O(|G^{(2)}|)$,  it can unfortunately
be the case that some positive probabilities in entries of the matrix $P$
are {\em extremely small} (double exponentially small, and possibly irrational) 
values, 
namely as small as $1/2^{2^{c|G^{(2)}|}}$, for a fixed constant $c > 0$.

These very small entries mean that we
can not immediately rule out that the system of equations 
$(I-P)x = b^\gamma$ is potentially very {\em ill-conditioned}.

To overcome this, we observe crucially
that the Markov chain $M$ has a very special structure
which allows us to transform it into a different Markov chain $M'$,
by basically removing some small but positive probability transitions,
yielding a new chain $M'$ with 
no transition probabilities that are ``very small'',
and yet such that each state of $M'$  
has a probability of reaching state $\gamma$ that
is very close to that of reaching $\gamma$ in the original Markov chain $M$.

The special structure of $M$ arises for the following reason:
the only kinds of rules $r'$ out of a nonterminal $A$ of $G^{(3)}$ 
that have a probability $p(r')$ which contains a
probability $E(B)$
in its expression are rules of the form $r'(2)$ or $r'(3)$.
But then there must also exist a rule of the form $r'(1)$ associated
with $A$.  Since the expression $p(r'(1))$ does not contain
a probability $E(B)$  (only probabilities of the form $NE(B)$), 
we can lower bound $p(r'(1))$ sufficiently far away from zero.
Specifically,  since 
\[p(r'(1)) =  \frac{p(r) * NE(B) * NE(C)}{NE(A)} \]
where $p(r)$ is a rational rule probability in $G^{(2)}$
(which only has rational rule probabilities),
and $NE(A)$, $NE(B)$, and $NE(C)$ are all probabilities
of not generating $\epsilon$ starting at different nonterminals
in $G^{(2)}$, and since we know that these probabilities
can be rephrased as non-termination probabilities in a SCFG with
at most the same size as $|G^{(2)}|$,  it follows
from Theorem \ref{1comp} that 
$p(r'(1)) \geq \frac{1}{2^{9*|G^{(2)}|}}$.
Moreover, by definition the rule $r'(1)$ 
has the form $A \stackrel{p(r'(1))}{\longrightarrow} \gamma$,
where $\gamma$ is not a single nonterminal,
and thus the corresponding transition in $M$ must be
a transition 
to an absorbing state $\gamma$ of the Markov chain.

We claim that
this then means that whenever $p(r'(2))$ or $p(r'(3))$ are sufficiently
small probabilities, relative to $p(r'(1))$, 
then we can simply remove their corresponding transitions in $M$,
yielding a new Markov chain $M'$, and
this will not substantially 
change, starting at any state, the probability 
of eventually reaching any particular absorbing state $\gamma'$.

More precisely, let $\zeta =  \frac{1}{2^{9*|G^{(2)}|}}$.
Let $\delta' > 0$ be some desired error threshold.
Consider any state $A$ of $M$ such that there are
rules of the form $r'(2)$ and $r'(3)$ associated with $A$ in $G^{(3)}$,
and thus corresponding transitions with probability $p(r'(2))$ and 
$p(r'(3))$ in $M$.
Consider the absorbing state $\gamma$ of $M$,
to which there is a transition from the state $A$
with probability $p(r'(1)) \geq \zeta$.
Consider any states $\xi_1, x_2$ of $M$ (which may be absorbing or not).
Suppose that $p(r'(2)) \leq \zeta*\delta'/2$ 
and that $p(r'(3)) \leq \zeta*\delta'/2$.
Let us define the Markov chain $M'$ by removing all 
transitions, and 
let us ask how much the probability of eventually reaching
$\xi_2$ starting from $\xi_1$ in $M$ can change
if we simply remove both of these transitions
$r'(2)$ and $r'(3)$ out of state $A$ from $M$.

Since $\gamma$ is
an absorbing state,
we have that,
{\em even if we assume that there is a transition in $M$
from $A$ right back to itself with all
of the residual probability $(1-p(r'(1)) - p(r'(2)) - p(r'(3)))$},
then the total probability that, starting from $A$, we will
ever use either of the transitions $r'(2)$ or $r'(3)$ in $M$ is 
at most $( p(r'(2)) + p(r'(3)))/p(r'(1)) \leq \zeta*\delta'/ \zeta = \delta'$.
Note that the case where all the residual probability feeds back to $A$ 
yields
the highest possible probability of ever using either transition $r'(2)$
or $r'(3)$. 
Thus, by removing transitions $r'(2)$ and $r'(3)$, for any state $\xi_2$,
we would have at most changed the probability of eventually reaching
$\xi_2$ starting at $A$ by at most $\delta'$.
Likewise, starting at any state $\xi_1$, the  
probability of eventually reaching $\xi_2$ starting in $\xi_1$ in $M$ 
is at most changed by  $\delta'$ by removing
transitions $r'(2)$ and $r'(3)$ out of $A$, because any path using these
transitions must first go through state $A$, and the probability that
it will eventually go through $r'(2)$ or $r'(3)$ is at most $\delta'$.

For any desired $\delta' > 0$,  let us compute in P-time an approximate
$G_{\delta}^{(3)} \in B_{\delta}(G^{(3)})$, 
where $ \delta = \zeta * \delta'/4$.
We can
then detect the positive ``low probability'' 
transitions of the form  $r'(2)$ and $r'(3)$, whose probability
is $\leq \zeta * \delta'/2$, and we can remove them,
yielding a new Markov chain $M'$,
without changing the resulting probability of reaching any
absorbing state $\gamma'$ by more than $\delta' > 0$.

Let $q'_{A,\gamma}$ denote the probability of reaching
absorbing state $\gamma$ starting at state $A_i$ in the
Markov chain $M'$.  We know that 
$| q'_{A,\gamma} - q^*_{A,\gamma} | \leq \delta'$.
Our aim is thus to approximate the probabilities $q'_{A,\gamma}$
of eventually reaching the absorbing state $\gamma$ starting at any
nonterminal state $A$ in $M'$, to within a desired error $\delta'' >0$.

Let $A_1, \ldots, A_{n'}$ denote the nonterminal states of $M'$ such that 
$q'_{A_i,\gamma} > 0$.
(Note that we can detect such states in
 P-time by computing $G_{\delta}^{(3)}$, because these
are determined by the underlying graph based on rules with
probability $> 2*\delta$ in $G^{(3)}$, 
and these rules can be determined by computing $G_{\delta}^{(3)}$.)

Now the substochastic matrix $P'$ associated with $M'$ and $\gamma$
is defined to be an $n' \times n'$ matrix, where $P'_{i,j}$ is the
one-step transition probability from state $A_i$ to state $A_j$
in the Markov chain $M'$.

This yields for us, 
a new system $(I- P')x = b'^{\gamma}$.
Note that as defined $P'$ may still have irrational entries,
because we have not approximated the other positive rule probabilities
which were not removed.
Note that 
the states $A_1, \ldots, A_{n'}$ of $M'$ are transient,
and thus again by standard facts (e.g., \cite{BP94}, Lemma 8.3.20)
we have that the equation $(I- P')x = b'^{\gamma}$
has a unique solution  $\hat{x^*}= (I-P')^{-1} b'^\gamma$.

Furthermore, we have just argued that 
$| x^* - \hat{x^*} | < \delta'$.
It also always holds that the entries of $b'^\gamma$ are 
either zero or $\geq 1/2^{9*|G^{(2)}|}$, and that 
there is at least one non-zero entry in $b'^\gamma$.

Note that $(I-P')^{-1} = \sum^{\infty}_{k=1} (P')^k$.
We will need an upper bound on the row sums of $(I-P')^{-1}$.
Note that $P'^{k}_{i,j}$ is the probability of being in state $A_j$
in $k$ steps after starting in state $A_i$. 

\begin{claim}
\label{claim:small-interse-norm}
Let $c > 0$ denote the smallest positive entry in $P'$, and let 
$p > 0$ denote the smallest positive entry of $b'^\gamma$.
Then
for all $i \in \{1,\ldots,n'\}$,  
$0 \leq \sum^{n'}_{j=1} (I-P')^{-1}_{i,j} \leq  \frac{n'}{p c^{n'}}$
\end{claim}
\begin{proof}
Every state among $A_1,\ldots, A_{n'}$  has, by definition, 
a positive probability of reaching $\gamma$.
Thus, since there are $n'$ states in total, and each positive
probability transition has at least probability $c > 0$,
then the probability that, starting at any of these states $A_j$ 
we reach $\gamma$ within $n'$ steps is at least $p c^{n'}$.
Thus the probability of not reaching $\gamma$ within $n'$ steps
is $(1- p c^{n'})$.  But since this is the case for any such state $A_j$,
the probability of not reaching $\gamma$ within $d n'$ steps 
starting at any state $A_j$ is at most $(1-p c^{n'})^d$.

Now note that $(P')^{dn'}_{i,j}$ is the probability of being in state $A_j$
after $dn'$ steps.  But by what we have just argued, we know that
$\sum^{n'}_{j=1} (P')^{dn'}_{i,j} \leq (1-p c^{n'})^d$.
Thus, for all $i$,  $\sum^{\infty}_{d=0} \sum^{n'}_{j=1} (P')^{dn'}_{i,j} \leq
\sum^{\infty}_{d=0} (1-p c^{n'})^d = \frac{1}{p c^{n'}}$.

Similarly, for any $r \in \{1, \ldots,n'-1\}$,
the probability of not reaching $\gamma$ within $d n' + r$ steps is 
starting at any state $A_j$ is also at most $(1-p c^{n'})^d$.
Thus $\sum^{n'}_{j=1} (P')^{dn'+r}_{i,j} \leq (1-p c^{n'})^d$.
Thus for all $i \in \{1,\ldots,n'\}$, and all $r \in \{0,\ldots,n'-1\}$, 
we have  $\sum^{\infty}_{d=0} \sum^{n'}_{j=1} (P')^{dn'+r}_{i,j} \leq
\sum^{\infty}_{d=0} (1-p c^{n'})^d = \frac{1}{p c^{n'}}$.
But note that $(I-P')^{-1}_{i,j} = (\sum^{\infty}_{k=0} P')_{i,j}
= \sum^{n'-1}_{r=0} \sum^{\infty}_{d=0} \sum^{n'}_{j=1} (P')^{dn'+r}_{i,j}
\leq n' * \frac{1}{p c^{n'}} = \frac{n'}{p c^{n'}}$.
\end{proof}

We now show that 
approximating
the entries of $P'$ and $b'^\gamma$ to within a sufficiently small
desired accuracy  $\delta'' > 0$ 
yields a new approximate linear system of equations
whose {\em unique} solution $\tilde{x^*}$ is within a desired distance 
of $\hat{x^*}$, and thus within a desired distance of $x^*$.
For this, we use a standard  {\em condition number bound} for
errors in the solution of linear systems of equations:

\begin{thm}
(see, e.g., \cite{IsaKel66}, 
Chap 2.1.2, Thm 3.\footnote{Our statement is weaker, 
but is directly derivable from the cited Theorem.})
Consider a system of linear equations, $Bx = b$,
where $B \in \real^{n \times n}$ and
$b \in \real^n$.
Suppose $B$ is
non-singular, and $b \neq 0$.   Let ${x^* = B^{-1}b}$ be the unique solution
to this linear system, and suppose $x^* \neq 0$.
Let $\norm{\cdot}$ denote any vector norm and associated matrix norm
(when applied to vectors and matrices, respectively).
Let $\cond(B) = \norm{B} \cdot \norm{B^{-1}}$ denote the
condition number of $B$.
Let  $\varepsilon , \varepsilon' > 0$,  be values such that
$\varepsilon' < 1$, and
$\varepsilon
    \cdot \cond(B) \leq \varepsilon'/4$.
Let ${\mathcal E} \in \real^{n \times n}$ and
${\theta} \in \real^n$, be such that
$\frac{\norm{{\mathcal E}}}{\norm{B}} \leq \varepsilon$,
    $\frac{\norm{\theta}}{\norm{b}} \leq \varepsilon$, and
$\norm{{\mathcal E}} < 1/\norm{B^{-1}}$.
Then the system of linear equations
 $(B+{\mathcal E})x = b+\theta$
    has a unique solution $x^*_\varepsilon$ such that:

 $$ \frac{\norm{x^*_\varepsilon - x^*}}{\norm{x^*}} \leq
\varepsilon'$$

\label{thm:lin-cond-bound}
\end{thm}

We will apply this theorem using the $l_\infty$ vector norm 
and induced matrix
norm
({\bf\em maximum absolute row sum}):
$ \norminf{x} := \max_i \abs{x_i}$ \mbox{and} $
\norminf{A} := \max_i \sum_j \abs{a_{ij}}$.

Let us define the matrices and vector in the statement of Theorem
\ref{thm:lin-cond-bound} as follows:
$B := (I-P')$ and  $b := b'^\gamma$. Note that 
$B = (I-P')$ is non-singular and $B^{-1} = \sum^\infty_{k=0} (P')^k$,
and $x^* = B^{-1}b$ is the unique solution to the linear system $Bx = b$,
and that  $x^* \neq 0$.

Let us now give bounds for, $\| B \|$, $\| B^{-1} \|$  and
$\cond(B) := \| B \| \| B^{-1} \|$.
(Note that we define $\| \cdot \| := \| \cdot \|_\infty$.)

\begin{claim}
\label{claim:bounds-on-B}
Let $p$ be the smallest non-zero probability labeling any transition in $M'$.
Then   $p \leq \| B \| \leq 2$ and $\| B^{-1} \| \leq \frac{n'}{p^{n'+1}}$.
Thus $\cond(B) \leq \frac{2n'}{p^{n'+1}}$.
\end{claim}
\begin{proof}
$B = (I-P')$ and 
$P'$ is a substochastic matrix.
Thus $\| (I -P') \| \leq 2$.  Furthermore, since every transient state
$A_j$ indexing rows and columns of $P'$ has, by definition, 
non-zero-probability of reaching the absorbing
state $\gamma$, we know that the probability of returning 
from any state $A_j$ immediately back to itself is at most $(1-p)$,
and thus  for any $j$,  $(I-P')_{j,j} \geq p$, and thus
$\| B \| \geq p$.
Next, $\| B^{-1}\| = \| \sum^\infty_{k=0} (P')^k \|$, and we 
established in Claim \ref{claim:small-interse-norm}  that
$\| \sum^\infty_{k=0} (P')^k \| \leq \frac{n'}{p^{n'+1}}$.
\end{proof}

We define $\varepsilon, \varepsilon' > 0$ as follows:
choose an arbitrary desired 
$\varepsilon' = \delta'$ so that $0 \leq \varepsilon' < 1$.
Then let $\varepsilon = \frac{\varepsilon'}{4 * \cond(B)}$.

Given the bounds on $\| B \|$, $\| B^{-1} \|$, and $\cond(B)$,
in Claim \ref{claim:bounds-on-B},
we are able to choose a suitable $\delta$ with polynomial encoding size, 
namely $\delta :=  \zeta * \varepsilon /(4 * n' * \cond(B) * \| B^{-1} \|)$, 
and compute in P-time an
approximation $G^{(3)}_{\delta} \in B_{\delta}(G^{(3)})$.
Using $G^{(3)}_{\delta}$ we can compute
an approximation $\tilde{P}'$
for the matrix $P'$  (since the entries of $P'$ are rule probabilities
in $G^{(3)}$, except for those probabilities that are too low, $\leq \delta *2$, 
which we can remove because we can detect them),
and we can also compute an approximation $\tilde{b}$ for the vector $b = b'^\gamma$,
such that, letting $\tilde{B} := (I- \tilde{P'})$, 
 if we let  ${\mathcal E} := \tilde{B} - B$,
and we let ${\mathcal \zeta} := \tilde{b} - b$,
then  
$\frac{\|\mathcal E \|}{\| B \|} \leq \varepsilon$ 
and $\frac{\|\mathcal \zeta \|}{\| b \|} \leq \varepsilon$,
and $\| {\mathcal E } \| \leq 1/\| B^{-1} \|$.

Thus all the conditions are in place to apply 
\ref{thm:lin-cond-bound}, and we have that the system  $\tilde{B} x = \tilde{b}$ has
a unique solution $\tilde{x}^* =  \tilde{B'}^{-1} \tilde{b}$, and that
$ \frac{\norm{x^*_\varepsilon - x^*}}{\norm{x^*}} \leq
\varepsilon'$.  Since we know that $0 < \| x^* \| \leq 1$, we have that 
$\norm{x^*_\varepsilon - x^*} \leq \varepsilon' = \delta'$.

We have thus established that we can approximate within a desired additive error $\delta'$
the probabilities $q^*_{A_j,\gamma}$ of eventually
reaching $\gamma$ from any nonterminal $A_j$ via linear rules.    In order to construct
$G^{(4)}_{\delta'} \in B_{\delta'}(G^{(4)})$ using this, and by adding suitable rules 
approximating $A_j \stackrel{q^*_{A_j,\gamma}}{\longrightarrow} \gamma$,
we have to make a few relatively easy technical observations.
Firstly, in our computations we have eliminated rules whose probability was ``too low'' to
effect the overall probability of reaching $\gamma$ significantly.
However, our definition of $G^{(4)}_{\delta'}$ requires that every rule 
that has positive probability in $G^{(4)}$ should have positive probability 
in $G^{(4)}_{\delta'}$.   This is easy to rectify: by the choice of $\delta$ in our
approximation $G^{(3)}_{\delta}$, we can, for all rules $A_j \rightarrow \gamma$
that should have positive probability, we can put in such rules with a small enough
positive probability $\delta/2$, so that it does not effect the overall probability
of reaching any $\gamma$ substantially.
This finally leads us to another point: we must make sure $G^{(4)}_{\delta'}$
is a proper SCFG.   This we can do again, because, by choosing $\delta$ to be suitably
small, we can make sure that we can normalize the sum of weights on the approximated rules 
coming out of each nonterminal without changing any particular rule probability substantially.
This completes the proof that we can compute $G^{(4)}_{\delta'} \in B_{\delta'}(G^{(4)})$ in
P-time.
\end{proof}

Finally, we are ready to finish the proof of both Theorems \ref{main-scfg-parsing-thm} and 
\ref{main-scfg-cnf-form-thm},
both of which are direct corollaries of the following Lemma:

\begin{lem}
\label{lem:final-lem-approx-parse}
Given any SCFG, $G$, with rational rule probabilities, 
given any rational value $\delta > 0$ in standard binary representation,
and given any natural number $N$ specified in unary representation, 
there is a polynomial time algorithm that
computes an SCFG, $G^{(5)} = G^{(4)}_{\delta'} \in B_{\delta'}(G^{(4)})$,
for a suitably chosen $\delta' = f(|G|,N,\delta) > 0$, 
where $f$ is some polynomial function, 
such that for all nonterminals $A$,
and for all strings $w \in \Sigma^*$,  
such that $|w| \leq N$, it holds that

\[   | p^A_{G^{(4)},w} - p^A_{G^{(5)},w} | \leq \delta \]

Moreover, given $G$, $\delta > 0$, and a string $w$ of
length at most $N$, we can compute in polynomial time
(in the standard Turing model of computation) a value $v_{G,w}$
such that 

\[  | p^{A}_{G^{(4)},w} - v_{G,w} | \leq \delta \]
\end{lem}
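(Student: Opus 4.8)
\textbf{Proof plan for Lemma~\ref{lem:final-lem-approx-parse}.}
The structural fact to exploit is that $G^{(4)}$ is in Chomsky Normal Form, so for a string $w=w_1\cdots w_n$ with $n\ge 1$ and any nonterminal $A$, the probability $p^A_{G^{(4)},w}$ is computed by the standard probabilistic CKY dynamic program: fill a table $p[B,i,j]$ (the probability that $B$ generates $w_i\cdots w_j$) bottom up by $p[B,i,i]=\sum_{B\to w_i}p(B\to w_i)$ and $p[B,i,j]=\sum_{i\le k<j}\sum_{B\to CD}p(B\to CD)\,p[C,i,k]\,p[D,k{+}1,j]$, with $p^A_{G^{(4)},w}=p[A,1,n]$. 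This uses $O(n^2)$ table cells, each obtained by $O(n\,r)$ operations from $\{+,\times\}$ over the rule probabilities, where $r$ is the number of rules of $G^{(4)}$ (which is polynomial in $|G|$); since $G^{(4)}$ is proper, all table entries and all rule probabilities lie in $[0,1]$. The plan is: choose a precision $\delta'>0$; use Theorem~\ref{g4-approx-step-thm} to compute in P-time a proper $G^{(5)}:=G^{(4)}_{\delta'}\in B_{\delta'}(G^{(4)})$ with rational rule probabilities of polynomial encoding size; run the \emph{exact} CKY program on $G^{(5)}$ to get $v_{G,w}:=p^A_{G^{(5)},w}$; and show that a suitable $\delta'$ makes this the desired approximation and keeps everything polynomial. (For $w=\epsilon$ there is nothing to do: $G^{(4)}$ and $G^{(5)}$ have no $\epsilon$-rules, so $p^A_{G^{(4)},\epsilon}=p^A_{G^{(5)},\epsilon}=0$.)

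\textbf{Step 1: sensitivity of the CKY computation.}
I would propagate the perturbation of the rule probabilities through the dynamic program, using that all intermediate values lie in $[0,1]$: for a product of quantities in $[0,1]$, $|xyz-x'y'z'|\le|x-x'|+|y-y'|+|z-z'|$ (telescoping), and a sum of terms incurs the sum of the term errors. Writing $M_d$ for the maximum over nonterminals $B$ and substrings of length $d{+}1$ of $|p[B,i,j]-p'[B,i,j]|$ (primes referring to $G^{(5)}$), one gets $M_0\le r\,\delta'$ and $M_d\le n\,r\,\delta'+2n\,r\max_{d'<d}M_{d'}$, hence $M_{n-1}\le\big(2n\,r\big)^{O(n)}\,\delta'=:L\cdot\delta'$, so $|p^A_{G^{(4)},w}-p^A_{G^{(5)},w}|\le L\,\delta'$. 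Since $n\le N$, $N$ is given in unary, and $r$ is polynomial in $|G|$, the constant $L=\big(2N\,r\big)^{O(N)}=2^{\mathrm{poly}(N,\log|G|)}$ has polynomial encoding size and, crucially, is \emph{independent of the magnitudes} of the (possibly irrational, possibly doubly-exponentially small) rule probabilities of $G^{(4)}$ — it is the $[0,1]$-boundedness of all CKY intermediate values, not their size, that is used. I would therefore set $\delta':=\delta/L$, so $\log(1/\delta')=\log(1/\delta)+\log L$ is bounded by a polynomial in $|G|$, $N$, and the bit-size of $\delta$ (this is the claimed $\delta'=f(|G|,N,\delta)$), and then $|p^A_{G^{(4)},w}-p^A_{G^{(5)},w}|\le L\delta'=\delta$ for all $A$ and all $w$ with $|w|\le N$ simultaneously.

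\textbf{Step 2: polynomiality of the exact evaluation on $G^{(5)}$.}
By Theorem~\ref{g4-approx-step-thm}, the rational rule probabilities of $G^{(5)}$ have numerators and denominators of bit-size $s=\mathrm{poly}(|G|,\log(1/\delta'))=\mathrm{poly}(|G|,N,\log(1/\delta))$. If $p[B,i,j]$ for a substring of length $d{+}1$ has bit-size $\beta_d$, then each product contributing to a length-$(d{+}2)$ entry has bit-size at most $s+\beta_a+\beta_b$ with $a+b=d$, and summing $O(n\,r)$ of these gives bit-size $O(s)+\beta_d+O(\log(nr))$; thus $\beta_d=O\big(d\,(s+\log(Nr))\big)$, which is polynomial. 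Hence the whole CKY program on $G^{(5)}$ runs in exact rational arithmetic on numbers of polynomial bit-size in polynomial time — no rounding inside the table is needed — and outputs $v_{G,w}=p^A_{G^{(5)},w}$ with $|p^A_{G^{(4)},w}-v_{G,w}|\le\delta$, completing the Lemma.

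\textbf{Main obstacle and how the Theorems follow.}
The only delicate point is Step~1: one must check that the sensitivity constant $L$ is singly-exponential in the \emph{unary} parameter $N$ and polynomial in the rest, and in particular that it never picks up a dependence on the tiny/irrational rule probabilities of $G^{(4)}$. Everything else is bookkeeping. Given the Lemma, Theorems~\ref{main-scfg-parsing-thm} and~\ref{main-scfg-cnf-form-thm} are immediate: chaining Lemmas~\ref{grammar-sd2nf-lem}, \ref{cleaning-2-scfg-lemma}, \ref{lem-cnf-construct-step1}, \ref{4th-step-transform-lem} yields $p_{G,w}=p^{S}_{G^{(2)},w}=p^{S}_{G^{(4)},w}\cdot NE_{G^{(2)}}(S)$ for non-empty $w$ (and $p_{G,\epsilon}=E_{G^{(2)}}(S)$), so for Theorem~\ref{main-scfg-parsing-thm} one approximates $p^{S}_{G^{(4)},w}$ via the Lemma (with $N=|w|$) and $NE_{G^{(2)}}(S)$ via Lemma~\ref{e-and-ne-in-ptime-lem}, choosing the two error targets so their product is within $\delta$ (handling $w=\epsilon$ and the trivial-grammar case $E_{G^{(2)}}(S)=1$ directly); for Theorem~\ref{main-scfg-cnf-form-thm} one outputs the CNF grammar $G'$ built from $G^{(5)}$ together with a fresh start nonterminal $S'$ carrying the (approximated) probabilities $p(S\to\gamma)\cdot NE_{G^{(2)}}(S)$ on rules $S'\to\gamma$ and $E_{G^{(2)}}(S)$ on $S'\to\epsilon$, which $\delta$-approximates the analogously defined exact (possibly irrational) CNF grammar $G''$ equivalent to $G$, and satisfies $|p_{G,w}-p_{G',w}|\le\delta$ for all $w$ of length at most $N$.
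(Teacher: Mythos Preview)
Your Step~1 sensitivity analysis is essentially the same as the paper's: both propagate a $\delta'$-perturbation of the rule probabilities through the CKY recurrence using that all intermediate values lie in $[0,1]$, and both arrive at an error blow-up factor of the form $(c\,m)^{O(N)}$ (the paper's $(4m^2)^N$, your $(2Nr)^{O(N)}$), which is fine since $N$ is in unary.

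Where you genuinely diverge is Step~2: the paper keeps numbers small by \emph{rounding after each CKY layer} and folding the rounding error into the same sensitivity analysis; you instead run \emph{exact} rational CKY on $G^{(5)}$ and argue the bit-sizes stay polynomial. That is a legitimate and cleaner alternative, but your bit-size recurrence as written is not correct for arbitrary rationals: summing $O(nr)$ rationals of bit-size $B$ does \emph{not} in general give bit-size $B+O(\log(nr))$, because the common denominator can be the product of the individual denominators. The fix is easy and in fact makes your argument tighter: put all rule probabilities of $G^{(5)}$ over a single common denominator $D$ (with $\log D$ polynomial in the output size of Theorem~\ref{g4-approx-step-thm}); then an immediate induction shows that every CKY entry for a length-$\ell$ substring is an integer over $D^{2\ell-1}$ (since for $\ell\ge2$ every summand in the recurrence has exactly this denominator), and since the value is in $[0,1]$ the numerator has at most $(2\ell-1)\log D$ bits. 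Hence all intermediate numbers have bit-size $O(N\log D)$, polynomial as required. With this correction your route goes through; compared to the paper's, it trades a per-step rounding analysis for a one-shot denominator bookkeeping argument, which is arguably more transparent.
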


\begin{proof}
To prove this theorem, we will exploit the
standard dynamic
programming algorithm 
(a variant of Cocke-Kasami-Younger) 
for 
computing the ``inside'' probability, $p_{G,w}$, for an SCFG $G$ 
that is already in Chomsky Normal Form.

The algorithm
was originally observed as part of the inside-outside
algorithm by Baker \cite{Baker79} (see also \cite{LariYoung90}).
It works in P-time in the unit-cost arithmetic RAM model of computation,
meaning it uses a polynomial number of arithmetic $\{+,*\}$ operations,
and inductively computes the probabilities,  
$q^A_{i,j}$, that starting at the nonterminal $A$ 
the SCFG generates the string
of length $j$ starting in position $i$ of the string $w$.
In other words,  $q^A_{i,j} := P_G( A \stackrel{*}{\rightarrow} 
w_i \ldots w_{i+j-1})$.

The induction in the dynamic program is on the length, $j$, of the string.
The base case of the induction is easy:  $q^A_{i,1}$ is
the probability $p$ of the rule  $A \stackrel{p}{\rightarrow} w_i$.
If no such rule exists, then $q^A_{i,1} := 0$.

Note that we can assume the CNF grammar $G^{(4)}$ and $G^{(5)}$ do
do not have any $\epsilon$ rules.\footnote{Also, we 
can of course compute/approximate the probability that
a CNF grammar generates the empty string, $\epsilon$.
The only possible $\epsilon$ rule is $S \stackrel{p}{\rightarrow}
\epsilon$, and this can only appear if $S$ is not on
the RHS of any rule. Thus the probability of generating
$\epsilon$ from $S$ is $p$ (or $0$ if no such
rule exists), and it is $0$ for all other nonterminals.}

For the inductive step, assume we have already computed $q^A_{i,j'}$
for all nonterminals $A$, all $i$, and all $j'$ such that $1 \leq j' < j$.
Let the rules associated with $A$ whose RHSs do not
consist of just a terminal symbol be $A \stackrel{p_1}{\rightarrow} X_1 Y_1$,
$A \stackrel{p_2}{\rightarrow} X_2 Y_2$, $\ldots$,  $
A \stackrel{p_k}{\rightarrow} X_k Y_k$,   where $X_d$ and $Y_d$ are
nonterminals for all $d = 1, \ldots, k$.
(It may of course be the case that $k=0$, i.e., that there are 
no such rules associated with $A$ in this CNF grammar.)

Then it is easy to check that the probability $q^A_{i,j}$ can
be computed inductively by the following arithmetic expression:

\begin{equation}
\label{cky-dyn-equation}
  q^A_{i,j} =   \sum^k_{d = 1}  p_d  \sum^{j-1}_{m = 1} q^{X_d}_{i,m}
q^{Y_d}_{i+m,j-m}
\end{equation}

Thus by induction, we can compute $q^S_{1,n}$ which is precisely
the probability that the grammar $G$ generates the string $w$ 
starting with the start nonterminal $S$.
In this way this algorithm computes $p_{G,w}$  
for SCFGs $G$ that are already in CNF.

It is important to point out two issues with the
above algorithm.  

\begin{enumerate}
\item Firstly, even if we assume we are
given as input a SCFG $G$ which is already in CNF form, 
and where all of the rules
have {\em rational} probabilities,
the above inside algorithm,
as described, only works in P-time in the {\em unit cost arithmetic 
RAM} model of computation, because although it only requires
a polynomial number of arithmetic operations to compute $q^{S}_{1,n}$,
since we require iterated multiplications, it means
that in principle it is possible for 
the rational values $q^A_{i,j}$ that we compute
to blow up in encoding size, and in particular to require 
encoding size that is exponential in $j$, the length of the string
being parsed.

Thus, to carry out the inside algorithm in P-time in the standard Turing model
of computation,
we need to show how we can approximate the output
$q^A_{i,j}$ in P-time in the Turing model.    
We shall show that indeed
rounding the intermediate computed values to within a suitable
polynomially many bits of precision suffices to achieve this,
and thus that the approximate total probability parsing problem 
when the SCFGs are  already in CNF form can be carried out
in P-time.

\item A second problem is that 
our original grammar $G^{(4)}$ has irrational rule probabilities,
so we had to approximate $G^{(4)}$ with a suitable 
$G^{(5)} = G^{(4)}_{\delta'}$.
In this case the CKY dynamic programming method is being
applied to rule probabilities that  only approximate the 
``true'' values
of the rule probabilities.  We show that with a good enough approximation
this can not severely effect the overall probability that is computed.

\end{enumerate}

We will show that both of these issues can be addressed in
the same way:  by approximating the original rule probabilities
to within sufficient accuracy requiring only polynomial computation, and then by using 
these inductively
in the CKY dynamic programming algorithm, and rounding to within
sufficiently accuracy after each step of the induction, and iterating the induction up to
the string length value $N$ given in unary,
we will be able to compute in P-time
(in the standard Turing model of computation) an output value
that is within desired accuracy of the ``true'' output value
of this algorithm in the unit-cost RAM model on the original
(irrational) SCFG $G^{(4)}$, and thus we will 
have approximated the desired probabilities $p^{A}_{G,w}$ 
to within sufficient accuracy.

The key observation for why the above approximations
are possible is the following.
Suppose that we are inductively attempting to approximate
the value of $q^A_{i,j}$, having been given $\delta$-approximations
of all quantities on the RHS of the inductive equation:
 
\[  q^A_{i,j} =   \sum^k_{d = 1}  p_d  \sum^{j-1}_{m = 1} q^{X_d}_{i,m}
q^{Y_d}_{i+m,j-m} \]

First, observe that if we have $\delta$-approximated 
the probabilities $q^{X_d}_{i,m}$ 
and $q^{Y_d}_{i+m,j-m}$ with values $v^{X_d}_{i,m} \in [0,1]$ and 
$v^{Y_d}_{i+m,j-m} \in [0,1]$ respectively,
then since all of these values are in $[0,1]$  we have that

\begin{eqnarray*} 
| q^{X_d}_{i,m}
* q^{Y_d}_{i+m,j-m} - v^{X_d}_{i,m} * v^{Y_d}_{i+i,m} | 
& \leq & \delta \max(q^{X_d}_{i,m},  v^{X_d}_{i,m}) + 
 \delta  \max(q^{Y_d}_{i+m,j-m}, v^{Y_d}_{i+m,j-m})\\
&  \leq &  2 \delta 
\end{eqnarray*}

Therefore, 

\[ | \sum^{j-1}_{m = 1} q^{X_d}_{i,m}
q^{Y_d}_{i+m,j-m} -  \sum^{j-1}_{m = 1} v^{X_d}_{i,m}
v^{Y_d}_{i+m,j-m}| \leq j 2 \delta \]

Finally, if we have $\delta$-approximated all probabilities $p_d$ with
$v_d \in [0,1]$, in such a way that $\sum^k_{d=1} v_d \leq 1$,
then since we know that $\sum^k_{d=1} p_d  \leq 1$,
we have that:

\[ |  p_d  \sum^{j-1}_{m = 1} q^{X_d}_{i,m}
q^{Y_d}_{i+m,j-m} - v_d  \sum^{j-1}_{m = 1} v^{X_d}_{i,m}
v^{Y_d}_{i+m,j-m}|  \leq j 2 \delta + \delta j \leq 4 j \delta  \]

Thus 

\[ |  \sum^k_{d = 1}  p_d  \sum^{j-1}_{m = 1} q^{X_d}_{i,m}
q^{Y_d}_{i+m,j-m}  -   \sum^k_{d = 1}  v_d  \sum^{j-1}_{m = 1} v^{X_d}_{i,m}
v^{Y_d}_{i+m,j-m}| \leq  4 k j \delta \]

Thus, if the error accumulated at the previous iteration was $\delta$,
then the error accumulated at the next iteration is $4 k j \delta$.
Inductively, since there are $N$ iterations in total, we
see that if $m$ denotes the number of rules plus the number
of distinct RHSs in the grammar $G^{(4)}$,  
then the total error after all $N$ iterations,
assuming the base case has been computed
to within error $\delta$, is at most

\[ (4 m^2)^N \delta \]

Note that this error amounts to a loss of 
only polynomially many ``bits of precision'' in the input size.
Note also that we can, after each iteration, round the values
to within the desired polynomially many bits of precision,
only accumulating negligible extra error, and still maintain
the overall loss of only polynomially many ``bits of precision'', even when all
computations are on numbers of polynomial bit length.
\end{proof}

\section{Addendum: Quadratic convergence (with explicit
constants) for Newton's method on PPSs, and
quantitative decision problems for PPSs in unit-cost-P-time}

\label{sec:quadratic-conv-quant-decision}

In this section we
combine results stated in the STOC'12  conference version
of this paper together with
other results, in particular results stated and proved in
a subsequent paper that appeared at ICALP'12 \cite{bmdp}, 
in order to extend 
Theorem
\ref{linearconvgen} to a 
{\em quadratic convergence} result for Newton's method on PPSs (with all 
constants explicit).
Namely, given a PPS, $x=P(x)$, with LFP  ${\mathbf 0} < q^* < {\mathbf 1}$, 
if we start Newton iteration
at $x^{(0)} := {\mathbf 0}$, then 
for all $i \geq 1$, we have 
$$\| q^* - x^{(32|P| + 2 + 2i)}\|_{\infty} \leq \frac{1}{2^{2^{i}}}$$

We then use this result to show that the {\bf\em decision problem}
for the LFP $q^*$ of PPSs, which asks,
given a PPS $x=P(x)$ over $n$ variables, 
and given a rational number $r \in [0,1]$, decide
whether $q^*_i > r$  (or whether $q^*_i \geq r$)  
is decidable {\em in the unit-cost arithmetic
RAM model of computation} in polynomial time,
and thus this decision problem is itself reducible to the PosSLP problem.
We in fact show further that deciding whether $q^*_i > r$ is
P-time many-one (Karp) reducible to PosSLP.

We assume throughout this
section, w.l.o.g., that every PPS, $x=P(x)$,   is in {\em simple normal form},
and that the LFP, $q^*$ satisfies ${\mathbf 0} < q^* < {\mathbf 1}$.
We will use the following Theorem from \cite{bmdp}:

\begin{thm}  (Theorem 4.6 of \cite{bmdp}) 
\label{thm:normbounds}
If $x=P(x)$ is a PPS with LFP $q^* > 0$  then\\
(i) If $q^* < 1$ and $0 \leq y < 1$, then 
 $(I-B(\frac{1}{2}(y+q^*)))^{-1}$ exists and is non-negative, and
\begin{equation}
\label{eq:normbound-mixed}
\|(I-B(\frac{1}{2}(y+q^*)))^{-1}\|_\infty \leq  2^{10|P|} 
 \max \{2(1-y)_{\min}^{-1}, 2^{|P|}\}
\end{equation}
(ii) If $q^* = 1$ and $x = P(x)$ is strongly connected 
(i.e. every variable depends directly or indirectly  on every other) 
and $0 \leq y <  1 = q^*$, then
$(I -B(y))^{-1}$ exists and is non-negative, and 
$$\|(I -B(y))^{-1}\|_\infty \leq 2^{4|P|} \frac{1}{(1-y)_{\min}}$$
\end{thm}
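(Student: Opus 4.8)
The plan is to prove the two parts by the same two-stage method used throughout the paper: first settle everything for \emph{strongly connected} PPSs, then lift to the general case by decomposing along the SCCs of the dependency graph and processing them bottom-up. Throughout I work in SNF form, so every row of the Jacobian $B(x)$ is either constant (for a Form$_+$ equation) or affine with the two nonzero entries $x_k,x_j$ (for a Form$_*$ equation $P_i=x_jx_k$).

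\textbf{Existence and non-negativity.} For (i), since $0\le y<\mathbf{1}$ we have $\mathbf{0}<\tfrac12 q^*\le\tfrac12(y+q^*)\le\tfrac12(\mathbf{1}+q^*)<\mathbf{1}$, so by monotonicity of the Jacobian $0\le B(\tfrac12(y+q^*))\le B(\tfrac12(\mathbf{1}+q^*))$ and hence $\rho(B(\tfrac12(y+q^*)))\le\rho(B(\tfrac12(\mathbf{1}+q^*)))$; by Lemma~\ref{1-qmean} the positive vector $\mathbf{1}-q^*$ is a cone vector for the latter, and running the Perron--Frobenius/SCC argument of Theorem~\ref{thm:spec-full} verbatim (the strict drop needed at the top SCC comes from a genuinely ``leaking'' variable, which exists because $q^*<\mathbf{1}$) gives $\rho(B(\tfrac12(\mathbf{1}+q^*)))<1$; then Lemma~\ref{series} yields $(I-B(\tfrac12(y+q^*)))^{-1}=\sum_{k\ge0}B(\tfrac12(y+q^*))^k\ge0$. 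For (ii), $q^*=\mathbf{1}$ and $q^*=P(q^*)$ force $P(\mathbf{1})=\mathbf{1}$; strong connectivity together with $q^*=\mathbf{1}$ forces the subsystem to contain either a Form$_*$ equation or a Form$_+$ equation with a positive constant term, and since $y<\mathbf{1}$ this makes $B(y)\le B(\mathbf{1})$ with strict inequality in at least one position; combined with irreducibility of $B(y)$, an argument in the style of Lemma~\ref{scc-inverse-lem} gives $\rho(B(y))<1$ and hence $(I-B(y))^{-1}\ge0$.

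\textbf{The norm bounds (the quantitative heart).} Using $\|(I-B)^{-1}\|_\infty=\|(I-B)^{-1}\mathbf{1}\|_\infty$ for a nonnegative inverse, the plan is to compare $\mathbf{1}$ to a positive cone vector $d$ with $B d\le d$ and then control $(I-B)^{-1}d$. For (i) take $d:=\mathbf{1}-q^*$; Theorem~\ref{1comp} gives $d_{\min}\ge 2^{-4|P|}$, so $\mathbf{1}\le 2^{4|P|}d$. The key computation is that $B(\tfrac12(y+q^*))d$ drops below $d$ by a quantitatively nonzero amount in every Form$_*$ coordinate: an explicit evaluation of the affine Jacobian entries gives $\bigl(d-B(\tfrac12(y+q^*))d\bigr)_i=\tfrac12(1-y_k)(1-q^*_j)+\tfrac12(1-y_j)(1-q^*_k)\ge (1-y)_{\min}2^{-4|P|}$ for $P_i=x_jx_k$, while Form$_+$ rows contribute no drop (constant Jacobian). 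Propagating this local drop through the strongly connected structure exactly as in Lemma~\ref{sccspprelterm} converts it into a uniform contraction $B(\tfrac12(y+q^*))^n d\le\beta d$ with $1-\beta\ge (1-y)_{\min}2^{-c|P|}$ for an explicit constant $c$, which bounds $\sum_k B(\tfrac12(y+q^*))^k d$ and, after tracking constants, yields the claimed $2^{10|P|}\max\{2(1-y)_{\min}^{-1},2^{|P|}\}$. The $2^{|P|}$ branch of the max covers the degenerate strongly connected case where the subsystem is purely linear: then $B$ is constant, rational, irreducible and (since $q^*<\mathbf{1}$) substochastic with $\rho(B)<1$, so Lemma~\ref{invdet} applied to $I-B$ (denominators multiplying to $\le 2^{|P|}$, $\|I-B\|_\infty\le 2$) gives the $(1-y)$-independent bound directly. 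For (ii), since $q^*=\mathbf{1}$ the vector $d:=\mathbf{1}-y$ is itself the natural cone vector ($B(y)d\le d$ from $P(\mathbf{1})=\mathbf{1}$, Lemma~\ref{int}, and monotonicity), and here $d_{\min}=(1-y)_{\min}$ directly, so the same contraction argument gives $\|(I-B(y))^{-1}\|_\infty\le 2^{4|P|}(1-y)_{\min}^{-1}$.

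\textbf{Lifting to general PPSs (part (i)) and the main obstacle.} For a general PPS in SNF form, order the SCCs so each depends only on lower ones; within an SCC $S$, freezing the outside variables at $(\tfrac12(y+q^*))_{D_S}$ yields a strongly connected PPS to which the previous bound applies, up to a bounded perturbation from the ``boundary'' rows. Since the $\|\cdot\|_\infty$-norm of a block lower-triangular inverse is governed by a product over the SCC-DAG whose depth and per-block factors are all $2^{O(|P|)}$, the bounds compose into the stated form. I expect the main obstacle to be precisely this quantitative spectral-gap step: turning the soft cone inequality $B(\cdot)(\mathbf{1}-q^*)\le\mathbf{1}-q^*$ into an \emph{effective} contraction by locating where strict decrease occurs and propagating it, and then assembling the SCC-by-SCC estimates so that the exponent does not exceed $10|P|$.
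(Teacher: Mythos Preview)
The paper does not contain its own proof of this theorem: it is cited verbatim as Theorem~4.6 of \cite{bmdp}, and the only comment is that the proof there ``exploits results we have established in this paper, but is otherwise independent of any results we establish in this addendum section.'' So there is no in-paper argument to compare against; one would have to consult \cite{bmdp}.

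That aside, your proposal for part~(ii) has a genuine gap. You assert that for a strongly connected PPS with $q^*=\mathbf{1}$ the vector $d:=\mathbf{1}-y$ is a cone vector for $B(y)$, i.e.\ $B(y)(\mathbf{1}-y)\le\mathbf{1}-y$, ``from $P(\mathbf{1})=\mathbf{1}$, Lemma~\ref{int}, and monotonicity.'' This is false. Take $n=2$, $P_1(x)=x_2^2$ (Form$_*$), $P_2(x)=\tfrac12 x_1+\tfrac12$ (Form$_+$); this is strongly connected with $q^*=(1,1)$ (the critical case, $\rho(B(\mathbf{1}))=1$). For $y=(0.9,0.1)$ one has $B(y)=\bigl(\begin{smallmatrix}0&0.2\\ 1/2&0\end{smallmatrix}\bigr)$, $\mathbf{1}-y=(0.1,0.9)$, and $B(y)(\mathbf{1}-y)=(0.18,0.05)$, so the first coordinate violates the inequality. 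What Lemma~\ref{int} actually yields is $B\bigl(\tfrac{\mathbf{1}+y}{2}\bigr)(\mathbf{1}-y)=\mathbf{1}-P(y)$, hence by monotonicity only $B(y)(\mathbf{1}-y)\le\mathbf{1}-P(y)$; to conclude $\le\mathbf{1}-y$ you would need $P(y)\ge y$, which does not hold for arbitrary $0\le y<q^*$. Since your entire contraction-and-norm argument for (ii) rests on this cone inequality, the $2^{4|P|}(1-y)_{\min}^{-1}$ bound is currently unsupported and needs a different mechanism.

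Your part~(i) is on firmer ground: the cone vector $\mathbf{1}-q^*$ genuinely works at $\tfrac12(y+q^*)$ by Lemma~\ref{1-qmean}, and your computation of the drop $\bigl((I-B(\tfrac12(y+q^*)))(\mathbf{1}-q^*)\bigr)_i=\tfrac12(1-y_k)(1-q^*_j)+\tfrac12(1-y_j)(1-q^*_k)$ at Form$_*$ rows is correct. The remaining difficulty is exactly the one you flag: Form$_+$ rows contribute no drop, so turning the local positive gap into a global bound on $\|(I-B)^{-1}\|_\infty$ with the specific exponent $10|P|$ requires the quantitative propagation through the SCC structure that you defer; that step is where all the work lies and is not yet carried out.
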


We note, for clarity, that the proof of Theorem \ref{thm:normbounds}
in \cite{bmdp} exploits
results we have established in this paper, but is otherwise independent
of any results we establish in this addendum section (and thus there
is no reason to fear circular reasoning in our proofs).

\begin{cor}
\label{cor:norm-bound}
If $x=P(x)$ is a PPS with LFP $q^*$, and  $0 < q^* < 1$,
then $(I-B(q^*))^{-1}$ exists and is non-negative, and
$$\|(I-B(q^*))^{-1}\|_\infty \leq  2^{14|P| + 1}$$
\end{cor}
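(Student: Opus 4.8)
The plan is to derive this directly from Theorem \ref{thm:normbounds}(i) by instantiating it at $y := q^*$, and then to eliminate the term $(1-y)_{\min}^{-1}$ using the lower bound on the coordinates of $\mathbf{1} - q^*$ provided by Theorem \ref{1comp}.

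First I would check that the hypotheses of Theorem \ref{thm:normbounds}(i) are met with $y := q^*$: we have a PPS $x = P(x)$ with LFP $q^* > 0$, we are assuming $q^* < 1$, and the vector $y := q^*$ satisfies $0 \leq y < 1$. Since $\frac{1}{2}(y + q^*) = q^*$ when $y = q^*$, Theorem \ref{thm:normbounds}(i) tells us that $(I - B(q^*))^{-1}$ exists and is non-negative, and that
\[
\|(I-B(q^*))^{-1}\|_\infty \;\leq\; 2^{10|P|}\,\max\{\,2(1-q^*)_{\min}^{-1},\ 2^{|P|}\,\}.
\]
(The existence and non-negativity of $(I-B(q^*))^{-1}$ also follow independently from Theorem \ref{thm:spec-full} applied with $z = q^*$, so this part is not new; the point of the corollary is the explicit norm bound.)

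Next I would invoke Theorem \ref{1comp}, which gives $1 - q^*_i \geq 2^{-4|P|}$ for all $i$, hence $(1-q^*)_{\min} \geq 2^{-4|P|}$ and therefore $(1-q^*)_{\min}^{-1} \leq 2^{4|P|}$, so $2(1-q^*)_{\min}^{-1} \leq 2^{4|P|+1}$. Since $|P| \geq n \geq 1$ we have $2^{|P|} \leq 2^{4|P|+1}$, so the maximum in the displayed bound is at most $2^{4|P|+1}$. Substituting this in yields
\[
\|(I-B(q^*))^{-1}\|_\infty \;\leq\; 2^{10|P|}\cdot 2^{4|P|+1} \;=\; 2^{14|P|+1},
\]
which is exactly the claimed bound.

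I do not expect any real obstacle here: the corollary is essentially bookkeeping on top of the two cited results. The only things that need care are verifying that $y = q^*$ is an admissible choice in Theorem \ref{thm:normbounds}(i) (which needs $y < 1$, guaranteed by the hypothesis $q^* < 1$), and combining the exponents $10|P|$ from the norm bound with $4|P|$ from Theorem \ref{1comp} correctly — in particular making sure the $2^{|P|}$ branch of the maximum is dominated so it plays no role.
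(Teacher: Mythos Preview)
Your proposal is correct and follows essentially the same approach as the paper: instantiate Theorem~\ref{thm:normbounds}(i) at $y := q^*$ (so $\tfrac{1}{2}(y+q^*) = q^*$), then use Theorem~\ref{1comp} to bound $(1-q^*)_{\min}^{-1} \leq 2^{4|P|}$ and observe that the $2^{|P|}$ branch of the maximum is dominated, yielding $2^{10|P|}\cdot 2^{4|P|+1} = 2^{14|P|+1}$.
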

\begin{proof}
Applying part (i) of  Theorem \ref{thm:normbounds},  and letting $y := q^*$,
we obtain
\begin{eqnarray*}
\|(I-B(q^*))^{-1}\|_\infty & \leq &   2^{10|P|} 
 \max \{2(1-q^*)_{\min}^{-1}, 2^{|P|}\}\\
& \leq &   2^{10|P|} 
 \max \{2(2^{-4|P|})^{-1}, 2^{|P|}\}  \quad \quad \mbox{(by Theorem 
\ref{1comp})} \\
&  = &  2 \cdot 2^{14 |P|}  = 2^{14 |P| +1}.
\end{eqnarray*}
\end{proof}

\begin{lem}
\label{lem:one-point-five}
If $x=P(x)$ is a PPS with $n$ variables
in simple normal form (SNF), with LFP  $0 < q^* < 1$,
then for any $z \in \real^n$ such that $0 \leq z  \leq q^*$,  if:

$$\| q^* - z \|_\infty \leq \frac{1}{2^{28 |P|+2}}$$
then 
$$ \| q^* - {\mathcal N}(z) \|_\infty \leq \|q^* - z \|_\infty^{1.5}$$
\end{lem}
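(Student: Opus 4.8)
The plan is to start from the exact one-step error identity of Lemma~\ref{newton}, applied at the point $z$:
$$q^* - \mathcal{N}_P(z) = (I-B(z))^{-1}\,\frac{B(q^*) - B(z)}{2}\,(q^* - z),$$
and to bound the three factors on the right separately. Since $0 \leq z \leq q^*$, Theorem~\ref{thm:spec-full} tells us that $\rho(B(z)) < 1$, so $(I-B(z))^{-1}$ exists, is nonnegative, and equals $\sum_{k \geq 0} B(z)^k$; in particular $\mathcal{N}_P(z)$ is well defined. Moreover the Jacobian of a monotone system has nonnegative coefficients, hence is monotone in its argument, so $B(q^*) - B(z) \geq 0$, and of course $q^* - z \geq 0$. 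The right-hand side is therefore a product of three nonnegative objects, which gives $0 \leq \mathcal{N}_P(z) \leq q^*$ and lets us estimate $\|q^* - \mathcal{N}_P(z)\|_\infty$ by pushing $\|\cdot\|_\infty$ through the product.

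First I would bound $\|(I-B(z))^{-1}\|_\infty$. Because $0 \leq B(z) \leq B(q^*)$ entrywise and both Neumann series converge, we get $(I-B(z))^{-1} = \sum_{k\geq 0} B(z)^k \leq \sum_{k\geq 0} B(q^*)^k = (I-B(q^*))^{-1}$ entrywise; since the induced $\infty$-norm of a nonnegative matrix is its maximum row sum, it is monotone under entrywise domination, so $\|(I-B(z))^{-1}\|_\infty \leq \|(I-B(q^*))^{-1}\|_\infty \leq 2^{14|P|+1}$, the last inequality being exactly Corollary~\ref{cor:norm-bound}.

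Next I would bound $\|B(q^*)-B(z)\|_\infty$ using the SNF structure. Rows of $B$ coming from Form$_+$ equations are constant and contribute zero to $B(q^*)-B(z)$. A row coming from a Form$_*$ equation $x_i = x_j x_k$ has only the entries $\partial P_i/\partial x_j = x_k$ and $\partial P_i/\partial x_k = x_j$ (and $2x_j$ in the degenerate case $j=k$), so the corresponding row of $B(q^*)-B(z)$ has entries whose absolute values sum to at most $2\|q^*-z\|_\infty$. Hence $\|B(q^*)-B(z)\|_\infty \leq 2\|q^*-z\|_\infty$, and so $\|(B(q^*)-B(z))(q^*-z)\|_\infty \leq 2\|q^*-z\|_\infty^2$. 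Combining this with the inverse bound, the factors $\tfrac12$ and $2$ cancel and we obtain
$$\|q^* - \mathcal{N}_P(z)\|_\infty \leq 2^{14|P|+1}\,\|q^*-z\|_\infty^2.$$
Finally, the hypothesis $\|q^*-z\|_\infty \leq 2^{-(28|P|+2)}$ is precisely what absorbs the constant: it yields $2^{14|P|+1}\|q^*-z\|_\infty^{1/2} \leq 1$, and multiplying both sides by $\|q^*-z\|_\infty^{3/2}$ gives $2^{14|P|+1}\|q^*-z\|_\infty^2 \leq \|q^*-z\|_\infty^{1.5}$, which is the claim. I expect the only nontrivial ingredient to be the bound $\|(I-B(q^*))^{-1}\|_\infty \leq 2^{14|P|+1}$, which in turn leans on Theorem~\ref{thm:normbounds} (imported from \cite{bmdp}) and on the lower bound $1-q^*_i \geq 2^{-4|P|}$ of Theorem~\ref{1comp}; everything else is a short routine estimate, so the main work has effectively already been done in the earlier sections.
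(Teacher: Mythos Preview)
Your proof is correct and follows essentially the same route as the paper: the identity from Lemma~\ref{newton}, the SNF row-sum bound $\|\tfrac{1}{2}(B(q^*)-B(z))\|_\infty \leq \|q^*-z\|_\infty$, the inverse bound $\|(I-B(z))^{-1}\|_\infty \leq 2^{14|P|+1}$ via Corollary~\ref{cor:norm-bound}, and the final absorption of the constant by the hypothesis. In fact you are slightly more explicit than the paper at one point: the paper simply cites Corollary~\ref{cor:norm-bound} (which literally bounds $\|(I-B(q^*))^{-1}\|_\infty$) to control $\|(I-B(z))^{-1}\|_\infty$, whereas you correctly insert the entrywise Neumann-series comparison $(I-B(z))^{-1}\leq (I-B(q^*))^{-1}$ to bridge that gap.
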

\begin{proof}
Let us first note that
\begin{equation}
\label{eq:bound-on-jac-diff}
\| \frac{B(q^*) - B(z)}{2}\|_\infty 
\leq |(q^* - z) \|_\infty
\end{equation}
This holds because $x=P(x)$ is in SNF form, and thus 
every equation $x_i = P_i(x)$ is either of the form $x_i = x_j x_k$,
or else it is a {\em linear} (affine) equation, of the form $x_i = \sum^n_{j=1} p_j x_j + p_0$.
Now, 
for every $i$ with a non-linear equation, i.e., where $P_i(x) \equiv x_j x_k$,  
the $i$'th row of
the Jacobian matrix $B(x)$,  
contains exactly two non-zero entries:
one is $x_j = \frac{\partial P_i(x)}{\partial x_k}$ and 
the other is $x_k = \frac{\partial P_i(x)}{\partial x_j}$.
Thus, if we
define the matrix  $A = \frac{B(q^*) - B(z)}{2}$,  we must have   
$\sum^n_{r=1} |A_{i,r}| =   \frac{(q^*_j - z_j)  + (q^*_k - z_k)}{2} \leq \| q^* - z \|_\infty$.
Furthermore, for every $i$ with a {\em linear} equation,  the $i$'th row of the
Jacobian matrix $B(x)$ consists of only constants that do not depend on $x$, 
and thus in that case $\sum^n_{r=1} |A_{i,r}|  =  0 \leq  \| q^* - z \|_\infty$.
Thus inequality (\ref{eq:bound-on-jac-diff}) holds.

Now, using Lemma \ref{newton},
and the equation it gives, namely:
\begin{equation}
\label{eq:newton-char}
q^* -  \mathcal{N}(z) = (I-B(z))^{-1}\frac{B(q^*) - B(z)}{2}(q^* - z)
\end{equation}
and taking norms on both sides of this equation, we have:

\begin{eqnarray*}
\| q^* - {\mathcal N}(z) \|_\infty & = & 
   \| (I-B(z))^{-1}\frac{B(q^*) - B(z)}{2}(q^* - z) \|_\infty\\
 & \leq &  \| (I-B(z))^{-1} \|_\infty \|\frac{B(q^*) - B(z)}{2}\|_\infty 
\| (q^* - z) \|_\infty\\
& \leq &   2^{14|P| + 1}  \|\frac{B(q^*) - B(z)}{2}\|_\infty 
\| (q^* - z) \|_\infty  \quad \quad  \mbox{(by Corollary \ref{cor:norm-bound})}\\
& \leq & 2^{14 |P| + 1} \| (q^* - z) \|_\infty^2   \quad \quad 
\mbox{ (by inequality (\ref{eq:bound-on-jac-diff}))} 
\end{eqnarray*}

Note that if $\|q^* - z \|_\infty = 0$, then we are trivially
done with the lemma.

Thus, assuming $0 < \| q^* - z \|_\infty \leq  \frac{1}{2^{28|P|+2}}$,
then   $\| q^* - z \|_\infty^{-0.5} \geq 2^{14|P|+1}$.
Thus, 

\begin{eqnarray*} 
\| q^* - \mathcal{N}(z) \|_\infty  & \leq  &   2^{14 |P| + 1} \| (q^* - z) \|_\infty^2  \\
& \leq  &  \| (q^* - z) \|_\infty^{-0.5} \cdot   \| (q^* - z) \|_\infty^{2}   =  \| (q^* - z) \|_\infty^{1.5}
\end{eqnarray*}
\end{proof}

\begin{thm}
\label{thm:double-exponential-convergence}
Let $x=P(x)$ be any PPS in SNF form, with
LFP $q^*$, such that $\mathbf{0} < q^* < \mathbf{1}$.
If we start Newton iteration at $x^{(0)} := \textbf{0}$,
with $x^{(k+1)} :=  \mathcal{N}_P(x^{(k)})$, 
then for any integer $i \geq 1$ the following inequality holds:
$$\| q^* - x^{(32|P| + 2 + 2i)}\|_{\infty} \leq \frac{1}{2^{(28|P|+2)2^i}}  \leq \frac{1}{2^{2^{i}}}$$
\end{thm}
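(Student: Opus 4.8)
The plan is to bootstrap the linear convergence bound of Theorem~\ref{linearconvgen} into quadratic convergence by iterating the local ``$1.5$-order'' estimate of Lemma~\ref{lem:one-point-five}, while using Proposition~\ref{prop:monotone-conv-newt} to guarantee that every Newton iterate started at $\mathbf 0$ stays below $q^*$, so that the hypothesis $0\le z\le q^*$ of Lemma~\ref{lem:one-point-five} is always available.

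\emph{Base case.} Applying Theorem~\ref{linearconvgen} with $j:=28|P|+2$ gives, after $j+4|P| = 32|P|+2$ Newton iterations, $\|q^* - x^{(32|P|+2)}\|_\infty \le 2^{-(28|P|+2)}$. Write $\epsilon_0 := 2^{-(28|P|+2)}$. Moreover, since a PPS with $\mathbf 0 < q^* < \mathbf 1$ is in particular an MPS with LFP $q^*>\mathbf 0$, Proposition~\ref{prop:monotone-conv-newt} tells us the iterates $x^{(k)}$ increase monotonically to $q^*$; hence $\mathbf 0 \le x^{(k)} \le q^*$ for every $k\ge 0$ (this is exactly where Theorem~\ref{thm:spec-full} is used, rather than anything about rounding).

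\emph{Induction.} I would prove, by induction on $i\ge 0$, the stronger statement
\[ \|q^* - x^{(32|P|+2+2i)}\|_\infty \le \epsilon_0^{\,2^i} = 2^{-(28|P|+2)2^i}. \]
The case $i=0$ is the base case above. For the inductive step, set $z:=x^{(32|P|+2+2i)}$; by hypothesis $0\le z\le q^*$ and $\|q^*-z\|_\infty \le \epsilon_0^{\,2^i} \le \epsilon_0 = 2^{-(28|P|+2)}$, so Lemma~\ref{lem:one-point-five} applies and yields $\|q^* - \mathcal N_P(z)\|_\infty \le \|q^*-z\|_\infty^{1.5}$. The vector $\mathcal N_P(z)=x^{(32|P|+3+2i)}$ again satisfies $0\le \mathcal N_P(z)\le q^*$ and $\|q^*-\mathcal N_P(z)\|_\infty \le \|q^*-z\|_\infty^{1.5} \le \|q^*-z\|_\infty \le 2^{-(28|P|+2)}$, so a second application of Lemma~\ref{lem:one-point-five} gives
\[ \|q^* - x^{(32|P|+4+2i)}\|_\infty \le \bigl(\|q^*-z\|_\infty^{1.5}\bigr)^{1.5} = \|q^*-z\|_\infty^{2.25} \le \|q^*-z\|_\infty^{2} \le \epsilon_0^{\,2^{i+1}}, \]
where the penultimate inequality uses $\|q^*-z\|_\infty \le 1$. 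This is the claim at stage $i+1$. Finally, since $28|P|+2\ge 1$ we have $\epsilon_0^{\,2^i} = 2^{-(28|P|+2)2^i} \le 2^{-2^i}$, which gives both inequalities in the theorem statement (and in fact establishes the $i=0$ case too, which is slightly stronger than what is stated).

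\emph{Main obstacle.} There is no serious analytic difficulty; the only point needing care is confirming that the ``below $q^*$'' property $x^{(k)}\le q^*$ holds at \emph{every} intermediate iteration (not merely at the checkpoints $32|P|+2+2i$), since Lemma~\ref{lem:one-point-five} is a one-step statement invoked twice per inductive step. This is delivered by the monotone convergence of Newton's method on MPSs from $\mathbf 0$ (Proposition~\ref{prop:monotone-conv-newt}, resting on Theorem~\ref{thm:spec-full}). The remaining work — the exponent bookkeeping, in particular that two $1.5$-order steps compose to a $2.25$-order step and hence at least square the error, so the iteration index need only advance by $2$ to double the exponent of the error bound — is entirely routine.
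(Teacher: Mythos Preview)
Your proof is correct and follows essentially the same route as the paper: invoke Theorem~\ref{linearconvgen} to reach error $2^{-(28|P|+2)}$ after $32|P|+2$ iterations, then apply Lemma~\ref{lem:one-point-five} twice per inductive step so that the exponent (at least) doubles. The paper phrases the induction slightly differently, writing the error after $2i$ further steps as $\|q^*-x^{(32|P|+2)}\|_\infty^{\,1.5^{2i}}$ and using $1.5^2\ge 2$ in one line, but this is the same computation; your extra care in invoking Proposition~\ref{prop:monotone-conv-newt} to justify $0\le x^{(k)}\le q^*$ at every intermediate step is a point the paper leaves implicit.
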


\begin{proof}
By Theorem \ref{linearconvgen},  
for $k \geq 32 |P| +2 = (28 |P| + 2) + 4 |P|$,  
we have $\| q^* - x^{(k)}\|_\infty \leq 
\frac{1}{2^{(28|P|+2)}}$.
Thus, we can apply Lemma  \ref{lem:one-point-five} repeatedly,
and by induction, for $i \geq 1$, we have:

\begin{eqnarray*}
\| q^* - x^{(32 |P| + 2 + 2i)} \|_\infty & \leq & 
\| q^*- x^{(32|P|+2)}\|_\infty^{1.5^{2i}} \\
& \leq &   \frac{1}{2^{(28|P|+2)2^i}}   \quad \quad \mbox{(because 
$1.5^2 \geq 2$, 
 and $\| q^*- x^{(32|P|+2)}\|_\infty \leq \frac{1}{2^{(28|P|+2)}}$)}\\
& \leq & \frac{1}{2^{2^i}}
\end{eqnarray*}
\end{proof}

We next wish to use 
Theorem \ref{thm:double-exponential-convergence}
in order to establish that, using Newton's method with {\em exact} arithmetic, 
{\em in the unit-cost 
arithmetic RAM model of computation}, we can decide, given a rational number $r$,  whether $q^*_i \geq r$,
in time polynomial in $|P|$ and the encoding size of $r$.

To do this, we need to first establish a separation bound relating to $q^*$ and a given rational $r$.

\begin{lem}
\label{lem:sep-bounds-from-rat}
Given a PPS, $x=P(x)$, with $n$ variables, and with LFP $q^*$, such that $0 < q^* < 1$, and 
given a rational number 
$r > 0$, where $r = \frac{a}{b} < 1$ is represented 
as the ratio of positive integers $a$ and $b$ given in binary, with $a \leq b$, then for any
$k \in \{1,\ldots, n\}$, 
if  $q^*_k \neq r$,   then    $$|q^*_k - r| \geq  
    2^{-2(n+1)(\max\{|P|, \log(b)\} + 2(n+1)\log(2n+2))5^{n}}$$
\end{lem}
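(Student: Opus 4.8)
The plan is to realize $q^*_k-r$ as a root of an explicitly controlled \emph{nonzero} univariate integer polynomial, and then finish with the elementary observation that a nonzero integer has absolute value at least $1$.

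First I would record that $q^*$ is a nondegenerate isolated zero of the polynomial map $F(x):=P(x)-x$. Indeed, by Theorem~\ref{thm:spec-full} we have $\rho(B(q^*))<1$, so $I-B(q^*)=-F'(q^*)$ is nonsingular, and isolatedness follows. To bring the rational $r=a/b$ into the picture, I would pass to the augmented system $G$ in the $n{+}1$ unknowns $(x_1,\dots,x_n,t)$ consisting of the $n$ equations $P_i(x)-x_i=0$ together with the single linear equation $b\,t-b\,x_k+a=0$. After clearing denominators in $P$, all coefficients of $G$ are integers of bit-length at most $\max\{|P|,\log b\}$ up to lower-order terms, every equation has degree at most $2$, and the point $(q^*,\,q^*_k-r)$ is an isolated zero of $G$ because the Jacobian of $G$ there is block lower-triangular with diagonal blocks $I-B(q^*)$ and $b$, hence nonsingular.

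Next I would invoke standard effective elimination bounds for isolated zeros (via the Macaulay / $u$-resultant, or equivalently the separation estimates for the existential theory of the reals) to produce a nonzero polynomial $g(t)\in\mathbb{Z}[t]$ with $g(q^*_k-r)=0$, with $\deg g\le 5^{n}$ and $\log\|g\|_\infty$ bounded by the exponent claimed in the lemma; here the factor $2(n{+}1)$ comes from eliminating $n$ of the $n{+}1$ variables, the $5^{n}$ from the Bézout-type degree product over the $n{+}1$ quadratics, and the $\log(2n{+}2)$ from the combinatorial (number-of-monomials) contribution to the height of the resultant of $n{+}1$ forms. Then, assuming $q^*_k\neq r$, the number $q^*_k-r$ is a nonzero root of $g$: writing $g(t)=t^{s}h(t)$ with $h(0)\neq 0$, the coefficient $h(0)$ is the lowest-order nonzero coefficient of $g$, hence a nonzero integer with $|h(0)|\ge 1$; the Cauchy-type lower bound on the moduli of the roots of $h$ then gives $|q^*_k-r|\ge |h(0)|/(|h(0)|+\|g\|_\infty)\ge 1/(2\|g\|_\infty)$. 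Substituting the degree and height bounds for $g$ yields the stated inequality, and this final substitution is the routine part.

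The main obstacle I expect is pinning down the effective elimination step in precisely the form needed: producing the \emph{nonzero} eliminant $g$ (a genericity or slight perturbation argument may be required if $V(F)$ happens not to be zero-dimensional, since then the elimination ideal could a priori be trivial), together with the exact degree bound $5^{n}$ and the exact height bound $2(n{+}1)\bigl(\max\{|P|,\log b\}+2(n{+}1)\log(2n{+}2)\bigr)5^{n}$. This amounts to carefully quoting and bookkeeping the Macaulay-resultant degree and height estimates for a system of $n{+}1$ quadratics in $n{+}1$ variables; by contrast, the structural ingredients—isolatedness from Theorem~\ref{thm:spec-full}, the augmented system, and the nonzero-integer / root-modulus finish—are straightforward.
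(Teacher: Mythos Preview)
Your approach is essentially the same as the paper's: build the augmented system in $n{+}1$ variables by adjoining $b\,t = b\,x_k - a$, clear denominators, verify that $(q^*,\,q^*_k-r)$ is an isolated solution, and read off the root-separation bound. The paper resolves what you flag as the ``main obstacle'' by directly invoking a ready-made theorem (Theorem~23 of Hansen et~al.~\cite{HLMT'12}), which packages precisely the effective elimination/separation estimate you sketch and which applies to any isolated real solution---so your worry about $V(F)$ being positive-dimensional is already absorbed there; one incidental improvement in your write-up is that your isolatedness argument via Theorem~\ref{thm:spec-full} (nonsingular Jacobian at $q^*$) is more self-contained than the paper's appeal to the unique-fixed-point theorem of~\cite{EY-rmc-model-checking-2012}.
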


\begin{proof}

We shall use the following Theorem by Hansen et. al. \cite{HLMT'12}
regarding explicit separation bounds for isolated real-valued solutions 
to polynomial systems of equations:
\begin{thm}{(Theorem 23 from \cite{HLMT'12})}
  \label{th:isol-real-root-bd-full}
  Consider a polynomial system of equations
  \begin{equation}
    (\Sigma) \quad \quad
    g_1(x_1, \dots, x_n) = \cdots = g_m(x_1, \dots, x_n) = 0 \enspace,
  \label{eq:orig-system}
  \end{equation}
  with polynomials of degree at most $d$ and integer
  coefficients of magnitude at most $2^{\tau}$.

  Then, the coordinates of any {\em isolated} (in Euclidean topology) real 
  solutions of the system are
  real algebraic numbers of degree at most $(2d+1)^n$, and their
  defining polynomials have coefficients of magnitude at most 
  $2^{2n(\tau+4n\log(dm))(2d+1)^{n-1}}$.
  Also, if $\gamma_j =
  (\gamma_{j,1}, \cdots, \gamma_{j,n})$ is an isolated solution of $(\Sigma)$,
  then for any $i$, either
  \begin{equation}
    2^{-2n(\tau + 2n\log(dm))(2d+1)^{n-1}} < | \gamma_{j,i } |
    \quad \text{ or } \quad \gamma_{j,i} = 0 \enspace.
    \label{eq:isol-rr-lower-bd}
  \end{equation}
  Moreover, given coordinates of isolated solutions of two such systems, if they are not identical, they differ by at least
  \begin{equation}
    \sep(\Sigma) 
    \geq 2^{-3n(\tau + 2n\log(dm))(2d+1)^{2n-1} - \frac{1}{2}\log(n)}
    \label{eq:isol-rr-sep-bd}
    \enspace .
  \end{equation}
\end{thm}

To apply  Theorem \ref{th:isol-real-root-bd-full}, we 
need the fact that  $q^* > 0$ is
an isolated solution of the PPS.  
This follows immediately from a more general {\em unique fixed point} theorem
established in 
\cite{EY-rmc-model-checking-2012} (Theorem 18 of \cite{EY-rmc-model-checking-2012}) for 
the equations corresponding to the termination probabilities
of a general recursive Markov chains (RMCs), and it also follows 
from (variants of) older
results about multi-type branching processes (see \cite{Harris63},
Thm. II.7.2 and Corollary II.7.2.1)
PPSs correspond to the special case of MPS equations for 1-exit RMCs.

Specifically, the unique fixed point theorem of
\cite{EY-rmc-model-checking-2012} establishes that, in particular, 
if a PPS has LFP
$q^*$ with $0 < q^* < 1$, then $q^*$ is the unique solution of
$x=P(x)$ in the interior of $[0,1]^n$, i.e., in $(0,1)^n$.  Thus, it
is clearly an isolated solution.

For each $x_i$, let $d_i$ be the product of the denominators of  all coefficients of 
$P(x)_i$. Then $d_ix = d_iP(x)_i$ clearly has 
integer coefficients which are no larger than $2^{|P|}$.
Also, consider a new variable $y$, and a new equation $y = x_k - r$,
where $r = \frac{a}{b}$ is the given positive rational value.
This equation is clearly equivalent to $b y = b x_k - a$.
Suppose the PPS, $x = P(x)$, has LFP $q^* > 0$, and 
for any $k \in \{1, \ldots, n\}$, 
consider the system of $n+1$ polynomial equations,  in $n+1$ variables
(with an additional variable $y$), given by:

\begin{equation}
 d_ix_i = d_iP(x)_i \; , \mbox{for all $i \in \{1,\ldots, n\}$}; \ \  \mbox{and}
\  \ \ by = b x_k - a   \ .
\label{eq:system-to-be-applied-on-iso}
\end{equation}

Since  $0 < q^* < 1$, we know from 
the unique fixed point theorem of \cite{EY-rmc-model-checking-2012} 
that  $q^*$ is an isolated solution of $x=P(x)$. 
If $z \in \real^n$ is any solution vector for $x=P(x)$, there is a unique $w \in \real$ such that 
$x:=z$ and $y :=w$ 
forms a solution to the equations (\ref{eq:system-to-be-applied-on-iso});
namely let $w = z_k - r$.
So, letting $x := q^*$, and letting $y := q^*_k - r$, 
gives us
an isolated solution of the equations (\ref{eq:system-to-be-applied-on-iso}). 
 We can now apply Theorem \ref{th:isol-real-root-bd-full} to the system
(\ref{eq:system-to-be-applied-on-iso}).  Since $y := q^*_k - r$, 
equation (\ref{eq:isol-rr-lower-bd}) in Theorem \ref{th:isol-real-root-bd-full} 
says that
  \begin{equation*}
    2^{-2(n+1)(\max \{|P|, \log(b)\} + 2(n+1)\log(2n+2))5^{n}} < | q^*_k - r | \; , 
    \quad \text{or else } \quad q^*_k - r = 0 \enspace.
  \end{equation*}
which is just what we wanted to establish.
\end{proof}

\noindent We are now ready to establish the following:

\begin{thm}
\label{thm:deciding-in-unit-cost}
Given a PPS, $x = P(x)$,  with $n$ variables, and with LFP $0 < q^* < 1$,  and given a rational number $r = a/b \in (0,1]$, where
$a$ and $b$ are 
positive integers given in binary.   Let $g= 32 |P| + 4 + 6 n + 56 (\lceil \log(n) \rceil + \lceil \log (|P|) \rceil + 
\lceil \log( \log b)) \rceil)$.
Let $x^{(i)}$ denote the $i$'th Newton iterate starting
at $x^{(0)} := 0$,  applied to the PPS $x=P(x)$.  
Let  $m := 2 + 3n +28 (\lceil \log(n) \rceil + \lceil \log (|P|) \rceil + 
\lceil \log( \log b)) \rceil)$.
 Then for any $k \in \{1, \ldots, n\}$,  
\begin{enumerate}
\item $q^*_k > r$  if and only if    $x^{(g)}_k > r$.

\item $q^*_k < r$ if and only if  $x^{(g)}_k + 2 \cdot \frac{1}{2^{2^m}} < r$.
\end{enumerate}
\end{thm}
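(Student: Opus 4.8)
The plan is to sandwich $q^*_k$ between two computable quantities and then separate $q^*_k$ from $r$. Set $L:=\lceil\log n\rceil+\lceil\log|P|\rceil+\lceil\log\log b\rceil$, so that $m=2+3n+28L$ and $g=32|P|+2+2(m-1)$; in particular $g$ is a Newton-iterate index of the form $32|P|+2+2i$ with $i=m-1\ge 1$. By Proposition~\ref{prop:monotone-conv-newt} applied to the PPS (which has $q^*>0$), the iterates are well defined and satisfy $0\le x^{(g)}\le q^*$ coordinatewise. By Lemma~\ref{lem:sep-bounds-from-rat}, whenever $r<1$ and $q^*_k\ne r$ we have $|q^*_k-r|\ge 2^{-E}$, where $E:=2(n+1)(\max\{|P|,\log b\}+2(n+1)\log(2n+2))5^n$. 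The quantities $g$ and $m$ are calibrated precisely so that $2\cdot\frac{1}{2^{2^m}}$ is smaller than this separation, and so that $\|q^*-x^{(g)}\|_\infty$ is no larger than $\frac{1}{2^{2^m}}$.

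Concretely I would first record two inequalities. From Theorem~\ref{thm:double-exponential-convergence} with $i:=m-1$, using $28|P|+2\ge 2$,
\[
\|q^*-x^{(g)}\|_\infty\;\le\;2^{-(28|P|+2)2^{m-1}}\;\le\;2^{-2\cdot 2^{m-1}}\;=\;2^{-2^m};
\]
call this (A). The second inequality, (B), is $2^m>E+1$, equivalently $2\cdot\frac{1}{2^{2^m}}=2^{-(2^m-1)}<2^{-E}$. Here $2^m=4\cdot 8^n\,(2^L)^{28}$ while, crudely, $E+1\le 8|P|^3(\log b)\,5^n$; since $2^L\ge n|P|\log b\ge|P|\ge 3n$ (using $|P|\ge 3n$ w.l.o.g.\ and $b\ge 2$) and $8^n>5^n$, the left side exceeds the right by an enormous margin, so (B) reduces to a routine estimate. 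I expect this constant bookkeeping to be the only real chore; all the genuine content is already supplied by Lemma~\ref{lem:sep-bounds-from-rat} (which rests on the Hansen et~al.\ separation bound) and by Theorem~\ref{thm:double-exponential-convergence}.

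With (A), (B), and $0\le x^{(g)}\le q^*$ in hand, each of the four implications is immediate, throughout using $2\cdot\frac{1}{2^{2^m}}=2^{-(2^m-1)}$. Part~(1), $(\Rightarrow)$: if $q^*_k>r$ then $q^*_k\ne r$, so by Lemma~\ref{lem:sep-bounds-from-rat} and (B), $q^*_k-r\ge 2^{-E}>2\cdot\frac{1}{2^{2^m}}\ge\|q^*-x^{(g)}\|_\infty\ge q^*_k-x^{(g)}_k$, and rearranging gives $x^{(g)}_k>r$. Part~(1), $(\Leftarrow)$: $x^{(g)}_k>r$ and $x^{(g)}_k\le q^*_k$ give $q^*_k>r$. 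Part~(2), $(\Rightarrow)$: if $q^*_k<r$ then $r-q^*_k\ge 2^{-E}>2\cdot\frac{1}{2^{2^m}}$ by (B), whence $x^{(g)}_k+2\cdot\frac{1}{2^{2^m}}\le q^*_k+2\cdot\frac{1}{2^{2^m}}<r$. Part~(2), $(\Leftarrow)$: if $x^{(g)}_k+2\cdot\frac{1}{2^{2^m}}<r$ then, by (A), $q^*_k\le x^{(g)}_k+\|q^*-x^{(g)}\|_\infty\le x^{(g)}_k+\frac{1}{2^{2^m}}\le x^{(g)}_k+2\cdot\frac{1}{2^{2^m}}<r$.

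Finally I would clear the harmless degenerate cases. If $n=0$ the statement is vacuous. If $r=1$ (the only case not covered by Lemma~\ref{lem:sep-bounds-from-rat}, since $r<1$ forces $b\ge 2$ automatically), then $q^*<1$ makes $q^*_k>r$ impossible and $q^*_k<r$ automatic; meanwhile $x^{(g)}_k\le q^*_k\le 1-2^{-4|P|}$ by Theorem~\ref{1comp}, so $x^{(g)}_k+2\cdot\frac{1}{2^{2^m}}<1$ follows from the weak bound $2^m>4|P|+1$ (a special case of (B)), and also $x^{(g)}_k>1$ is impossible; hence both equivalences hold trivially. The one point that genuinely needs care is to keep track of which ingredient each direction uses: the separation bound enters only in the two forward directions, while the two reverse directions use nothing beyond monotonicity and estimate (A).
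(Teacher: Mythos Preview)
Your proposal is correct and follows essentially the same approach as the paper: combine the separation bound of Lemma~\ref{lem:sep-bounds-from-rat} with the quadratic-convergence estimate of Theorem~\ref{thm:double-exponential-convergence}, check that the chosen $g$ and $m$ make $\|q^*-x^{(g)}\|_\infty\le 2^{-2^m}$ and $2\cdot 2^{-2^m}$ smaller than the separation $\gamma$, and then read off the four implications from monotonicity. Your write-up is in fact more explicit than the paper's in two places: you spell out the identity $g=32|P|+2+2(m-1)$ that makes Theorem~\ref{thm:double-exponential-convergence} directly applicable, and you handle the boundary case $r=1$ (where $\lceil\log\log b\rceil$ is undefined and Lemma~\ref{lem:sep-bounds-from-rat} does not apply) separately, whereas the paper simply asserts the key numerical inequality without derivation.
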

\begin{proof}
Let  $\gamma =     2^{-2(n+1)(\max\{|P|, \log(b)\} + 2(n+1)\log(2n+2))5^{n}}$.
Recall that Lemma \ref{lem:sep-bounds-from-rat}  tells us that $|q^*_k - r | \geq \gamma$,
for any $k$, unless $q^*_k = r$.
We know  $x^{(g)} \leq q^*$.   Furthermore, $g$ has been chosen so  
that, by Theorem \ref{thm:double-exponential-convergence},   $\| q^* - x^{(g)} \|_\infty < \frac{1}{2^{2^{m}}} < \gamma/8$.

To establish (1.), in one direction we simply note that if $x^{(g)}_k > r$, then since $q^*_k > x^{(g)}_k$,
we must have $q^*_k > r$.  In the other direction, if 
$q^*_k > r$, then  $q^*_k - r \geq \gamma$,  but we know $q^*_k - x^{(g)}_k \leq  \gamma/8$, so 
$x^{(g)}_k \geq r + \frac{7}{8} \gamma \geq r$.

To establish (2.),  in one direction since $q^*_k - x^{(g)}_k <  \frac{1}{2^{2^{m}}}$,  
we have $q^*_k <  x^{(g)}_k + 2 \cdot \frac{1}{2^{2^m}}$,  and thus if 
$x^{(g)}_k + 2 \cdot \frac{1}{2^{2^m}} < r$, then $q^*_k < r$.
In the other direction, if $q^*_k < r$,  then since $r- q^*_k \geq \gamma$, and since $q^*_k \geq x^{(g)}_k$,
and since $2 \cdot \frac{1}{2^{2^m}} \geq \gamma/4$,  we have 
$x^{(g)}_k + 2 \cdot \frac{1}{2^{2^m}} \geq \gamma/4 < r$.
This completes the proof.
\end{proof} 

\begin{cor}
Given a PPS, $x=P(x)$,  with $n$ variables, and with LFP $q^* \in [0,1]^n$, 
given a coordinate $k \in \{1,\ldots,n\}$, and given a rational number $r \in [0,1]$,
there is an algorithm that
determines which of the following cases holds: (A)  $q^*_k < r$, or (B) $q^*_k = r$,  or (C)
$q^*_k > r$.\\
The algorithm runs in time polynomial in $|P|$, the bit encoding size of the PPS,
and  $size(r)$, the binary encoding size of $r$,
{\em in the unit-cost arithmetic RAM model of computation}.

Thus, in particular, deciding whether $q^*_k \geq r$ is in ${\mathbf P^{\PosSLP}}$.
Furthermore, deciding whether $q^*_k > r$, or deciding whether $q^*_k < r$, are
both P-time many-one (Karp) reducible to PosSLP.

Thus, since the problem 
of deciding whether $q^*_k > r$, and deciding whether $q^*_k < r$, are already known to be PosSLP-hard
under many-one reductions (Theorem 5.3 of \cite{rmc}),
it follows that both these problems are P-time equivalent to PosSLP.
\end{cor}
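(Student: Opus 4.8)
The plan is to reduce the comparison question to the setting of Theorem~\ref{thm:deciding-in-unit-cost}, and then to observe that the two operations it prescribes --- running $g$ Newton iterations from $\mathbf 0$ and performing a constant number of arithmetic comparisons --- can be carried out (a) in polynomial time in the unit-cost arithmetic RAM model, and (b) by a polynomial-size division-free straight-line program (SLP) whose sign decides the question. First, by Propositions~\ref{prop:snf-form} and \ref{prob1-ptime-scfg-prop} we may, in polynomial time on a Turing machine, put $x=P(x)$ in SNF form and determine whether $q^*_k\in\{0,1\}$; in that case the comparison with $r$ is immediate. Otherwise we delete the variables whose LFP value is $0$ or $1$, substitute their values, and obtain a PPS with $\mathbf 0<q^*<\mathbf 1$ in which coordinate $k$ survives and $0<q^*_k<1$. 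If $r=0$ then $q^*_k>r$, and if $r=1$ then $q^*_k<r$, so we may assume $r=a/b\in(0,1)$ with $0<a<b$ given in binary. The integers $g$ and $m$ of Theorem~\ref{thm:deciding-in-unit-cost} depend only on $|P|$, $n$ and $\log b$, hence are polynomial in the input size and computable by a Turing machine.

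In the unit-cost arithmetic RAM model each of $+,-,\times,\div$ and each sign test costs one step. One Newton iteration $\mathcal{N}_P$ is a fixed sequence of matrix products, sums, and one matrix inversion; writing the inverse via the adjugate and the determinant is legitimate, since by Theorem~\ref{thm:spec-full} the matrix $(I-B(x^{(k)}))$ is nonsingular all along the path $0\le x^{(k)}\le q^*$, and this uses polynomially many ring operations plus one division. Iterating $g$ times yields $x^{(g)}_k$ in polynomially many steps; we also form $2^{2^m}$ by $m$ repeated squarings and compute $x^{(g)}_k+2\cdot 2^{-2^m}$. Because Newton converges from below, $x^{(g)}\le q^*$, so the bare inequality $x^{(g)}_k<r$ need not entail $q^*_k<r$ --- this is exactly why the slack term is needed. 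By Theorem~\ref{thm:deciding-in-unit-cost}, $q^*_k>r$ iff $x^{(g)}_k>r$, and $q^*_k<r$ iff $x^{(g)}_k+2\cdot 2^{-2^m}<r$, so two sign tests decide the full trichotomy, establishing the unit-cost-RAM polynomial-time bound (and, via closure of ${\mathbf P^{\PosSLP}}$ under complement together with the reductions below, membership in ${\mathbf P^{\PosSLP}}$).

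For the PosSLP reductions, note that this arithmetic computation is oblivious: its sequence of operations depends only on the input sizes, not on intermediate values. Each component of every iterate can be written over the common positive integer denominator $D^{(k)}=\det(I-B(x^{(k)}))$, which is positive because $I-B(x^{(k)})$ is a nonsingular M-matrix (Theorem~\ref{thm:spec-full}), with integer numerators; one Newton step produces the pair $(N^{(k+1)},D^{(k+1)})$ from $(N^{(k)},D^{(k)})$ by polynomially many $+,-,\times$, reusing the earlier circuit as a sub-circuit, so $x^{(g)}_k=A/B$ with $A$ and $B>0$ computed by a division-free SLP of size $g\cdot\mathrm{poly}(|P|)=\mathrm{poly}$, and $a$, $b$, $2^{2^m}$ likewise have polynomial-size integer SLPs. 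Then $q^*_k>r\iff Ab-aB>0$, and $q^*_k<r\iff aD'-Cb>0$ where $C/D'=x^{(g)}_k+2\cdot 2^{-2^m}$ over a common positive integer denominator $D'$; feeding the SLP for $Ab-aB$ (resp.\ for $aD'-Cb$) to PosSLP --- and outputting a trivial SLP in the already-decided edge cases, everything assembled in Turing polynomial time --- gives P-time many-one reductions of ``$q^*_k>r$'' and of ``$q^*_k<r$'' to PosSLP. Since ``$q^*_k\ge r$'' is the negation of ``$q^*_k<r$'', it lies in ${\mathbf P^{\PosSLP}}$. Finally, Theorem~5.3 of \cite{rmc} gives PosSLP-hardness (under many-one reductions) of deciding $q^*_k>r$ and of deciding $q^*_k<r$, so both are P-time equivalent to PosSLP.

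The step demanding the most care is the SLP construction: one must keep the $g$-fold composition of Newton steps polynomial in size --- which rests on SLPs permitting reuse of intermediate results and on the adjugate and determinant admitting polynomial-length division-free formulas --- and keep all denominators strictly positive, so that each rational comparison collapses to the sign of a single integer-valued SLP. Alternatively, one can bypass the explicit construction by invoking the general principle of \cite{ABKM06} that a polynomial-time unit-cost-RAM computation followed by a sign query reduces to PosSLP, after checking that the divisions in Newton's method are admissible (they can be deferred and carried out symbolically, exactly as above). The remaining ingredients --- the preprocessing, the edge cases, combining the two sign bits into the trichotomy, and closure of ${\mathbf P^{\PosSLP}}$ under complement --- are routine.
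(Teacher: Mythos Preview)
Your proof is correct and follows essentially the same route as the paper: preprocess to the case $\mathbf 0<q^*<\mathbf 1$, invoke Theorem~\ref{thm:deciding-in-unit-cost} after $g$ exact Newton iterations plus repeated squaring to form $2^{-2^m}$, and then for the many-one reductions eliminate the divisions in Newton's method by carrying numerator/denominator pairs through a polynomial-size $\{+,-,\times\}$ circuit and feeding the resulting integer sign question to PosSLP. One small slip in exposition: the common integer denominator of $x^{(k)}$ is not literally $\det(I-B(x^{(k)}))$ but a product of such determinants (suitably cleared of the rational entries of $B$) accumulated over all prior steps together with the denominators of the PPS coefficients; your argument only uses that this quantity is positive and SLP-computable in polynomial size, both of which remain true, so the proof stands.
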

\begin{proof}
First, we note that deciding whether $q^*_k = 0$   and
whether $q^*_k = 1$, can be
carried out in P-time (\cite{rmc}).\footnote{
Determining if $q^*_k = 0$ is easier:  the P-time algorithm given in 
\cite{rmc} for this
task does not depend on the coefficients  of $x=P(x)$,  it only depends on 
which coefficients are
non-zero.  
For determining if $q^*_k = 1$, the P-time algorithm given
in \cite{rmc} uses linear programming
to determine whether
the spectral radius of certain non-negative irreducible moment matrices is $> 1$.
Note that our running time is allowed to depend on the encoding size
$|P|$ of the PPS, and not just on the number of variables, so it is ok to use linear programming
to do this.
Alternatively, it was subsequently shown in \cite{EGK10}  that the latter problem
of determining 
whether the spectral radius of a non-negative irreducible matrix is $> 1$ can be solved
in strongly polynomial time, by solving certain systems of linear equations.}
Hence,  we can detect and remove in P-time
all variables $x_i$ such that $q^*_i \in \{0,1\}$.
Then we are left with a residual PPS, $x=P(x)$, with LFP  $q^*$ such that $0 < q^* < 1$.

Notice that each iteration of Newton's method, $x^{(j+1)} = {\mathcal N}(x^{(j)}) = 
x^{(j)} + (I - B(x^{(j)}))^{-1} (P(x^{(j)}) - x^{(j)})$, on 
a PPS, $x = P(x)$ with $n$ variables, 
can be computed by performing a $n \times n$ matrix inversion and matrix-vector multiplication
and summing of vectors.
Thus, using Cramer's rule to express the matrix inverse as the ratio of matrix determinants, 
each iteration can be computed by an
arithmetic circuit over basis $\{+,-,*,/\}$
with polynomially many gates (as a function of $n$),
given the previous iteration vector $x^{(j)}$ as input. 
Thus, it can be performed by polynomially many arithmetic operations.

Now  we apply Theorem \ref{thm:deciding-in-unit-cost}.
Since the $g$ given in the statement of Theorem \ref{thm:deciding-in-unit-cost} is polynomial in $|P|$ and $size(r)$
(in fact, even in $\log(size(r))$)
we can compute $x^{(g)}$ in polynomial time in the unit-cost arithmetic RAM model of 
computation.
Likewise, since the $m$ given in the statement of the Theorem 
is also polynomial in $|P|$ and $size(r)$,  we can use repeated squaring
to compute $\frac{1}{2^{2^m}}$ in time polynomial in $|P|$ and $size(r)$
 (i.e., with polynomially many arithmetic
operations).  We can also add two numbers at unit-cost to obtain 
$x^{(g)}_k + 2 \cdot \frac{1}{2^{2^m}}$.

In order to determine whether $q^*_k > r$,   we simply need to check whether $x^{(g)}_k > r$,
and to determine whether $q^*_k < r$ we simply need to check whether  $x^{(g)}_k + 2 \cdot \frac{1}{2^{2^m}} < r$.
Finally, note that $q^*_k = r$ holds precisely when neither $q^*_k > r$  nor $q^*_k < r$ holds.

To conclude that these problems can be decided in ${\mathbf P^{PosSLP}}$, we simply 
note that it was established by Allender et. al. in \cite{ABKM06} that every discrete decision
problem (with rational valued inputs) that can be decided in P-time in the unit-cost arithmetic
RAM model of computation can be decided in ${\mathbf P^{PosSLP}}$.

Lastly, we conclude that deciding whether $q^*_k > r$, and deciding whether $q^*_k < r$,  are
 actually P-time {\em many-one} (Karp)
reducible to PosSLP.    This holds for the following reasons.  If $r \in \{0,1\}$, we
have already pointed out that deciding both $q^*_k > r$ and $q^*_k < r$
is solvable in (strongly) polynomial time (\cite{rmc,EGK10}), thus there
is nothing to prove in this case.  

 So, suppose $r \in (0,1)$, and suppose that ${\mathbf 0} < q^* < {\mathbf 1}$.
In this case, we have established that   $q^*_k > r$ if and only if $x^{(g)}_k > r$.

As shown in \cite{ABKM06} (see also \cite{rmc}), division gates
in arithmetic circuits over $\{+,-,*,/\}$
can be removed by keeping track of numerators and denominators separately.
Thus, overall (the numerator and denominator of) the rational
coordinate $x^{(g)}_k$  of the vector $x^{(g)}$ can be computed by a polynomial-sized arithmetic circuit
which can be constructed in P-time given $x=P(x)$.
Obviously the rational number $r$ can also have its numerator and denominator represented
this way in P-time.
Consequently, to decide whether $x^{(g)}_k > r$ (likewise, whether $x^{(g)}_k < r$), 
we simply need to compare the output value of two (P-time constructible) arithmetic circuits.
But PosSLP is precisely this problem, so this yields a P-time many-one reduction from
both these problems to PosSLP.
\end{proof}

\end{document}